\renewcommand{\mathring}[1]{\accentset{\circ}{#1}}
\def\cal{\mathcal}
\newtheorem{theorem}{Theorem}[section]
\newtheorem{lemma}[theorem]{Lemma}
\newtheorem{corollary}[theorem]{Corollary}
\newcommand{\eqref}[1]{(\ref{#1})}
\newcommand{\ca}[1]{{\cal #1}}
\newcommand{\Leq}{ \leq}
\newcommand{\Q}{\mathbb{Q}}
\newcommand{\R}{\mathbb{R}}
\newcommand{\g}{\gamma}
\newcommand{\D}{\Delta}
\newcommand{\e}{\varepsilon}
\newcommand{\eps}{\epsilon}
\newcommand{\z}{\zeta}
\newcommand{\vt}{\vartheta}
\newcommand{\vr}{\varrho}
\renewcommand{\k}{\kappa}
\newcommand{\lb}{\lambda}
\newcommand{\s}{\sigma}
\newcommand{\vs}{\varsigma}
\renewcommand{\t}{\tau}
\newcommand{\diam}{\operatorname{diam}}
\newcommand{\supp}{\operatorname{supp}}
\newcommand{\symdif}{\vartriangle}
\newcommand{\eins}{\mathbf{1}}
\newcommand{\comparable}{\sqsubset}
\newcommand{\persist}{\sqsubseteq}
\newcommand{\hdd}{h_{D,\delta}}
\newcommand{\hpd}{h_{P,\delta}}
\newcommand{\dthick}{\delta_{\mathrm{thick}}}
\newcommand{\cthick}{c_{\mathrm{thick}}}
\newcommand{\csepl}{\underline c_{\mathrm{sep}}}
\newcommand{\csepu}{\overline c_{\mathrm{sep}}}
\newcommand{\cflat}{c_{\mathrm{flat}}}
\newcommand{\cbound}{c_{\mathrm{bound}}}
\newcommand{\cpart}{c_{\mathrm{part}}}
\newcommand{\dpart}{\mathrm{d}} 
\begin{document}
\begin{frontmatter}

\title{Fully adaptive density-based clustering}
\runtitle{Fully adaptive density-based clustering}

\begin{aug}
\author{\fnms{Ingo} \snm{Steinwart}\thanksref{T1}\ead[label=e1]{ingo.steinwart@mathematik.uni-stuttgart.de}}
\runauthor{I. Steinwart}

\affiliation{University of Stuttgart}

\address{Institute for Stochastics and Applications\\
Faculty 8: Mathematics and Physics\\
University of Stuttgart\\
D-70569 Stuttgart\\
Germany\\
\printead{e1}}
\thankstext{T1}{Supported by DFG Grant STE 1074/2-1.}
\end{aug}

%
\received{\smonth{3} \syear{2015}}
%
\revised{\smonth{3} \syear{2015}}

%
\begin{abstract}
The clusters of
a distribution are often defined by the
connected components of a density level set.
However, this definition
depends on the user-specified level.
We address this issue by
proposing a simple, generic algorithm, which uses
an almost arbitrary level set estimator to estimate
the smallest level at which there are more than one
connected components.
In the case where this algorithm is fed
with histogram-based level set estimates,
we provide a finite sample
analysis, which is then used to show that the
algorithm consistently estimates both the
smallest level and the corresponding connected
components. We further establish rates of
convergence for the two estimation problems, and
last but not least, we present a simple,
yet adaptive strategy
for determining the width-parameter of the
involved density estimator in a data-depending
way.
\end{abstract}

%
\begin{keyword}[class=AMS]
\kwd[Primary ]{62H30}
\kwd{91C20}
\kwd[; secondary ]{62G07}
\end{keyword}

\begin{keyword}
\kwd{Cluster analysis}
\kwd{consistency}
\kwd{rates}
\kwd{adaptivity}
\end{keyword}
%
\end{frontmatter}

\section{Introduction}

One definition of density-based clusters,
which was first proposed by Hartigan \cite{Hartigan75},
assumes i.i.d. data $D=(x_1,\ldots,x_n)$ generated by
some unknown distribution $P$ 
that has a continuous density $h$. 
For a user-defined threshold $\rho\geq0$, the clusters of $P$ are then
defined to be the connected components of the level set
$\{h\geq\rho\}$.
%
This so-called single level approach
has been studied by several authors; see, for example,
\cite{Hartigan75,CuFr97a,Rigollet07a,MaHeLu09a,RiWa10a} and the
references therein.
Unfortunately, however, different values of $\rho$ may lead to different
(numbers of) clusters
(see, e.g., the illustrations in \cite{ChDa10a,RiSiNuWa12a}), and
there is no generally accepted rule for choosing $\rho$, either.
In addition,
using a couple of different candidate values creates the problem of
deciding which of the resulting clusterings is best.
For this reason, Rinaldo and Wasserman
\cite{RiWa10a} note that research on data-dependent, automatic
methods for
choosing $\rho$ (and the width parameter of the involved density estimator)
``would be very useful.''

A second, density-based definition for clustering, which is known as
the cluster tree approach,
avoids this issue
by considering all levels and the corresponding connected components
simultaneously.
Its focus thus lies on the identification of
the hierarchical tree structure
of the connected components for different levels; see, for
example, \cite{Hartigan75,Stuetzle03a,ChDa10a,StNu10a,KpLu11a} for details.
For example, Chaudhuri and Dasgupta~\cite{ChDa10a} show, under some
assumptions on $h$, that a modified single linkage
algorithm recovers this tree in the sense of \cite{Hartigan81a},
and
%
Kpotufe and von Luxburg \cite{KpLu11a} obtain\vadjust{\goodbreak} similar results for
an underlying $k$-NN density estimator. In addition, Kpotufe and von
Luxburg \cite{KpLu11a} propose
a simple pruning strategy
that removes connected components that artificially occur because of
finite sample variability.
However, the notion of recovery taken from \cite{Hartigan81a}
only focuses on the correct estimation of the cluster tree structure
and not on the
estimation of the clusters itself; cf. the discussion in \cite
{Steinwart11a}.

%

Defining clusters by the connected components of one or more level sets
clearly requires us to estimate
level sets in one form or the other. Level set estimation itself is a
classical nonparametric problem,
which has been considered by various authors; see, for
example, \cite
{DeWi80a,Hartigan87a,MuSa91a,Polonik95a,BDLi97a,Tsybakov97a,BaCACu01a,BaCuJu00a,ScHuSt05a,StHuSc05a,RiVe09a,SiScNo09a}.
In these articles, two different performance measures are considered
for assessing the
quality of a density level set estimate, namely the mass of the
symmetric difference between the estimate and the
true level set, and the Hausdorff distance between these two sets.
Estimators that are consistent with respect to the Hausdorff metric
clearly capture all
topological structures eventually, so that these estimators form an almost
canonical choice for density-based clustering with fixed level $\rho$.
In contrast, level set estimators that are only consistent with respect
to the first performance measure
are, in general, not suitable for the cluster problem,
since even sets that are equal up to measure zero may have
completely different topological properties.

Another, very recent density-based cluster definition (see \cite
{ChaconXXa}) uses Morse theory to define the clusters of $P$.
The idea of this approach is best illustrated by water flowing on a
terrain. 
Namely, for each mode $x_0$ of~$h$, the corresponding modal cluster is
the set of points from which water
flows, on the steepest descent path, to $x_0$ on the terrain described
by $-h$.
Under suitable smoothness assumptions on~$h$, it turns out that these
modal clusters
form a partition of the input space modulo a Lebesgue zero set.
Unlike in the single level approach, essentially all points of the
input domain are thus assigned to a cluster.
However, the required smoothness assumptions are somewhat strong, and
so far, a consistent estimator has only been found
for the one-dimensional case; see \cite{ChaconXXa}, Theorem~1.

In this work, we consider none of these approaches. Instead, we follow
the approach of
\cite{Steinwart11a}; that is, we
are interested in estimating (a) the infimum of all $\rho$ at which
the level set has more than one component and (b)
the corresponding components. In addition, the usual continuity
assumption on $h$ is avoided.
Let us therefore briefly describe the approach of \cite{Steinwart11a} here;
more details can be found in Section~\ref{Prelim}.

Its first step consists of defining level sets $M_\rho$ that are
\emph{independent of the actual choice of the density}; see \eqref{Def-Mr}.
Here we note that this independence is crucial for avoiding ambiguities
when dealing with discontinuous densities.
So far,
some approaches have been made
to address these difficulties. 
For example, Cuevas and Fraiman \cite{CuFr97a} introduced a
thickness assumption for sets $C$ that
rules out cases in which
neighborhoods of $x\in C$ have not sufficient mass.
This thickness assumption excludes some topological pathologies such as
topologically connecting
bridges of zero mass, while others, such as cuts of measure zero, are
not addressed.
These issues are avoided in
\cite{RiWa10a}\vadjust{\goodbreak} by considering
level sets of convolutions $k*P$ of the underlying
distribution $P$ with a continuous kernel $k$ on $\mathbb{R}^d$ having a
compact support.
Since such convolutions are always continuous, these authors
cannot only deal with discontinuous densities, but also with
distributions that do not have a Lebesgue density at all.
However, different kernels or kernel widths
may lead to different level sets, and consequently, their approach
introduces new parameters that are
hard to control by the user.
In this respect, recall that
for some other functionals of densities, Donoho \cite{Donoho88a}
could remove these ambiguities,
but so far it is unclear whether this is also possible for cluster analysis.


In a second step,
%
the infimum $\rho^*$ over all levels $\rho$ for which $M_\rho$
contains more than one connected component is considered.
To reliably estimate ${\rho^*}$, it is further assumed that there exists
some $\rho^{**}>\rho^*$ such that the
component structure of $M_\rho$ remains persistent for all $\rho\in
(\rho^*,\rho^{**}]$.
Note that such persistence
is
assumed either explicitly or implicitly in basically all density-based
clustering approaches
(see, e.g., \cite{ChDa10a,KpLu11a}), as it seems intuitively
necessary for dealing with
vertically uncertainty caused by finite sample effects.
Another assumption imposed on $P$,
namely that $M_\rho$ has exactly two components between ${\rho^*}$ and
${\rho^{**}}$,
seems
%
to be more restrictive at first glance.
However, the opposite is true: if, for example, $h:[0,1]\to(0,\infty)$
is a continuous density with exactly two distinct, strict local minima
at say $x_1$ and $x_2$,
then we only have more than two connected components in a small range
above ${\rho^*}$
if $h(x_1) = h(x_2)$. Compared to the case $h(x_1)\neq h(x_2)$,
the latter seems to be rather singular, in particular, if one considers
higher-dimensional analogs.
Finally note that we could look for further splits of components above
the level ${\rho^{**}}$ in
a similar fashion. This way we would recover the cluster tree approach,
and, at least for the one-dimensional case,
also the Morse approach by some trivial modifications already discussed
in \cite{ChaconXXa}.
However, such an iterative approach is clearly out of the scope of this paper.


The first main result of this paper is a generic algorithm, which is based
on an arbitrary level set estimator,
for estimating both
${\rho^*}$ and the corresponding clusters. In the case in which the
underlying level set estimator enjoys
guarantees on its vertical and
horizontal uncertainty, we further provide an error analysis for both
estimation problems in terms of these
guarantees.
A detailed statistical analysis is then conducted for histogram-based
level set estimators.
Here, our first result is a finite sample bound, which is then used to
derive (as in \cite{Steinwart11a})
consistency. 
We further provide rates of convergence for estimating ${\rho^*}$
under an
assumption on $P$ that
describes
how fast the connected components of $M_\rho$ move apart for increasing
$\rho\in({\rho^*},{\rho^{**}}]$.
The next main result establishes rates of convergence for estimating
the clusters.
Here we additionally need
the well-known flatness condition of Polonik (see \cite
{Polonik95a}) and an assumption that describes the mass of
$\delta$-tubes around the boundaries of the $M_\rho$'s.
Unlike previous articles, however,
we do not need to restrict our considerations to (essentially)
rectifiable boundaries.
All these rates can only be achieved if the histogram width is chosen
in a suitable,
distribution-dependent way, and therefore we finally propose a simple
data-driven
parameter selection strategy. Our last main result shows that
this strategy often achieves the above rates
without knowing characteristics of $P$.

Since this work strongly builds upon \cite{Steinwart11a,SrSt12a},
let us briefly describe
our main \emph{new} contributions. First, in \cite{Steinwart11a},
only the consistency of the histogram-based algorithm is established;
that is, no rate of convergence is presented. While in \cite
{SrSt12a}, such rates are established, the situation considered
in \cite{SrSt12a} is different. Indeed, in \cite{SrSt12a}, an
algorithm that uses
a Parzen window density estimator to estimate the level sets is
considered. However, this algorithm
requires the density to be $\alpha$-H\"older continuous for known $\alpha$.
Second, neither of the papers considers a data-dependent way of
choosing the width parameter of the involved
density estimator.
Besides these new contributions, this paper also adds a substantial
amount of extra information regarding the
imposed assumptions and, last but not least, polishes many of the
results from \cite{Steinwart11a}.

The rest of this paper is organized as follows. In Section~\ref{Prelim}
we recall the cluster definition from \cite{Steinwart11a} and
%
generalize the clustering algorithm from \cite{Steinwart11a}.
In Section~\ref{histo-sec} we provide a finite-sample analysis for the
case, in which the
generic algorithm is fed with plug-in estimates of a histogram.
%
In Section~\ref{sec:cons-rates} we then 
establish consistency
and the new learning rates.
Section~\ref{sec:par-select} contains the description and the analysis
of the new data-driven
width selection strategy.
%
Proofs of some of our results that are new, compared to those in
\cite{Steinwart11a,SrSt12a},
can be found in Section~\ref{sec-select-proofs}.
The remaining proofs, auxiliary results and
an example of a large class of distributions on $\R^2$ with continuous
densities
that satisfy all the assumptions made in this paper can be found in
\cite{SteinwartXXb}.

%
%

\section{Preliminaries: Level sets, clusters and a generic
algorithm}\label{Prelim}

In this section we recall and refine several notions related to the
definition of
clusters in \cite{Steinwart11a}. In addition, we present a generic
clustering algorithm,
which is based on the ideas developed in \cite{Steinwart11a}.

Let us begin by fixing some notation and assumptions used throughout
this paper:
$(X,d)$ is always a compact metric space, and
$\ca B(X)$ denotes its Borel $\s$-algebra.
Moreover, $\mu$ is a \emph{known} $\s$-finite
measure on $\ca B(X)$,
and $P$
is an \emph{unknown} $\mu$-absolutely
continuous distribution on $\ca B(X)$ from which the data
$D=(x_1,\ldots,x_n)\in X^n$
will be drawn in
an i.i.d. fashion.
In the following, we always assume that $\mu$ has full support, that
is, $\supp\mu= X$.
Of course,
the example we are most interested in is that of $X=[0,1]^d$ and $\mu$ being
the Lebesgue measure on $X$, but alternatives such as the
surface measure on a sphere are possible, too.

Given an $A\subset X$,
we write $\mathring A$ for its interior, $\overline{A}$ for its
closure and
$\partial A:= \overline{A} \setminus\mathring A$ for its boundary.
Finally, $\eins_A$ denotes the indicator function of $A$ and $A\symdif
B$, the symmetric difference of
two sets $A$ and $B$.

\subsection{Density-independent density level sets}\label{sec-regular}

Unlike most papers dealing with density-based clustering,
we will not assume that the data-generating distribution $P$ has a
\emph{continuous} density.
Unfortunately, this generality makes it more challenging to define
density-level-based clusters. Indeed, since the data is generated by $P$,
we actually need to define clusters for distributions and not for densities.
Consequently, a well-defined
density-based notion of clusters either needs to be independent of the
choice of the density,
or pick, for each $P$, a somewhat canonical density.
Now, if we assume that each considered $P$ has a continuous density~$h$, then these $h$'s
may serve as such canonical choices. In the absence of continuous
densities, however,
it is no longer clear how a ``canonical'' choice should look.
In addition, the level sets of two different densities of the same $P$
may have very distinct connected components (see, e.g., Figure~\ref
{fig:non_unique_h})
so that defining the clusters of $P$ by the connected components of $\{
h\geq\rho\}$
becomes inconsistent. In other words, neither of the two alternatives
above is readily
available for general $P$.

\begin{figure}

\includegraphics{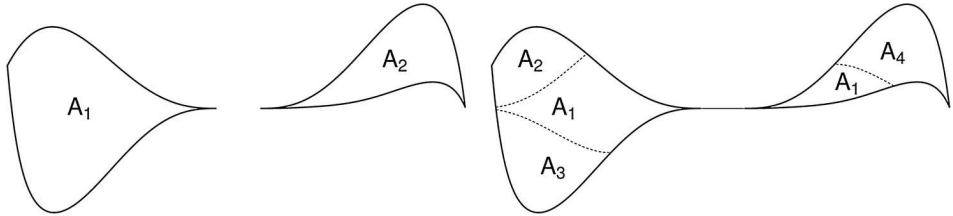}

\caption{topologically relevant changes on sets of measure zero.
Left: The thick solid lines indicate a set consisting of two
connected components $A_1$ and $A_2$.
If $h=c\eins_{A_1\cup A_2}$ is a density of $P$ for a suitable
constant $c$,
then $A_1$ and $A_2$ are the connected components of
$\{h \geq\rho\}$ for all $\rho\in[0,c]$.
Right: This is a similar situation, but with topologically relevant
changes on sets of measure zero.
The straight horizontal thin line indicates a line of measure zero
connecting the two components,
and the dashed lines indicate cuts of measure zero. Clearly, $h':= c
\eins_{A_1\cup A_2\cup A_3\cup A_4}$
is another density of~$P$, but the connected components of
$\{h' \geq\rho\}$ are the four sets $A_1,\ldots, A_4$ for all $\rho\in[0,c]$.
}
\label{fig:non_unique_h}
\end{figure}

This issue is addressed in \cite{Steinwart11a} by considering
``density level sets''
that are independent of the choice of the density.
To recall this idea from \cite{Steinwart11a}, we fix an \emph{arbitrary}
$\mu$-density $h$ of $P$. Then, for every $\rho\geq0$,
\[
\mu_\rho(A):= \mu\bigl(A\cap\{h\geq\rho\}\bigr),\qquad A\in\ca B(X)
\]
defines a $\s$-finite measure $\mu_\rho$ on $\ca B(X)$ that is actually
\emph{independent} of our choice of $h$.
As a consequence, the set
%
\begin{equation}
\label{Def-Mr} M_\rho := \supp\mu_\rho,
\end{equation}
which in \cite{Steinwart11a} is called the density level set of $P$
to the level $\rho$,
is independent of this choice, too. It is shown in \cite
{Steinwart11a} (see also
\cite{SteinwartXXb}, Lemma A.1.1) that
these sets are ordered in the usual way, that is,
$M_{\rho_2} \subset M_{\rho_1}$ whenever $\rho_1\leq\rho_2$.
Furthermore,
for any $\mu$-density $h$ of $P$, the definition immediately gives
%
\begin{equation}
\label{reg-half} \mu \bigl( \{h\geq\rho\} \setminus M_\rho \bigr) = \mu
\bigl( \{ h\geq\rho\} \cap(X\setminus M_\rho) \bigr) =
\mu_\rho(X\setminus M_\rho) = 0;
\end{equation}
that is, modulo $\mu$-zero sets, the level sets $\{h\geq\rho\}$ are not
larger than $M_\rho$.
In fact, $M_\rho$ turns out to be the smallest closed set satisfying
\eqref{reg-half},
and
it is shown in \cite{Steinwart11a} (see also
\cite{SteinwartXXb}, Lemma A.1.2) that we have both
%
\begin{equation}
\label{Mr-diff-closure} {\mathring{\{h\geq\rho\}}} \subset M_\rho\subset
\overline{\{h\geq \rho\}} \quad\mbox{and}\quad M_\rho\symdif\{h\geq\rho\}
\subset\partial\{h\geq\rho\} .
\end{equation}
For technical reasons we will not only
need \eqref{reg-half} but also the ``converse'' as well as a
modification of \eqref{reg-half}.
The exact requirements are introduced in
the following definition, which slightly deviates from \cite{Steinwart11a}.

\begin{definition}
We say that $P$ is normal at level $\rho\geq0$ if there exist two
$\mu$-densities $h_1$ and $h_2$ of $P$ such that
\[
\mu\bigl(M_\rho\setminus\{h_1\geq\rho\} \bigr) = \mu
\bigl(\{h_2>\rho\} \setminus \mathring M_\rho\bigr) = 0.
\]
Moreover, we say that $P$ is normal
if it is normal at every level.
\end{definition}

It is shown in
\cite{SteinwartXXb}, Lemma A.1.3,
that $P$ is normal if it has both an upper semi-continuous $\mu
$-density $h_1$ and a
lower semi-continuous $\mu$-density $h_2$.
Moreover, if
$P$ has a $\mu$-density $h$ such that $\mu(\partial\{h\geq\rho\})=0$,
then $P$ is
normal at level $\rho$ by \eqref{Mr-diff-closure}.
Finally, note that if the conditions of normality at level $\rho$ are
satisfied for some $\mu$-densities $h_1$ and $h_2$ of $P$, then
they are actually satisfied for all $\mu$-densities $h$ of $P$,
and we have $\mu(M_\rho\symdif\{h\geq\rho\})=0$.

The remarks made above show that most distributions one would
intuitively think of are normal.
The next lemma
demonstrates that there are also distributions
that are not normal at a continuous range of levels.
%
%

\begin{lemma}\label{not-reg-ex}
There exists a Lebesgue absolutely continuous
distribution $P$ on $[0,1]$ and a $c>0$ such that $P$ is not normal at
$\rho$ for all $\rho\in(0,c]$.
\end{lemma}

\subsection{Comparison of partitions and some notions of
connectivity}\label{subsec:cc}

Following~\cite{Steinwart11a} we will define clusters with the help
of connected components
over a range of level sets. To prepare this definition, we recall
some notions related to connectivity
in this subsection. Moreover,
we introduce a tool that makes it possible to
compare the connected components of two level sets.


To motivate the following definition, which generalizes the ideas from
\cite{Steinwart11a},
we note that the connected components of a set form a partition.

\begin{definition}\label{comp-partitions}
Let $A\subset B$ be nonempty sets and $\ca P(A)$ and $\ca P(B)$ be
partitions of $A$ and $B$, respectively.
Then $\ca P(A)$ is comparable to $\ca P(B)$, and we write $\ca P(A)
\comparable\ca P(B)$
if, for all $A'\in\ca P(A)$, there is a $B'\in\ca P(B)$
with $A'\subset B'$.
\end{definition}

Informally speaking, $\ca P(A)$ is comparable to $\ca P(B)$ if no cell
$A'\in\ca P(A)$ is broken into pieces in
$\ca P(B)$. In particular, if $\ca P_1$ and $\ca P_2$ are two
partitions of $A$, then
$\ca P_1 \comparable\ca P_2$ if and only if $\ca P_1$ is finer than
$\ca P_2$.

Let us now assume that we have two partitions $\ca P(A)$ and $\ca P(B)$
such that $\ca P(A) \comparable\ca P(B)$. Then it is easy to see
(cf. \cite{SteinwartXXb}, Lemma A.2.1)
that there exists a unique map $\z: \ca P(A) \to\ca P(B)$ such that,
for all $ A'\in\ca P(A)$, we have
\[
A' \subset\z\bigl(A'\bigr).
\]
Following \cite{Steinwart11a}, we call
$\z$ the cell relating map (CRM) between $A$ and $B$.
Moreover,
%
we write $\z_{A,B}:= \z$ when
we want to emphasize the involved pair $(A,B)$.
Note that $\z$
is injective, if and only if no two distinct cells of $\ca P(A)$
are contained in the same cell of $\ca P(B)$. Conversely, $\z$ is
surjective, if and only if every cell in
$\ca P(B)$ contains a cell of $\ca P(A)$.
Therefore, $\z$ is bijective, if and only if there is a structure
preserving a
one-to-one relation between the cells of the two partitions.
In this case, we say that $\ca P(A)$ is \emph{persistent} in $\ca P(B)$
and write $\ca P(A) \persist\ca P(B)$.

The next lemma establishes
a very useful composition formula for CRMs.
For a proof, which is again inspired by
\cite{Steinwart11a}, we refer to
\cite{SteinwartXXb}, Section A.2.

\begin{lemma}\label{crm-arithm}
Let $A\subset B\subset C$ be nonempty sets with partitions $\ca P(A)$,
$\ca P(B)$ and $\ca P(C)$ such that
$\ca P(A)\comparable\ca P(B)$ and $\ca P(B)\comparable\ca P(C)$.
Then we have $\ca P(A)\comparable\ca P(C)$, and the corresponding CRMs satisfy
\[
\z_{A,C} = \z_{B,C} \circ\z_{A,B}.
\]
\end{lemma}

The lemma above shows that the relations $\comparable$ and $\persist$
are transitive. 
Moreover, if $\ca P(A) \persist\ca P(C)$, then
$\z_{A,B}$ must be injective, and $\z_{B,C}$ must be surjective, and
we have 
$\ca P(A) \persist\ca P(B)$ if and only if $\ca P(B) \persist\ca P(C)$.

Now recall 
%
that an $A\subset X$ is (topologically) connected if, for every pair
$A',A''\subset A$
of relatively closed disjoint subsets of $A$ with $A'\cup A'' = A$,
we have $A'=\varnothing$ or $A''=\varnothing$.
The maximal connected subsets of $A$ are called the connected
components of $A$.
It is well known that these
components form a partition of $A$, which we denote by $\ca C(A)$.
Moreover, for closed $A\subset B$ with
$|\ca C(B)|<\infty$ we have $\ca C(A) \comparable\ca C(B)$;
see
\cite{Steinwart11a} or
\cite{SteinwartXXb}, Lemma A.2.3.


Following \cite{Steinwart11a}, we will also
consider a discrete version of path-connectivity.
To recall the latter, we fix a $\t>0$ and an $A\subset X$.
Then $x,x'\in A$ are $\t$-connected in $A$
if
there exist
$x_1,\ldots,x_n\in A$ such that $x_1=x$, $x_n=x'$ and $d(x_i,x_{i+1}) <
\t$ for all $i=1,\ldots,n-1$.
Clearly,
being $\t$-connected gives an equivalence relation on $A$.
We write $\ca C_\t(A)$ for the resulting partition and call its cells
the \emph{$\t$-connected components of $A$}.
It is shown in \cite{Steinwart11a} (see also \cite{SteinwartXXb}, Lemma
A.2.7)
that $\ca C_\t(A) \comparable\ca C_\t(B)$ for all $A\subset B$ and
$\t>0$.

For a closed $A$ and $\t>0$, we have $\ca C(A) \comparable\ca C_\t(A)$
with a surjective CRM $\z:\ca C(A)\to\ca C_\t(A)$; see \cite
{Steinwart11a} or \cite{SteinwartXXb}, Proposition A.2.10.
To characterize, when this CRM is even bijective, let us assume that
$1<|\ca C(A)|<\infty$. Then
%
\begin{equation}
\label{top-connect-new2-hxx} \t^*_A := \min \bigl\{d\bigl(A',A''
\bigr): A',A''\in\ca C(A) \mbox{ with }
A'\neq A'' \bigr\}
\end{equation}
denotes the minimal distance between mutually different components of
$\ca C(A)$.
Now it is shown in \cite{Steinwart11a} (or
\cite{SteinwartXXb}, Proposition A.2.10) that
\[
\ca C(A) = \ca C_\t(A) \quad\Longleftrightarrow\quad\t\in\bigl(0,
\t_A^*\bigr] ;
\]
see also Figure~\ref{fig:cc_vs_tcc} for an illustration. In other
words, $\t_A^*$ is the largest (horizontal) granularity $\t$
at which the connected components of $A$ are not glued together.
Finally, this threshold is ordered for closed $A\subset B$ in the sense that
$\t_A^* \geq\t_B^*$ whenever
$|\ca C(A)|<\infty$, $|\ca C(B)|<\infty$, and
the CRM $\z: \ca C(A) \to\ca C(B)$ is injective. We refer to
\cite{Steinwart11a} or
\cite{SteinwartXXb}, Lemma A.2.11.
\begin{figure}

\includegraphics{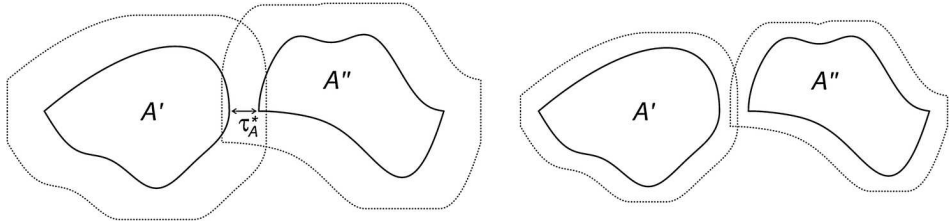}

\caption{The role of $\t_A^*$. Left: A set $A$ consisting of
two connected components $A'$ and $A''$ drawn in solid lines.
The dotted lines indicate the contours
of the set of all points that are within $\t$-distance of $A'$,
respectively $A''$,
for some fixed $\t>\t_A^*$ and the sup-norm. Since there are some
elements in $A''$ that are within $\t$-distance
of $A'$, there is only one $\t$-connected component, namely $A$. The
CRM $\z:\ca C(A)\to\ca C_\t(A)$ is thus surjective
but not injective.
Right: Here we have the same situation for some $\t< \t_A^*$. In
this case, $A'$ and $A''$ are also the $\t$-connected
components of $A$, and the CRM $\z:\ca C(A)\to\ca C_\t(A)$ is bijective.}
\label{fig:cc_vs_tcc}
\end{figure}



\subsection{Clusters}

Using the concepts developed in the previous subsections, we can now
recall the definition of clusters from \cite{Steinwart11a}.

\begin{definition}\label{top-clust-def1-neu}
The distribution
$P$ can be clustered between
$\rho^*\geq0$ and $\rho^{**}>\rho^*$ if $P$ is normal and
for all $\rho\in[0,\rho^{**}]$,
the following three
conditions are satisfied:
\begin{longlist}[(iii)]
\item[(i)] we have either $|\ca C(M_\rho)|= 1$ or $|\ca C(M_\rho)|= 2$;
\item[(ii)] if we have $|\ca C(M_\rho)|=1$, then $\rho\leq\rho^*$;
\item[(iii)] if we have $|\ca C(M_\rho)|=2$, then $\rho\geq\rho^*$
and $\ca
C (M_{\rho^{**}})\persist\ca C(M_\rho)$.
\end{longlist}
Using the CRMs $\z_\rho:\ca C (M_{\rho^{**}})\to\ca C(M_\rho)$, we
then define
the clusters of $P$ by
\[
A^{*}_i := \bigcup_{\rho\in(\rho^*,\rho^{**}]}
\z_\rho(A_i) ,\qquad i\in\{ 1,2\},
\]
where
$A_1$ and $A_2$ are the two topologically connected components of
$M_{\rho^{**}}$.
\end{definition}

By conditions (iii) and (ii), we find
$\rho<{\rho^*}\Rightarrow|\ca C(M_\rho)| = 1 \Rightarrow\rho\leq
{\rho^*}$
as well as
$\rho>{\rho^*}\Rightarrow|\ca C(M_\rho)| = 2 \Rightarrow\rho\geq
{\rho^*}$
for all $\rho\in[0,{\rho^{**}}]$; see also Figure~\ref{fig:cluster}.
At each level below ${\rho^*}$ there is thus only one component, while
there are
two components at all levels in between ${\rho^*}$ and ${\rho^{**}}$.
Moreover, in
both cases the corresponding partitions
are persistent.

%


%

Since all $\z_\rho$'s are bijective, we find $\z_\rho(A_1) \cap\z
_\rho(A_2)=
\varnothing$ for all
$\rho\in(\rho^*,\rho^{**}]$, and using $\z_\rho(A_1)\nearrow
A^*_i$ for $\rho
\searrow{\rho^*}$,
we conclude that
$A_1^*\cap A_2^* = \varnothing$. In general, the sets $A^*_i$ are
neither open nor closed, and
we may have $d(A_1^*,A_2^*)=0$; that is, the clusters may touch each
other; see again
Figure~\ref{fig:cluster}.

\begin{figure}

\includegraphics{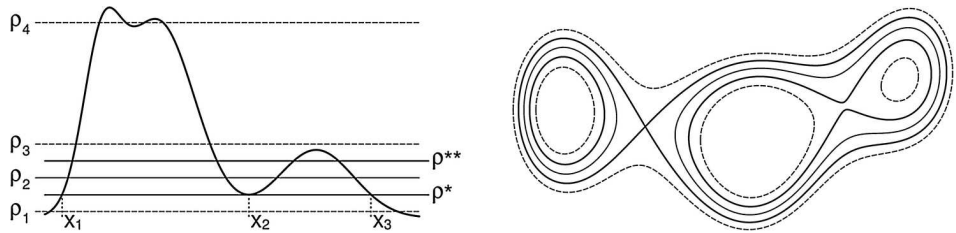}

\caption{Definition of clusters. Left: A 1-dimensional mixture
of three Gaussians
together with the level ${\rho^*}$ and a possible choice for ${\rho^{**}}$.
The
component structure at level $\rho_2\in({\rho^*}, {\rho^{**}})$ coincides
with that at level ${\rho^{**}}$, while for $\rho_1<{\rho^*}$, we
only have one
connected component. The levels $\rho_3, \rho_4> {\rho^{**}}$
are not considered by Definition \protect\ref{top-clust-def1-neu},
and thus the
component structure at these levels is arbitrary.
Finally, the clusters of the distribution are the open intervals
$(x_1, x_2)$ and $(x_2, x_3)$.
Right: Here we have a similar situation for a mixture of three
2-dimensional Gaussians drawn by contour lines.
The thick solid lines again indicate the levels ${\rho^*}$
and ${\rho^{**}}$, and the thin solid lines show a level $\rho\in
({\rho^*},
{\rho^{**}})$.
The dashed lines correspond to a level $\rho<{\rho^*}$ and a level
$\rho
>{\rho^{**}}$.
This time the clusters are the two connected components of the open set
that is surrounded by the outer thick solid line.}
\label{fig:cluster}
\end{figure}

\subsection{Cluster persistence under horizontal uncertainty}

In general, we can only expect nonparametric estimates
of $M_\rho$ that are both vertically and
horizontally uncertain. To some extent the vertical uncertainty, which is
caused by the estimation error,
has
already been addressed by the persistence assumed in our cluster definition.
In this subsection, we complement this by recalling
tools from \cite{Steinwart11a} for dealing with horizontal
uncertainty, which is usually caused by the approximation error.

To quantify horizontal uncertainty, we need for
$A\subset X$,
$\delta>0$, the sets
\begin{eqnarray*}
A^{+\delta}&:=& \bigl\{x\in X: d(x,A) \leq\delta\bigr\},
\\
A^{-\delta}&:=& X\setminus(X\setminus A)^{+\delta},
\end{eqnarray*}
where $d(x,A) := \inf_{x'\in A}d(x,x')$ denotes the distance between
$x$ and $A$.
Simply speaking, adding a $\delta$-tube to $A$ gives $A^{+\delta}$,
while removing
a $\delta$-tube gives $A^{-\delta}$. These operations, as well as
closely related
operations based on
the Minkowski addition and difference have already been used in the
literature on level set estimation; see, for example, \cite{Walther97a}.
Some simple properties of these operations can be found in
\cite{SteinwartXXb}, Lemma A.3.1.

Now let $L_\rho$ be an estimate of $M_\rho$ having vertical and
horizontal uncertainty in the sense of
\[
M_{\rho+\e}^{-\delta}\subset L_\rho\subset
M_{\rho-\e}^{+\delta},
\]
for some $\e,\delta>0$.
Ideally, we additionally have $\ca C(M_{\rho+\e}^{-\delta}) \persist
\ca
C(L_\rho
) \persist\ca C(M_{\rho-\e}^{+\delta})$.
To reliably use $\ca C(L_\rho)$ as an estimate of $\ca C(M_\rho)$, it then
suffices to know
$\ca C(M_{\rho+\e}^{-\delta}) \persist\ca C(M_\rho) \persist\ca
C(M_{\rho-\e}^{+\delta})$.
Unfortunately, however, the latter is typically not true. Indeed,
even in the absence of horizontal uncertainty, we do not have
$\ca C(M_{\rho+\e}) \persist\ca C(M_{\rho-\e})$ if $\rho+\e
>{\rho^*}
$ and $\rho-\e
<{\rho^*}$.
Moreover, in the absence of vertical uncertainty, we usually do not have
$\ca C(M_{\rho}^{-\delta}) \persist\ca C(M_\rho)\persist\ca
C(M_{\rho
}^{+\delta})$,
either, as
components of $\ca C(M_\rho)$ may be glued together in $\ca C(M_{\rho
}^{+\delta})$
or cut apart in $\ca C(M_{\rho}^{-\delta})$; see Figure~\ref{fig:pde_vs_mde}.
To repair such cuts, our algorithm will consider
$\t$-connected components instead of connected components.
%
In the rest of
this section we thus investigate under which
conditions we do have $\ca C_\t(M_{\rho+\e}^{-\delta}) \persist\ca
C(M_\rho)
\persist\ca C_\t(M_{\rho-\e}^{+\delta})$.
We begin with the following definition taken from \cite
{Steinwart11a} that excludes
bridges and cusps that are too thin.

\begin{definition}\label{def-reg}
We say that $P$
has thick level sets of order $\g\in(0,1]$ up to the level ${\rho^{**}}>0$,
if there
exist constants $\cthick\geq1$ and $\dthick\in(0,1]$ such that, for
all $\delta\in(0, \dthick]$ and $\rho\in[0,\rho^{**}]$, we have
%
\begin{equation}
\label{def-thick-h0} \sup_{x\in M_\rho} d\bigl(x,M_\rho^{-\delta}
\bigr) \leq\cthick\delta^\g.
\end{equation}
In this case, we call
$\psi(\delta) := 3\cthick\delta^\g$
the thickness function of $P$.
\end{definition}

Thickness assumptions have been widely used in the literature
on level set estimation (see, e.g., \cite{SiScNo09a}), where the case
$\g=1$ is considered.
To some extent, the latter is a natural choice, as is discussed in
detail in \cite{SteinwartXXb}, Section A.3.
In particular, for $d=1$ we always have $\g=1$, and for $d=2$ \cite{SteinwartXXb},
Example B.2.1,
provides a rich class of continuous densities with $\g=1$.
Figure~\ref{fig:diff_psi} illustrates how different
shapes of level sets lead to different $\g$'s.

\begin{figure}

\includegraphics{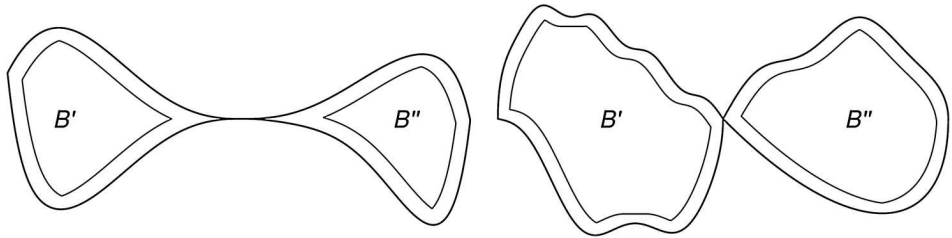}

\caption{Thick level sets. Left: The thick solid line
indicates a level set $M_\rho$ below or at the level ${\rho^*}$,
and the thin solid lines show the two components $B'$ and $B''$ of
$M^{-\delta}_\rho$. Because of the quadratic shape of $M_\rho$
around the thin
bridge, the set $M_\rho$
has thickness of order $\g=1/2$.
Right: Here we have the same situation for a distribution that has
thick level sets of order $\g=1$.
Note that the smaller $\g$ on the left leads to a significantly wider
separation
of $B'$ and $B''$ than on the right, which in turn requires larger $\t$
to glue the parts together.}
\label{fig:diff_psi}
\end{figure}

The following result, which
summarizes some findings from \cite{Steinwart11a} (see also \cite{SteinwartXXb},
Theorems A.4.2 and A.4.4),
provides an answer to our persistence question.

%

\begin{theorem}\label{reg-cluster-thm-combined}
Assume that $P$ can be clustered
between $\rho^*$ and $\rho^{**}$ and that it has thick level sets of order
$\g$ up to ${\rho^{**}}$.
Let $\psi$ be its thickness function.
Using~\eqref{top-connect-new2-hxx}, we define the function ${\t^*}
:(0,{\rho^{**}}
-{\rho^*}]\to(0,\infty)$ by
%
\begin{equation}
\label{reg-cluster-lem-def-tds-new} {\t^*}(\e) := \tfrac{1}3\t^*_{M_{\rho^*+\e}}.
\end{equation}
Then ${\t^*}$ is increasing, and
for all $\e^*\in(0,{\rho^{**}}-{\rho^*}]$, $\delta\in(0, \dthick
]$, $\t\in
(\psi(\delta),\break
{\t^*}(\e^*)]$ and
all $\rho\in[0, \rho^{**}]$,
the following statements hold:
\begin{longlist}[(ii)]
\item[(i)] we have $1\leq|\ca C_\t(M_\rho^{+\delta})|\leq2$ and
$1\leq
|\ca C_\t
(M_\rho^{-\delta})|\leq2$;
\item[(ii)] if $\rho< {\rho^*}$ or $\rho\geq{\rho^*}+ \e^*$,
then we have
\[
\ca C_\t\bigl(M_\rho^{-\delta}\bigr) \persist\ca
C(M_\rho) = \ca C_\t (M_\rho) \persist\ca
C_\t\bigl(M_\rho^{+\delta}\bigr).
\]
\end{longlist}
\end{theorem}

Theorem~\ref{reg-cluster-thm-combined} in particular shows that for
sufficiently small $\delta$ and $\t$,
the component structure of $M_\rho$ is not changed when
$\delta$-tubes are added or removed and $\t$-connected components are
considered instead.
Not surprisingly, however, the meaning of ``sufficiently small,''
which is expressed by the functions ${\t^*}$ and $\psi$, changes
when we approach the level ${\rho^*}$ from above.
Moreover, note that
even
for sufficiently small $\delta$ and $\t$,
Theorem~\ref{reg-cluster-thm-combined} does not specify
the structure of $\ca C_\t(M_\rho^{-\delta})$ and $\ca C_\t(M_\rho
^{+\delta})$
at the levels $\rho\in[{\rho^*}, {\rho^*}+\e^*)$. In fact, for
such $\rho$, the
components of $M_\rho$
may be accidentally glued together in $\ca C_\t(M_\rho^{+\delta})$;
see, for
example, Figure~\ref{fig:pde_vs_mde}.
This effect complicates our
analysis significantly.

\begin{figure}

\includegraphics{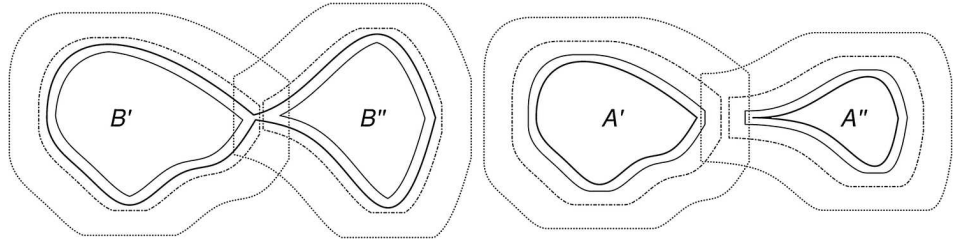}

\caption{Difficulties around ${\rho^*}$. Left: The thick solid line
indicates an $M_\rho$ for $\rho<{\rho^*}$,
and the thin solid lines show $M^{-\delta}_\rho$. While $M_\rho$ consists
of one
connected component,
$M^{-\delta}_\rho$ has two such components, $B'$ and $B''$, and hence
$\ca
C(M^{-\delta}_\rho)$ is not persistent in $\ca C(M_\rho)$.
The two types of dotted lines indicate
the set of all points that are within $\t$-distance of $B'$,
respectively $B''$ for two
values of $\t$. Only for the larger $\t$ we have
$\ca C_\t(M^{-\delta}_\rho)\persist\ca C(M_\rho)$;
%
that is, in this case $\t$-connectivity does
glue the separated regions together.
Right: The thick solid lines indicate an $M_\rho$ for some $\rho
\in({\rho^*}
, {\rho^{**}}]$
having two connected components, $A'$ and $A''$,
and thin solid lines show the two components of $M^{+\delta}_\rho$.
The two types of dotted lines indicate
the set of all points that are within $\t$-distance of $(A')^{+\delta}$,
respectively $(A'')^{+\delta}$ for the two
values of $\t$ used left. This time, we have
$\ca C(M_\rho) \persist\ca C_\t(M^{+\delta}_\rho)$ only for the smaller
value of
$\t$.
%
Together, these graphics thus illustrate that good values for $\delta$ and
$\t$ at one level may
be bad at a different level. However,
Theorem \protect\ref{reg-cluster-thm-combined} shows that this undesired
behavior can be excluded with the help of the functions $\t^*$ and
$\psi
$ for all levels $\rho\notin[{\rho^*},{\rho^*}+\e^*)$.}

\label{fig:pde_vs_mde}
\end{figure}

Let us now summarize the assumptions that will be used in the following.

{\renewcommand{\theassumption}{C}
\begin{assumption}\label{assc}
We have a
compact metric space $(X,d)$, a finite Borel measure $\mu$ on $X$ with
$\supp\mu=X$
and
a $\mu$-absolutely continuous distribution $P$
that can be clustered between $\rho^*$ and $\rho^{**}$.
In addition, $P$ has
thick level sets of order $\g\in(0,1]$
up to the level ${\rho^{**}}$. We denote
the corresponding thickness function by $\psi$ and write ${\t^*}$
for the function defined in \eqref{reg-cluster-lem-def-tds-new}.
\end{assumption}}

\subsection{A generic clustering algorithm and its analysis}\label
{subsec:gen-algo}

In this section, we present and analyze a generic version of the clustering
algorithm from \cite{Steinwart11a}. The main difference between our
algorithm and the algorithm of \cite{Steinwart11a}
is that
our generic algorithm can use any
level set estimator
that has control over both its vertical and horizontal uncertainty.

Our first result, which is a generic version of \cite{Steinwart11a}, Theorem~24,
relates the component structure of a family of level set estimates to
the component structure of certain sets $M_{\rho+\e}^{-\delta}$.
For a proof we refer to \cite{SteinwartXXb}, Section A.6.

\begin{theorem}\label{main-thick-hfr}
Let Assumption~\textup{\ref{assc}} be satisfied. Furthermore, let
$\e^*\in(0,{\rho^{**}}- {\rho^*}]$,
$\delta\in(0, \dthick]$,
$\t\in(\psi(\delta),{\t^*}(\e^*)]$ and $\e\in(0, \e^*]$.
In addition, let $(L_\rho)_{\rho\geq0}$ be a decreasing family of sets
$L_\rho\subset X$ such that
%
\begin{equation}
\label{generic-inclus} M_{\rho+\e}^{-\delta}\subset L_\rho\subset
M_{\rho-\e}^{+\delta}
\end{equation}
holds for all $\rho\geq0$. Then, for all
$\rho\in[0,{\rho^{**}}-3\e]$ and the corresponding CRMs $\z:\ca
C_\t
(M_{\rho+\e
}^{-\delta})\to\ca C_\t(L_\rho)$,
the following disjoint union holds:
%
\begin{equation}
\label{main-thick-hfr-result} \ca C_\t(L_\rho) = \z \bigl(\ca
C_\t\bigl(M_{\rho+\e}^{-\delta}\bigr) \bigr) \cup \bigl
\{B'\in\ca C_\t(L_\rho): B'\cap
L_{\rho+2\e} = \varnothing \bigr\}.
\end{equation}
\end{theorem}


Theorem~\ref{main-thick-hfr} shows that for suitable $\delta$, $\e$
and $\t$,
all $\t$-connected components $B'$ of
$L_\rho$
are \emph{either} contained in the image $\z(\ca C_\t(M_{\rho+\e
}^{-\delta}))$
\emph{or} vanish at level ${\rho+2\e}$,
that is, $B'\cap L_{\rho+2\e} = \varnothing$.
Now assume we can detect the latter components.
By Theorem~\ref{main-thick-hfr} we can then
identify the $\t$-connected components $B'$ that are contained in $\z
(\ca C_\t(M_{\rho+\e}^{-\delta}))$, and if,
in addition, $\z$ is injective, these identified components have the
same structure as $\ca C_\t(M_{\rho+\e}^{-\delta})$.
By Theorem~\ref{reg-cluster-thm-combined} we can further hope that
$\ca C_\t(M_{\rho+\e}^{-\delta}) \persist\ca C(M_{\rho+\e})$, so that
we can
relate the identified components
to those of $\ca C(M_{\rho+\e})$.
Assuming these steps can be carried out precisely, we obtain Algorithm~\ref{cluster-algo}; see also Figure~\ref{fig:cluster-prune},
which scans through the values of $\rho$ from small to large and stops as
soon as it identifies
either no component or at least two.

\begin{figure}

\includegraphics{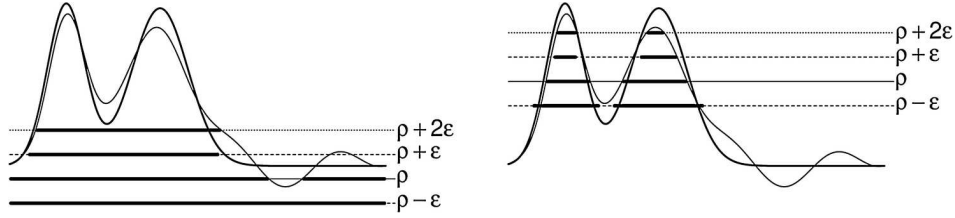}

\caption{Illustration of Algorithm \protect\ref{cluster-algo}
Left: A
density (thick solid line) having two modes on the left and a flat part
on the right.
A plug-in approach based on a density estimate (thin solid line with
three modes)
is used to provide the level set estimator $L_\rho$ (bold horizontal line
at level $\rho$), which satisfies
$M_{\rho+\e}^{-\delta}\subset L_\rho\subset M_{\rho-\e}^{+\delta}$.
Only the left component of
$L_\rho$ does not vanish at $\rho+2\e$, and thus Algorithm~\protect
\ref
{cluster-algo} identifies only one component at its line 3.
Right: Here we have the same situation at a higher level. This time
both components of $L_\rho$ do not vanish at $\rho+2\e$,
and hence Algorithm \protect\ref{cluster-algo} identifies two
components at its
line 3.
}
\label{fig:cluster-prune}
\end{figure}

The following theorem provides bounds for the level $\rho_D^*$ and the
components $B_i(D)$
returned by Algorithm \ref{cluster-algo}.
It extends the analysis from \cite{Steinwart11a}.

\begin{algorithm}[b]
\caption{Clustering with the help of a generic level set estimator}
\label{cluster-algo}
\begin{algorithmic}[1]
\REQUIRE{ Some $\t>0$ and $\e>0$.\\
A decreasing family
$(L_{D,\rho})_{\rho\geq0}$ of
subsets of $X$.}
\ENSURE An estimate of ${\rho^*}$ and the clusters $A^{*}_1$ and $A^{*}_2$.
\STATE$\rho\gets0$
\REPEAT
\STATE{Identify the $\t$-connected components $B_1',\ldots,B_M'$ of
$L_{D,\rho}$
satisfying
\[
B_i'\cap L_{D,\rho+2\e}\neq\varnothing.
\]
}
\STATE$\rho\gets\rho+\e$
\UNTIL{$M \neq1$}
\STATE$\rho\gets\rho+2\e$
\STATE{Identify the $\t$-connected components $B_1',\ldots,B_M'$ of
$L_{D,\rho}$
satisfying
\[
B_i'\cap L_{D,\rho+2\e} \neq\varnothing.
\]
}
\STATE{\textbf{return} ${\rho^*_D}:= \rho$ and the sets $B_i(D) :=
B'_i$ for
$i=1,\ldots,M$.}
\end{algorithmic}
\end{algorithm}

\begin{theorem}\label{analysis-main-combined-new}
Let Assumption~\textup{\ref{assc}} be satisfied. Furthermore, let
$\e^*\leq({\rho^{**}}- {\rho^*})/9$,
$\delta\in(0, \dthick]$,
$\t\in(\psi(\delta),{\t^*}(\e^*)]$ and $\e\in(0, \e^*]$.
In addition, let $D$ be a data set and $(L_{D,\rho})_{\rho\geq0}$ be a
decreasing family satisfying \eqref{generic-inclus}
for all $\rho\geq0$.
Then
the following statements are true for Algorithm~\ref{cluster-algo}:
\begin{longlist}[(ii)]
\item[(i)] the returned level ${\rho^*_D}$ satisfies both ${\rho
^*_D}\in[{\rho^*}
+2\e
, {\rho^*}+\e^*+5\e]$
and
%
\begin{equation}
\label{bound-rds-second} \t-\psi(\delta) < 3\t^* \bigl({\rho^*_D}-{\rho^*}+\e
\bigr) ;
\end{equation}
\item[(ii)] algorithm \ref{cluster-algo} returns two sets $B_1(D)$ and
$B_2(D)$, and these sets
can be ordered such that we have
%
\begin{eqnarray}\label{cluster-chunk-generic-bound}
\sum_{i=1}^2 \mu
\bigl(B_i(D) \symdif A_i^* \bigr) & \leq& 2\sum
_{i=1}^2 \mu \bigl(A_i^* \setminus
\bigl(A^i_{\rho^*_D+\e
}\bigr)^{-\delta} \bigr)
\nonumber
\\[-8pt]
\\[-8pt]
\nonumber
&& {}+ \mu \bigl( M_{{\rho^*_D}-\e}^{+\delta}
\setminus\bigl\{ h>{\rho^*}\bigr\} \bigr).
\end{eqnarray}
Here, $A^i_{\rho^*_D+\e}\in\ca C(M_{{\rho^*_D+\e}})$ are ordered
in the sense of
$A^i_{\rho^*_D+\e}\subset A_i^*$.
\end{longlist}
\end{theorem}

\section{Finite sample analysis of a histogram-based algorithm}\label
{histo-sec}

In this section, we 
consider the case where the level set estimates $L_{D,\rho}$ fed into
Algorithm \ref{cluster-algo}
are produced by a histogram.
The main result in this section shows
that the error estimates of Theorem~\ref{analysis-main-combined-new}
hold with high probability.

To ensure \eqref{generic-inclus}, we will use, as in \cite{Steinwart11a},
partitions that
are geometrically well behaved.
To this end, recall that
the diameter of an $A\subset X$ is
\[
\diam A:= \sup \bigl\{d\bigl(x,x'\bigr): x,x'\in A
\bigr\}.
\]
Now, the assumptions made on the used partitions are as follows:


{\renewcommand{\theassumption}{A}
\begin{assumption}\label{assa}
For each $\delta\in(0,1]$, $\ca A_\delta= (A_1,\ldots,A_{m_\delta
})$ is a
partition of $X$. Moreover,
there exist constants
$\dpart>0$ and
$\cpart\geq1$
such that, for all $\delta\in(0,1]$ and $i=1,\ldots,m_\delta$, we have
\[
\diam A_i \leq\delta,\qquad m_\delta\leq\cpart
\delta^{-\dpart} \quad\mbox{and}\quad \mu(A_i) \geq\cpart^{-1}
\delta^{\dpart}.
\]
\end{assumption}}

The most important examples of families of partitions 
satisfying Assumption~\ref{assa}
are hyper-cube partitions of $X\subset\mathbb{R}^d$
in combination with the Lebesgue measure; see \cite{SteinwartXXb}, Example
A.7.1, for details.
Other situations in which partitions satisfying Assumption~\ref{assa} can be
found include spheres
$X:=\mathbb S^{d}\subset\R^{d+1}$ together with their surface measures
and $\dpart= d-1$,
sufficiently compact metric groups in combination their Haar measure and
\emph{known}, sufficiently smooth
$\dpart$-dimensional sub-manifolds equipped their surface measure.
For details we refer to \cite{SteinwartXXb}, Lemma A7.2 and Corollary A.7.3.

Let us now assume that
Assumption~\ref{assa} is satisfied.
Moreover, for a data set $D= (x_1,\ldots,x_n)\in X^n$ we denote, in a
slight abuse of
notation,
the corresponding empirical measure by $D$, that is, $D:= \frac{1} n
\sum_{i=1}^n \delta_{x_i}$,
where $\delta_{x}$ is the Dirac measure at $x$.
Then the resulting histogram is
%
\begin{equation}
\label{histogram-def} h_{D, \delta}(x) = \sum_{j=1}^{m_\delta}
\frac{D(A_j)} {\mu(A_j)} \cdot\eins _{A_j}(x),\qquad x\in X.
\end{equation}
%
The following theorem provides a finite sample analysis for
using the plug-in estimates $L_{D,\rho} := \{h_{D, \delta} \geq\rho
\}$
in
Algorithm \ref{cluster-algo}.

\begin{theorem}\label{analysis-main2-new}
Let Assumptions \textup{\ref{assa}} and \textup{\ref{assc}} be satisfied.
For a fixed $\delta\in(0,\dthick]$, $\vs\geq1$, $n\geq1$ and
$\t> \psi(\delta)$, we fix an $\e>0$ satisfying the bound
%
\begin{equation}
\label{analysis-main2-h0} \e\geq\cpart
\sqrt{\frac{E_{\vs, \delta}}{2\delta^{2\dpart}n}},
\end{equation}
where $E_{\vs, \delta}:= \vs+\ln(2\cpart) -\dpart\ln\delta$,
or
if $P$ has a bounded $\mu$-density $h$, the bound
%
\begin{equation}
\label{analysis-main2-alter} \e\geq\sqrt{\frac{2\cpart(1 + \Vert h \Vert_\infty)E_{\vs,
\delta}}{\delta
^{\dpart}n}} + \frac{2\cpart E_{\vs, \delta}}{3\delta^{\dpart}n}.
\end{equation}
We further pick an $\e^*>0$ satisfying
%
\begin{equation}
\label{estar} \e^* \geq\e+ \inf \bigl\{\e'\in\bigl(0,{
\rho^{**}}-{\rho^*}\bigr] : \t ^*\bigl(\e'\bigr) \geq \t \bigr\}.
\end{equation}
For each data set $D\in X^n$, we now feed Algorithm \ref{cluster-algo}
with the parameters $\t$ and $\e$,
and with the family $(L_{D,\rho})_{\rho\geq0}$ given by
\[
L_{D,\rho} := \{h_{D, \delta} \geq\rho\},\qquad \rho\geq0.
\]
If $\e^* \leq({\rho^{**}}-{\rho^*})/9$,
then with probability $P^n$ not less than $1-e^{-\vs}$, we have a
$D\in X^n$ satisfying the assumptions and conclusions of Theorem~\ref
{analysis-main-combined-new}.
\end{theorem}

At this point we like to emphasize that a finite sample bound in the
form of Theorem~\ref{analysis-main2-new} can be derived from our analysis whenever
Algorithm \ref{cluster-algo} uses
a density level set estimator guaranteeing the inclusions
$M_{\rho+\e}^{-\delta}\subset L_{D,\rho} \subset M_{\rho-\e
}^{+\delta}$ with
high probability.
A possible example of such an alternative level set estimator is a
plug-in approach based
on a moving window density estimator, since for the latter it is
possible to establish a
uniform convergence result similar to
\cite{SteinwartXXb}, Theorem~A.8.1; see, for example, \cite
{SrSt12a,GiGu02a}.
Unfortunately, the resulting level sets become \emph{computationally
unfeasible} when used na\"ively, and hence
we have not included this approach here.
It is, however, an interesting open question, whether sets
$L_{D,\rho}$ that are constructed differently from the moving window estimator
can address this issue. So far, the only known result in this direction
\cite{SrSt12a} constructs such sets for $\alpha$-H\"older-continuous
densities $h$
with \emph{known} $\alpha$, but we conjecture that a similar construction may
be possible for general $h$, too.
In addition, strategies such as approximating the sets $L_{D,\rho}$ by
fine grids may be feasible, at least for
small dimensions, too.

\section{Consistency and rates}\label{sec:cons-rates}

The first goal of this section is to use
the finite sample bound of Theorem~\ref{analysis-main2-new}
to show that Algorithm \ref{cluster-algo} estimates both ${\rho^*}$
and the
clusters $A_i^*$ consistently.
We then introduce some assumptions on $P$ that lead to convergence
rates for both estimation problems.

The following consistency result is a modification of~\cite{Steinwart11a}, Theorem~26;
see also~\cite{SteinwartXXb}, Section A.9, for a corresponding
modification of its proof.

\begin{theorem}\label{main}
Let Assumptions \textup{\ref{assa}} and \textup{\ref{assc}} be satisfied, and let
$(\e_n)$, $(\delta_n)$ and $(\t_n)$ be strictly positive sequences
converging to zero
such that $\psi(\delta_n) < \t_n$ for all sufficiently large $n$, and
%
\begin{equation}
\label{cons-assump} \lim_{n\to\infty} \frac{ \ln\delta_n^{-1}} {n \delta_n^{2\dpart
}\e_n^2} = 0.
\end{equation}
For $n\geq1$, consider Algorithm \ref{cluster-algo}
with the input parameters $\e_n$, $\t_n$ and
the family $(L_{D,\rho})_{\rho\geq0}$ given by $ L_{D,\rho} := \{
h_{D, \delta
_n} \geq\rho\}$.
Then, for all $\eps>0$, we have
\[
\lim_{n\to\infty}P^n \bigl( \bigl\{ D\in
X^n: 0< {\rho^*_D}- {\rho ^*}\leq \eps \bigr\} \bigr) =
1,
\]
and if $\mu(\overline{A_i^* \cup A_2^*}\setminus(A_1^* \cup A_2^*)) =
0$, we also have
\[
\lim_{n\to\infty}P^n \bigl( \bigl\{ D\in
X^n: \mu\bigl(B_1(D) \symdif A_1^*\bigr) +
\mu\bigl(B_2(D) \symdif A_2^*\bigr) \leq\eps \bigr\}
\bigr) = 1,
\]
where, for $B_1(D)$ and $B_2(D)$,
we use the same numbering as in \eqref{cluster-chunk-generic-bound}.
\end{theorem}

Note that
the assumption
$\mu(\overline{A_i^* \cup A_2^*}\setminus(A_1^* \cup A_2^*)) = 0$
is satisfied if there exists a $\mu$-density $h$ of $P$
such that $ \mu(\partial\{h\leq{\rho^*}\})=0$; see
\cite{SteinwartXXb}, Section A.9. 

Theorem~\ref{main} shows that
for suitably chosen parameters and histogram-based level set estimates
Algorithm \ref{cluster-algo} asymptotically recovers both ${\rho^*}$ and
the clusters $A_1^*$ and $A_2^*$, if the distribution $P$ has level
sets that are thicker than a user-specified
order $\g$.
To illustrate this, suppose that we
choose $\delta_n\sim n^{-\alpha}$ and $\e_n\sim n^{-\beta}$ for some
$\alpha,\beta>0$. Then it is easy to check that \eqref{cons-assump} is
satisfied if and only if
\mbox{$2(\alpha\dpart+\beta)<1$}. For $\t_n \sim n^{-\alpha\g}\ln n$, we then have
$\psi(\delta_n) < \t_n$ for all sufficiently large $n$, and therefore,
Algorithm \ref{cluster-algo}
recovers the clusters for all distributions $P$ that have thick levels
of order $\g$. Similarly,
the choice $\t_n \sim(\ln n)^{-1}$ leads to consistency for all
distributions $P$ that have thick levels
of some order $\g>0$.
Finally note that \eqref{cons-assump} can be replaced by
\[
\frac{ \ln\delta_n^{-1}} {n \delta_n^{\dpart}\e_n^2} \to0
\]
if we restrict our consideration to distributions with bounded $\mu
$-densities. The proof of
this is a straightforward modification of the proof of Theorem~\ref{main}.

To give two examples, recall from the discussion in \cite{SteinwartXXb}, Section
A.5,
that for the one-dimensional case $X=[a,b]$, we always have $\g=1$. In
two dimensions this is, however, no longer true  as, for
example, Figure~\ref{fig:diff_psi}
illustrates.
Nonetheless, there do exist many examples of both discontinuous and
continuous densities
for which we have thickness $\g=1$; see \cite{SteinwartXXb}, Section B.2.
Finally note that the construction used there  can be easily
generalized to higher dimensions.

%

For our next goal, which is establishing rates
for both $\mu(B_i(D) \symdif A_i^*) \to0$ and ${\rho^*_D}\to{\rho^*}$,
we need, as usual,
some assumptions on $P$. Let us begin by introducing
an assumption that leads to rates for the estimation of ${\rho^*}$.

\begin{definition}
Let Assumption~\ref{assc} be satisfied.
Then the clusters of $P$ have separation exponent $\k\in(0,\infty]$
if there is a constant
$\csepl>0$ such that
\[
\t^*(\e) \geq\csepl\e^{1/\k}
\]
for all $\e\in(0,{\rho^{**}}-{\rho^*}]$.
Moreover, the separation exponent $\k$ is exact if
there exists another
constant $\csepu>0$ such that,
for all $\e\in(0,{\rho^{**}}-{\rho^*}]$, we have
\[
\t^*(\e) \leq\csepu\e^{1/\k}.
\]
\end{definition}

The separation exponent describes how fast the connected components
of the
$M_\rho$ approach each other for $\rho\searrow{\rho^*}$. Note that
the separation exponent is monotone, that is,  a distribution having
separation exponent $\k$ also has
separation exponent $\k'$ for all $\k'<\k$.
In particular, the ``best'' separation exponent is $\k=\infty$, and this
exponent describes distributions,
for which we have $d(A_1^*, A_2^*)\geq\csepl$;
that is, the clusters $A_1^*$ and $A_2^*$ do not touch each other.

To illustrate the separation exponent, let us consider
$X:= -[3,3]$ and, for $\theta, \beta\in(0,\infty]$ and ${\rho^*}\in[0,1/6)$,
the distribution $P_{\theta,\beta}$ that has the density
%
\begin{equation}\qquad
\label{one-dim-dens} h_{\theta,\beta}(x) := {\rho^*}+ c_{\theta,\beta} \bigl( \eins
_{[0,1]}\bigl(|x|\bigr)|x|^\theta + \eins_{[1,2]}\bigl(|x|\bigr) +
\eins_{[2,3]}\bigl(|x|\bigr) \bigl(3-|x|\bigr)^\beta \bigr),
\end{equation}
where $c_{\theta,\beta}$ is a constant ensuring
that $h_{\theta,\beta}$ is a probability density; 
see also Figure~\ref{fig:density_with_valley} for two examples.
Note that $P_{\theta,\beta}$ can be clustered between
${\rho^*}$ and ${\rho^{**}}:= {\rho^*}+ c_{\theta,\beta}$.
Moreover, $P_{\theta,\beta}$ always has exact separation exponent
$\theta$.

The polynomial behavior in the upper vicinity of ${\rho^*}$ of the
distributions \eqref{one-dim-dens}
is somewhat archetypal for smooth densities on $\R$.
%
For example, for $C^2$-densities $h$ whose first derivative $h'$ has
exactly one zero $x_0$ in the set $\{h={\rho^*}\}$
and whose second derivative satisfies $h''(x_0)>0$, one can easily show
with the help of Taylor's theorem
that their behavior in the upper vicinity of ${\rho^*}$ is asymptotically
identical to that of \eqref{one-dim-dens}
for $\k=\theta=2$ and $\beta=1$. Moreover, larger values for $\k
=\theta$
can be achieved
by assuming that higher derivatives of $h$ vanish at $x_0$.
Analogously, the class of continuous densities on $\R^2$
from \cite{SteinwartXXb}, Section B.2, have separation exponent $\k
=2$ (see
\cite{SteinwartXXb}, Example B.2.1), as these densities, similar to
Morse functions,
behave like $x_1^2-x_2^2$ in the vicinity of the
saddle point.
Again,
the
construction can be modified to achieve other exponents.

In the following we show how
the separation exponent influences the rate for estimating ${\rho^*}$.
We begin with a finite sample bound.

%

\begin{theorem}\label{rates-1}
Let Assumptions \textup{\ref{assa}} and \textup{\ref{assc}} be satisfied, and assume additionally that $P$
has a bounded
$\mu$-density $h$ and that its
clusters have separation exponent $\k\in(0,\infty]$.
For some fixed $\delta\in(0,\dthick]$, $\vs\geq1$, $n\geq1$ and
$\t\geq2 \psi(\delta)$, we pick an $\e>0$ satisfying \eqref
{analysis-main2-alter}, that is,
\[
\e\geq\sqrt{\frac{2\cpart(1 + \Vert h \Vert_\infty)(\vs+ \ln
(2\cpart) -
\dpart
\ln\delta)}{\delta^{\dpart}n}} + \frac{2\cpart(\vs+ \ln(2\cpart) - \dpart\ln\delta)}{3\delta
^{\dpart}n}.
\]
Let us assume that $\e^* := \e+ (\t/ \csepl)^\k$ satisfies $\e^*
\leq
({\rho^{**}}-{\rho^*})/9$.
Then if Algorithm \ref{cluster-algo} receives the input parameters $\e
$, $\t$ and
the family $(L_{D,\rho})_{\rho\geq0}$ given by $ L_{D,\rho} := \{
h_{D, \delta}
\geq\rho\}$,
the probability $P^n$ of a $D\in X^n$ that satisfies
%
\begin{eqnarray}
\label{rates-level-low} \e&< &{\rho^*_D}-{\rho^*},
\\
\label{rates-level-up} {\rho^*_D}-{\rho^*}&\leq& (\t/ \csepl)^\k+
6\e
\end{eqnarray}
is not less than $1-e^{-\vs}$. Moreover, if the separation exponent
$\k
$ is exact and $\k<\infty$, then
we can replace \eqref{rates-level-low} by
%
\begin{equation}
\label{rates-level-exact} \frac{1}4 \biggl(\frac{\t}{6\csepu} \biggr)^\k+
\e< {\rho^*_D}-{\rho^*}.
\end{equation}
\end{theorem}

The finite sample guarantees of Theorem~\ref{rates-1} can be easily
used to derive (exact) rates for ${\rho^*_D}\to{\rho^*}$.
The following corollary presents, modulo
(double) logarithmic factors, the best rates
we can derive by this approach.

\begin{corollary}\label{rates-cor1}
Let Assumptions \textup{\ref{assa}} and \textup{\ref{assc}} be satisfied, and assume that $P$ has bounded
$\mu$-density and that its
clusters have separation exponent $\k\in(0,\infty)$. Furthermore, let
$(\e_n)$, $(\delta_n)$ and $(\t_n)$ be sequences with
\[
\e_n \sim \biggl( \frac{\ln n\cdot\ln\ln n}n \biggr)^{{\g\k
}/{(2\g\k+ \dpart)}},\qquad
\delta_n \sim \biggl( \frac{\ln n}n \biggr)^{{1}/{(2\g\k+\dpart)}}\quad
\mbox{and}\quad \t_n \sim\e_n^{1/\k},
\]
and assume that, for $n\geq1$, Algorithm \ref{cluster-algo} receives
the input parameters $\e_n$, $\t_n$ and
the family $(L_{D,\rho})_{\rho\geq0}$ given by $ L_{D,\rho} := \{
h_{D, \delta
_n} \geq\rho\}$. Then
there exists a constant $\overline K\geq1$ such that for all
sufficiently large $n$, we have
%
\begin{equation}
\label{rates-cor1-hx} P^n \bigl( \bigl\{D\in X^n: {
\rho^*_D}-{\rho^*} \leq\overline K \e_n
 \bigr\} \bigr) \geq1 - \frac{1}n.
\end{equation}
Moreover, if the separation exponent $\k$ is exact, there exists
another constant $\underbar K\geq1$
such that for all sufficiently large $n$, we have
%
\begin{equation}
\label{rates-cor1-hxx} P^n \bigl(\bigl\{D\in X^n: \underbar K
\e_n 
\leq{\rho^*_D}-{\rho^*}\leq
\overline K \e_n 
 \bigr\}\bigr) \geq1 -
\frac{1}n.
\end{equation}
Finally, if $\k=\infty$, then \eqref{rates-cor1-hxx} holds
for all sufficiently large $n$ if
\[
\e_n \sim \biggl( \frac{\ln n \cdot\ln\ln n}n \biggr)^{{1}/ 2},\qquad
\delta_n \sim ( \ln\ln n )^{-{1}/{(2\dpart)}} \quad\mbox{and}\quad \t_n
\sim ( \ln\ln n )^{-{\g}/{(3\dpart)}}.
\]
%
\end{corollary}

%

Recall that for the one-dimensional distributions \eqref{one-dim-dens}
we have $\g=1$ and $\k= \theta$, so that the exponent in the rates
above becomes $\frac{\theta}{2\theta+1}$.
In particular, for the $C^2$-case discussed there, we have $\theta=2$,
and thus we
get a rate with exponent $2/5$, while for $\theta\to\infty$ the
exponent converges to
$1/2$. Similarly, for the typical, two-dimensional distributions
considered in \cite{SteinwartXXb}, Section B.2, we have
$\g=1$, $\k= 2$ and $\dpart=2$, and hence the exponent in the rate
is $1/3$.

Our next goal is to establish rates for $\mu(B_i(D)\symdif A_i^*)\to
0$. Since this is a modified level set estimation problem, let us
recall some
assumptions on $P$, which have been used in this context. 
The first assumption in this direction is a one-sided variant of a
well-known condition
introduced by Polonik
\cite{Polonik95a}.


\begin{definition}
Let $\mu$ be a finite measure on $X$ and $P$ be a distribution on $X$
that has a $\mu$-density $h$.
For a given level $\rho\geq0$, we say that
$P$ has flatness exponent $\vt\in(0,\infty]$ if there exists a
constant $\cflat>0$ such that
%
\begin{equation}
\label{polon} \mu \bigl(\{ 0< h-\rho<s \} \bigr) \leq(\cflat s)^\vt,\qquad
s>0.
\end{equation}
\end{definition}

Clearly, the larger the $\vt$, 
the more steeply $h$ must approach $\rho$ from above. In particular, for
$\vt=\infty$, the
density $h$ is allowed to take the value $\rho$ but is otherwise
bounded away from $\rho$.
For example,
the
densities in \eqref{one-dim-dens} have
a flatness exponent $\vt= \min\{1/\theta, 1/\beta\}$ if
$\theta<\infty$ and $\beta<\infty$ and a flatness exponent $\vt=
\infty$
if $\theta=\beta=\infty$.
Finally, for the two-dimensional distributions of \cite{SteinwartXXb}, Section
B.2, the flatness exponent
is not fully determined by their definition,
but some calculations
show that we have $\vt\in(0,1]$.

Next, we describe 
the roughness of the boundary of the clusters.

\begin{definition}
Let Assumption~\ref{assc} be satisfied. Given some $\alpha\in(0,1]$,
the clusters have
an $\alpha$-smooth boundary if
there exists a constant $\cbound>0$ such that, for all $\rho\in
({\rho^*}
,{\rho^{**}}
]$, $\delta\in(0,\dthick]$ and $i=1,2$, we have
%
\begin{equation}
\label{box-dim} \mu \bigl( \bigl(A^i_\rho
\bigr)^{+\delta}\setminus\bigl( A^i_\rho
\bigr)^{-\delta
} \bigr) \leq \cbound \delta^\alpha,
\end{equation}
where $A^1_\rho$ and $A^2_\rho$ denote the two connected components
of the
level set $M_\rho$.
\end{definition}

In $\R^d$, considering $\alpha>1$ does not make sense, and for an
$A\subset
\mathbb{R}^d$
with rectifiable boundary, we always have $\alpha=1$; see \cite{SteinwartXXb}, Lemma
A.10.4.
The $\alpha$-smoothness of the boundary thus enforces a uniform version of
this, which, however,
is not very restrictive; see,
for example, the densities of \eqref{one-dim-dens}, for which we have
$\alpha=1$ and $\cbound= 4$, and
\cite{SteinwartXXb}, Example B.2.2, for which we also have $\alpha=1$.

\begin{figure}

\includegraphics{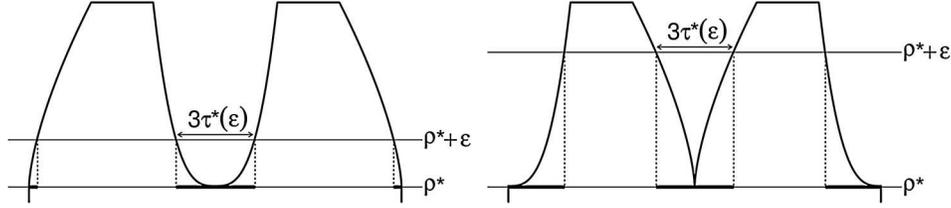}

\caption{Separation and flatness. Left: The density
$h_{\theta
,\beta}$ described in \protect\eqref{one-dim-dens} for $\theta= 3$ and
$\beta=2/3$.
The bold horizontal line indicates the set $\{{\rho^*}<h<{\rho^*}+\e
\}$, and
$3\t
^*(\e)$ describes the width of the valley at level
${\rho^*}+\e$.
Right: Here we have the same situation for $\theta= 2/3$ and $\beta
=3$. The value of $\e$ is chosen such that $3\t^*(\e)$
equals the value on the left. The smaller value of $\theta$ narrows the
valley, and
hence $\e$ needs to be chosen larger. As a result, it becomes more
difficult to estimate ${\rho^*}$ and the clusters.
Indeed, ignoring logarithmic
factors, Corollary \protect\ref{rates-cor1} gives a rate of
$n^{-3/7}$ on the
left and a rate of $n^{-2/7}$ on the right,
while Corollary \protect\ref{rates-cor2} gives a rate of $n^{-1/7}$
on the left
and a rate of $n^{-2/21}$ on the right.
Finally, in the most typical case $\theta= 2$ and $\beta=1$ not
illustrated here, we obtain the rates $n^{-1/3}$
and $n^{-1/5}$.}
\label{fig:density_with_valley}
\end{figure}

%

The following assumption collects all conditions we need to impose on
$P$ to get rates
for estimating the clusters.

\renewcommand{\theassumption}{R}
\begin{assumption}\label{assr}
Assumptions \textup{\ref{assa}} and \textup{\ref{assc}} are satisfied, and $P$ has\break a
bounded $\mu$-density $h$. Moreover,
$P$ has a flatness exponent $\vt\in(0,\infty]$ at level ${\rho^*}$,
its clusters have an $\alpha$-smooth boundary for some $\alpha\in(0,1]$
and its clusters have a separation exponent $\k\in(0,\infty]$.
\end{assumption}

Let us now investigate how well our algorithm estimates the clusters
$A_1^*$ and~$A_2^*$.
As usual, we begin with a finite-sample estimate.

\begin{theorem}\label{rates-2}
Let Assumption~\ref{assr} be satisfied, and assume that $\delta$, $\e$, $\t$,
$\e
^*$, $\vs$, $n$ and $(L_{D,\rho})_{\rho\geq0}$
are as in Theorem~\ref{rates-1}. Then
the probability $P^n$ of having a data set $D\in X^n$
satisfying \eqref{rates-level-low}, \eqref{rates-level-up} and
\[
\mu \bigl(B_1(D) \symdif A_1^* \bigr) + \mu
\bigl(B_2(D) \symdif A_2^* \bigr) \leq6\cbound
\delta^\alpha+ \bigl(\cflat( \t/ \csepl)^\k+ 7\cflat\e
\bigr)^\vt
\]
%
is not less than $1-e^{-\vs}$, where the sets $B_1(D)$ and $B_2(D)$
are ordered as in \eqref{cluster-chunk-generic-bound}.
Moreover,
if the separation exponent $\k$ is exact and satisfies $\k<\infty$,
then \eqref{rates-level-exact} also
holds for these data sets $D$.
\end{theorem}

Note that
for \emph{finite} values of $\vt$ and $\k$,
the bound in Theorem~\ref{rates-2} behaves like $\delta^\alpha+ \t^{\vt
\k} + \e
^\vt$,
and in this case it is thus easy to derive the best convergence rates
our analysis yields.
The following corollary
presents corresponding results and also provides rates for the cases
$\vt= \infty$ or $\k=\infty$.

\begin{corollary}\label{rates-cor2}
Assume that Assumption~\ref{assr} is satisfied, and write $\vr:= \min\{\alpha,
\vt
\g\k\}$.
Furthermore, let
$(\e_n)$, $(\delta_n)$ and $(\t_n)$
be sequences with
\begin{eqnarray*}
\e_n &\sim& \biggl( \frac{\ln n}{n} \biggr)^{{\vr}/{(2\vr+ \vt
\dpart)}} (\ln\ln
n)^{-{\vt\dpart}/{(8\vr+4\vt\dpart)}} , \\
\delta_n &\sim &\biggl( \frac{\ln n \cdot\ln\ln n}n
\biggr)^{
{\vt
}/{(2\vr+ \vt\dpart)}}\quad \mbox{and}
\\
\t_n &\sim& \biggl( \frac{\ln n \cdot(\ln\ln n)^2}n \biggr)^{
{\vt
\g}/{(2\vr+ \vt\dpart)}}.
\end{eqnarray*}
Assume that, for $n\geq1$, Algorithm \ref{cluster-algo} receives the
parameters $\e_n$, $\t_n$ and
the family $(L_{D,\rho})_{\rho\geq0}$ given by $ L_{D,\rho} := \{
h_{D, \delta
_n} \geq\rho\}$. Then
there is a constant $\overline K\geq1$ such that, for all $n\geq1$
and the ordering as in \eqref{cluster-chunk-generic-bound},
we have
\[
P^n \Biggl( D: \sum_{i=1}^2
\mu \bigl(B_i(D) \symdif A_i^* \bigr) 
\leq K \biggl( \frac{\ln n \cdot(\ln\ln n)^2}n \biggr)^{{\vt
\vr
}/{(2\vr+ \vt\dpart)}} \Biggr) \geq1-
\frac{1} n.
\]
%
\end{corollary}

Let us now
compare the established rates for estimating ${\rho^*}$ and the clusters
in the most important
case, that is, $\alpha=1$. If $\vt\g\k\leq1$, we obtain
$\vr= \vt\g\k$ in Corollary~\ref{rates-cor2}, and the exponent in
the asymptotic behavior of
the optimal $(\delta_n)$ becomes $\frac{1} {2\g\k+ \dpart}$. Since this
equals the exponent in Corollary~\ref{rates-cor1}, and, modulo the extra $\ln\ln n$ terms, we also have
the same behavior for
$(\e_n)$ and $(\t_n)$ in both corollaries, we conclude that we obtain
the rates in Corollaries \ref{rates-cor1}
and \ref{rates-cor2} with (essentially) the \emph{same} controlling sequences
$(\e_n)$, $(\delta_n)$ and $(\t_n)$ of Algorithm \ref
{cluster-algo}. If
$\vt\g\k\leq1$,
we can thus achieve the
best rates for estimating ${\rho^*}$ and the clusters \emph{simultaneously}.
Unfortunately, this changes if $\vt\g\k> 1$. Indeed, while the
exponent for $(\delta_n)$
in Corollary~\ref{rates-cor1} remains the same, it changes
from $\frac{1} {2\g\k+ \dpart}$
to $\frac{\vt}{2+\vt\dpart}$ in Corollary~\ref{rates-cor2}, and a
similar effect takes place for the sequences
$(\e_n)$ and $(\t_n)$.
The reason for this difference is that
in the case $\vt\g\k> 1$ the estimation of
${\rho^*}$ is easier
than the
estimation of the level set $M_{\rho^*}$, and since for estimating the
clusters we need to do both,
the level set estimation rate determines the rate for estimating the clusters.

To illustrate this difference between the estimation of
${\rho^*}$ and the clusters in more detail, let us consider the toy model
\eqref{one-dim-dens} in the case $\theta=\beta=\infty$, that is, $\k=
\infty$.
Then the clusters are stumps, and the sets $M_\rho$
do not change between ${\rho^*}$ and ${\rho^{**}}$.
Intuitively, the best choice for estimating ${\rho^*}$ are then sufficiently
small but fixed values for $\delta_n$ and $\t_n$,
so that $\e_n$ converges to $0$ as fast as possible.
In Corollary~\ref{rates-cor1} this is mimicked by choosing very slowly
decaying sequences
$(\delta_n)$ and $(\t_n)$.
On the other hand, to find $A_1^*$ and $A_2^*$
%
it suffices to identify one $\rho\in({\rho^*},{\rho^{**}}]$ and to estimate
the connected components of $M_\rho$. The best way to achieve this is to
use a sufficiently small but fixed
value for $\e_n$
and sequences $(\delta_n)$ and $(\t_n)$ that converge to zero as fast
as possible.
In Corollary~\ref{rates-cor2} this is mimicked by choosing a very
slowly decaying sequence $(\e_n)$
and quickly decaying sequences $(\delta_n)$ and $(\t_n)$.

As for estimating the critical level ${\rho^*}$, we do not know so
far, whether
our rates for estimating the clusters are minmax optimal, but our
conjecture is that
they are optimal modulo the
logarithmic terms. To motivate our conjecture, let us consider the case
$\alpha=\g=1$. Moreover, assume that two-sided versions of
\cite{SteinwartXXb}, (A.10.4) and (A.10.6),
hold for all $\rho\in({\rho^*}, {\rho^{**}}]$, respectively, $\rho
={\rho^*}$.
Then we have $\k= \theta$ and $\vt= 1/\theta$ by
\cite{SteinwartXXb}, Lemmas A.10.1 and A.10.5,
and thus we find $\vr= 1$. Consequently, the rates in Corollary~\ref{rates-cor2} have the exponent
$\frac{1}{2\theta+d}$. This is exactly the same exponent as the one
obtained in
\cite{SiScNo09a} for minmax optimal and adaptive Hausdorff
estimation of a fixed level set.
In addition, it seems that their lower bound, which is based on
\cite{Tsybakov97a},
is, modulo logarithmic factors, the same for assessing the estimator in
the way we have
done it in Corollary~\ref{rates-cor2}.
While this coincidence indicates that our rates may be (essentially)
optimal, it is,
of course, not a rigorous argument. A detailed analysis is, however,
out of the scope of
this paper. Another interesting question, which is also out of the
scope, is whether
the estimates $B_i(D)$ approximate the true clusters $A_i^*$ in the
Hausdorff metric, too, and
if so, whether we can achieve the rates reported in \cite{SiScNo09a}.

\section{Data-dependent parameter selection}\label{sec:par-select}

In the last section we derived rates of convergence for both the
estimation of ${\rho^*}$ and the
clusters.
In both cases, our best rates required sequences $(\e_n)$, $(\delta
_n)$ and
$(\t_n)$
that did depend on some properties of $P$,
namely $\alpha$, $\k$, $\vt$. 
Of course, these parameters are not available to us in practice, and
therefore the
obtained rates are of little practical value. The goal of this final
section is to address this
issue by proposing a simple data-dependent parameter selection strategy
that is able to recover the rates of Corollary~\ref{rates-cor1} without knowing anything about $P$. We further show that
this selection strategy recovers the rates of Corollary~\ref{rates-cor2}
in the case of $\vt\g\k\leq\alpha$.

We begin by presenting the parameter selection strategy. To this end,
%
let $\D\subset(0,1]$ be finite and $n\geq1$, $\vs\geq1$.
For $\delta\in\D$, we fix a $\t_{\delta,n}>0$ and define
%
\begin{eqnarray}\label{eps-adap}
\e_{\delta,n} &:=& C\sqrt{\frac{\cpart (\vs+ \ln(2\cpart|\D|) -\dpart\ln
\delta
 )\ln\ln n}{\delta^{\dpart}n}}
\nonumber
\\[-8pt]
\\[-8pt]
\nonumber
 &&{} + \frac{2\cpart (\vs+ \ln(2\cpart|\D|) -\dpart\ln\delta
)}{3\delta^{\dpart}n},
\end{eqnarray}
where $C\geq1$ is some user-specified constant.
Now assume that, for each $\delta\in\D$,
we run Algorithm \ref{cluster-algo} with the parameters $\e_{\delta
,n}$ and
$\t_{\delta,n}$,
and the family $(L_{D,\rho})_{\rho\geq0}$ given by $ L_{D,\rho} :=
\{h_{D, \delta
} \geq\rho\}$.
We
write $\rho^*_{D,\delta}$ for the
corresponding level returned by Algorithm \ref{cluster-algo}. Let us consider
a width $\delta_{D,\D}^*\in\D$ that achieves the smallest returned level,
that is,
%
\begin{equation}
\label{def-rdds} \delta_{D,\D}^* \in\arg\min_{\delta\in\D}
\rho^*_{D,\delta}.
\end{equation}
Note that in general, this width may not be uniquely determined, so
that in the following
we need to additionally assume that we have a well-defined choice, for
example, the smallest $\delta\in\D$
satisfying \eqref{def-rdds}.
Moreover, we write
%
\begin{equation}
\label{def-rsdd} \rho^*_{D,\D} := \rho^*_{D,\delta^*_{D,\D}} = \min
_{\delta\in\D
} \rho^*_{D,\delta}
\end{equation}
for the smallest returned level.
Note that unlike $\delta_{D,\D}^*$, the level $\rho^*_{D,\D}$ is
always unique.
Finally, we define $\e_{D,\D}:=\e_{\delta^*_{D,\D},n}$ and $\t
_{D,\D}:=\t
_{\delta^*_{D,\D},n}$.

Our first goal is to show that $\rho^*_{D,\D}$ achieves the rates of
Corollary~\ref{rates-cor1}
for suitably chosen
$\D$ and $\t_{\delta,n}$. We begin with a finite sample guarantee.

\begin{theorem}\label{adaptive-level}
Let Assumptions \textup{\ref{assa}} and \textup{\ref{assc}} be satisfied, and assume that $P$ has a
bounded $\mu$-density $h$, and that
the two clusters of $P$ have separation exponent $\k\in(0,\infty]$.
For a fixed finite $\D\subset(0,\dthick]$, and $n\geq1$, $\vs\geq1$
and $C\geq1$, we define
$\e_{\delta,n}$ by \eqref{eps-adap} and choose $\t_{\delta,n}$
such that
$\t_{\delta,n} \geq2 \psi(\delta)$ for all $\delta\in\D$.
Furthermore, assume that
$C^2\ln\ln n\geq2(1+\Vert h \Vert_\infty)$ and
$ \e^*_\delta:= \e_{\delta,n} + (\t_{\delta,n}/ \csepl)^\k\leq
({\rho^{**}}-{\rho^*})/9$
for all $\delta\in\D$. Then we have
\[
P^n \Bigl( \Bigl\{D\in X^n: \e_{D,\D} <
\rho^*_{D,\D}- {\rho ^*}\leq \min_{\delta
\in\D} \bigl((
\t_{\delta,n}/ \csepl)^\k+ 6\e_{\delta,n} \bigr) \Bigr\}
\Bigr) \geq1-e^{-\vs}.
\]
Moreover, if the separation exponent $\k$ is exact and $\k<\infty$,
then the assumptions above
actually guarantee
\[
P^n \Bigl(D: \min_{\delta\in\D} \bigl(c_1
\t_{\delta,n}^\k+ \e _{\delta,n} \bigr) <
\rho^*_{D,\D}- {\rho^*}\leq\min_{\delta\in\D}
\bigl(c_2\t _{\delta
,n}^\k+ 6\e_{\delta
,n}
\bigr) \Bigr) \geq1-e^{-\vs},
\]
where $c_1 := \frac{1}4( 6\csepu)^{-\k}$ and $c_2 := \csepl^{-\k}$,
and similarly
\[
P^n \bigl( \bigl\{D\in X^n: c_1
\t_{D,\D}^\k+ \e_{D,\D} < \rho ^*_{D,\D}- {
\rho^*}\leq c_2 \t_{D,\D}^\k+ 6\e_{D,\D}
\bigr\} \bigr) \geq1-e^{-\vs}.
\]
\end{theorem}

Theorem~\ref{adaptive-level} establishes the same finite sample
guarantees for
the estimator $\rho^*_{D,\D}$ as Theorem~\ref{rates-1} did for the
simpler estimator ${\rho^*_D}$.
Therefore, it is not surprising that for suitable choices of $\D$, the
rates of
Corollary~\ref{rates-cor1} can be recovered, too.
The next corollary shows that this can actually be achieved for
candidate sets $\D$ that are completely independent
of $P$.

\begin{corollary}\label{adaptive-level-cor}
Assume that Assumptions \textup{\ref{assa}} and \textup{\ref{assc}} are satisfied, that $P$ has a bounded
$\mu$-density $h$ and that
the two clusters of $P$ have separation exponent $\k\in(0,\infty]$.
For $n\geq16$, we consider the interval
\[
I_n:= \biggl[ \biggl( \frac{\ln n \cdot(\ln\ln n)^2}n \biggr)^{
{1}/{\dpart}},
\biggl(\frac{1} {\ln\ln n} \biggr)^{{1}/\dpart} \biggr]
\]
and fix some $n^{-1/\dpart}$-net $\D_n\subset I_n$ of $I_n$ with $|\D
_n|\leq n$.
Furthermore, for some fixed $C\geq1$ and $n\geq16$, we write $\t
_{\delta
,n}:= \delta^\g\ln\ln\ln n$ and define
$\e_{\delta,n}$ by \eqref{eps-adap}
for all $\delta\in\D_n$ and $\vs= \ln n$. Then there exists a constant
$\overline K$ such that, for all
sufficiently large $n$, we have
%
\begin{equation}\qquad
\label{adaptive-level-cor-hx} P^n \biggl(D: \e_{D,\D_n}<
\rho^*_{D,\D_n}-{\rho^*} \leq\overline K \biggl( \frac{\ln n \cdot(\ln\ln n)^2}n
\biggr)^{{\g\k}/{(2\g\k+ \dpart)}} \biggr) \geq1 - \frac{1}n.
\end{equation}
%
If, in addition, the separation exponent $\k$ is exact and $\k<\infty
$, then 
there is
another constant $\underline K$ such that for all sufficiently large
$n$, we have
\begin{eqnarray*}
&&P^n \biggl( D : \underline K \biggl( \frac{\ln n \cdot\ln\ln n}n
\biggr)^{{\g\k}/{(2\g\k+ \dpart)}} \leq\rho^*_{D,\D_n} -{\rho^*}
\\
&&\qquad\quad \leq\overline K
\biggl( \frac{\ln n \cdot(\ln\ln n)^2}n
 \biggr)^{{\g\k}/{(2\g\k+ \dpart)}} \biggr)\\
 &&\qquad \geq1 -
\frac{1} n.
\end{eqnarray*}
\end{corollary}

Finally, we show that our parameter selection strategy
partially recovers the rates
for estimating the clusters $A_i^*$
obtained in Corollary~\ref{rates-cor2}.

\begin{corollary}\label{rates-adaptive-cor2}
Assume that Assumption~\ref{assr} is satisfied with $\alpha\geq\vt\g\k$ and
exact separation exponent $\k$.
Then, for the procedure of Corollary~\ref{adaptive-level-cor}, there
is a $K\geq1$ such that
for $n\geq1$ and the ordering as in \eqref
{cluster-chunk-generic-bound}, we have
\[
P^n \Biggl( D: \sum_{i=1}^2
\mu\bigl(B_i(D) \symdif A_i^*\bigr) 
\leq K \biggl( \frac{\ln n \cdot(\ln\ln n)^2}n \biggr)^{{\vt
\g\k
}/{(2\g\k+ \vt\dpart)}} \Biggr) \geq1-
\frac{1}n.
\]
%
\end{corollary}

Unfortunately, the simple parameter selection strategy \eqref{def-rdds}
is not adaptive
in the case $\alpha< \vt\g\k$, that is, in the case in which the
estimation of ${\rho^*}$ is easier than
the estimation of the corresponding clusters. It is unclear to us
whether in this case a two-stage
procedure that first estimates ${\rho^*}$ by $\rho^*_{D,\D_n}$
as above, and then uses a different strategy to estimate
the connected components at the level $\rho^*_{D,\D_n}$ can be made adaptive.

\section{Selected proofs}\label{sec-select-proofs}

In this section we present some selected proofs. All remaining proofs
can be found in \cite{SteinwartXXb}.

\begin{pf*}{Proof of Lemma~\ref{not-reg-ex}}
Let $(x_n)$ be an enumeration of $\Q\cap[0,1]$ and $I_n :=
[x_n-2^{-n-2}, x_n+2^{-n-2}]\cap[0,1]$ for $n\geq1$.
For $x\in[0,1]$ and $I_0 := [0,1]$, we further define
\[
f(x) := \sup_{n\geq0} n\eins_{I_n}(x),
\]
that is,
$f(x)$ equals the largest integer $n\geq0$ (including infinity) such
that $x\in I_n$.
For $c>0$ specified below, we now define
\[
h(x) := %
\cases{\displaystyle 2c-\frac{c} {f(x)}, &\quad $\mbox{if } f(x)>0$,
\vspace*{2pt}
\cr
0, &\quad $\mbox{else.}$} %
\]
Then $h$ is measurable, nonnegative
and Lebes\-gue-integrable, and hence we can
choose $c$ such that $\int_0^1 h(x) \,dx =1$.
Then $h$ is a density of a Lebesgue-absolutely continuous distribution $P$.
Moreover, note that $h(x) \geq2c - c/n$ for all $x\in I_n$ and $n\geq1$.
For a fixed $\rho\in(0,2c)$ we now write $n_\rho:= c/ (2c-\rho)$.
Then we have $2c-c/n\geq\rho$ if and only if
$n \geq n_\rho$. Consequently, the set
\[
A_\rho:= \bigcup_{n\geq n_\rho}^\infty
\mathring I_n
\]
satisfies $A_\rho\subset\{h\geq\rho\}$. Moreover, since $A_\rho$
is open,
we find
$A_\rho\subset\mathring{\{h\geq\rho\}}$, and thus
\[
\overline{ A_\rho} \subset\overline{\mathring{\{h\geq\rho\} }}\subset
M_\rho
\]
by \cite{SteinwartXXb}, Lemma A.1.2. In addition, we have $\{x_n:
n\geq n_\rho\}\subset A_\rho$, and since the former set is dense in $[0,1]$,
we conclude that $M_\rho= [0,1]$. On the other hand, the Lebesgue
measure $\lb$ of $\{h\geq\rho\}$ can be estimated by
\[
\lb\bigl(\{h\geq\rho\}\bigr) \leq\lb\bigl(\{h> 0\}\bigr) =\lb \Biggl( \bigcup
_{n=1}^\infty I_n \Biggr) \leq\sum
_{n=1}^\infty\lb(I_n) \leq\sum
_{n=1}^\infty2^{-n-1} =
\frac{1} 2,
\]
and hence we conclude that $\lb(M_\rho\setminus\{h\geq\rho\})\geq1/2$.
In other words, $P$ is not normal at level $\rho$.
\end{pf*}

\begin{pf*}{Proof of Theorem~\ref{reg-cluster-thm-combined}}
The monotonicity of ${\t^*}$ is shown in \cite{SteinwartXXb}, Theorem~A.4.2,
and (i) follows from
parts (i) of \cite{SteinwartXXb}, Theorems~A.4.2 and A.4.4.

(ii) Let us first consider the case $\rho<{\rho^*}$. Since $P$
can be
clustered, we have $|\ca C(M_\rho)|=1$, and
\cite{SteinwartXXb}, Proposition A.2.10, gives both $\t^*_{M_\rho} =
\infty$ and $\ca C(M_\rho) = \ca C_\t(M_\rho)$.
By \cite{SteinwartXXb}, Lemma A.4.1, we further find $\ca C_\t(M_\rho)
\persist\ca C_\t(M_\rho^{+\delta})$.
Finally, part~(ii) of \cite{SteinwartXXb}, Lemma A.4.3, yields
$1\leq|\ca C_\t(M_\rho^{-\delta})|\leq|\ca C(M_\rho)| = 1$, and
hence its
part (iii)
gives the persistence $\ca C_\t(M_\rho^{-\delta}) \persist\ca
C(M_\rho)$.

In the case $\rho\geq{\rho^*}+\e^*$,
$\ca C_\t(M_\rho) \persist\ca C_\t(M_\rho^{+\delta})$ follows
from part~(ii) of
\cite{SteinwartXXb}, Theorem A.4.2, and the equality $\ca C(M_\rho) =
\ca C_\t(M_\rho)$ follows from \cite{SteinwartXXb}, Proposition~A.2.10, in combination with $\t\leq{\t^*}(\e^*) \leq
{\t^*}
(\rho
-{\rho^*})$.
By part~(ii) of \cite{SteinwartXXb}, Theorem A.4.4, we further
know $\ca C_\t(M_{\rho^{**}}^{-\delta})\persist\ca C_\t( M_{\rho
^{**}}^{+\delta})$.
Using $\rho\geq{\rho^*}+\e^*$ and part (iv) of \cite{SteinwartXXb}, Theorem
A.4.2,
we find $|\ca C_\t(M_\rho^{-\delta})|=2$, and hence part
(iii) of~\cite{SteinwartXXb}, Theorem A.4.4,
gives $\ca C_\t(M_\rho^{-\delta}) \persist\ca C(M_\rho)$.
\end{pf*}

\begin{pf*}{Proof of Theorem~\ref{analysis-main-combined-new}}
(i) The first bound on ${\rho^*_D}$
directly follows from part (i) 
of \cite{SteinwartXXb}, Theorem A.6.2.

To show \eqref{bound-rds-second}, we observe that
parts (iii) and (iv) of \cite{SteinwartXXb}, Theorem A.6.2,
imply $2 = |\ca C_\t(M_{\rho^*_D+\e}^{-\delta})| = |\ca C(M_{\rho
^*_D+\e})|$.
Since we further have ${\rho^*_D}+\e\leq{\rho^*}+\e^*+6\e\leq
{\rho^{**}}$ by the
first bound on ${\rho^*_D}$,
%
part (iii) of 
\cite{SteinwartXXb}, Lemma A.4.3,
thus shows
\[
d(B_1,B_2) \geq\t- 2\psi^*_{M_{\rho^*_D+\e}}(\delta) \geq
\t- 2\cthick \delta^\g> \t-\psi(\delta),
\]
where
$B_1$ and $B_2$ are the two connected components of $M_{\rho^*_D+\e}$.
On the other hand, the definition of $\t^*_{M_{\rho^*_D+\e}}$ in
\cite{SteinwartXXb}, Proposition A.2.10,
together with the definition of $\t^*$ in \eqref{reg-cluster-lem-def-tds-new}
gives
\[
3\t^*\bigl({\rho^*_D}-{\rho^*}+\e\bigr) = \t^*_{M_{\rho^*_D+\e}} =
d(B_1,B_2).
\]
Combining both we find \eqref{bound-rds-second}.

(ii) Part (iii) of \cite{SteinwartXXb}, Theorem A.6.2,
shows that Algorithm \ref{cluster-algo} returns two sets.
Our next goal is to find a suitable ordering of these sets. To this end,
we adopt the notation of \cite{SteinwartXXb}, Theorem A.6.2. Moreover,
we denote
the two topologically connected components of $M_{\rho^{**}}$ by $A_1$ and
$A_2$. We further write
\[
V^i_{{\rho^*_D}+\e} := \z_{{\rho^{**}}, {\rho^*_D}+\e} \bigl(\z
_{\rho^{**}} ^{-1}(A_i) \bigr),\qquad i=1,2,
\]
for the two $\t$-connected components of $M_{\rho^*_D+\e}^{-\delta
}$. Note that part
(iv) of \cite{SteinwartXXb}, Theorem A.6.2,
ensures that we can actually make this definition, and, in addition,
it shows $V^1_{{\rho^*_D}+\e} \neq V^2_{{\rho^*_D}+\e}$.
Moreover, by parts (ii) and (iii) of \cite{SteinwartXXb}, Theorem
A.6.2, we may assume that
the sets returned by Algorithm
\ref{cluster-algo} are ordered in the sense of $ B_i(D) = \z
(V^i_{{\rho^*_D}
+\e} )$, that is,
%
\begin{equation}
\label{ret-level-sets-order-new} B_i(D) = \z\circ\z_{{\rho^{**}}, {\rho^*_D}+\e} \bigl(
\z_{\rho
^{**}}^{-1}(A_i) \bigr),\qquad i=1,2.
\end{equation}

To simplify notation in the following calculations, we write $B_i :=
B_i(D)$ for $i\in\{1,2\}$ and $\rho:= {\rho^*_D}$.
Consequently, $A^1_{\rho+\e}$ and $A^2_{\rho+\e}$ are the two connected
components of $M_{\rho+\e}= M_{{\rho^*_D}+\e}$, which
by Definition~\ref{top-clust-def1-neu} can be ordered in the sense of
$A^i_{\rho+\e} \subset A^*_i$.
Moreover,
$V^1_{\rho+\e}$ and $V^2_{\rho+\e}$ become the two $\t$-connected
components of $M_{\rho+\e}^{-\delta}$.
For $i\in\{1,2\}$, we further write $W^i_{\rho+\e} := (A^i_{\rho+\e
})^{-\delta}$.
Our first goal is to show that
%
\begin{equation}
\label{approx-main-h5-neu} W^i_{\rho+\e}\subset V^i_{\rho+\e},\qquad
i\in\{1,2\}.
\end{equation}
To this end, we fix an $x\in W^1_{\rho+\e}$.
Since $W^1_{\rho+\e}\subset A^1_{\rho+\e}$ and $W^1_{\rho+\e
}\subset M_{\rho+\e
}^{-\delta}$,
where the latter follows from $(A^1_{\rho+\e})^{-\delta}\subset
M_{\rho
+\e}^{-\delta}$,
we then have
$x\in A^1_{\rho+\e}$ and $x\in V^1_{\rho+\e}\cup V^2_{\rho+\e}$.
Let us assume that $x\in V^2_{\rho+\e}$. Then we have $V^2_{\rho+\e
}\cap
A^1_{\rho+\e}\neq\varnothing$.
Now, the diagram of \cite{SteinwartXXb}, Theorem A.6.2,
shows that
$\z_{\rho+\e}: {\ca C_{\t}(M_{\rho+\e}^{-\delta})}\to{\ca
C(M_{\rho
+\e})}$
satisfies $\z_{\rho+\e_n} (V^2_{\rho+\e}) = A^2_{\rho+\e}$, and
hence we
have $V^2_{\rho+\e} \subset A^2_{\rho+\e}$.
Consequently, $V^2_{\rho+\e}\cap A^1_{\rho+\e}\neq\varnothing$ implies
$A^2_{\rho+\e}\cap A^1_{\rho+\e}\neq\varnothing$,
which is a contradiction. Therefore, we have $x\in V^1_{\rho+\e}$; that
is, we have shown \eqref{approx-main-h5-neu}
for $i=1$. The case $i=2$ can be shown analogously.

By \eqref{approx-main-h5-neu} we find $W_{\rho+\e}^i \subset V_{\rho
+\e
}^i\subset B_i$, and thus
$\mu(A_i^*\setminus B_i ) \leq\mu(A_i^* \setminus W^i_{\rho+\e})$
for $i=1,2$.
Conversely, using $\mu(B\setminus A) = \mu(B) - \mu(A\cap B)$ twice,
we obtain
\begin{eqnarray*}
\mu \bigl(B_1 \setminus\bigl(A_1^* \cup
A_2^*\bigr) \bigr) & =& \mu(B_1 ) - \mu
\bigl(B_1 \cap\bigl(A_1^* \cup A_2^*\bigr)
\bigr)
\\
&\geq &\mu(B_1 ) - \mu\bigl(B_1 \cap A_1^*
\bigr) - \mu\bigl(B_1 \cap A_2^*\bigr)
\\
& = &\mu\bigl(B_1 \setminus A_1^*\bigr) - \mu
\bigl(B_1 \cap A_2^*\bigr).
\end{eqnarray*}
Since $B_1\cap B_2 = \varnothing$ implies $B_1\cap A_2^*\subset
A_2^*\setminus B_2$, we thus find
\begin{eqnarray*}
\mu\bigl(B_1 \symdif A_1^*\bigr) & =& \mu
\bigl(B_1\setminus A_1^*\bigr) + \mu\bigl(A_1^*
\setminus B_1 \bigr)
\\
&\leq&\mu \bigl(B_1 \setminus\bigl(A_1^* \cup
A_2^*\bigr) \bigr) + \mu \bigl(A_2^*\setminus
B_2\bigr) + \mu\bigl(A_1^*\setminus B_1
\bigr)
\\
& \leq&\mu \bigl( B_1 \setminus\bigl\{ h>{\rho^*}\bigr\} \bigr) + \mu
\bigl(A_1^* \setminus W^1_{\rho+\e}\bigr) + \mu
\bigl(A_2^* \setminus W^2_{\rho+\e}\bigr),
\end{eqnarray*}
where in the last estimate we also used \cite{SteinwartXXb}, (A.1.3).
Repeating this estimate for $\mu(B_2 \symdif A_2^*)$
and using $B_1\cup B_2 \subset L_{D,\rho} \subset M_{\rho-\e
}^{+\delta}$
yields the assertion.
\end{pf*}

\begin{pf*}{Proof of Theorem~\ref{analysis-main2-new}}
Let us fix a $D\in X^n$ with $\Vert\hdd-\hpd\Vert_\infty< \e$.
By the first estimate of \cite{SteinwartXXb}, Theorem A.8.1, we see
that the probability $P^n$ of such a $D$
is not smaller than $1-e^{-\vs}$. In the case of a bounded density and
\eqref{analysis-main2-alter},
the same holds by the second estimate of
\cite{SteinwartXXb}, Theorem A.8.1, and
\begin{eqnarray*}
&& \sqrt{ \frac{6\cpart\Vert h \Vert_\infty\vs+\ln(2\cpart)
-\dpart\ln\delta
}{3\delta
^{\dpart}n} + \biggl(\frac{2\cpart\vs}{3\delta^{\dpart}n} \biggr)^2} +
\frac
{\cpart\vs}{3\delta^{\dpart}n}
\\
&&\qquad\leq \sqrt{ \frac{6\cpart\Vert h \Vert_\infty\vs+\ln(2\cpart)
-\dpart\ln\delta
}{3\delta
^{\dpart}n}} + \frac{2\cpart\vs}{3\delta^{\dpart}n}
\\
&&\qquad\leq \sqrt{\frac{2\cpart(1 + \Vert h \Vert_\infty)(\vs+ \ln(2\cpart)
- \dpart\ln
\delta
)}{\delta^{\dpart}n}} \\
&&\qquad\quad{}+ \frac{2\cpart(\vs+\ln(2\cpart) -\dpart
\ln\delta)}{3\delta
^{\dpart}n},
\end{eqnarray*}
where we use $\ln(2\cpart) \geq\dpart\ln\delta$. Now, \cite{SteinwartXXb}, Lemma
A.8.2,
shows \eqref{generic-inclus}
for all $\rho\geq0$.
Let us check that the remaining assumptions of Theorem~\ref
{analysis-main-combined-new} are also satisfied if $\e^* \leq({\rho^{**}}
-{\rho^*})/9$.
Clearly, we have $\delta\in(0, \dthick]$, $\e\in(0, \e^*]$ and
$\psi(\delta
)< \t$. To show
$\t\leq{\t^*}(\e^*)$ we write
\[
E:= \bigl\{\e'\in\bigl(0,{\rho^{**}}-{\rho^*}\bigr] : \t^*\bigl(
\e'\bigr) \geq\t\bigr\}.
\]
Since we assume $\e^*< \infty$, we obtain
$E\neq\varnothing$ by the definition of $\e^*$.
There thus exists an $\e'\in E$ with $\e' \leq\inf E + \e\leq\e^*$.
Using the monotonicity of ${\t^*}$ established in
\cite{SteinwartXXb}, Theorem A.4.2, we then conclude that $\t\leq{\t^*}
(\e')\leq{\t^*}(\e^*)$, and hence all
assumptions of Theorem~\ref{analysis-main-combined-new} are indeed satisfied.
\end{pf*}

\begin{pf*}{Proof of Theorem~\ref{rates-1}}
Let us begin by checking the conditions of Theorem~\ref{analysis-main2-new}.
Obviously, $\e$ is chosen this way, and
the definition of $\e^*$ together with the assumption
$\e^* \leq({\rho^{**}}-{\rho^*})/9$ yields
%
\begin{equation}
\label{rates-1-h1} (\t/ \csepl)^\k\leq\e^* < {\rho^{**}}-{
\rho^*}.
\end{equation}
By the assumed separation exponent $\k$, we thus find in the case $\k
<\infty$ that
\begin{eqnarray*}
\inf \bigl\{\tilde\e\in\bigl(0,{\rho^{**}}-{\rho^*}\bigr] : \t^*(\tilde\e ) \geq
\t \bigr\} & \leq&\inf \bigl\{\tilde\e\in\bigl(0,{\rho^{**}}-{\rho^*}\bigr]
 :\csepl
\tilde\e ^{1/\k
}\geq\t \bigr\}
\\
&=& (\t/ \csepl)^\k.
\end{eqnarray*}
Consequently, \eqref{estar} holds in the case $\k<\infty$.
Moreover, in the case $\k=\infty$, \eqref{rates-1-h1} together
with ${\rho^{**}}<\infty$ implies $\t\leq\csepl$.
In addition, the separation exponent $\k=\infty$ ensures
$\t^*(\tilde\e)\geq\csepl$ for all $\tilde\e>0$, and hence
we obtain
\[
\e+ \inf \bigl\{\tilde\e\in\bigl(0,{\rho^{**}}-{\rho^*}\bigr] : \t ^*(\tilde\e)
\geq \t \bigr\} = \e\leq\e^* ;
\]
that is, \eqref{estar} is also established in the case $\k= \infty$.
Now, applying Theorem~\ref{analysis-main2-new}, we see that
${\rho^*_D}\in[{\rho^*}+2\e, {\rho^*}+\e^*+5\e]$
with probability $P^n$ not less than $1-e^{-\vs}$;
that is,
\eqref{rates-level-low} is proved. In addition, the definition of $\e
^*$ yields
\[
{\rho^*_D}- {\rho^*}\leq\e^*+5\e\leq(\t/ \csepl)^\k+ 6
\e,
\]
and hence we obtain
\eqref{rates-level-up}.
Let us finally show \eqref{rates-level-exact}. To this end, we first
observe that
Theorem~\ref{analysis-main2-new}
ensures
\begin{eqnarray*}
\t/2\leq\t-\psi(\delta) < 3\t^* \bigl({\rho^*_D}- {\rho^*}+\e\bigr)
&\leq&3\csepu\bigl({\rho^*_D}- {\rho^*}+\e\bigr)^{1/\k}
\\
&<& 3\csepu2^{1/\k} \bigl({\rho^*_D}- {\rho^*}
\bigr)^{1/\k},
\end{eqnarray*}
where in the last step, we use the already established \eqref{rates-level-low}.
By some elementary transformations we conclude that
\[
\frac{1}2 \biggl(\frac{\t}{6\csepu} \biggr)^\k< {
\rho^*_D}-{\rho^*},
\]
and combining this with $2\e\leq{\rho^*_D}-{\rho^*}$, we obtain the
assertion.
\end{pf*}

\begin{pf*}{Proof of Corollary~\ref{rates-cor1}}
We first show \eqref{rates-cor1-hxx} for $\k<\infty$ and
sufficiently large $n$ with the help of Theorem~\ref{rates-1}.
To this end, we define $\e_n^* := \e_n + (\t_n/ \csepl)^\k$ for
$n\geq1$.
Since $(\e_n)$, $(\delta_n)$ and $(\t_n)$ converge to 0, we then
have $\delta
_n\in(0,\dthick]$ and
$\e^*_n\leq({\rho^{**}}-{\rho^*})/9$ for all sufficiently large $n$.
Furthermore, our definitions ensure $\t_n/\delta_n^\g\to\infty$, and
hence we have $\t_n \geq6\cthick\delta_n^\g= 2\psi(\delta_n)$
for all sufficiently large $n$, too.
Before we can apply Theorem~\ref{rates-1}, it thus remains to show~\eqref{analysis-main2-alter}
for sufficiently large $n$.
To this end, we observe that for $\vs_n := \ln n$ and
$\xi_n:=2\cpart(\vs_n+\ln(2\cpart) -\dpart\ln\delta_n)$, we have
\[
\e'_n:=\sqrt{\frac{(1+\Vert h \Vert_\infty)\xi_n}{\delta
_n^{\dpart}n}} +
\frac{\xi_n}{3\delta_n^{\dpart}n} \preceq \biggl( \frac{\ln n}n
\biggr)^{{\g\k}/{(2\g\k+ d)}}.
\]
Using $\e_n \cdot( \frac{\ln n}n )^{-{\g\k}/{(2\g\k+ d)}} \to
\infty$, we then see that
$\e_n\geq\e_n'$ for all sufficiently large $n$.
Now, applying Theorem~\ref{rates-1}, namely
\eqref{rates-level-up},
we obtain an $n_0\geq1$ and a constant $\overline K$
such that \eqref{rates-cor1-hx} holds for all $n\geq n_0$.
Moreover, if $\k$ is exact, \eqref{rates-level-exact} yields a constant
$\underline K$ such that \eqref{rates-cor1-hxx} holds for all $n\geq n_0$.

In the case $\k=\infty$, we first observe that $\e_n^* := \e_n +
(\t_n/
\csepl)^\k$
satisfies $\e_n^* = \e_n$ for all $n$ with $\t_n < \csepl$, that is,
for all sufficiently large $n$.
Moreover, we have $\t_n/\delta_n^\g\to\infty$, and, like the case
$\k
<\infty$, it thus suffices to
show \eqref{analysis-main2-alter} for sufficiently large $n$.
To this end, we observe that for $\vs_n := \ln n$ and $\e'_n$ as above,
we find that, for all sufficiently large $n$,
\[
\e'_n \leq c_2 \biggl( \frac{\ln n \cdot\sqrt{\ln\ln n}}n
\biggr)^{{1}/ 2} \leq\e_n,
\]
where $c_2$ is a suitable constant independent of $n$. Consequently,
\eqref{rates-level-low} and
\eqref{rates-level-up}
yield \eqref{rates-cor1-hxx} for all sufficiently large $n$.
\end{pf*}

\begin{lemma}\label{further-diff-estim-new}
Under the assumptions of Theorem~\ref{analysis-main-combined-new} we have
\begin{eqnarray*}
\sum_{i=1}^2 \mu \bigl(B_i(D)
\symdif A_i^* \bigr) 
& \leq&2 \sum
_{i=1}^2 \mu \bigl( A^i_{\rho^*_D+\e}
\setminus\bigl( A^i_{\rho^*_D+\e} \bigr)^{-\delta} \bigr)
\\
&&{} + \mu \bigl(M_{{\rho^*_D}-\e}^{+\delta}\setminus M_{{\rho^*_D}-\e
} \bigr)
+ \mu \bigl(\bigl\{{\rho^*}< h< {\rho^*_D+\e}\bigr\} \bigr).
\end{eqnarray*}
\end{lemma}

\begin{pf*}{Proof of Lemma~\ref{further-diff-estim-new}}
We
will use inequality \eqref{cluster-chunk-generic-bound}
established in Theorem~\ref{analysis-main-combined-new}. To this end,
we first observe that
\cite{SteinwartXXb}, (A.1.3),
implies
%
\[
\mu \bigl( M_{\rho-\e}^{+\delta}\setminus\bigl\{ h>{\rho^*}\bigr\}
\bigr) 
= \mu \biggl(M_{\rho-\e}^{+\delta}\Bigm\backslash\bigcup
_{\rho'>{\rho
^*}}M_{\rho
'} \biggr) \leq\mu
\bigl(M_{\rho-\e}^{+\delta}\setminus M_{\rho-\e} \bigr).
\]
%
To bound the remaining terms on the right-hand side of \eqref
{cluster-chunk-generic-bound}, we further
observe that
the disjoint relation $A\cap B^{+\delta}= (A\cap(B^{+\delta
}\setminus B)) \cup
(A\cap B)$ applied to $B:= X\setminus A^i_{\rho+\e}$ yields
\begin{eqnarray*}
\mu \bigl(A_i^* \setminus\bigl(A^i_{\rho+\e}
\bigr)^{-\delta} \bigr) & = &\mu \bigl(A_i^* \cap\bigl(X\setminus
A^i_{\rho+\e}\bigr)^{+\delta} \bigr)
\\
& =& \mu \bigl(A_i^* \cap\bigl(X\setminus A^i_{\rho+\e}
\bigr)^{+\delta}\cap A^i_{\rho+\e
} \bigr) + \mu
\bigl(A_i^* \setminus A^i_{\rho+\e}\bigr)
\\
& =& \mu \bigl(A^i_{\rho+\e} \setminus
\bigl(A^i_{\rho+\e}\bigr)^{-\delta
} \bigr) + \mu
\bigl(A_i^* \setminus A^i_{\rho+\e}\bigr).
\end{eqnarray*}
Moreover, $A^i_{\rho+\e} \subset A_i^*$, $A_1^*\cap A_2^*= \varnothing$
together with 
\cite{SteinwartXXb}, (A.1.2) and (A.1.3),
imply
\begin{eqnarray*}
\mu\bigl(A_1^* \setminus A^1_{\rho+\e}\bigr) + \mu
\bigl(A_2^* \setminus A^2_{\rho
+\e}\bigr) &=& \mu
\bigl(\bigl(A_1^*\cup A_2^*\bigr)\setminus
\bigl(A^1_{\rho+\e}\cup A^2_{\rho
+\e
}\bigr)
\bigr)
\\
&= &\mu \bigl(\bigl\{{\rho^*}< h< \rho+\e\bigr\} \bigr).
\end{eqnarray*}
Combining all estimates with \eqref{cluster-chunk-generic-bound}, we
obtain the assertion.
\end{pf*}

\begin{pf*}{Proof of Theorem~\ref{rates-2}}
Since Assumption~\ref{assr} includes the assumptions made in Theorem~\ref{rates-1},
we obtain \eqref{rates-level-low} and \eqref{rates-level-up}.
Furthermore, recall that the proofs of Theorems \ref{rates-1} and \ref
{analysis-main2-new}
show that the probability $P^n$ of having a dataset $D\in X^n$
satisfying the assumptions of Theorem~\ref{analysis-main-combined-new} is not less than $1-e^{-\vs}$. For
such $D$,
Lemma~\ref{further-diff-estim-new} is applicable, and hence we obtain
%
\begin{eqnarray*}
&&\mu\bigl(B_1(D) \symdif A_1^*\bigr) + \mu
\bigl(B_2(D) \symdif A_2^*\bigr)
\\
&&\qquad \leq \mu \bigl(M_{{\rho^*_D}-\e}^{+\delta}\setminus M_{{\rho
^*_D}-\e}
\bigr) + \mu \bigl(\bigl\{{\rho^*}< h< {\rho^*_D+\e}\bigr\} \bigr)
\\
&&\qquad\quad{} + 2\mu \bigl( A^1_{\rho^*_D+\e}\setminus\bigl(
A^1_{\rho^*_D+\e
}\bigr)^{-\delta} \bigr) +2\mu \bigl(
A^2_{\rho^*_D+\e}\setminus\bigl(A^2_{\rho^*_D+\e
}
\bigr)^{-\delta} \bigr)
\\
&&\qquad \leq \mu \bigl(M_{{\rho^*_D}-\e}^{+\delta}\setminus M_{{\rho
^*_D}-\e}
\bigr) + \mu \bigl(\bigl\{0 < h - {\rho^*}< {\rho^*_D}- {\rho^*}+\e
\bigr\} \bigr) + 4\cbound\delta^\alpha,
\end{eqnarray*}
where in the second estimate we use that the clusters have an $\alpha
$-smooth boundary by Assumption~\ref{assr}.
Moreover, the $\alpha$-smooth boundaries also yield
\begin{eqnarray*}
\mu \bigl(M_{{\rho^*_D}-\e}^{+\delta}\setminus M_{{\rho^*_D}-\e
} \bigr) &
\leq&\mu \bigl(\bigl( A^1_{{\rho^*_D}-\e}\bigr)^{+\delta}\setminus
M_{{\rho
^*_D}-\e} \bigr) + \mu \bigl(\bigl( A^2_{{\rho^*_D}-\e}
\bigr)^{+\delta}\setminus M_{{\rho^*_D}-\e
} \bigr)
\\
&\leq&\mu \bigl(\bigl( A^1_{{\rho^*_D}-\e}\bigr)^{+\delta}
\setminus A^1_{{\rho^*_D}-\e} \bigr) + \mu \bigl(\bigl(
A^2_{{\rho^*_D}-\e}\bigr)^{+\delta}\setminus A^2_{{\rho^*_D}-\e
}
\bigr)
\\
& \leq&2\cbound\delta^\alpha.
\end{eqnarray*}
Finally, by \eqref{rates-level-up} and the flatness exponent $\vt$
from Assumption~\ref{assr},
we find
\[
\mu \bigl(\bigl\{0 < h - {\rho^*}< {\rho^*_D}- {\rho^*}+\e\bigr\}
\bigr) \leq \bigl(\cflat\bigl({\rho^*_D}- {\rho^*}+\e\bigr)
\bigr)^\vt \leq \bigl((\t/ \csepl)^\k+ 7\e
\bigr)^\vt. 
\]
%
Combining these three estimates, we then obtain the assertion.
\end{pf*}

\begin{pf*}{Proof of Corollary~\ref{rates-cor2}}
To apply Theorem~\ref{rates-2}
we check that
$\e_n$, $\delta_n$ and $\t_n$ satisfy the assumptions of Theorem~\ref{rates-1}
for $\vs_n := \ln n$ and all sufficiently large $n$. To this end, we
observe that for $\vs_n := \ln n$
and $\xi_n := 2\cpart(\vs_n+\ln(2\cpart) -\dpart\ln\delta_n)$,
we have
\[
\e'_n:=\sqrt{\frac{ (1+\Vert h \Vert_\infty)\xi_n}{\delta
_n^{\dpart}n}} +
\frac{\xi_n}{3\delta_n^{\dpart}n} \preceq
\biggl( \frac{\ln n}n \biggr)^{{\vr}/{(2\vr+ \vt d)}} ( \ln
\ln n )^{-{\vt d}/{(4\vr+ 2\vt d)}}. 
\]
Using $\e_n \cdot( \frac{\ln n}n )^{-{\vr}/{(2\vr+ \vt d)}} (
\ln
\ln n )^{{\vt d}/{(4\vr+ 2\vt d)}} \to\infty$, we then see that
$\e_n\geq\e_n'$ for all sufficiently large $n$.
Moreover, the remaining conditions on $\e_n$, $\delta_n$ and $\t_n$ from
Theorem~\ref{rates-1}
are clearly satisfied for all
sufficiently large $n$, and hence we can apply Theorem~\ref{rates-2}
for such $n$.
This yields
\[
\mu \bigl(B_1(D) \symdif A_1^* \bigr) + \mu
\bigl(B_2(D) \symdif A_2^* \bigr) \leq6\cbound
\delta_n^\alpha+ \bigl(\cflat( \t_n/
\csepl)^\k+ 7\cflat\e_n \bigr)^\vt
\]
with probability $P^n$ not smaller than $1-1/n$ for all sufficiently
large $n$.
Some elementary calculations then show that there is a $K$ with
\begin{eqnarray*}
&&P^n \biggl( D : \mu \bigl(B_1(D) \symdif
A_1^* \bigr) + \mu \bigl(B_2(D) \symdif A_2^*
\bigr) \leq K \biggl( \frac{\ln n \cdot(\ln\ln n)^2}n \biggr)^{{\vt
\vr
}/{(2\vr+ \vt d)}} \biggr)\\
&&\qquad \geq1 -
\frac{1} n
\end{eqnarray*}
for
all sufficiently large $n$. Moreover, since we always have
\[
\mu\bigl(B_1(D) \symdif A_1^*\bigr) + \mu
\bigl(B_2(D) \symdif A_2^*\bigr)\leq2\mu (X)<\infty,
\]
it is an easy exercise to suitably increase $K$ such that the desired
inequality actually holds for all $n\geq1$.
\end{pf*}

\begin{pf*}{Proof of Theorem~\ref{adaptive-level}}
First observe that $C^2\ln(\ln n)\geq2(1+\Vert h \Vert_\infty)$
guarantees that
all $\e_{\delta,n}$ satisfy
\eqref{analysis-main2-alter} for $\vs':= \vs+\ln|\D|$. Consequently,
Theorem~\ref{rates-1}, namely \eqref{rates-level-low} and \eqref
{rates-level-up}, yields
\[
P^n \bigl( \bigl\{D\in X^n: \e_{\delta,n} < {
\rho^*_{D,\delta}}- {\rho^*}\leq\bigl(\t_{\delta,n}^\g/
\csepl\bigr)^\k+ 6\e_{\delta,n} \bigr\} \bigr)
\geq1-e^{-\vs- \ln|\D|}
\]
for all $\delta\in\D$. Applying the union bound, we thus find
\[
P^n \bigl(D\in X^n: \e_{\delta,n} < {
\rho^*_{D,\delta}}- {\rho ^*}\leq \bigl(\t_{\delta,n}^\g/
\csepl \bigr)^\k+ 6\e_{\delta,n} \mbox{ for all } \delta\in\D
\bigr) \geq1-e^{-\vs}.
\]
Let us now consider a $D\in X^n$ such that
$\e_{\delta,n} < {\rho^*_{D,\delta}}- {\rho^*}\leq(\t_{\delta
,n}^\g
/ \csepl)^\k+ 6\e_{\delta
,n}$ for all $\delta\in\D$.
Then the definitions of ${\rho^*_{D,\D}}$ and $\e_{D,\D}$ [see
\eqref
{def-rsdd}] imply
\[
{\rho^*_{D,\D}}- {\rho^*} = \min_{\delta\in\D} {
\rho^*_{D,\delta}}- {\rho^*} \in \Bigl(\min_{\delta\in\D}
\e_{\delta,n}, \min_{\delta\in\D
} \bigl(\bigl(
\t_{\delta
,n}^\g/ \csepl\bigr)^\k+ 6
\e_{\delta,n} \bigr) \Bigr]
\]
and $\e_{D,\D}=\e_{\delta^*_{D,\D},n} < \rho^*_{D,\delta^*_{D,\D
}} - {\rho^*}= {\rho^*_{D,\D}}
- {\rho^*}$;
that is, we have shown the first assertion. To show the remaining
assertions, we first observe that a literal
repetition of the argument above, in which we only replace the use of
\eqref{rates-level-low}
by that of \eqref{rates-level-exact}, yields
\[
P^n \bigl(D\in X^n: c_1
\t_{\delta,n}^\k+ \e_{\delta,n} < {\rho ^*_{D,\delta}}-
{\rho^*}\leq c_2 \t_{\delta,n}^\k+ 6
\e_{\delta,n} \mbox{ for all } \delta\in \D \bigr) \geq1-e^{-\vs}.
\]
Using \eqref{def-rsdd} we then immediately obtain the second assertion,
while considering $\delta=\delta_{D,\D}^*$
gives the third assertion.
%
\end{pf*}

\begin{pf*}{Proof of Corollary~\ref{adaptive-level-cor}}
Let us fix an $n\geq16$. For later use we note that this choice
implies $I_n\subset(0,1]$.
Our first goal is to show that we can apply Theorem~\ref
{adaptive-level} for sufficiently large $n$.
To this end, we first observe that $\max\D_n= (\ln\ln n)^{-\dpart}
\to
0$ for $n\to\infty$,
and hence we obtain $\D_n\subset(0,\dthick]$ for all sufficiently
large $n$.
Analogously, $\max\D_n\ln\ln\ln n\to0$ implies
$\max_{\delta\in\D_n}(\t_{\delta,n}/ \csepl)^\k\leq({\rho
^{**}}-{\rho^*})/18$
for all sufficiently large $n$,
and the definition of $\t_{\delta,n}$ ensures $\min_{\delta\in\D
_n}\t_{\delta,n}
\geq2\psi(\delta)$ for all sufficiently
large $n$. Let us now show that eventually we also have $\max_{\delta
\in\D
_n} \e_{\delta,n} \leq({\rho^{**}}-{\rho^*})/18$.
To this end, note that the derivative of $g_n:(0,\infty)\to\R$
defined by
\[
g_n(\delta) := \frac{ \ln(2\cpart|\D_n|n) -\dpart\ln\delta
}{\delta^{d}n}
\]
is given by
\[
g_n'(\delta) = - \frac{\dpart(1+ \ln(2\cpart|\D_n|n) -\dpart\ln
\delta)
}{\delta^{1+\dpart}n},
\]
and using $\cpart\geq1$, we thus find that $g_n$ is monotonically
decreasing on $(0,1]$ for all $n\geq1$.
In addition, using $|\D_n|\leq n$ we obtain
\begin{eqnarray*}
g_n(\min I_n) &=& g_n \biggl( \biggl(
\frac{\ln n \cdot(\ln\ln n)^2}n \biggr)^{
{1}/{d}} \biggr)
\\
&=& \frac{\ln(2\cpart|\D_n|n) +\ln n - \ln\ln n - 2\ln\ln\ln n}{\ln
n\cdot(\ln\ln n)^2}
\\
&\leq& \frac{4\ln n - \ln\ln n - 2\ln\ln\ln n}{\ln n\cdot(\ln\ln n)^2}
\\
& \leq& \frac{4} {(\ln\ln n)^2}
\end{eqnarray*}
for all $n\geq\max\{16, 2\cpart\}$, and hence $g_n(\min I_n)\ln\ln
n\to0$ for $n\to\infty$.
Since the definition of $\e_{\delta,n}$ gives
$\e_{\delta,n} = C \sqrt{\cpart g_n(\delta)\ln\ln n} + \frac{2} 3
\cpart g_n(\delta)$,
we can thus conclude that
\begin{eqnarray*}
\max_{\delta\in\D_n} \e_{\delta,n} &\leq& \max_{\delta\in\D_n}
C \sqrt{\cpart g_n(\delta)\ln\ln n} + \max_{\delta\in\D_n}
\cpart g_n(\delta)
\\
&\leq& C \sqrt{\cpart g_n(\min I_n)\ln\ln n} + \cpart
g_n(\min I_n) \to0 
\end{eqnarray*}
for $n\to\infty$. This ensures the desired $\max_{\delta\in\D_n}
\e_{\delta
,n} \leq({\rho^{**}}-{\rho^*})/18$ for all
sufficiently large $n$. Combining this with our previous estimate, we find
\[
\max_{\delta\in\D_n} \bigl( (\t_{\delta,n}/ \csepl)^\k+
\e _{\delta,n} \bigr) \leq \bigl({\rho^{**}}-{\rho^*}\bigr)/9
\]
for all sufficiently large $n$, and
thus we can apply Theorem~\ref{adaptive-level} for such $n$.

Before we proceed, let us now fix an $n\geq16$ and assume that without
loss of generality that $\D_n$ is of
the form $\D= \{\delta_1,\ldots,\delta_m\}$ with $\delta
_{i-1}<\delta_{i}$ for all
$i=2,\ldots,m$.
We write $\delta_0 := \min I_n$ and $\delta_{m+1}:= \max I_n$.
Our intermediate goal is to show that
%
\begin{equation}
\label{concbas:appro-approx-fkt-h1} \delta_{i} - \delta_{i-1}
\Leq2n^{-1/\dpart},\qquad i=1,\ldots,m+1.
\end{equation}
To this end, we fix an $i\in\{1,\ldots,m\}$ and write $\bar\delta:=
(\delta
_{i}+\delta_{i-1})/2\in I_n$.
Since $\D_n$ is an $n^{-1/\dpart}$-net of $I_n$, we then have $\delta
_{i} -
\bar\delta\leq n^{-1/\dpart}$
or $\bar\delta- \delta_{i-1} \leq n^{-1/\dpart}$, and from
both,
(\ref{concbas:appro-approx-fkt-h1}) follows. Moreover, to show \eqref
{concbas:appro-approx-fkt-h1}
in the case $i=m+1$, we first observe that there exists an $\delta
_i\in\D
_n$ with $\delta_i-\delta_m\leq n^{-1/\dpart}$
since $\D_n$ is an $n^{-1/\dpart}$-net of $I_n$. Using our ordering of
$\D_n$, we can assume without loss of
generality that $i=m$, which immediately implies \eqref
{concbas:appro-approx-fkt-h1}.

We now prove the first assertion in the case
$\k<\infty$. To this end, we write
\[
\delta^*_n:= \biggl( \frac{\ln n \cdot\ln\ln n}n \biggr)^{
{1}/{(2\g\k+ \dpart)}},
\]
where we note that for sufficiently large $n$ we have $\delta^*_n \in I_n$.
In the following
we thus restrict our considerations to such $n$.
Then there exists an index $i\in\{1,\ldots,m+1\}$ such that $\delta
_{i-1}\leq\delta^*_n\leq\delta_i$,
and by (\ref{concbas:appro-approx-fkt-h1}) we conclude that
$\delta^*_n \leq\delta_i\leq\delta^*_n+2n^{-1/\dpart}$.
Clearly, this yields
%
\begin{eqnarray}\label{par-sel-cor-h1}
\min_{\delta\in\D_n} \bigl(c_2
\t_{\delta,n}^\k+ 6\e_{\delta
,n} \bigr) &=& \min
_{\delta\in\D_n} \bigl(c_2\delta^{\g\k}(\ln\ln\ln
n)^\k+ 6\e_{\delta,n} \bigr)
\nonumber\\
&\leq& c_2\delta_i^{\g\k}(\ln\ln\ln
n)^\k+ 6\e_{\delta_i,n}
\nonumber
\\[-8pt]
\\[-8pt]
\nonumber
&\leq& c_2\bigl(\delta^*_n+2n^{-1/\dpart}
\bigr)^{\g\k}(\ln\ln\ln n)^\k+ 6\e _{\delta
_i,n}
\\
 & \leq&6c_2 \biggl( \frac{\ln n \cdot(\ln\ln n)^2}n
\biggr)^{
{1}/{(2\g\k+ \dpart)}} + 6\e_{\delta_i,n}\nonumber
\end{eqnarray}
for all sufficiently large $n$, where $c_2 := \csepl^{-\k}$ is the
constant from Theorem~\ref{adaptive-level}.
Moreover, using $|\D_n|\leq n$ and the monotonicity of $g_n$, we
further obtain
%
%
\begin{eqnarray}\label{par-sel-cor-h2}
g_n(\delta_i) &\leq &g_n\bigl(
\delta_n^*\bigr) = \frac{ \ln(2\cpart|\D_n|n) -\dpart\ln\delta_n^* }{(\delta
_n^*)^{\dpart}n} \leq\frac{ \ln(2\cpart) + 2\ln n -\dpart\ln\delta_n^* }
{(\delta
_n^*)^{\dpart}n}\nonumber
\\
&\leq&\frac{ 4\ln n }{(\delta_n^*)^{\dpart}n}
\\
& \leq&\frac{4} {(\ln\ln n)^{{\dpart}/{(2\g\k+ \dpart)}} } \cdot
 \biggl( \frac{\ln n }n
\biggr)^{{2\g\k}/{(2\g\k+ \dpart)}}\nonumber
\end{eqnarray}
for all sufficiently large $n$.
By the relation between $\e_{\delta,n}$ and $g_n(\delta)$, we then find
\[
\e_{\delta_i,n} \leq2C\sqrt{\cpart} \biggl( \frac{\ln n \cdot\ln\ln n}n
\biggr)^{{\g\k}/{(2\g\k+ \dpart)}}
+3\cpart \biggl( \frac{\ln n }n \biggr)^{{2\g\k}/{(2\g\k+\dpart)}},
\]
and combining this estimate with \eqref{par-sel-cor-h1} and Theorem~\ref
{adaptive-level},
we obtain the first assertion in the case $\k<\infty$.

Let us now consider the case $\k=\infty$.
To this end, we fix an $n$
such that
\[
\delta_n^* := \biggl(\frac{1} {\ln\ln n} \biggr)^{{1}/\dpart}
\]
satisfies
$(\delta_n^* +2n^{-1/\dpart})^\g\ln\ln\ln n < \csepl$, and thus
\[
\bigl(\bigl(\delta_n^*+2n^{-1/\dpart}\bigr)^\g\ln\ln
\ln n/ \csepl\bigr)^\k= 0.
\]
Since $\delta_n^*\in I_n$,
there also exists an index $i\in\{1,\ldots,m+1\}$ such that $\delta
_{i-1}\leq\delta^*\leq\delta_i$, and by (\ref{concbas:appro-approx-fkt-h1})
we again conclude
$\delta^* \leq\delta_i\leq\delta^*+2n^{-1/\dpart}$. Clearly, the
latter implies
\begin{eqnarray*}
\min_{\delta\in\D_n} \bigl((\t_{\delta,n}/ \csepl)^\k+
6\e _{\delta,n} \bigr) &\leq&\bigl(\delta_i^\g\ln\ln
\ln n/ \csepl\bigr)^\k+ 6\e_{\delta_i,n}
\\
&\leq& \bigl( \bigl(\delta_n^*+2n^{-1/\dpart}\bigr)^\g
\ln\ln\ln n/ \csepl \bigr)^\k+ 6\e_{\delta_i,n}
\\
&=& 6\e_{\delta_i,n}
\end{eqnarray*}
by our assumptions on $n$.
Analogously to \eqref{par-sel-cor-h2} we further find,
for
sufficiently large $n$, that
\[
g_n(\delta_i) \leq g_n\bigl(
\delta_n^*\bigr) \leq\frac{ 3\ln n - \dpart\ln\delta_n^*}{(\delta_n^*)^{\dpart}n} = \frac{3\ln n + \ln\ln\ln n}{n (\ln\ln n)^{-1}} \leq4
\frac{\ln n \cdot\ln\ln n}{n},
\]
and by the relation between $\e_{\delta,n}$ and $g(\delta)$, we then
find the
assertion with the help of
Theorem~\ref{adaptive-level}.

Let us finally prove the second assertion. To this end we first recall
that we have already seen that for
sufficiently large $n$, we can apply Theorem~\ref{adaptive-level}. Thus
it suffices to find a lower bound for
the right-hand side of
%
\begin{equation}
\label{adaptive-level-cor-ph1} \min_{\delta\in\D_n} \bigl(c_1
\t_{\delta,n}^\k+ \e_{\delta
,n} \bigr) \geq \min\{1,
c_1 \} \cdot\min_{\delta\in\D_n} \bigl(
\t_{\delta
,n}^\k+ \e_{\delta
,n} \bigr) ,
\end{equation}
where $c_1$ is the constant appearing in Theorem~\ref{adaptive-level}.
Now, for $n\geq16$, we have $I_n\subset(0,1]$, and thus we find
$\delta\in(0,1]$ for all $\delta\in\D_n$. For sufficiently large
$n$ this yields
\begin{eqnarray*}
&&\min_{\delta\in\D_n} \bigl( \t_{\delta,n}^\k+
\e_{\delta,n} \bigr)
\\
&&\qquad = \min_{\delta\in\D_n} \biggl( \delta^{\g\k}(\ln\ln\ln
n)^\k+ C \sqrt {\cpart g_n(\delta)\ln\ln n} +
\frac{2} 3 \cpart g_n(\delta) \biggr)
\\
&&\qquad \geq \min_{\delta\in\D_n} \bigl( \delta^{\g\k} + C \sqrt {
\cpart g_n(\delta )\ln\ln n} \bigr)
\\
&& \qquad \geq\min_{\delta\in\D_n} \biggl( \delta^{\g\k} + C \sqrt {
\frac{\cpart
\ln n \cdot\ln\ln n }{\delta^dn}} \biggr)
\\
&&\qquad \geq \min_{\delta\in(0,1]} \biggl( \delta^{\g\k} + C \sqrt {
\frac{\cpart
\ln n\cdot\ln\ln n }{\delta^dn}} \biggr).
\end{eqnarray*}
An elementary application of calculus then yields the assertion.
\end{pf*}

\begin{pf*}{Proof of Corollary~\ref{rates-adaptive-cor2}}
As in the proof of Corollary~\ref{rates-cor2} it suffices to show the
assertion for sufficiently large $n$.
Now,
we have seen in the proof of Corollary~\ref{adaptive-level-cor} that
for sufficiently large $n$, Inequality
\eqref{adaptive-level-cor-hx} follows from the fact that the procedure
satisfies the
assumptions of Theorem~\ref{adaptive-level} for such $n$ and $\vs:=
\ln n$.
Consequently, for sufficiently large $n$, the probability $P^n$ of
having a data set $D\in X^n$ satisfying
both \eqref{adaptive-level-cor-hx} and the third inequality of Theorem~\ref{adaptive-level} is not less than $1-1/n$.
Let us fix such a $D$. Then we have
%
\begin{equation}
\label{par-sel-cor2-h1} c_1 \t_{D,\D}^{\k} +
\e_{D,\D} \leq\rho^*_{D,\D_n}-{\rho^*}\leq \overline K \biggl(
\frac{\ln n \cdot(\ln\ln n)^2}n \biggr)^{
{\g\k}/{(2\g\k+ \dpart)}}.
\end{equation}
Moreover, an elementary estimate yields
\[
c_1 \t_{D,\D}^{\k} + \e_{D,\D} \geq\min
\bigl\{1/7, c_1 \csepl^\k\bigr\} \cdot \bigl((
\t_{D,\D}/ \csepl)^\k+ 7 \e_{D,\D} \bigr),
\]
and setting $c:= \min\{1/7, c_1 \csepl^\k\}$, we hence obtain
%
\begin{equation}
\label{par-sel-cor2-h2} ( \t_{D,\D}/ \csepl)^\k+ 7 \e_{D,\D}
\leq c^{-1} \overline K \biggl( \frac{\ln n \cdot(\ln\ln n)^2}n
\biggr)^{{\g\k}/{(2\g\k+\dpart)}}.
\end{equation}
In addition, for sufficiently large $n$, inequality \eqref
{par-sel-cor2-h1} implies
%
\begin{equation}
\label{par-sel-cor2-h3} \delta_{D,\D}^* \leq\t_{D,\D}^{1/\g}
\leq (4\overline K)^{{1} /{\g\k}} ( 6 \csepu )^{{1}/ \g} \biggl(
\frac{\ln n \cdot(\ln\ln n)^2}n \biggr)^{{1}/{(2\g\k+
\dpart)}}.
\end{equation}
Now we have already seen in the proofs of Theorem~\ref{adaptive-level}
and Corollary~\ref{adaptive-level-cor}
that for sufficiently large $n$, the assumptions on $\delta$, $\e
_{\delta,n}$,
$\e_{\delta,n}^*$, $\t_n$, $\vs_n:= \ln n$ and $n$ of Theorem~\ref
{rates-1} are satisfied for all $\delta\in\D_n$
simultaneously. We can thus combine
\eqref{par-sel-cor2-h2} and \eqref{par-sel-cor2-h3} with
Theorem~\ref{rates-2} to obtain the assertion.
\end{pf*}


\begin{supplement}[id=suppA]
\stitle{Supplement to ``Fully adaptive density-based clustering''\\}
\slink[doi]{10.1214/15-AOS1331SUPP} 
\sdatatype{.pdf}
\sfilename{aos1331\_supp.pdf}
\sdescription{We provide two appendices A and B.
In Appendix A,
several auxiliary results, which are partially taken from \cite{Steinwart11a},
are presented, and
the assumptions made in the paper are discussed in more detail.
In Appendix B, we present a couple of two-dimensional examples that show
that the assumptions imposed in the paper are not only met by many
discontinuous densities, but also
by many continuous densities.}
\end{supplement}

%

%
%
%
%

%



\printaddresses
\end{document}


\begin{center}
  {\large\textbf{SUPPLEMENT TO ``FULLY ADAPTIVE DENSITY-BASED CLUSTERING''}}
  \vspace*{2ex}

  \textsc{By Ingo Steinwart}
  \vspace*{1.5ex}

  \textit{University of Stuttgart}
  \vspace*{1.5ex}

  {\small
  \begin{minipage}[t]{0.75\textwidth}
  In this  supplement
several auxiliary results, which are partially taken from \citeappendix{Steinwart11a},
are presented and
  the assumptions made in the paper are discussed in more detail. This material is contained in the sections
  \ref{sec-app:level-sets} to \ref{sec:proof-rates}.
   In addition, we present a couple of two-dimensional examples that show
that the assumptions imposed in the paper are not only met by many discontinuous densities, but also
by many continuous densities. This material is contained in the sections \ref{sec:single-twod} and \ref{suppB:cons-dens}.
\end{minipage}}
\end{center}



\ifthenelse{\equal{\arxvers}{y}}
{\noindent\textbf{Appendix A: Remaining Proofs and Additional Material.} In this  appendix  
several auxiliary results, which are partially taken from \citeappendix{Steinwart11a},
are presented and
%
%
%
}

\setcounter{section}{0}
\renewcommand{\thesection}{A.\arabic{section}}




\vspace*{3ex}
\noindent
\textbf{Appendix A. Remaining Proofs and Additional Material.}
In this appendix, the auxiliary results from \citeappendix{Steinwart11a} are presented and the assumptions
are discussed in more detail than it was possible in the main paper.
\section{Material Related to  Level Sets}\label{sec-app:level-sets}

In this section we present some additional results
from \citeappendix{Steinwart11a}
related to the definition of $M_\r$.

To begin with, we
note that
using the definition of the support of a measure it becomes obvious that $M_\r$ can be expressed by
\begin{equation}\label{Mr-eq}
 M_\r = \bigl\{x\in X: \mu_\r(U)>0 \mbox{ for all open neighborhoods $U$ of } x\bigr\}\, .
\end{equation}
Furthermore, if $\supp \mu = X$,  we actually have
$M_\r = X$ for all $\r\leq 0$,  but typically we are, of course,  interested in the case $\r>0$, only.
The next lemma shows that the sets $M_\r$ are ordered in the usual way.

\begin{lemma}\label{include-level}
 Let $(X,d)$ be a complete separable metric  space, $\mu$ be a $\s$-finite  measure on $X$,  and $P$
be a $\mu$-absolutely continuous distribution on $X$. Then, for all $\r_1\leq \r_2$, we have
\begin{align*}
 M_{\r_2} & \subset M_{\r_1} \, .
\end{align*}
\end{lemma}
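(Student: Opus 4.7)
The plan is to use the characterization \eqref{Mr-eq} of $M_\r$ together with the fact that the auxiliary measure $\mu_\r$ is monotone decreasing in $\r$. Recall from the main paper that $\mu_\r$ is the measure obtained by restricting $P$ (equivalently, the density with respect to $\mu$) to the super-level set $\{h \Geq \r\}$ of a density $h := dP/d\mu$. Since $\r_1 \Leq \r_2$ implies $\{h \Geq \r_2\} \subset \{h \Geq \r_1\}$, one obtains at once the pointwise inequality
\[
 \mu_{\r_2}(A) \Leq \mu_{\r_1}(A) \qquad \text{for every measurable } A \subset X.
\]

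With this monotonicity in hand, the inclusion is immediate from \eqref{Mr-eq}. Indeed, pick any $x \in M_{\r_2}$. By \eqref{Mr-eq}, every open neighborhood $U$ of $x$ satisfies $\mu_{\r_2}(U) > 0$, and the monotonicity above then forces $\mu_{\r_1}(U) \Geq \mu_{\r_2}(U) > 0$. Since this holds for every such $U$, applying \eqref{Mr-eq} again (in the opposite direction) gives $x \in M_{\r_1}$, which is exactly what was claimed.

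The only thing that could be considered an ``obstacle'' is verifying that the definition of $\mu_\r$ used in the main paper really does yield the monotonicity $\mu_{\r_2} \Leq \mu_{\r_1}$ in the sense of set functions. This is, however, purely bookkeeping: it follows from the set inclusion of super-level sets and the monotonicity of the integral, and does not require any regularity of $P$, $\mu$, or the density beyond the hypotheses of the lemma (completeness and separability of $X$ and $\s$-finiteness of $\mu$ are used only to ensure that $\mu_\r$ and its support are well-defined objects, which in turn validates \eqref{Mr-eq}). Hence the proof is essentially a one-line deduction from \eqref{Mr-eq} once the monotonicity of $\r \mapsto \mu_\r$ has been recorded.
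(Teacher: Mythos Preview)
Your proof is correct and follows essentially the same approach as the paper: both arguments use the characterization \eqref{Mr-eq} together with the monotonicity $\mu_{\r_2}(U)\leq \mu_{\r_1}(U)$, which is derived from the set inclusion $\{h\geq \r_2\}\subset \{h\geq \r_1\}$. One small imprecision: in the paper $\mu_\r$ is the restriction of $\mu$ (not $P$) to $\{h\geq \r\}$, i.e.\ $\mu_\r(U)=\mu(\{h\geq \r\}\cap U)$, but this does not affect your argument since the required monotonicity holds under either reading.
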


\begin{proof}[Proof of Lemma \ref{include-level}]
 We fix an $x\in M_{\r_2}$
and an open set $U\subset X$ with $x\in U$. Moreover, we fix a $\mu$-density $h$ of $P$.
 Then we obtain
\begin{displaymath}
\mu_{\r_1}(U)
=  \mu\bigl(\{h\geq \r_1\} \cap U  \bigr)
\geq
 \mu\bigl(\{h\geq \r_2\} \cap U      \bigr)
=
\mu_{\r_2}(U)
>0 \, ,
\end{displaymath}
and hence  we obtain $x\in M_{\r_1}$ by (\ref{Mr-eq}).
\end{proof}

The following lemma describes the relationship between   $M_\r$ and $\{h\geq \r\}$.

\begin{lemma}\label{Mr-include-both-new}
  Let $(X,d)$ be a complete separable metric  space, $\mu$ be a $\s$-finite
 measure on $X$ with $\supp \mu=X$,
and $P$ be a $\mu$-absolutely continuous distribution on $X$.
Then, for all $\mu$-densities $h$ of $P$ and all $\r\in \R$, we have
\begin{displaymath}
  {\mathring{\{h\geq \r\}}} \subset M_\r  \subset \overline{\{h\geq \r\}}\, .
\end{displaymath}
If $h$ is continuous, we even have $\{h \!>\!\r\} \subset M_\r  \subset \{h\!\geq\! \r\}$
and $\partial M_\r \subset \{h\!=\!\r\}$.
\end{lemma}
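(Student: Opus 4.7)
The plan is to work directly from the characterization (\ref{Mr-eq}) of $M_\r$ in terms of its defining open-neighborhood property. Recall $\mu_\r$ is defined by $\mu_\r(A)=\mu(\{h\geq\r\}\cap A)$, so showing a point lies (or fails to lie) in $M_\r$ reduces to producing a suitable open neighborhood and estimating $\mu(\{h\geq\r\}\cap U)$.

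For the first inclusion $\mathring{\{h\geq\r\}}\subset M_\r$, I would pick $x$ in the interior and an open set $V$ with $x\in V\subset\{h\geq\r\}$. Given any open neighborhood $U$ of $x$, the set $U\cap V$ is open and nonempty, so the assumption $\supp\mu=X$ forces $\mu(U\cap V)>0$; since $U\cap V\subset\{h\geq\r\}\cap U$, we get $\mu_\r(U)>0$, which by (\ref{Mr-eq}) places $x$ in $M_\r$. For the reverse inclusion $M_\r\subset\overline{\{h\geq\r\}}$, I would argue by contraposition: if $x\notin\overline{\{h\geq\r\}}$, the complement of this closed set is an open neighborhood $U$ of $x$ with $U\cap\{h\geq\r\}=\emptyset$, hence $\mu_\r(U)=0$ and $x\notin M_\r$. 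Note that the second inclusion does not need $\supp\mu=X$.

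Assume now $h$ is continuous. Then $\{h>\r\}$ is open and contained in $\{h\geq\r\}$, hence $\{h>\r\}\subset\mathring{\{h\geq\r\}}\subset M_\r$ by the already established inclusion. Similarly $\{h\geq\r\}$ is closed, so $\overline{\{h\geq\r\}}=\{h\geq\r\}$, giving $M_\r\subset\{h\geq\r\}$. This chain of inclusions is precisely the two stated bounds for continuous $h$.

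For the final claim $\partial M_\r\subset\{h=\r\}$, I would combine both ends of the sandwich. Since $M_\r\subset\{h\geq\r\}$ and the latter is closed, $\overline{M_\r}\subset\{h\geq\r\}$, so every $x\in\partial M_\r$ already satisfies $h(x)\geq\r$. Conversely, if some $x\in\partial M_\r$ satisfied $h(x)>\r$, then by continuity there would exist an open neighborhood $U$ of $x$ on which $h>\r$, yielding $U\subset\{h>\r\}\subset M_\r$ and hence $x\in\mathring{M_\r}$, contradicting $x\in\partial M_\r$. Thus $h(x)=\r$ on $\partial M_\r$. No step is really delicate; the only place where care is needed is invoking $\supp\mu=X$ at exactly the one inclusion where it is genuinely used, and keeping track of which properties require continuity of $h$ versus only $\mu$-measurability.
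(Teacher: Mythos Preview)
Your proof is correct and follows essentially the same approach as the paper. The only cosmetic difference is in the inclusion $M_\r\subset\overline{\{h\geq\r\}}$: the paper invokes the minimality characterization of the support (the smallest closed $A$ with $\mu(\{h\geq\r\}\setminus A)=0$) rather than the pointwise criterion~(\ref{Mr-eq}) you use, but the two are equivalent and the rest of the argument---including the boundary step via continuity and contradiction---is the same.
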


\begin{proof}[Proof of Lemma \ref{Mr-include-both-new}]
By definition, $M_\r$ is the smallest closed set $A$ satisfying $\mu( \{h\geq \r\}\setminus A)=0$.
Moreover, we obviously have
\begin{displaymath}
   \mu\bigl( \{h\geq \r\} \setminus \overline{ \{h\geq \r\}} \bigr) = 0\, ,
\end{displaymath}
and hence we obtain $M_\r \subset \overline{\{h\geq \r\}}$.
 To show the other inclusion, we fix an $x\in \mathring{\{h\geq \r\}}$ and an open set $U\subset X$ with
$x\in U$. Then $\mathring{\{h\geq \r\}} \cap U$ is open and non-empty, and hence $\supp \mu = X$ yields
\begin{displaymath}
\mu_\r(U) = \mu\bigl(\{h\geq \r\} \cap U \bigr) \geq \mu\bigl(\mathring{\{h\geq \r\}} \cap U \bigr) > 0\, .
\end{displaymath}
By (\ref{Mr-eq}) we conclude that $x\in M_\r$, that is, we have shown
$\mathring{\{h\geq \r\}} \subset M_\r$.

Now assume that $h$ is continuous.
Clearly, we have $\{h> \r\} \subset \{h\geq \r\}$ and since $\{h>\r\}$ is open, we
conclude that $\{h> \r\} \subset \mathring{\{h\geq \r\}} \subset M_\r$ by the previously shown inclusion.
Moreover, since $\{h\geq \r\}$ is closed, we find  $M_\r  \subset \overline{\{h\geq \r\}} = \{h\geq \r\}$.
Recalling that $M_\r$ is closed by definition, we further find
$\partial M_\r \subset M_\r \subset \{h\geq \r\}$, and thus it remains to show
$\partial M_\r  \subset \{h\leq \r\}$. Let us assume the converse, i.e.,
that there exists an $x\in \partial M_\r$
such that
$h(x) > \r$. By the continuity we then find an open neighborhood $U$ of $x$ with $U\subset \{h>\r\}$.
Since $x\in \partial M_\r$, we further find an $y \in U\setminus M_\r$, while our construction
together with the previously shown $\{h> \r\} \subset   M_\r$
yields the contradicting statement $U\setminus M_\r\subset \{h>\r\}\setminus M_\r = \emptyset$.
\end{proof}

The next lemma provides some simple sufficient conditions for normality.

\begin{lemma}\label{regular-new}
  Let $(X,d)$ be a complete separable metric  space, $\mu$ be a  $\s$-finite measure on $X$ with $\supp \mu =X$, and $P$
be a $\mu$-absolutely continuous distribution on $X$.
Then the following statements hold:
\begin{enumerate}
 \item If $P$ has an upper semi-continuous $\mu$-density, then it is upper normal at every level.
 \item If $P$ has a lower semi-continuous $\mu$-density, then it is lower normal at every level.
 \item If, for some $\r\geq 0$,  $P$ has a $\mu$-density $h$ such that $\mu(\partial\{h\geq \r\})=0$, then $P$ is
  normal at level $\r$.
\end{enumerate}
\end{lemma}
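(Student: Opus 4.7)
The plan is to derive all three statements as short consequences of Lemma \ref{Mr-include-both-new}, which already sandwiches $M_\r$ between $\mathring{\{h\geq \r\}}$ and $\overline{\{h\geq \r\}}$. Each of the three hypotheses will be used to pin down one side of this sandwich, so that the assertion in question comes essentially for free.

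For \emph{(i)}, I would use that upper semi-continuity of $h$ is equivalent to $\{h\geq \r\}$ being closed, so that $\overline{\{h\geq \r\}} = \{h\geq \r\}$. The outer inclusion from Lemma \ref{Mr-include-both-new} then immediately reads $M_\r \subset \{h\geq \r\}$, which is what upper normality at level $\r$ demands. For \emph{(ii)}, lower semi-continuity makes $\{h> \r\}$ open, and since $\{h>\r\} \subset \{h\geq \r\}$, this set must lie in the topological interior $\mathring{\{h\geq \r\}}$. The inner inclusion from Lemma \ref{Mr-include-both-new} then yields $\{h>\r\} \subset M_\r$, which is what lower normality at level $\r$ demands.

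For \emph{(iii)}, the strategy is to show that the hypothesis $\mu(\partial\{h\geq \r\}) = 0$ forces $M_\r$ and $\{h\geq \r\}$ to agree modulo a $\mu$-null set, which combines the two preceding observations. Using $\mathring{\{h\geq \r\}} \subset M_\r \subset \overline{\{h\geq \r\}}$ from Lemma \ref{Mr-include-both-new} together with $\mathring{\{h\geq \r\}} \subset \{h\geq \r\} \subset \overline{\{h\geq \r\}}$, one sees that both differences $M_\r \setminus \{h\geq \r\}$ and $\{h\geq \r\} \setminus M_\r$ are contained in $\overline{\{h\geq \r\}} \setminus \mathring{\{h\geq \r\}} = \partial\{h\geq \r\}$. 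Hence the symmetric difference $M_\r \, \symdif \, \{h\geq \r\}$ is $\mu$-null, which simultaneously delivers the upper and the lower halves of normality at level $\r$.

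The only real obstacle is keeping straight the exact definitions of upper, lower, and full normality used in the main paper and matching them to the inclusions that drop out of Lemma \ref{Mr-include-both-new}; once that dictionary is in place, each of the three parts is essentially a one-line deduction, and no new measure-theoretic ingredient is required beyond what was already used to prove Lemma \ref{Mr-include-both-new}.
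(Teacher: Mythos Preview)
Your approach coincides with the paper's in parts \emph{(i)} and \emph{(ii)}. The one refinement you should make is that lower normality in the main paper is phrased in terms of $\mathring M_\r$, not $M_\r$: the required conclusion in \emph{(ii)} is $\{h>\r\}\subset \mathring M_\r$. This is automatic from your argument, since $\{h>\r\}$ is open and any open subset of $M_\r$ lies in its interior; the paper writes exactly this chain $\{h>\r\}=\mathring{\{h>\r\}}\subset\mathring{\{h\geq\r\}}\subset\mathring M_\r$.

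For \emph{(iii)} the same subtlety creates a small gap in your write-up. Your symmetric-difference bound $M_\r\symdif\{h\geq\r\}\subset\partial\{h\geq\r\}$ does give upper normality, but lower normality requires $\mu(\{h>\r\}\setminus\mathring M_\r)=0$, and $\mathring M_\r$ does not appear in your sandwich. The missing step is again the observation $\mathring{\{h\geq\r\}}\subset\mathring M_\r$, from which
\[
\{h>\r\}\setminus\mathring M_\r \;\subset\; \overline{\{h\geq\r\}}\setminus\mathring{\{h\geq\r\}} \;=\; \partial\{h\geq\r\}.
\]
This is precisely the paper's argument for the lower half of \emph{(iii)}; once you insert it, your proof and the paper's are the same.
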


\begin{proof}[Proof of Lemma \ref{regular-new}]
 \ada i Let us fix  an upper semi-continuous $\mu$-density $h$ of $P$.
Then $\{h\geq \r\}$ is closed, and hence Lemma \ref{Mr-include-both-new}
shows $M_\r \subset \overline{\{h\geq \r\}} = \{h\geq \r\}$. Thus,
$P$ is upper normal at level $\r$.

\ada {ii}   Let $h$ be a lower semi-continuous $\mu$-density of $P$. By Lemma \ref{Mr-include-both-new}
we then know $\{ h>\r \} = \mathring {\{h>\r\}}\subset {\mathring{\{h\geq \r\}}} \subset \mathring M_\r$.
This yields the assertion.

\ada {iii} The upper  normality follows from (2.3).
To see that $P$ is lower normal, we use the inclusion
$\{h>\r\}  \setminus \mathring M_\r \subset \overline{\{h\geq \r\}}\setminus \mathring{\{h\geq \r\}} = \partial\{h\geq \r\}$
which follows from Lemma \ref{Mr-include-both-new}.
\end{proof}

Let us now assume that $P$ is upper normal at some level $\r$. By (2.2) we then immediately see that
\begin{equation}\label{reg1}
 \mu(M_\r \symdif \{h\geq \r\}) = 0
\end{equation}
for {\em all\/}  $\mu$-densities  $h$ of $P$.
In other words,
 up to $\mu$-zero measures, $M_\r$ equals the $\r$-level set of {\em all\/}  $\mu$-densities $h$ of
$P$.
Moreover, if for some $\rs>0$ and $\rss>\rs$, the distribution $P$ is upper normal at every level $\r\in (\rs,\rss]$, then
using the monotonicity of the sets $M_\r$ and $\{h\geq \r\}$ in $\r$
as well as $(\cup_{i\in I} A_i)\symdif (\cup_{i\in I} B_i)\subset \cup_{i\in I}(A_i\symdif B_i)$, we   find
\begin{equation}\label{reg3}
 \mu\biggl(\!\{h\! >\! \rs\}\!\symdif\!\bigcup_{\r>\rs}M_\r  \biggr)
\leq
\mu\biggl(\bigcup_{n\in \N}\bigl(\{h\! \geq\! \rs\!+\!1/n\}\!\symdif\! M_{\rs\!+\!1/n} \bigr) \!\biggr)
=0
\end{equation}
for all $\mu$-densities  $h$ of $P$, and if $P$ has a continuous density $h$, we even have
$\bigcup_{\r>\rs}M_\r = \{h > \rs\}$ by an easy consequence of Lemma \ref{Mr-include-both-new}.
Similarly, if $P$ is lower normal at every level $\r\in (\rs,\rss]$,
we   find
\begin{equation}\label{reg4}
 \mu\biggl(\!\{h \!> \!\rs\}\setminus \bigcup_{\r>\rs}\mathring M_\r  \!\biggr)
\leq
\mu\biggl(\bigcup_{n\in \N}\bigl(\{h\! > \!\rs\!+\!1/n\}\setminus \mathring M_{\rs\!+\!1/n} \bigr) \!\biggr)
= 0\,,
\end{equation}
and if in addition, \eqref{reg3} holds, we obtain $\mu(\bigcup_{\r>\rs}M_\r  \symdif  \bigcup_{\r>\rs}\mathring M_\r )=0$.




\section{Proofs and Material on  Connected Components}\label{proof-cc-stuff}




This section contains the proofs related to   Subsection 2.2.
In addition, we recall several additional results on connected components from \citeappendix{Steinwart11a}.

\begin{lemma}\label{def-crm}
Let $A\subset B$ be two non-empty sets with partitions $\ca P(A)$ and $\ca P(B)$, respectively.
Then the following statements are equivalent:
\begin{enumerate}
   \item $\ca P(A)$ is comparable to $\ca P(B)$.
   \item There exists a  $\z: \ca P(A) \to \ca P(B)$ such that, for all $ A'\in \ca P(A)$, we have
\begin{equation}\label{crm-prop}
 A' \subset \z(A')\, .
\end{equation}
\end{enumerate}
Moreover, if one these statements   are true, the map $\z$ is uniquely determined by \eqref{crm-prop}.
We call
 $\z$ the cell relating map (CRM) between $A$ and $B$.
\end{lemma}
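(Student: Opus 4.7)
The plan is to argue this directly from the definition of \emph{comparable} for partitions: namely, that $\ca P(A)\comparable \ca P(B)$ means every cell of $\ca P(A)$ is a subset of some cell of $\ca P(B)$. Given this, the equivalence of (i) and (ii) is essentially a repackaging of the definition, with the real (and still very mild) content lying in the uniqueness of the target cell.

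First I would prove \atob{i}{ii}. Assume $\ca P(A)\comparable \ca P(B)$ and fix an arbitrary $A'\in \ca P(A)$. By the definition of comparability there exists some $B'\in \ca P(B)$ with $A'\subset B'$. The key observation, which I would single out as the crux of the whole argument, is that this $B'$ is the unique cell of $\ca P(B)$ meeting $A'$: indeed, since $A'\in \ca P(A)$, the set $A'$ is non-empty, so we may pick $x\in A'\subset B'$; if we also had $A'\subset B''$ for some $B''\in \ca P(B)$, then $x\in B'\cap B''$, and the disjointness of the cells of $\ca P(B)$ would force $B'=B''$. Consequently, setting $\z(A')=B'$ yields a well-defined map $\z:\ca P(A)\to \ca P(B)$ satisfying \eqref{crm-prop}.

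The converse \atob{ii}{i} is immediate: if such a $\z$ exists, then, by definition, every cell $A'\in \ca P(A)$ is contained in the cell $\z(A')\in \ca P(B)$, which is exactly the condition $\ca P(A)\comparable \ca P(B)$.

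Finally, for the uniqueness of $\z$ I would simply reuse the disjointness argument of the first step: if both $\z$ and $\tilde \z$ satisfy \eqref{crm-prop}, then for each $A'\in \ca P(A)$ any $x\in A'$ lies in both $\z(A')$ and $\tilde \z(A')$, and the pairwise disjointness of $\ca P(B)$ forces $\z(A')=\tilde \z(A')$. I do not expect any genuine obstacle; the proof is essentially a bookkeeping exercise, and the only point worth stating carefully is that non-emptiness of $A'$ together with disjointness of $\ca P(B)$ gives both existence-of-a-unique-target and uniqueness-of-$\z$ at the same stroke.
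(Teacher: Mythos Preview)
Your proof is correct and follows essentially the same approach as the paper's: both directions are handled exactly as you describe, and the uniqueness argument in the paper is the same disjointness-plus-nonemptiness observation you give. The only cosmetic difference is that the paper defers the uniqueness of the target cell to a separate final paragraph rather than folding it into the \atob{i}{ii} step as you do.
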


\begin{proof}[Proof of Lemma \ref{def-crm}]
    \atob  {ii} i Trivial.

   \atob   i {ii} For $A'\in \ca P(A)$ we find a $B'\in \ca P(B)$ such that $A'\subset B'$. Defining
    $\z(A') := B'$ then gives the desired Property \eqref{crm-prop}.

   Finally, assume that {\em ii)} is true but $\z$ is not unique.
        Then there exist $A'\in \ca P(A)$ and $B',B''\in \ca P(B)$ with $B'\neq B''$ and both
        $A'\subset B'$ and $A'\subset B''$. Since $A'\neq \emptyset$, this yields $B'\cap B''\neq\emptyset$,
        which in turn implies $B'=B''$ as $\ca P(B)$ is a partition, i.e.~we have found a contradiction.
\end{proof}

\begin{proof}[Proof of Lemma 2.4]
Clearly, $\z:= \z_{B,C} \circ \z_{A,B}$ maps from  $\ca P(A)$ to $\ca P(C)$. Moreover,
 for $A'\in \ca P(A)$ we have $A'\subset \z_{A,B}(A')$ and for $B':= \z_{A,B}(A')\in \ca P(B)$
we have $B'\subset \z_{B,C}(B')$. Combining these inclusions
we find
\begin{displaymath}
   A'\subset \z_{A,B}(A')\subset \z_{B,C}(\z_{A,B}(A')) = \z_{B,C} \circ \z_{A,B}(A') = \z(A')
\end{displaymath}
for all $A'\in \ca P(A)$.
Consequently, $\ca P(A)$ is comparable to $\ca P(C)$
and
by Lemma \ref{def-crm} we  see that
$\z$ is the
CRM $\z_{A,C}$, that is
$ \z_{A,C} =\z= \z_{B,C} \circ \z_{A,B}$.
\end{proof}

\begin{lemma}\label{connect-comp}
Let $(X,d)$ be a  metric space, $A\subset X$ be a non-empty subset and $\t>0$.
Then every $\t$-connected component of $A$ is $\t$-connected.
\end{lemma}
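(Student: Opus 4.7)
The plan is to unfold the definition of $\tau$-connected component. Recall that for a non-empty $A\subset X$ and $\tau>0$, two points $x,y\in A$ are declared $\tau$-connected in $A$ if there is a finite sequence $x_0,\dots,x_n\in A$ with $x_0=x$, $x_n=y$, and $d(x_{i-1},x_i)<\tau$ for $i=1,\dots,n$. This defines an equivalence relation $\sim_\tau$ on $A$ (reflexivity via the one-term chain, symmetry by reversing a chain, transitivity by concatenation), and the $\tau$-connected components of $A$ are precisely its equivalence classes. A subset $B\subset A$ is $\tau$-connected iff any two of its points are linked by a $\tau$-chain contained in $B$.

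Let $C$ be a $\tau$-connected component of $A$ and pick $x,y\in C$. By the definition of equivalence class we have $x\sim_\tau y$, so there exists a $\tau$-chain $x=x_0,x_1,\dots,x_n=y$ with all $x_i\in A$. The only nontrivial point is to show that this chain can be taken to lie inside $C$, and this is where the proof actually does some work. First I would observe that for every $i$, the truncated chain $x_0,\dots,x_i$ witnesses $x_i\sim_\tau x$, so $x_i$ and $x$ lie in the same equivalence class, namely $C$. Hence $x_i\in C$ for all $i$, and the chain lies in $C$.

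Since $x,y\in C$ were arbitrary, any two points of $C$ are joined by a $\tau$-chain within $C$, which is exactly the assertion that $C$ is $\tau$-connected.

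The only mild obstacle is notational: one must be careful not to confuse the property ``$\tau$-connected to $x$ inside $A$'' with ``$\tau$-connected inside $C$.'' The resolution above, using the prefix of the chain to certify membership in $C$ one vertex at a time, handles this cleanly and no further work is needed.
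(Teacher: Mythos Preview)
Your proof is correct and follows essentially the same approach as the paper: take a $\tau$-chain in $A$ between two points of the component and observe that each prefix of the chain certifies that the intermediate vertex lies in the same equivalence class, so the whole chain stays in the component. The paper phrases this as an iteration (``$d(x_1,x_2)<\tau$ shows $x_2\in A'$, iterating this argument \dots''), which is exactly your truncated-chain observation.
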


\begin{proof}[Proof of Lemma \ref{connect-comp}]
Let $A'$ be a $\t$-connected component of $A$ and $x,x'\in A'$.
Then $x$ and $x'$ are $\t$-connected in $A$, and hence there
 exist
$x_1,\dots,x_n\in A$ such that  $x_1=x$, $x_n=x'$ and $d(x_i,x_{i+1}) < \t$ for all $i=1,\dots,n-1$.
Now, $d(x_1,x_2)<\t$ shows that $x_1$ and $x_2$ are $\t$-connected in $A$, and hence they belong to the same
$\t$-connected component, i.e.~we have found $x_2\in A'$. Iterating this argument, we find $x_i\in A'$ for all
$i=1,\dots,n$. Consequently, $x$ and $x'$ are not only $\t$-connected in $A$, but also $\t$-connected in $A'$.
This shows that $A'$ is $\t$-connected.
\end{proof}

\begin{lemma}\label{cc-comp}
   Let $(X,d)$ be a metric space and $A\subset B$ be two closed non-empty subsets of $X$ with $|\ca C(B)|<\infty$.
   Then $\ca C(A)$ is comparable to $\ca C(B)$.
\end{lemma}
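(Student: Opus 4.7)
The plan is to use the characterization of comparability established in Lemma \ref{def-crm}, which reduces the assertion to exhibiting a map $\zeta : \ca C(A) \to \ca C(B)$ satisfying $A' \subset \zeta(A')$ for every $A' \in \ca C(A)$.

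To construct $\zeta$, I would fix an arbitrary $A' \in \ca C(A)$. Since $A'$ is a connected component of $A$, it is in particular a non-empty connected subset of $A$, and because connectedness is an intrinsic topological property (it depends only on the subspace topology of $A'$, which is the same whether $A'$ is viewed inside $A$ or inside $B$), $A'$ is also connected as a subset of $B$. Now pick any $x \in A'$ and let $\zeta(A')$ denote the unique element of $\ca C(B)$ containing $x$. Recalling that this component can be written as the union of all connected subsets of $B$ that contain $x$, and that $A'$ is one such subset, we conclude $A' \subset \zeta(A')$. Applying Lemma \ref{def-crm} then yields that $\ca C(A)$ is comparable to $\ca C(B)$.

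I expect no substantial obstacle, since the argument rests on a standard point-set fact, namely that every connected subset of a topological space sits in exactly one connected component. Interestingly, neither the hypothesis $|\ca C(B)| < \infty$ nor the closedness of $A$ and $B$ enters the construction of $\zeta$; I suspect these assumptions are present either for consistency with how comparability is used elsewhere in Subsection~2.2 (where components are genuinely finite and closed), or to guarantee that $\ca C(A)$ and $\ca C(B)$ are truly partitions in the sense used by the paper.
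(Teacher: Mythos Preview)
Your proof is correct and takes a genuinely different route from the paper's. The paper fixes $A'\in\ca C(A)$, lists the finitely many $B_1,\dots,B_m\in\ca C(B)$ that $A'$ meets, and then argues by contradiction: if $m>1$, the sets $A'\cap B_i$ form a nontrivial relatively closed partition of $A'$, violating connectedness. The closedness of $A$ and $B$ is invoked to ensure that $A'$ and the $A'\cap B_i$ are closed in $X$, and the finiteness of $\ca C(B)$ is used to set up the finite list $B_1,\dots,B_m$. Your argument instead appeals directly to the maximality characterization of connected components (the component of $x$ in $B$ is the union of all connected subsets of $B$ containing $x$), which immediately gives $A'\subset\zeta(A')$ without any disconnection argument. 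Your observation is also correct: neither closedness nor $|\ca C(B)|<\infty$ is actually needed for the bare comparability statement, since components are always relatively closed and the maximality argument works regardless. The paper's hypotheses are presumably there because elsewhere in the section one genuinely needs closed sets with finitely many components, and the authors chose a proof style that stays close to those standing assumptions rather than isolating the minimal ones.
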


\begin{proof}[Proof of Lemma \ref{cc-comp}]
   Let us fix an $A'\in \ca C(A)$. Since $A\subset B$
   and $|\ca C(B)|<\infty$ there then exist
        an $m\geq 1$ and
 mutually distinct $B_1,\dots, B_m\in \ca C(B)$
   with $A'\subset B_1\cup \dots \cup B_m$ and $A'\cap B_i \neq \emptyset$ for all $i=1,\dots, m$.
   Since $A$ and $B$ are closed,   $A'$  and the
    sets  $A'\cap B_i$ are also closed. Consequently, the
    sets $A'\cap B_i$ are also closed in $A'$ with respect to the relative topology of $A'$.
   Let us now assume that $m>1$. Then $A'\cap B_1$ and $(A'\cap B_2) \cup \dots \cup (A'\cap B_m)$
   are two disjoint relatively closed non-empty subsets of $A'$ whose union equals $A'$. Consequently
   $A'$ is not connected, which contradicts $A'\in \ca C(A)$. In other words, we have $m=1$, that is,
   $\ca C(A)$ is comparable to $\ca C(B)$.
\end{proof}

\begin{lemma}\label{finite-comp}
 Let $(X,d)$ be a  metric space, $A\subset X$ be   non-empty   and $\t>0$.
 Then we have $d(A',A'') \geq \t$ for all $A',A''\in \ca C_\t(A)$ with $A'\neq A''$.
 Moreover, if $A$ is closed, all $A'\in \ca C_\t(A)$ are closed, and if $X$ is compact
 we have $|\ca C_\t(A)|<\infty$.
\end{lemma}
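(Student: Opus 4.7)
The plan is to dispatch the three assertions separately, each by a short contrapositive or sequential argument.

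For the separation claim, I would argue by contradiction. If $A',A''\in\ca C_\t(A)$ are distinct components with $d(A',A'')<\t$, then I can pick $x\in A'$ and $x''\in A''$ with $d(x,x'')<\t$. The two-point chain $x,x''\subset A$ then witnesses that $x$ and $x''$ are $\t$-connected in $A$, so they must belong to the same $\t$-connected component. Combined with Lemma \ref{connect-comp}, which says each $A'$ is itself $\t$-connected internally, this forces $A'=A''$, contradicting the assumption.

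For closedness of each component when $A$ is closed, I would take $A'\in\ca C_\t(A)$ and a sequence $(x_n)\subset A'$ with $x_n\to x\in X$. Since $A$ is closed, $x\in A$. Pick $n$ large enough that $d(x_n,x)<\t$; the single-link chain $x_n,x$ then $\t$-connects $x$ to a point of $A'$ inside $A$. So the $\t$-connected component of $A$ containing $x$ must coincide with $A'$, giving $x\in A'$.

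For finiteness when $X$ is compact, I would select, by the axiom of choice, one point $x_{A'}$ from each component $A'\in\ca C_\t(A)$. The first part yields $d(x_{A'},x_{A''})\geq \t$ whenever $A'\neq A''$, so the collection of chosen points is $\t$-separated. In a compact metric space any sequence admits a convergent (hence Cauchy) subsequence, which is incompatible with infinitely many pairwise $\t$-separated points; therefore the index set $\ca C_\t(A)$ must be finite.

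I do not expect a serious obstacle here; the only subtle point is remembering to invoke $x\in A$ (from $A$ being closed) in the second part, since otherwise the limit point is not even a candidate to lie in a component of $A$, and remembering that $\t$-connectedness is defined only through chains inside the ambient set $A$, so the chain used in step one has to live in $A$ (which it automatically does, as $x,x''\in A'\cup A''\subset A$).
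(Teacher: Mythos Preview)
Your proof is correct and follows essentially the same approach as the paper: the separation bound comes from the observation that two points closer than $\t$ are $\t$-connected and hence lie in the same component, closedness is obtained by passing to a limit inside $A$ and linking it to a late term of the sequence, and finiteness follows from the $\t$-separation together with compactness. The only minor deviations are that the reference to Lemma~\ref{connect-comp} in your first step is unnecessary (components being equivalence classes already gives $A'=A''$), and that for closedness the paper routes the conclusion through part one via $d(A',A'')=0$ rather than directly observing $x$ lands in $A'$---but these are cosmetic differences.
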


\begin{proof}[Proof of Lemma \ref{finite-comp}]
Let $A'\neq A''$ be two $\t$-connected components of $A$. Then we have $d(x',x'') \geq  \t$ for all $x'\in A'$ and $x''\in A''$,
since otherwise $x'$ and $x''$ would be $\t$-connected in $A$. Thus, we have $d(A',A'')\geq \t$, and from the
latter and the compactness of $X$, we  conclude that
$|\ca C_\t(A)|<\infty$.
Finally, let $(x_i)\subset A'$ be a sequence in some component $A'\in \ca C_\t(A)$
such that $x_i\to x$ for some $x\in X$.
Since $A$ is closed, we have $x\in A$, and hence
$x\in A''$ for some $A''\in \ca C_\t(A)$.
By construction we find $d(A',A'')=0$, and hence we obtain $A'=A''$
by the assertion that has been shown first.
\end{proof}

\begin{lemma}\label{connect-char}
Let $(X,d)$ be a metric space, $A\subset X$ be a non-empty subset and $\t>0$.
Then the following statements are equivalent:
\begin{enumerate}
 \item $A$ is $\t$-connected.
 \item For all non-empty subsets $A^+$ and $A^-$ of $A$ with $A^+\cup A^-= A$ and $A^+\cap A^-= \emptyset$ we have $d(A^+, A^-) <  \t$.
\end{enumerate}
\end{lemma}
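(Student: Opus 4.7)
The plan is to prove both implications directly, using the definition of $\tau$-connectedness via finite $\tau$-chains.

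For the direction \emph{i)} $\Rightarrow$ \emph{ii)}, I would fix any admissible partition $A = A^+ \cup A^-$ with both parts non-empty and disjoint, pick $x\in A^+$ and $y\in A^-$, and invoke $\tau$-connectedness of $A$ to obtain a chain $x_1,\dots,x_n\in A$ with $x_1 = x$, $x_n = y$, and $d(x_i,x_{i+1})<\tau$ for every $i$. Since $x_1\in A^+$ and $x_n\in A^-$, there must exist an index $i_0$ with $x_{i_0}\in A^+$ and $x_{i_0+1}\in A^-$. This immediately yields $d(A^+,A^-)\leq d(x_{i_0},x_{i_0+1})<\tau$.

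For the direction \emph{ii)} $\Rightarrow$ \emph{i)}, I would argue by contrapositive. Assume $A$ is not $\tau$-connected, so that there exist $x,y\in A$ that cannot be joined by a $\tau$-chain inside $A$. Define
\begin{displaymath}
A^+ := \{z \in A : z \text{ is $\tau$-connected to } x \text{ in } A\}
\qquad\text{and}\qquad A^- := A \setminus A^+.
\end{displaymath}
Then $x \in A^+$ (via the trivial chain) and $y\in A^-$, so both sets are non-empty, disjoint, and their union is $A$. To show $d(A^+,A^-) \geq \tau$, I would argue by contradiction: if some $u \in A^+$ and $v\in A^-$ satisfied $d(u,v) < \tau$, then concatenating a $\tau$-chain from $x$ to $u$ with the single step to $v$ produces a $\tau$-chain from $x$ to $v$ inside $A$, placing $v$ in $A^+$, a contradiction. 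Hence $d(A^+,A^-)\geq \tau$, which contradicts \emph{ii)}.

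I do not expect any serious obstacle here; the whole argument is a direct bookkeeping exercise on $\tau$-chains. The only subtlety is that, in the second implication, the set $A^+$ I construct is essentially the $\tau$-connected component of $x$, so one must be careful to define it as the set of endpoints reachable from $x$ by $\tau$-chains in $A$ (which automatically contains $x$), rather than via any auxiliary topological notion. Once the partition is set up correctly, the contradiction in the last step is immediate from the definition of $\tau$-connectedness.
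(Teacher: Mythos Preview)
Your proof is correct and follows essentially the same approach as the paper. The direction \emph{i)} $\Rightarrow$ \emph{ii)} is identical; for \emph{ii)} $\Rightarrow$ \emph{i)} the paper also argues by contrapositive, taking $A^+$ to be a $\t$-connected component and $A^-$ its complement in $A$, and then invokes Lemma~\ref{finite-comp} for the bound $d(A^+,A^-)\geq \t$, whereas you define $A^+$ as the $\t$-component of a fixed point and inline the chain-concatenation argument that proves the same bound.
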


\begin{proof}[Proof of Lemma \ref{connect-char}]
 \atob i {ii} We fix two  subsets $A^+$ and $A^-$ of $A$ with $A^+\cup A^-= A$ and $A^+\cap A^-= \emptyset$.
Let us further fix two points $x^+\in A^+$ and $x^-\in A^-$.
Since $A$ is $\t$-connected,   there then  exist
$x_1,\dots,x_n\in A$ such that  $x_1=x^-$, $x_n=x^+$ and $d(x_i,x_{i+1}) < \t$ for all $i=1,\dots,n-1$.
Then, $x^+\in A^+$ and $x^-\in A^-$ imply the existence of an $i\in \{1,\dots,n-1\}$ with
$x_i\in A^-$ and $x_{i+1}\in A^+$. This yields
 $d(A^+, A^-) \leq  d(x_i,x_{i+1}) < \t$.

\atob  {ii} i
Assume that $A$ is not $\t$-connected, that is $|\ca C_\t(A)| > 1$. 
We pick an $A^+\in \ca C_\t(A)$ and write $A^- := A\setminus A^+$. Since $|\ca C_\t(A)| > 1$,
both sets are non-empty, and our construction  ensures that they are also disjoint and satisfy $A^+\cup A^-= A$.
Moreover, for every $A'\in \ca C_\t(A)$ with $A'\neq A^+$ we know $d(A^+,A')\geq \t$
by  Lemma \ref{finite-comp} and since $A^-$ is the union of such $A'$, we conclude $d(A^+, A^-) \geq   \t$.
\end{proof}

\begin{corollary}\label{connect-char-cor}
Let $(X,d)$ be a metric space, $A\subset B\subset X$ be  non-empty subsets and $\t>0$.
If $A$ is $\t$-connected, then there exists exactly one $\t$-connected component $B'$ of $B$ with
$A\cap B' \neq \emptyset$. Moreover, $B'$ is the only $\t$-connected component $B''$ of $B$ that satisfies
$A\subset B''$.
\end{corollary}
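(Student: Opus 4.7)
The plan is to exploit the observation that a $\t$-connecting chain in $A$ is automatically a chain in any larger set $B\supset A$, which forces every point of $A$ to end up in a single $\t$-connected component of $B$.

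First I would fix an arbitrary $x_0\in A$ (which is possible since $A\neq\emptyset$) and let $B'$ denote the unique $\t$-connected component of $B$ that contains $x_0$. For any other $y\in A$, the assumption that $A$ is $\t$-connected provides points $x_1,\dots,x_n\in A$ with $x_1=x_0$, $x_n=y$, and $d(x_i,x_{i+1})<\t$ for all $i=1,\dots,n-1$. Since these points also lie in $B$, they witness the $\t$-connectedness of $x_0$ and $y$ within $B$, so $y$ must belong to the same $\t$-connected component of $B$ as $x_0$, namely $B'$. This yields $A\subset B'$, and in particular $A\cap B'=A\neq\emptyset$, which establishes the existence claim.

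For the uniqueness of the component meeting $A$, I would argue by disjointness of components: if $B''$ is another $\t$-connected component of $B$ with $A\cap B''\neq\emptyset$, then any $y\in A\cap B''$ also lies in $B'$ by the first step, contradicting Lemma \ref{finite-comp} (which gives $d(B',B'')\geq \t$ for distinct components, hence in particular $B'\cap B''=\emptyset$). The ``moreover'' statement follows at once: if $A\subset B''$ then $A\cap B''=A\neq\emptyset$, so the uniqueness just established forces $B''=B'$, and conversely $A\subset B'$ was shown in the first step.

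I do not anticipate a genuine obstacle here; the only point requiring attention is that the witnessing chain from the definition of $\t$-connectedness in $A$ transfers verbatim to $B$, which is immediate from $A\subset B$ and requires no metric or topological hypothesis beyond what is already assumed.
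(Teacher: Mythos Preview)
Your proof is correct and takes a genuinely more direct route than the paper. The paper argues by contradiction: assuming two distinct components $B',B''$ meet $A$, it partitions $A$ into $A^-=A\cap B'$ and $A^+=A\cap(B\setminus B')$ and invokes Lemma~\ref{connect-char} (the partition characterization of $\t$-connectedness) to find two points at distance $<\t$ straddling two different components of $B$, which contradicts Lemma~\ref{finite-comp}. You instead work constructively from the chain definition: a $\t$-chain in $A$ is automatically a $\t$-chain in $B$, so all of $A$ lands in the single component containing a fixed base point. Your argument is shorter, bypasses Lemma~\ref{connect-char} entirely, and makes the inclusion $A\subset B'$ explicit from the outset rather than deriving it only implicitly. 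The paper's approach, on the other hand, illustrates how the partition characterization can be used, which is a technique it reuses elsewhere. One minor remark: your appeal to Lemma~\ref{finite-comp} for the disjointness of distinct components is correct but slightly heavier than needed, since disjointness is already immediate from the fact that $\t$-connected components are equivalence classes.
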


\begin{proof}[Proof of Corollary \ref{connect-char-cor}]
The second assertion is a direct consequence of the first, and hence it suffice to show the first assertion.
Let us assume the first is not true. Since $A\subset B$
 there then
exist  $B', B''\in \ca C_\t(B)$ with $B'\neq B''$, $A\cap B'\neq \emptyset$, and $A\cap B''\neq \emptyset$.
We write $A^-:= A \cap B'$ and $A^+ := A \cap (B\setminus B')$. Since $B''\subset B\setminus B'$, we obtain $A^+\neq \emptyset$,
and therefore, Lemma \ref{connect-char} shows $d(A^-,A^+) < \t$. Consequently, there exist $x^-\in A^-$ and $x^+\in A^+$ with
$d(x^+,x^-)< \t$. Now we obviously have $x^-\in B'$, and by construction, we also find a $B'''\in \ca C_\t(B)$
with $x^+\in B'''$. Our previous inequality then yields $d(B',B''')< \t$, while Lemma \ref{finite-comp}
shows $d(B', B''')\geq  \t$, that is, we have found a contradiction.
\end{proof}

\begin{lemma}\label{tcc-comp}
 Let $(X,d)$ be a   metric space, $A\subset B$ be two non-empty subsets of $X$ and $\t>0$.
 Then $\ca C_\t(A)$ is comparable to $\ca C_\t(B)$.
\end{lemma}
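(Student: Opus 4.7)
The plan is to apply the characterization of comparability from Lemma \ref{def-crm}, which reduces the problem to constructing a map $\zeta:\ca C_\t(A)\to \ca C_\t(B)$ with $A'\subset \zeta(A')$ for every $A'\in \ca C_\t(A)$.

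First I would fix an arbitrary $A'\in \ca C_\t(A)$. By Lemma \ref{connect-comp}, $A'$ is itself $\t$-connected, and of course $A'\subset A \subset B$, so $A'$ is a non-empty $\t$-connected subset of $B$. This puts us exactly in the setting of Corollary \ref{connect-char-cor}, applied with the roles ($A \leadsto A'$, $B \leadsto B$), which guarantees the existence of a \emph{unique} $\t$-connected component $B'\in \ca C_\t(B)$ satisfying $A'\subset B'$. I would define $\zeta(A'):=B'$.

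The map $\zeta$ is well-defined precisely because of the uniqueness clause in Corollary \ref{connect-char-cor}, and by construction it satisfies the inclusion property \eqref{crm-prop}. The implication \emph{ii)} $\Rightarrow$ \emph{i)} of Lemma \ref{def-crm} then yields that $\ca C_\t(A)$ is comparable to $\ca C_\t(B)$, finishing the proof.

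There is really no obstacle here: all the work has already been done in Lemma \ref{connect-comp} (which upgrades membership in a component to $\t$-connectedness) and in Corollary \ref{connect-char-cor} (which produces the unique ambient component containing a given $\t$-connected subset). The present lemma is essentially an assembly of these two facts together with the abstract characterization of comparability in Lemma \ref{def-crm}.
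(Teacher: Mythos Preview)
Your proof is correct and follows essentially the same approach as the paper: invoke Corollary \ref{connect-char-cor} to obtain, for each $A'\in\ca C_\t(A)$, a unique $B'\in\ca C_\t(B)$ with $A'\subset B'$, and conclude comparability. You are simply more explicit than the paper in citing Lemma \ref{connect-comp} (to justify that $A'$ is $\t$-connected, which is the hypothesis needed for Corollary \ref{connect-char-cor}) and Lemma \ref{def-crm} (for the characterization of comparability).
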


\begin{proof}[Proof of Lemma \ref{tcc-comp}]
 For $A'\hspace*{-0.4pt}\in\hspace*{-0.4pt} \ca C_\t(A)$, Corollary \ref{connect-char-cor} shows that there is exactly $B'\in \ca C_\t(B)$
with $A'\subset B'$. Thus,  $\ca C_\t(A)$ is comparable to $\ca C_\t(B)$.
\end{proof}

\begin{lemma}\label{comp-char}
Let $(X,d)$ be a  metric space, $A\subset X$ be a non-empty subset and $\t>0$.
Then, for a partition $A_1,\dots,A_m$ of $A$, the following statements are equivalent:
\begin{enumerate}
 \item $\ca C_\t(A) = \{A_1,\dots,A_m\}$.
  \item  $A_i$ is $\t$-connected for all $i=1,\dots,m$,   and $d(A_i,A_j)\geq \t$
  for all $i\neq j$.
\end{enumerate}
\end{lemma}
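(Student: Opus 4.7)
The plan is to prove \atob i {ii} directly from the earlier lemmas, and then to prove \atob {ii} i by first matching each $A_i$ with a uniquely determined $\t$-connected component via Corollary \ref{connect-char-cor}, and then using the separation condition together with the characterization in Lemma \ref{connect-char} to show the matching is actually an equality.

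\atob i {ii} Since each $A_i$ is a $\t$-connected component of $A$, Lemma \ref{connect-comp} implies it is $\t$-connected. The separation $d(A_i,A_j)\geq \t$ for $i\neq j$ is the first assertion of Lemma \ref{finite-comp}.

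\atob {ii} i Fix $i$. Since $A_i$ is $\t$-connected and $A_i\subset A$, Corollary \ref{connect-char-cor} yields a unique $A_i'\in \ca C_\t(A)$ with $A_i\subset A_i'$. The key step is to show $A_i'=A_i$. Assume for contradiction that $A_i'\setminus A_i\neq \emptyset$ and set
\begin{displaymath}
 B^+ := A_i \qquad \mbox{and} \qquad B^- := A_i'\setminus A_i = \bigcup_{j\neq i}(A_i'\cap A_j)\, .
\end{displaymath}
Both sets are non-empty, disjoint, and their union is $A_i'$. By Lemma \ref{connect-comp} the set $A_i'$ is $\t$-connected, so Lemma \ref{connect-char} yields $d(B^+,B^-)<\t$. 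Pick $x^+\in B^+=A_i$ and $x^-\in B^-$ with $d(x^+,x^-)<\t$; by construction $x^-\in A_j$ for some $j\neq i$, which gives $d(A_i,A_j)\leq d(x^+,x^-)<\t$, contradicting the assumed separation. Hence $A_i=A_i'\in \ca C_\t(A)$ for every $i$.

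It remains to rule out other components. If $C\in \ca C_\t(A)$, pick $x\in C$; since $\{A_1,\dots,A_m\}$ is a partition of $A$, there is an $i$ with $x\in A_i=A_i'$, and because distinct $\t$-connected components are disjoint we must have $C=A_i'=A_i$. Therefore $\ca C_\t(A)=\{A_1,\dots,A_m\}$.

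The only delicate point is the implication \atob {ii} i: one has to notice that the unique ``enclosing'' component $A_i'$ supplied by Corollary \ref{connect-char-cor} must, by the partition property, split into pieces of the $A_j$'s, and that the separation hypothesis together with Lemma \ref{connect-char} forbids any such non-trivial split.
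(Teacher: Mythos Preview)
Your proof is correct and follows essentially the same strategy as the paper's: both directions rely on Lemma \ref{connect-char} to derive a contradiction with the separation hypothesis, with Corollary \ref{connect-char-cor} providing the link between the $A_i$ and the $\t$-connected components. The only cosmetic difference is that the paper starts from an arbitrary $A'\in \ca C_\t(A)$ and shows it equals some $A_i$, whereas you start from each $A_i$, show it coincides with its enclosing component $A_i'$, and then rule out additional components; the core splitting argument via Lemma \ref{connect-char} is identical.
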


\begin{proof}[Proof of Lemma \ref{comp-char}]
\atob i {ii} Follows from Lemma \ref{finite-comp}.

\atob  {ii} i Let us fix an $A'\in \ca C_\t(A)$ and an $A_i$ with  $A_i\cap A'\neq \emptyset$.
        Since $A_i$ is $\t$-connected and $A'\in \ca C_\t(A)$, Corollary \ref{connect-char-cor} applied
        to the sets $A_i\subset A\subset X$   yields $A_i\subset A'$.
Moreover, $A_1,\dots,A_m$ is a partition of $A$, and thus we conclude
 that
  \begin{displaymath}
   A' = \bigcup_{i\in I} A_i \, ,
  \end{displaymath}
 where $I:= \{i: A_i \cap A'\neq \emptyset\}$. Now let us assume that $|I|\geq 2$. We fix an $i_0\in I$
 and write $A^+:= A_{i_0}$ and $A^-:= \bigcup_{i\in I\setminus\{i_0\}} A_i$. Since $|I|\geq 2$, we obtain $A^-\neq \emptyset$,
 and  Lemma \ref{connect-char} thus shows $d(A^+,A^-) < \t$. On the other hand, our assumption ensures $d(A^+,A^-)\geq \t$,
  and hence  $|I|\geq 2$ cannot be true. Consequently, there exists a unique index $i$ with $A'=A_i$.
\end{proof}

\begin{lemma}\label{top-connect-new1}
  Let $(X,d)$ be a compact metric space and $A\subset X$ be a non-empty closed subset.
Then the following statements are equivalent:
\begin{enumerate}
 \item $A$ is connected.
 \item  $A$ is $\t$-connected for all $\t>0$.
\end{enumerate}
\end{lemma}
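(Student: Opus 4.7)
The plan is to prove both implications by contrapositive, in each case reducing the topological/metric connectedness condition to the existence of a nontrivial partition of $A$ into two disjoint non-empty closed pieces, and then invoking Lemma \ref{finite-comp} and Lemma \ref{connect-char} respectively.

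For \emph{i) $\Rightarrow$ ii)}, I would assume that $A$ fails to be $\tau$-connected for some $\tau > 0$ and deduce that $A$ is not connected. Since $X$ is compact and $A$ is closed, Lemma \ref{finite-comp} guarantees that $\ca C_\tau(A)$ is a finite partition $A_1,\dots,A_m$ of $A$ into closed (hence in $X$ closed) sets. The assumption $|\ca C_\tau(A)| \geq 2$ then allows one to set $A^+ := A_1$ and $A^- := A_2 \cup \cdots \cup A_m$, which are two disjoint non-empty closed subsets of $X$ with $A^+ \cup A^- = A$. They are relatively closed in $A$ as well, so $A$ is not connected.

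For \emph{ii) $\Rightarrow$ i)}, I would again argue by contrapositive. If $A$ is not connected, there exist non-empty disjoint sets $A^+, A^-$, closed in $A$, with $A^+ \cup A^- = A$. Since $A$ is closed in $X$, both sets are closed in $X$, and since $X$ is compact, both are compact. Two disjoint non-empty compact subsets of a metric space have strictly positive distance, so $\tau := d(A^+,A^-) > 0$. With this value of $\tau$, the partition $(A^+,A^-)$ contradicts the second condition in Lemma \ref{connect-char} (which would require $d(A^+,A^-) < \tau$), hence $A$ is not $\tau$-connected.

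I do not foresee a real obstacle: the only point that needs care is checking the hypotheses of the two cited lemmas, namely that in the first direction the $\tau$-connected components are indeed closed and finite in number (which is exactly what compactness of $X$ together with closedness of $A$ gives via Lemma \ref{finite-comp}), and that in the second direction one really gets a bipartition to feed into Lemma \ref{connect-char}. The positivity of $d(A^+,A^-)$ in the second direction is the one classical metric-topology fact being used, and it is why compactness of $X$ (or at least of $A$) is essential.
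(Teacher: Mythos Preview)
Your proposal is correct and essentially identical to the paper's proof: both directions are argued by contrapositive, the first via Lemma \ref{finite-comp} (finiteness and closedness of the $\t$-components, then splitting off one component), the second via compactness to get $\t:=d(A^+,A^-)>0$ and then Lemma \ref{connect-char}.
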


\begin{proof}[Proof of Lemma \ref{top-connect-new1}]
 \atob i {ii} Assume that $A$ is not $\t$-connected for some $\t>0$.
 Then, by Lemma \ref{finite-comp}, there are finitely many $\t$-connected components $A_1,\dots,A_m$ of $A$ with $m>1$.
 We write $A':= A_1$ and $A'' := A_2\cup\dots\cup A_m$. Then $A'$ and $A''$ are non-empty, disjoint and $A'\cup A'' = A$
 by construction. Moreover, Lemma \ref{finite-comp} shows that $A'$ and $A''$ are closed since $A$ is closed, and hence
 $A$ cannot be connected.

 \atob  {ii} i Let us assume that $A$ is not connected. Then there exist two non-empty
closed disjoint subsets of $A$ with $A'\cup A'' = A$. Since $X$ is compact, $A'$ and $A''$ are  also
compact,  and hence $A'\cap A'' = \emptyset$ implies $\t:=d(A',A'')>0$.
  Lemma \ref{connect-char} then shows that $A$ is not $\t$-connected.
\end{proof}

The next proposition investigates the relation between $\ca C_\t(A)$ and $\ca C(A)$.

\begin{proposition}\label{top-connect-new2-prop}
Let $(X,d)$ be a compact metric space and $A\subset X$ be a non-empty closed subset.
Then the following statements hold:
\begin{enumerate}
 \item For all $\t>0$, $\ca C(A)$ is comparable to $\ca C_\t(A)$ and the  CRM
  $\z:\ca C(A)\to \ca C_\t(A)$ is surjective.
 \item If $|\ca C(A)|<\infty$, we have
    \begin{displaymath}
   \t^*_A :=  \min\bigl\{d(A',A''): A',A''\in \ca C(A) \mbox{ with } A'\neq A''\bigr\}> 0 \, ,
  \end{displaymath}
  where $\min \emptyset := \infty$.
 Moreover, for all $\t\in (0,\t_A^*]\cap (0,\infty)$, we have
 $\ca C(A)= \ca C_\t(A)$ and, for
    such $\t$, the  CRM
    $\z:\ca C(A)\to \ca C_\t(A)$ is bijective.
 Finally, if $\t_A^*<\infty$, that is, $|\ca C(A)|>1$, we have
   \begin{displaymath}
      \t^*_A = \max\{ \t>0: \ca C(A)= \ca C_\t(A)\}\, .
     \end{displaymath}
\end{enumerate}
\end{proposition}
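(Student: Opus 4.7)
My approach is to reduce each claim to one of the already established lemmas, exploiting that in a compact metric space every closed connected subset is $\t$-connected for all $\t>0$ (Lemma \ref{top-connect-new1}).

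For part \emph{i)}, fix $\t>0$ and $A'\in \ca C(A)$. Since $A'$ is a connected component of a closed subset of a compact Hausdorff space, it is closed in $X$ and connected, hence $\t$-connected by Lemma \ref{top-connect-new1}. Corollary \ref{connect-char-cor} applied to $A'\subset A$ then yields a unique $B'\in \ca C_\t(A)$ with $A'\subset B'$. This gives the map $\z:\ca C(A)\to \ca C_\t(A)$ required by Lemma \ref{def-crm}, so $\ca C(A)$ is comparable to $\ca C_\t(A)$. For surjectivity, fix $B\in \ca C_\t(A)$, pick $x\in B$, and let $A'$ be the connected component of $A$ containing $x$. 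By the previous argument, $A'\subset \z(A')$, and since $x\in A'\cap B\subset \z(A')\cap B$ while $\ca C_\t(A)$ is a partition, we conclude $B=\z(A')$.

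For part \emph{ii)}, first observe that $\tau_A^*>0$: if $|\ca C(A)|=1$ this is immediate by convention, and otherwise the components $A_1,\dots,A_m$ are closed in the compact space $X$, hence compact and pairwise disjoint, so $d(A_i,A_j)>0$ for $i\neq j$, and the minimum of finitely many positive numbers is positive. Now fix any $\t\in (0,\t_A^*]\cap(0,\infty)$. Every $A_i\in \ca C(A)$ is closed and connected, hence $\t$-connected by Lemma \ref{top-connect-new1}, while by the definition of $\t_A^*$ we have $d(A_i,A_j)\geq \t_A^*\geq \t$ for all $i\neq j$. Lemma \ref{comp-char} then identifies $\{A_1,\dots,A_m\}$ as $\ca C_\t(A)$, so $\ca C(A)=\ca C_\t(A)$, and the CRM $\z$ is the identity on this common partition, in particular bijective.

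It remains to establish the maximality in the case $\t_A^*<\infty$. We already know $\ca C(A)=\ca C_\t(A)$ for all $\t\in (0,\t_A^*]$. Conversely, assume $\t>\t_A^*$. Pick $A'\neq A''$ in $\ca C(A)$ achieving the minimum $d(A',A'')=\t_A^*$; since $A'$ and $A''$ are compact, the infimum is attained, so there are $x'\in A'$, $x''\in A''$ with $d(x',x'')=\t_A^*<\t$. Thus $x'$ and $x''$ are $\t$-connected in $A$ and lie in a common $\t$-connected component of $A$, which cannot coincide with either of the distinct topological components $A'$, $A''$. Hence $\ca C(A)\neq \ca C_\t(A)$, proving $\t_A^*=\max\{\t>0:\ca C(A)=\ca C_\t(A)\}$. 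The only slightly delicate point throughout is remembering to invoke compactness twice—once to get $\tau_A^*>0$, and once to know that the infimum defining $d(A',A'')$ is attained in the strict inequality $\t_A^*<\t$—but beyond that the argument is a clean assembly of the preceding lemmas.
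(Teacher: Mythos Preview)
Your proof is correct and follows essentially the same approach as the paper's: part \emph{i)} is identical, and for part \emph{ii)} you use the same ingredients (compactness of components giving $\t_A^*>0$, Lemma \ref{top-connect-new1} plus Lemma \ref{comp-char} giving $\ca C(A)=\ca C_\t(A)$ for $\t\leq \t_A^*$). The only minor variation is in the maximality argument: the paper argues by contradiction against the supremum, while you argue directly that for $\t>\t_A^*$ two points from distinct topological components become $\t$-connected, forcing $\ca C_\t(A)\neq \ca C(A)$; these are equivalent and equally clean.
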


Note that, in general, a closed subset of $A$  may have infinitely many topologically connected
components as, e.g., the Cantor set shows. In this case, the second
assertion of the lemma above is, in general, no longer true.

\begin{proof}[Proof of Proposition \ref{top-connect-new2-prop}]
\ada i
Let $A'\in \ca C(A)$  and $\t>0$. Since $A$ is closed, so is $A'$, and hence $A'$ is $\t$-connected by Lemma
\ref{top-connect-new1}. Consequently,
Corollary \ref{connect-char-cor} shows that there exists an  $A''\in \ca C_\t(A)$ with $A'\subset A''$,
i.e.~$\ca C(A)$ is comparable to $\ca C_\t(A)$.
Now we fix an $A''\in \ca C_\t(A)$. Then there exists an $x\in A''$, and to this $x$, there exists an $A'\in \ca C(A)$
with $x\in A'$. This yields $A'\cap A''\neq \emptyset$, and since $A'$ is $\t$-connected by
Lemma
\ref{top-connect-new1},
Corollary \ref{connect-char-cor} shows $A'\subset A''$, i.e.~we obtain $\z(A')= A''$.

\ada {ii}
Let $A_1,\dots,A_m$ be the topologically connected components of $A$.
Then the components are closed, and
since $A$ is a closed and thus compact subset of $X$, the components are compact, too.
This shows $d(A_i,A_j) > 0$ for all $i\neq j$, and
consequently we obtain $\t_A^*>0$.
Let us fix a $\t\in (0,\t_A^*]\cap (0,\infty)$.
Then,
Lemma
\ref{top-connect-new1} shows that
each $A_i$ is $\t$-connected, and therefore
Lemma \ref{comp-char} together with  $d(A_i,A_j)\geq \t_A^*\geq  \t$ for all $i\neq j$ yields
$\ca C_\t(A) = \{A_1,\dots,A_m\}$.
Consequently,  we have proved $\ca C(A) = \ca C_\t (A)$.
The bijectivity of $\z$ now follows from its surjectivity.
For the proof of the last equation, we define $\t^*:= \sup\{ \t>0: \ca C(A)= \ca C_\t(A)\}$.
Then we have already seen that $\t^*_A\leq  \t^*$.
Now suppose that $\t^*_A<  \t^*$. Then there exists a $\t\in (\t^*_A,  \t^*)$ with
$\ca C(A) = \ca C_\t(A)$. On the one hand,
we then find $d(A_i,A_j)\geq \t$ for all $i\neq j$ by Lemma \ref{finite-comp}, while on the other hand
$\t>\t_A^*$ shows that there exist $i_0\neq j_0$
with $d(A_{i_0},A_{j_0})<\t$. In other words, the assumption $\t^*_A<  \t^*$ leads to
a contradiction, and hence we have $\t^*_A=  \t^*$.
\end{proof}

The last lemma in this subsection shows the monotonicity of $\t_A^*$.

\begin{lemma}\label{tau-star-comp}
  Let $(X,d)$ be a compact metric space
and  $A\subset B$ be two non-empty closed subsets of $X$ with $|\ca C(A)|<\infty$ and $|\ca C(B)|<\infty$.
If the CRM $\z: \ca C(A) \to \ca C(B)$ is injective, then we have $\t_A^* \geq \t_B^*$.
\end{lemma}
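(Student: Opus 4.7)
The proof is essentially a chase through the definitions, using only the three ingredients: the CRM $\z$ delivers containments $A'\subset \z(A')$, the injectivity of $\z$ turns distinct components of $A$ into distinct components of $B$, and distinct topologically connected components of $B$ are separated by at least $\t_B^*$ (by the definition of $\t_B^*$ in Proposition \ref{top-connect-new2-prop}).

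My plan is to start with the degenerate case. If $|\ca C(A)|\leq 1$ then the minimum in the definition of $\t_A^*$ is taken over an empty set, so $\t_A^*=\infty$ and the inequality $\t_A^*\geq \t_B^*$ is automatic. So I may assume $|\ca C(A)|\geq 2$. Note also that the CRM $\z:\ca C(A)\to\ca C(B)$ from Lemma \ref{cc-comp} is well-defined here, since $A\subset B$ are closed non-empty and $|\ca C(B)|<\infty$.

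Next, I pick arbitrary distinct components $A',A''\in \ca C(A)$ and set $B':=\z(A')$ and $B'':=\z(A'')$ in $\ca C(B)$. By the defining property of the CRM, we have $A'\subset B'$ and $A''\subset B''$. Since $\z$ is injective and $A'\neq A''$, we also have $B'\neq B''$. From $A'\subset B'$ and $A''\subset B''$ it follows directly that
\begin{equation*}
d(A',A'') \;=\; \inf_{x\in A',\, y\in A''} d(x,y) \;\geq\; \inf_{x\in B',\, y\in B''} d(x,y) \;=\; d(B',B'')\,.
\end{equation*}
Since $B',B''\in \ca C(B)$ are distinct, the definition of $\t_B^*$ in Proposition \ref{top-connect-new2-prop} yields $d(B',B'')\geq \t_B^*$, and hence $d(A',A'')\geq \t_B^*$.

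Since $A',A''$ were arbitrary distinct elements of $\ca C(A)$, taking the minimum over all such pairs gives $\t_A^*\geq \t_B^*$, which is the desired inequality. The proof is essentially free of obstacles; the only subtlety is remembering the convention $\min\emptyset:=\infty$ to handle the case $|\ca C(A)|=1$, which would otherwise produce a vacuously defined $\t_A^*$.
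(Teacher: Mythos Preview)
Your proof is correct and follows essentially the same approach as the paper's own proof: fix distinct $A',A''\in\ca C(A)$, use injectivity of $\z$ to get $\z(A')\neq\z(A'')$, then $A'\subset\z(A')$ and $A''\subset\z(A'')$ give $d(A',A'')\geq d(\z(A'),\z(A''))\geq\t_B^*$, and taking the minimum yields the claim. Your explicit treatment of the degenerate case $|\ca C(A)|\leq 1$ is a nice addition that the paper leaves implicit via the $\min\emptyset=\infty$ convention.
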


\begin{proof}[Proof of Lemma \ref{tau-star-comp}]
 Let us fix some $A',A''\in \ca C(A)$ with $A'\neq A''$. Since $\z$ is injective, we then obtain $\z(A')\neq \z(A'')$.
 Combining this with $A'\subset \z(A')$ and $A''\subset \z(A'')$, we
 find
  \begin{displaymath}
   d(A',A'') \geq d(\z(A'),\z(A'')) \geq \t_B^*\, ,
  \end{displaymath}
 where the last inequality follows from
        the definition of $\t_B^*$.
        Taking the infimum over all $A'$ and $A''$ with $A'\neq A''$
yields the assertion.
\end{proof}




\section{Additional Material Related to Tubes around Sets}\label{sec:proof-tubes}




This section contains additional material on the operations $A\pde$ and $A\mde$.

Let us begin by noting that
in the literature there is another, closely related concept for adding and cutting off $\d$-tubes,
which is based on
the Minkowski addition. Namely, in generic metric spaces $(X,d)$, we can define
\begin{align*}
 A\opde &:= \{x\in X: \exists y\in A \mbox{ with } d(x,y)\leq \d\}\\
 A\omde &:= \{x\in X: B(x,\d)\subset A\}
\end{align*}
for $A\subset X$ and $\d>0$, where $B(x,\d):=\{y\in X: d(x,y)\leq \d\}$ denotes the closed ball with radius
$\d$ and center $x$.
Some simple considerations then
show $A\omdee \subset A\mde\subset A\omde$ and
$A\opde \subset A\pde \subset A\opdee$ for all $\eps,\d>0$, that
is, the  operations of both concepts almost coincide. In addition, it is straightforward to check that
$A\omde = X\setminus (X\setminus A)\opde$.

Usually, the operations
$\oplus\d$ and $\ominus \d$
are considered
for the special case $X:= \Rd$ equipped with the Euclidean norm. In this case, we immediately obtain
the more common expressions
\begin{align*}
 A\opde &=  \{x +y : x\in A \mbox{ and } y\in \d B_{\ell_2^d}\} \\
 A\omde &= \{x\in \Rd: x+\d B_{\ell_2^d}\subset A\}\, ,
\end{align*}
where $B_{\ell_2^d}$ denotes the closed unit Euclidean ball at the origin.
Note that the latter formulas remain true for sufficiently small $\d>0$,
if we consider the ``relative case'' $X\subset \Rd$ and subsets $A\subset X$ satisfying $d(A, \Rd\setminus X)\in (0,\infty)$.

In general, it is  cumbersome to determine the exact forms of $A\pde$ and $A\mde$, respectively
$A\opde$ and $A\omde$ for a given $A$. For a particular class of sets $A\subset \R^2$,
Example \ref{Aomde} illustrates this by providing both $A\opde$ and $A\omde$.

The next lemma establishes some basic properties of the introduced operations.

\begin{lemma}\label{Td-lemma-new}
 Let $(X,d)$ be a   metric space and $A,B\subset X$ be two subsets. Then the
following statements hold:
\begin{enumerate}
 \item If $A$ is compact, then $ A\pde= A\opde$.
 \item We have $d(A,B)\leq d(A\pde, B\pde) + 2\d$.
 \item We have
\begin{equation}\label{inf-tube}
 \bigcap_{\d>0} A\pde = \overline{A}\, .
\end{equation}
\item We have $(A\cup B)\pde = A\pde \cup B\pde$ and $(A\cap B)\pde \subset  A\pde \cap B\pde$.
\item We have $A\mde \cup B\mde \subset (A\cup B)\mde$ and, if $d(A,B) >\d$, we actually have $A\mde \cup B\mde = (A\cup B)\mde$.
\item For $A_1,A_2\subset X$ with $A_1\cap A_2 = \emptyset$ and $B_i\subset A_i$ with $d(B_1,B_2)> \d$, we have
\begin{displaymath}
  ( A_1\mde \setminus B_1\mde) \cup  ( A_2\mde \setminus B_2\mde) \subset (A_1\cup A_2)\mde \setminus (B_1\cup B_2)\mde\, ,
\end{displaymath}
  and equality holds, if $d(A_1,A_2) > \d$.
\item For all $\d>0$ and $\eps>0$, we have $A\subset (A\pdee)\mde$ and $(A\mdee)\pde\subset A$.
\item For all $\d>0$ and $\eps>0$, we have $(\partial A)\pde \subset A\pdee \setminus A\mdee$.
\end{enumerate}
\end{lemma}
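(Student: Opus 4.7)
The plan is to prove the two inclusions $(\partial A)\pde \subset A\pdee$ and $(\partial A)\pde \cap A\mdee = \emptyset$ separately, relying only on the earlier properties (4) and (7) of this lemma, the sandwich relations $A\opde \subset A\pde \subset A\opdee$ recorded earlier in this section, and the elementary fact $\partial A \subset \overline A$. Note that monotonicity of $\pde$ in its set argument is immediate from (4): if $A\subset B$, then $B\pde = (A\cup B)\pde = A\pde \cup B\pde \supset A\pde$.

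For the first inclusion, I would fix $x \in (\partial A)\pde$ and use the sandwich $\pde\subset\opdee$ (instantiated with $\eps$ replaced by $\eps/4$) to produce $y \in \partial A$ with $d(x,y) \leq \d+\eps/4$. Since $\partial A\subset\overline A$, I can then pick $z \in A$ with $d(y,z) < \eps/4$, so that $d(x,z) < \d+\eps/2$ by the triangle inequality. This places $x$ in $A^{\oplus(\d+\eps/2)}$, which sits inside $A\pdee$ by the other half of the sandwich $\opde\subset\pde$ applied at level $\d+\eps/2$ combined with monotonicity of $\pde$ in the radius (itself extracted from the sandwich).

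For the second inclusion, I would argue by contradiction: suppose $x \in (\partial A)\pde \cap A\mdee$. Applying (7) with its two parameters set to $\d' := \d+\eps/2$ and $\eps' := \eps/2$, so that $\d'+\eps' = \d+\eps$, yields $(A\mdee)^{+(\d+\eps/2)} \subset A$. Monotonicity of $\pde$ together with $\{x\}\subset A\mdee$ then gives $\{x\}^{+(\d+\eps/2)} \subset A$. On the other hand, $x \in (\partial A)\pde$ and the sandwich $\pde\subset\opdee$ (this time with slack $\eps/8$) furnish $y \in \partial A$ with $d(x,y)\leq \d+\eps/8$, and the definition of $\partial A$ supplies $z\notin A$ with $d(y,z)<\eps/8$. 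The triangle inequality then forces $d(x,z)<\d+\eps/4$, whence $z \in \{x\}^{\oplus(\d+\eps/2)} \subset \{x\}^{+(\d+\eps/2)} \subset A$ by $\opde\subset\pde$, contradicting $z\notin A$.

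The main obstacle is simply the bookkeeping of slack parameters so that both inclusions close up strictly. Conceptually the second part is the more delicate one: one must reparametrize (7) to obtain an outward dilation of radius strictly greater than $\d$, since the raw form $(A\mdee)\pde\subset A$ would only place the $\d$-tube around $x$ inside $A$ and would leave no room to absorb the additional distance from $x$ to a nearby exterior point witnessing $y\in\partial A$.
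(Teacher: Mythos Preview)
Your proof is correct. The first inclusion is handled essentially as in the paper. For the second inclusion, however, you take a genuinely different route: the paper exploits the symmetry $\partial A = \partial(X\setminus A)$, so that the very same argument used for $x\in A\pdee$ immediately yields $x\in (X\setminus A)\pdee$, i.e.\ $x\notin A\mdee$. You instead invoke part (vii), reparametrized so that the $\d$-tube around $x$ (with a bit of slack) lands inside $A$, and then derive a contradiction by producing a nearby point $z\notin A$ from the boundary witness. Both arguments are sound; the paper's is shorter and avoids the contradiction setup, while yours has the mild advantage of showing explicitly how (viii) can be recovered from (vii) rather than being logically independent of it. One cosmetic remark: monotonicity of $A\pde$ in the radius is immediate from the definition $A\pde=\{x:d(x,A)\le\d\}$ and need not be extracted from the sandwich relations.
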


\begin{proof}[Proof of Lemma \ref{Td-lemma-new}]
 \ada i Clearly, it suffices to prove $A\pde\subset A\opde$. To prove this inclusion,
 we fix an  $x\in A\pde$. Then there exists a sequence $(x_n)\subset A$ with $d(x,x_n) \leq \d+1/n$ for all $n\geq 1$.
Since $A$ is compact, we may assume without loss of generality that $(x_n)$ converges to some $x'\in A$.
Now we easily obtain the assertion from $d(x,x')\leq d(x,x_n) + d(x_n,x')$.

 \ada {ii} Let us fix an $x\in A\pde$ and an $y\in B\pde$. Then there exist two sequences $(x_n)\subset A$ and
$(y_n)\subset B$ such that $d(x,x_n) \leq \d+1/n$ and $d(y,y_n) \leq \d+1/n$ for all $n\geq 1$.
For $n\geq 1$, this construction now yields
\begin{displaymath}
 d(A,B) \leq d(x_n,y_n) \leq d(x_n,x) + d(x ,y ) + d(y, y_n) \leq d(x ,y ) + 2\d + 2/n\,  ,
\end{displaymath}
and by first letting $n\to \infty$ and then taking the infimum over all $x\in A\pde$ and   $y\in B\pde$, we obtain the assertion.

\ada {iii} To show the inclusion $\supset$, we fix an $x\in \overline A$. Then there exists a sequence $(x_n)\subset A$
with $x_n\to x$ for $n\to \infty$. For $\d>0$ there then exists an $n_\d$ such that $d(x,x_n) \leq \d$ for all $n\geq n_\d$.
This shows $d(x,A)\leq \d$, i.e.~$x\in A\pde$. To show the converse inclusion $\subset$, we fix an $x\in X$ that satisfies $x\in A^{+1/n}$
for all $n\geq 1$. Then there exists a sequence $(x_n)\subset A$ with $d(x,x_n) \leq 1/n$, and hence we find $x_n\to x$
for $n\to \infty$. This shows $x\in \overline A$.

\ada {iv} If $x\in (A\cup B)\pde$, there exists a sequence $(x_n)\subset A\cup B$ with $d(x,x_n)\leq \d+1/n$. Without loss of
generality we may assume that $(x_n)\subset A$, which immediately yields $x\in A\pde$. The converse inclusion
$A\pde \cup B\pde \subset (A\cup B)\pde $ and the inclusion $(A\cap B)\pde \subset  A\pde \cap B\pde$
are trivial.

\ada v The first inclusion follows from part \emph{iv)} and simple set algebra, namely
\begin{align*}
 A\mde \cup B\mde = X\setminus \bigl((X\setminus A)\pde \cap (X\setminus B)\pde\bigr)
 &\subset X\setminus \bigl((X\setminus A)\cap (X\setminus B)\bigr)\pde\\
 & = X \setminus \bigl(X\setminus (A\cup B)  \bigr)\pde \\
 & = (A\cup B)\mde\, .
\end{align*}
To show the converse inclusion, we fix an $x\in (A\cup B)\mde$. Since $(A\cup B)\mde \subset A\cup B$, we may
assume without loss of generality that $x\in A$. It then remains to show that $x\in A\mde$, that is $d(x,X\setminus A)>\d$.
Obviously, $A\cap B=\emptyset$, which follows from $d(A,B)>\d$, implies
\begin{displaymath}
 X\setminus A
= ((X\setminus A) \cap (X\setminus B)) \cup ((X\setminus A) \cap B) = (X\setminus (A\cup B)) \cup B\, ,
\end{displaymath}
and hence
we obtain
 $d(x,X\setminus A) = d(x, X\setminus (A\cup B)) \wedge d(x,B) > \d\wedge \d = \d$
where we used both $x\in (A\cup B)\mde$ and $d(A,B)>\d$.

\ada {vi} Using the formula $(A_1 \cup A_2) \setminus (B_1\cup B_2) = (A_1 \setminus B_1) \cup (A_2\setminus B_2)$,
which easily follows from $A_i \setminus  B_j = A_i$ for $i\neq j$, we obtain
\begin{align*}
    ( A_1\mde \setminus B_1\mde) \cup  ( A_2\mde \setminus B_2\mde)
&= (A_1\mde \cup A_2\mde) \setminus (B_1\mde \cup B_2\mde)\\
 &   \subset (A_1\cup A_2)\mde \setminus (B_1\cup B_2)\mde\, ,
\end{align*}
 where in the last step we used \emph{v)}.  The second assertion also follows from  \emph{v)}.

 \ada {vii} Obviously, $A\subset (A\pdee)\mde$ is equivalent to $(X\setminus A\pdee)\pde\subset X\setminus A$.
 To prove the latter, we fix an $x\in (X\setminus A\pdee)\pde$. Then there exists a sequence $(x_n)\subset X\setminus A\pdee$
 with $d(x,x_n) \leq \d+1/n$ for all $n\geq 1$. Moreover, $(x_n)\subset X\setminus A\pdee$ implies
 $d(x_n,x') > \d+\eps$ for all $n\geq 1$ and $x'\in A$. Now assume that we had $x\in A$. For  an index $n$ with
 $1/n\leq \eps$, we would then obtain $\d+\eps < d(x_n,x)\leq \d+\eps$, and hence $x\in A$ cannot be true.

 To show the second inclusion we fix an $x\in (A\mdee)\pde$. Then there exists a sequence $(x_n)\subset A\mdee$ such that
 $d(x,x_n)\leq \d+1/n$ for all $n\geq 1$. This time, $x_n\in  A\mdee$ implies $x_n\not \in (X\setminus A)\pdee$, that is
 $d(x_n,x') > \d + \e$ for all $n\geq 1$ and $x'\in X\setminus A$.
 Choosing an $n$ with $1/n\leq \eps$, we then find $x\in A$.

\ada {viii} We fix an $x\in (\partial A)\opde$.
By definition,
 there  then exists an $x'\in \partial A$
with $d(x,x') \leq \d$. Moreover, by the definition of the boundary,
there exists an $x''\in A$ with $d(x',x'')\leq \eps$, and hence we find
$d(x,x'')\leq \d+\eps$, i.e.~$x\in A\pdee$. Since $\partial A = \partial (X\setminus A)$, the same argument
yields $x\in (X\setminus A)\pdee$, i.e.~$x\not \in A\mdee$.
Thus, we have shown $(\partial A)\opde \subset A\pdee\setminus A\mdee$.
Using $(\partial A)\pde \subset (\partial A)\opdee$ and a simple change of variables then yields the
assertion.
\end{proof}




\section{Additional Material Related to Persistence}\label{sec:proof-pers-new}




In this section we recall and prove  two results of \citeappendix{Steinwart11a} that
extend Theorem 2.7.

We begin with the
 following lemma, which shows  that $\ca C_\t(A)$ is persistent in $\ca C_\t(A\pde)$,
if $\t>0$ and $\d>0$ are sufficiently small.

\begin{lemma}\label{zeta-Td-surj}
Let $(X,d)$ be a compact metric space, and $A\subset X$ be   non-empty. Then, for all
$\d>0$ and $\t>\d$,
the following statements hold:
\begin{enumerate}
 \item The set $(A')\pde$ is $\t$-connected for all $A'\in \ca C_\t(A)$.
 \item The CRM $\z:\ca C_\t(A) \to \ca C_\t(A\pde)$ is surjective.
 \item If $A$ is closed,  $|\ca C(A)|<\infty$, and $\t\leq \t^*_A/3$, then
the CRM $\z:\ca C_\t(A) \to \ca C_\t(A\pde)$ is   bijective and satisfies
  \begin{equation}\label{zeta-Td-surj-hxx}
  \z(A') = (A')\pde \, , \qquad \qquad A'\in \ca C_\t(A).
  \end{equation}
\end{enumerate}
\end{lemma}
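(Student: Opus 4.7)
The plan is to handle the three parts in sequence: part \emph{i)} will be a direct chain argument, part \emph{ii)} will follow quickly from Corollary \ref{connect-char-cor}, and part \emph{iii)} will reduce to Lemma \ref{comp-char} after bounding the pairwise distances between the inflated components via Lemma \ref{Td-lemma-new}.

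For \emph{i)}, I would fix $A'\in \ca C_\t(A)$ and two points $x,y\in (A')\pde$. Since $d(x,A')\leq \d<\t$ and $d(y,A')\leq \d<\t$, I can choose $a,b\in A'$ with $d(x,a)<\t$ and $d(y,b)<\t$. The $\t$-connectedness of $A'$ (Lemma \ref{connect-comp}) then supplies a $\t$-chain $a=x_1,\ldots,x_n=b$ inside $A'$, and prepending $x$ and appending $y$ yields a $\t$-chain in $(A')\pde$ connecting $x$ to $y$.

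For \emph{ii)}, Lemma \ref{tcc-comp} guarantees that the CRM $\z$ exists; given $B''\in \ca C_\t(A\pde)$ and any $x\in B''$, I would again select $a\in A$ with $d(x,a)<\t$ and let $A'\in \ca C_\t(A)$ be the component containing $a$. Because $d(x,a)<\t$ forces $a$ to lie in the $\t$-connected component $B''$ of $A\pde$, we have $A'\cap B''\neq \emptyset$, and Corollary \ref{connect-char-cor} then forces $A'\subset B''$, i.e.\ $\z(A')=B''$.

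For \emph{iii)}, I would enumerate $\ca C(A)=\{A_1,\ldots,A_m\}$ and use Proposition \ref{top-connect-new2-prop} (with $\t\leq \t^*_A$) to obtain $\ca C_\t(A)=\{A_1,\ldots,A_m\}$. Lemma \ref{Td-lemma-new}\emph{ii)}, together with $\t\leq \t^*_A/3$ and $\d<\t$, then yields
\[
 d\bigl((A_i)\pde,(A_j)\pde\bigr) \geq d(A_i,A_j) - 2\d \geq 3\t - 2\d > \t \qquad (i\neq j).
\]
Combined with Lemma \ref{Td-lemma-new}\emph{iv)} (which gives $A\pde=\bigcup_i (A_i)\pde$) and part \emph{i)} (which gives $\t$-connectedness of each $(A_i)\pde$), Lemma \ref{comp-char} would identify this family with $\ca C_\t(A\pde)$. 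The defining property \eqref{crm-prop} of the CRM then forces $\z(A_i) = (A_i)\pde$, establishing \eqref{zeta-Td-surj-hxx} and the bijectivity of $\z$. The only delicate point will be the strict inequality $d((A_i)\pde,(A_j)\pde)>\t$, which is precisely what the factor three in the hypothesis $\t\leq \t^*_A/3$ is tailored to secure under the constraint $\t>\d$.
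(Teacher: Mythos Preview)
Your proof is correct and follows essentially the same strategy as the paper: a chain argument for \emph{i)}, a proximity-plus-Corollary~\ref{connect-char-cor} argument for \emph{ii)}, and Lemma~\ref{comp-char} after a $3\t-2\d>\t$ distance bound for \emph{iii)}. Your execution is in fact slightly cleaner than the paper's in two places---your part \emph{ii)} avoids the paper's sequence/pigeonhole step via Lemma~\ref{finite-comp}, and your distance estimate in \emph{iii)} invokes Lemma~\ref{Td-lemma-new}\emph{ii)} directly rather than first passing through compactness and $A_i\pde=A_i\opde$---but the overall route is the same.
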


\begin{proof}[Proof of Lemma \ref{zeta-Td-surj}]
\ada i
Since $\t>\d$, there exist an $\e>0$ with $\d+\e<\t$. For $x\in (A')\pde$, there thus
exists an $x'\in A'$ with $d(x,x')\leq \d+\e<\t$, i.e.~$x$ and $x'$ are $\t$-connected.
Since $A'$ is $\t$-connected, it is then easy to show that every pair $x,x''\in  (A')\pde$ is $\t$-connected.

\ada {ii}
Let us fix an $A'\in \ca C_\t(A\pde)$ and an $x\in A'$.
For $n\geq 1$ there then exists an $x_n\in A$ with $d(x,x_n)\leq \d+1/n$
and since by Lemma \ref{finite-comp} there only exist finitely many $\t$-connected components of $A$, we
may assume without loss of generality that there exists an $A''\in \ca C_\t(A)$ with $x_n\in A''$ for all $n\geq 1$.
This yields $d(x,A'')\leq \d+1/n$ for all $n\geq 1$, and hence $d(x,A'')\leq \d$. Consequently, we obtain
$x\in (A'')\pde$, i.e.~we have $(A'')\pde \cap A'\neq \emptyset$.
Since $(A'')\pde \subset A\pde$,
 we then conclude that $(A'')\pde\subset A'$
by Corollary \ref{connect-char-cor} and part {\em i)}.
Furthermore, we clearly have $A''\subset (A'')\pde$,  and hence $\z(A'') = A'$.

\ada {iii}
Let us first consider the case $|\ca C(A)|=1$. In this case, part
\emph{i)} of Proposition \ref{top-connect-new2-prop} shows $|\ca C_\t(A)|=1$,
and thus $|\ca C_\t(A\pde)|=1$ by the already established part \emph{ii)}.
This makes the assertion obvious.

In the case $|\ca C(A)|>1$
we write
$A_1,\dots,A_m$ for the $\t$-connected components of $A$.
By part \emph{iv)} of Lemma \ref{Td-lemma-new} we then obtain
\begin{equation}\label{zeta-Td-surj-h1}
 A\pde = \bigcup_{i=1}^m A_i\pde\, .
\end{equation}
Since $|\ca C(A)|>1$, we further have $\t_A^*<\infty$, and hence
 part
\emph{ii)} of Proposition \ref{top-connect-new2-prop} yields $\ca C(A) = \ca C_\t(A)$.
The definition of $\t_A^*$ thus gives
$d(A_i,A_j)\geq \t_A^* \geq 3\t$ for all $i\neq j$.
Our first goal is to show that
 \begin{equation}\label{zeta-Td-surj-h2}
  d(A_i\pde, A_j\pde) \geq \t\, , \qquad \qquad i\neq j\, .
 \end{equation}
To this end, we fix $i\neq j$ and both an $x_i\in A_i\pde$ and an $x_j\in A_j\pde$. Now,
the compactness of $X$ yields the compactness of $A_i$ and $A_j$ by Lemma \ref{finite-comp},
and hence  part \emph{i)} of Lemma \ref{Td-lemma-new} shows that
there exist $x_i'\in A_i$ and $x_j'\in A_j$ with $d(x_i,x_i')\leq \d$ and $d(x_j,x_j')\leq \d$. This yields
\begin{displaymath}
 3\t \leq d(x_i',x_j') \leq d(x_i',x_i)+ d(x_i,x_j) +d(x_j,x_j') \leq 2\d + d(x_i,x_j)\, ,
\end{displaymath}
and the latter together with $\d< \t$ implies (\ref{zeta-Td-surj-h2}).

Now   {\em i)} showed that each $A_i\pde$, $i=1,\dots,m$, is $\t$-connected.
Combining this with  (\ref{zeta-Td-surj-h1}),  (\ref{zeta-Td-surj-h2}), and Lemma \ref{comp-char},
we   see that $A_1\pde, \dots, A_m\pde$ are the
$\t$-connected components of $A\pde$. The bijectivity of $\z$ then follows from the surjectivity and a
  cardinality argument, and \eqref{zeta-Td-surj-hxx} is obvious.
\end{proof}

The following theorem is an extended version of the statements   of Theorem 2.7
that deal with $\ca C_\t(M_\r\pde)$.

\begin{theorem}\label{reg-cluster-thm}
Let $(X,d)$ be a compact metric space,  $\mu$ be a finite
Borel measure on $X$ and $P$ be a $\mu$-absolutely continuous distribution on $X$
that can be clustered
between  $\r^*$ and $\r^{**}$.
Then the function $\ts$ defined by   (2.6)
is monotonically increasing.
Moreover,
for all $\e^*\in (0,\rss-\rs]$, $\d>0$, $\t\in (\d, \ts(\e^*)]$, and
all $\r\in [0, \r^{**}]$, the following statements hold:
\begin{enumerate}
 \item We have $1\leq|\ca C_\t(M_\r\pde)|\leq2$.
\item If $\r \geq  \r^* + \e^*$, then $|\ca C_\t(M_\r\pde)|=2$ and $\ca C(M_\r) \persist\ca C_\t(M_\r\pde)$.
\item If $|\ca C_\t(M_\r\pde)|=2$, then $\r\geq \r^*$ and
$\ca C_\t(M_{\r^{**}}\pde)\persist \ca C_\t( M_\r\pde)$.
\item If $\ca C_\t(M_{\r^{**}}\mde) \persist \ca C_\t( M_{\r^{**}}\pde)$ and
  $|\ca C_\t(M_\r\mde)|=1$, then   $\r<  \r^*+\e^*$.
\end{enumerate}
\end{theorem}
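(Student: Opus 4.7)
My plan is to handle the theorem in three stages: establish the monotonicity of $\ts$ first, then deduce parts (i)--(iii) as fairly direct applications of Lemma \ref{zeta-Td-surj} on the nested level sets $M_\r$, and finally attack part (iv), the main obstacle, through a commutative diagram of CRMs. For the monotonicity, I would unpack the definition of $\ts(\e^*)$ from (2.6), which should involve a minimum of the quantity $\t^*_{M_\r}/3$ (or a comparable quantity) over $\r\in [\rs+\e^*,\rss]$. Since Lemma \ref{include-level} gives $M_{\r_2} \subset M_{\r_1}$ for $\r_1 \leq \r_2$, and the clustering hypothesis makes the CRM $\ca C(M_{\r_2}) \to \ca C(M_{\r_1})$ injective (the two persistent components of $M_{\r_2}$ necessarily lie in distinct components of $M_{\r_1}$), Lemma \ref{tau-star-comp} yields $\t^*_{M_{\r_1}} \leq \t^*_{M_{\r_2}}$; shrinking the interval by enlarging $\e^*$ can then only increase $\ts(\e^*)$.

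For parts (i)--(iii), my plan is to apply Lemma \ref{zeta-Td-surj} with $A=M_\r$. The monotonicity above together with $\t\leq \ts(\e^*)$ gives $3\t\leq \t^*_{M_\r}$ whenever $\r\geq \rs+\e^*$, and so Lemma \ref{zeta-Td-surj}(iii) yields a bijective CRM $\ca C_\t(M_\r)\to \ca C_\t(M_\r\pde)$; composing with the CRM $\ca C(M_\r)\to \ca C_\t(M_\r)$ from Proposition \ref{top-connect-new2-prop}(i) via Lemma 2.4 makes (ii) immediate. The upper bound of two in (i) extends to the case $\r<\rs+\e^*$ via the surjectivity in Lemma \ref{zeta-Td-surj}(ii), which transports $|\ca C_\t(M_\r)|\leq |\ca C(M_\r)|\leq 2$ to $|\ca C_\t(M_\r\pde)|\leq 2$. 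For (iii), the persistence $\ca C_\t(M_{\rss}\pde)\persist \ca C_\t(M_\r\pde)$ follows from $M_{\rss}\pde\subset M_\r\pde$ (Lemmas \ref{include-level} and \ref{Td-lemma-new}(iv)) together with Lemma \ref{tcc-comp} and the fact that both sides already have cardinality two, while $\r\geq \rs$ is obtained by contrapositive: for $\r<\rs$ the clustering hypothesis makes $M_\r$ topologically and hence $\t$-connected (Lemma \ref{top-connect-new1}), whence Lemma \ref{zeta-Td-surj}(i) makes $M_\r\pde$ $\t$-connected, contradicting $|\ca C_\t(M_\r\pde)|=2$.

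Part (iv), the main obstacle, I would tackle by contradiction, assuming $\r\geq \rs+\e^*$ and building a commutative square of CRMs. Applying (ii) at both $\r$ and $\rss$ produces a bijective CRM $\z_4\colon \ca C_\t(M_{\rss}\pde)\to \ca C_\t(M_\r\pde)$ induced by $M_{\rss}\subset M_\r$, while the hypothesis $\ca C_\t(M_{\rss}\mde)\persist \ca C_\t(M_{\rss}\pde)$ supplies a bijective CRM $\z_3\colon \ca C_\t(M_{\rss}\mde)\to \ca C_\t(M_{\rss}\pde)$. Using the elementary monotonicity $A\subset B \Rightarrow A\mde\subset B\mde$, the nested inclusions $M_{\rss}\mde\subset M_\r\mde\subset M_\r\pde$ produce CRMs $\z_1\colon \ca C_\t(M_{\rss}\mde)\to \ca C_\t(M_\r\mde)$ and $\z_2\colon \ca C_\t(M_\r\mde)\to \ca C_\t(M_\r\pde)$. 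The uniqueness clause of Lemma \ref{def-crm} forces $\z_2\circ \z_1 = \z_4\circ \z_3$, and since the right-hand side is a bijection of a two-element set, $\z_1$ must be injective, whence $|\ca C_\t(M_\r\mde)|\geq 2$, contradicting $|\ca C_\t(M_\r\mde)|=1$. The delicate points are the commutativity of the diagram and the fact that the two persistent components do not collapse under $\z_1$; both should fall out of the uniqueness in Lemma \ref{def-crm}.
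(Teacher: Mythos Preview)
Your overall strategy matches the paper's closely: monotonicity via Lemma~\ref{tau-star-comp}, parts (i) and (ii) via Lemma~\ref{zeta-Td-surj} and Proposition~\ref{top-connect-new2-prop}, and part (iv) via a commutative square of CRMs built from the inclusions $M_{\rss}\mde\subset M_\r\mde\subset M_\r\pde$ and $M_{\rss}\mde\subset M_{\rss}\pde\subset M_\r\pde$; this last step is essentially identical to the paper's argument. One minor point: $\ts(\e)$ in~(2.6) is simply $\tfrac{1}{3}\t^*_{M_{\rs+\e}}$, not a minimum over an interval, but since you correctly establish that $\r\mapsto\t^*_{M_\r}$ is increasing on $(\rs,\rss]$ the two formulations coincide and your monotonicity argument goes through unchanged.

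There is, however, a genuine gap in your treatment of (iii). You claim that the persistence $\ca C_\t(M_{\rss}\pde)\persist\ca C_\t(M_\r\pde)$ follows because the CRM exists and ``both sides already have cardinality two''. That alone is not enough: a CRM between two-element partitions can be constant (for instance $A=\{0\}\cup\{5\}\subset B=[0,10]\cup[20,30]$ with $\t=3$ has both $\t$-components of $A$ landing in $[0,10]$). What is missing is a reason the two $\t$-components of $M_{\rss}\pde$ land in \emph{different} $\t$-components of $M_\r\pde$. The paper supplies this by routing through the topological level: from $|\ca C_\t(M_\r\pde)|=2$ and the inequality chain you used in~(i) one gets $|\ca C(M_\r)|=2$, and Definition~2.5 then makes the CRM $\ca C(M_{\rss})\to\ca C(M_\r)$ bijective. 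Assembling this with the bijective CRM $\ca C(M_{\rss})\to\ca C_\t(M_{\rss}\pde)$ from~(ii) and the \emph{surjective} CRM $\ca C(M_\r)\to\ca C_\t(M_\r\pde)$ (exactly the surjectivity you already invoked in~(i)) in a commutative square forces $\ca C_\t(M_{\rss}\pde)\to\ca C_\t(M_\r\pde)$ to be surjective, hence bijective. You have all the pieces; they just need to be assembled here rather than replaced by a cardinality count. Note also that in~(iv) you justify the bijectivity of $\z_4$ by ``applying (ii) at both $\r$ and $\rss$''; what is actually needed there is the conclusion of~(iii), so the same gap propagates.
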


\begin{proof}[Proof of Theorem \ref{reg-cluster-thm}]
Let us first show the assertions related to the function $\ts$.
To this end, we first observe that for $\e\in (0,\rss-\rs]$ we have
$|\ca C(M_{\rs+\e})| = |\ca C(M_\rss)| =2$ by Definition 2.5. This shows
$\ts(\e) < \infty$.

Let us now fix $\e_1,\e_2\in (0,\rss-\rs]$ with $\e_1\leq \e_2$.
Then Definition
2.5 guarantees that both $M_{\rs+\e_1}$ and
$M_{\rs+\e_2}$ have two topologically connected components and that the
CRM  $\z:\ca C(M_{\rs+\e_2})\to \ca C (M_{\rs+\e_1})$ is bijective.
From Lemma \ref{tau-star-comp} we thus obtain
\begin{displaymath}
 \ts(\e_2) = \frac 1 3\t^*_{M_{\rs+\e_2}}\geq \frac 1 3\t^*_{M_{\rs+\e_1}} = \ts(\e_1)\, .
\end{displaymath}

 \ada i Since $\emptyset \neq M_\r\subset M_\r\pde$, we find $|\ca C_\t(M_\r\pde)|\geq 1$.
 On the other hand, since
 $\t>\d$,  part {\em ii)\/} of Lemma   \ref{zeta-Td-surj} and part {\em i)\/} of Proposition \ref{top-connect-new2-prop}
 yield
 \begin{equation}\label{reg-cluster-lem-h1}
  |\ca C_\t(M_\r\pde)|\leq |\ca C_\t(M_{\r})| \leq |\ca C(M_{\r})|\leq 2\, .
 \end{equation}

\ada {ii} Let us  fix a $\r\in [ \rs+\e^*, \rss]$. For $\e:= \r-\rs$, the monotonicity of $\ts$ then
gives $\ts(\e^*) \leq \ts(\e)$, and hence we obtain
\begin{displaymath}
 \t\leq \frac13\t^*_{M_{\rs+\e^*}}\leq \frac13\t^*_{M_\r}<\infty\, .
\end{displaymath}
 Part {\em ii)\/} of Proposition \ref{top-connect-new2-prop} thus  shows that  the
 CRM $\z_\r:\ca C(M_\r)\to \ca C_\t(M_\r)$ is   bijective.
Furthermore,
 $\d<\t\leq \t^*_{M_\r}/3$ together with
  part {\em iii)\/} of Lemma \ref{zeta-Td-surj} shows that the
CRM $\z_\d: \ca C_\t(M_\r) \to \ca C_\t(M_\r\pde)$ is bijective.
Consequently, the CRM $\z=\z_\d\circ \z_\r: \ca C(M_\r)\to \ca C_\t(M_\r\pde)$ is bijective, and from the
latter we conclude that
$|\ca C_\t(M_\r\pde)|=|\ca C(M_\r)|=2$.

\ada {iii} Since $|\ca C_\t(M_\r\pde)|=2$, the already established
\eqref{reg-cluster-lem-h1} yields $|\ca C(M_{\r})|=2$,
and hence   Definition 2.5 implies both $\r\geq \rs$
and the bijectivity of the CRM
 $\z^{**}:\ca C (M_{\rss})\to \ca C(M_\r)$.
Moreover, for $\rss$, the already established
part {\em ii)\/} shows  that the CRM $\z_M: \ca C_\t(M_{\rss}) \to \ca C_\t(M_\rss\pde)$
is bijective, and the proof of {\em ii)\/} further showed $\ca C(M_{\rss}) = \ca C_\t(M_{\rss})$.
Consequently,  $\z_M$ equals the CRM  $\ca C(M_{\rss}) \to \ca C_\t(M_\rss\pde)$.
 In addition, $\d<\t$ together with
part {\em ii)\/} of Lemma   \ref{zeta-Td-surj} and part {\em i)\/} of Proposition \ref{top-connect-new2-prop}
shows that the CRM $\z_\r: \ca C(M_\r) \to \ca C_\t(M_\r\pde)$ is surjective.
Now, by Lemma 2.4 these maps commute in the sense of the following diagram

\quadiass  {\ca C(M_{\rss})} {\ca C(M_\r)} {\ca C_\t(M_\rss\pde)}{\ca C_\t(M_\r\pde)} {\z^{**}} {\z_M} {\z_\r}{\z} \\
%
and consequently, the CRM $\z$ is surjective. Since $|\ca C_\t(M_\rss\pde)|= |\ca C(M_{\rss})| = 2$  and $|\ca C_\t(M_\r\pde)|=2$,
we then conclude that $\z$ is bijective.

\ada {iv} We proceed by contraposition.
To this end,  we fix an $\r\in [\rs+\e^*,\rss]$. By the already established part {\em ii)\/}
we then find $|\ca C_\t(M_\r\pde)|=2$,
and part {\em iii)\/} thus shows that the
CRM $\z_M: \ca C_\t(M_\rss\pde)\to \ca C_\t( M_\r\pde)$ is bijective.
Moreover, Lemma 2.4 yields the following diagram

\quadiass  {\ca C_\t(M_\rss\mde)} {\ca C_\t(M_\rss\pde)}{\ca C_\t(M_\r\mde)} {\ca C_\t(M_\r\pde)} {\z} {\z_V} {\z_M}{\z_{V,M}} \\
%
where $\z$, $\z_V$, and $\z_{V,M}$ are the corresponding CRMs. Now our assumption guarantees that $\z$ is bijective, and hence
the diagram shows that $\z_{V,M}\circ \z_V$ is bijective. Consequently, $\z_V$ is injective, and
from the latter we obtain
$2 = |\ca C_\t(M_\r\pde)|= |\ca C_\t(M_\rss\mde)| \leq |\ca C_\t(M_\r\mde)|$.
\end{proof}

The next lemma investigates situations in which $\ca C_\t(A\mde)$ is persistent in
$\ca C(A)$. In particular, it shows that
if $\t$ is sufficiently large compared to $\d$ and
$|\ca C_\t(A\mde)| = |\ca C(A)|$, then
we obtain persistence.  Informally speaking this means  that gluing $\d$-cuts by $\t$-connectivity
may preserve the component structure.

\begin{lemma}\label{thick-impl-reg}
 Let $(X,d)$ be a compact metric  space, and $A\subset X$ be non-empty and  closed
   with $|\ca C(A)|<\infty$.
 We define   $\psis_A:(0,\infty) \to [0,\infty]$ by
 \begin{displaymath}
    \psis_A(\d) := \sup_{x\in A} d(x,A\mde)\, , \qquad \qquad \d>0.
 \end{displaymath}
Then, for all $\d>0$ and all $\t>2\psis_A(\d)$,
the following statements hold:
\begin{enumerate}
 \item For all $B'\in \ca C(A)$, there is at most one $A'\in \ca C_\t(A\mde)$
with  $A'\cap B'\neq \emptyset$.
  \item We have $|\ca C_\t(A\mde)| \leq |\ca C(A)|$.
 \item If $|\ca C_\t(A\mde)| = |\ca C(A)|$, then $\ca C_\t(A\mde)$ is persistent in
$\ca C(A)$. Moreover, for all $B', B''\in \ca C(A)$ with $B'\neq B''$ we have
\begin{equation}\label{cluster-distance}
 d(B', B'') \geq  \t-2\psis_A(\d)\, .
\end{equation}
\end{enumerate}
\end{lemma}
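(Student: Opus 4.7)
The plan hinges on a chain-lifting trick: any finite $\e$-chain $y_0,\ldots,y_m$ in $A$ can be replaced by a chain $z_0,\ldots,z_m$ in $A\mde$ by choosing each $z_i$ with $d(y_i,z_i)\leq \psis_A(\d)+\e$, whose step size is at most $2\psis_A(\d)+3\e$. For $\e<(\t-2\psis_A(\d))/3$, this is strictly less than $\t$, so the lifted chain witnesses $\t$-connectedness in $A\mde$. Moreover, whenever $y_i\in A\mde$ already, I may take $z_i:=y_i$, so the lift can be ``pinned'' at any endpoint already lying in $A\mde$. This single device will drive all three parts.

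For part \emph{i)}, suppose $A_1',A_2'\in \ca C_\t(A\mde)$ both meet $B'\in \ca C(A)$, and pick $x_i\in A_i'\cap B'$. Since $B'$ is closed in the compact space $X$ and topologically connected, Lemma \ref{top-connect-new1} makes $B'$ $\e$-connected for every $\e>0$, so I can chain $x_1$ to $x_2$ entirely within $B'$ with arbitrarily small step size. Lifting this chain while pinning both endpoints (which already lie in $A\mde$) yields a $\t$-chain from $x_1$ to $x_2$ in $A\mde$, forcing $A_1'=A_2'$.

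Part \emph{ii)} follows immediately: each $A'\in \ca C_\t(A\mde)$ meets some $B'\in \ca C(A)$ because $A'\subset A$ and the components partition $A$, and by \emph{i)} the assignment $A'\mapsto B'$ is injective, whence $|\ca C_\t(A\mde)|\leq |\ca C(A)|$. For the persistence claim in \emph{iii)}, the equal-cardinality assumption promotes this injection to a bijection via a pigeonhole argument, forcing each $A'$ to meet \emph{exactly} one $B'$; combined with $A'\subset \bigsqcup_k B_k$ and the fact that $A'$ misses every other $B_k$, I obtain $A'\subset B'$. Hence the CRM $\z:\ca C_\t(A\mde)\to \ca C(A)$ coincides with this bijection, yielding persistence.

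The hardest step is the distance bound $d(B',B'')\geq \t-2\psis_A(\d)$, and the main obstacle is that a nearest point in $A\mde$ to some $x\in B'$ could \emph{a priori} sit in the ``wrong'' $\t$-component $A_j$. I would resolve this using the same chain-lifting trick applied to a small-step chain through $B'$ from $x$ to any chosen anchor $a\in A'\subset B'$: pinning the final lifted entry at $a$ places every lifted point in the same $\t$-component of $A\mde$ as $a$, i.e.\ in $A'$. This gives $d(x,A')\leq \psis_A(\d)$ for every $x\in B'$, and by symmetry $d(z,A'')\leq \psis_A(\d)$ for every $z\in B''$. Since $A'\neq A''$ implies $d(A',A'')\geq \t$ by Lemma \ref{finite-comp}, a triangle-inequality sandwich finishes the proof: for any $x\in B'$ and $z\in B''$, choosing $a\in A'$, $b\in A''$ nearly realizing $d(x,A')$ and $d(z,A'')$ gives $\t\leq d(a,b)\leq 2\psis_A(\d)+d(x,z)$, and taking the infimum yields \eqref{cluster-distance}.
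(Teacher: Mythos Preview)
Your proposal is correct and follows essentially the same approach as the paper. The paper's proof also runs a fine chain through the connected component $B'$ (using that $B'$ is $\t'$-connected for small $\t'$), approximates each chain point by a nearby point in $A\mde$ via the bound $d(x,A\mde)\leq\psis_A(\d)$, and then exploits $d(A_i,A_j)\geq\t$ between distinct $\t$-components; your ``chain-lifting with pinned endpoints'' is simply a constructive repackaging of the paper's index-tracking contradiction argument, and your derivation of $d(x,A')\leq\psis_A(\d)$ for $x\in\z(A')$ is the direct analogue of the paper's intermediate inequality \eqref{cluster-thick}.
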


\begin{proof}[Proof of Lemma \ref{thick-impl-reg}]
\ada i Let us fix a    $\psi >  2\psis_A(\d)$ with $\psi<\t$ and a
$\t'\in (0,\t^*_{A})$ such that $\psi + \t'< \t$, where $\t^*_{A}$ is the constant
defined in Proposition \ref{top-connect-new2-prop}.
Moreover, we fix a $B'\in \ca C(A)$.
By  Proposition \ref{top-connect-new2-prop} we then see that $\ca C(A) = \ca C_{\t'}(A)$, and hence $B'$ is $\t'$-connected.
Now let $A_1,\dots,A_m$ be the $\t$-connected components of $A\mde$.
Clearly,  Lemma \ref{finite-comp} yields $d(A_i,A_j)\geq \t$ for all $i\neq j$.
Assume that {\em i)\/}  is not true, that is,
there exist indices $i_0, j_0$ with $i_0\neq j_0$ such that
$A_{i_0}\cap B'\neq \emptyset$ and $A_{j_0}\cap B'\neq \emptyset$.
Thus, there exist $x'\in A_{i_0}\cap B'$ and $x''\in A_{j_0}\cap B'$, and since $B'$ is $\t'$-connected,
there further exist $x_0,\dots,x_{n+1}\in B'\subset A$
with $x_0=x'$, $x_{n+1} = x''$ and $d(x_i,x_{i+1}) < \t'$ for all $i=0,\dots,n$.
Moreover,  our assumptions guarantee
$d(x_i , A\mde) < \psi/2$ for all $i=0,\dots,n+1$.
For all $i=0,\dots,n+1$, there thus exists an index $\ell_i$ with
\begin{displaymath}
 d(x_i, A_{\ell_i}) < \psi/2\, .
\end{displaymath}
In addition, we have $x_0\in A_{i_0}$ and $x_{n+1} \in A_{j_0}$ by construction,
and hence we may actually choose
$\ell_0 = i_0$ and $\ell_{n+1} = j_0$. Since we assumed $\ell_0\neq \ell_{n+1}$, there then exists an $i\in \{0,\dots,n\}$
with $\ell_i\neq \ell_{i+1}$. For this index, our construction now yields
\begin{displaymath}
 d(A_{\ell_{i}},A_{\ell_{i+1}}) \leq d(x_i,A_{\ell_{i}}) + d(x_i,x_{i+1}) + d(x_{i+1},A_{\ell_{i+1}}) < \psi+ \t'< \t\, ,
\end{displaymath}
which contradicts the earlier established $d(A_{\ell_{i}},A_{\ell_{i+1}}) \geq \t$.

\ada {ii} Since $A\mde\subset A$, there exists, for
every $A'\in \ca C_\t(A\mde)$, a $B'\in \ca C(A)$ with $A'\cap B'\neq \emptyset$.
We pick one such $B'$ and define $\z(A') := B'$. Now part {\em i)\/} shows that
$\z:\ca C_\t(A\mde) \to \ca C(A)$ is injective, and hence we find
$|\ca C_\t(A\mde)| \leq |\ca C(A)|$.

\ada {iii} As mentioned in part {\em ii)\/}, we have
an injective  map $\z:\ca C_\t(A\mde) \to \ca C(A)$ that satisfies
\begin{equation}\label{thick-impl-reg-h1}
 A' \cap \z(A') \neq \emptyset\, , \qquad \qquad A'\in \ca C_\t(A\mde)\, .
\end{equation}
Now, $|\ca C_\t(A\mde)| = |\ca C(A)|$
together with the assumed $|\ca C(A)|<\infty$
implies that $\z$
is actually bijective. Let us first
show that $\z$ is the only map that satisfies (\ref{thick-impl-reg-h1}).
To this end, assume the converse, that is, for some $A'\in \ca C_\t(A\mde)$, there exists an $B'\in \ca C(A)$
with $B'\neq \z(A')$ and $A'\cap B'\neq \emptyset$. Since $\z$ is bijective, there then exists an $A''\in \ca C_\t(A\mde)$
with $\z(A'') = B'$, and hence we have $A''\cap B'\neq \emptyset$
by \eqref{thick-impl-reg-h1}. By part {\em i)\/}, we conclude that
$A'=A''$, which in turn yields $\z(A') = \z(A'') = B'$. In other words, we have found a contradiction, and hence $\z$
is indeed the only map that satisfies \eqref{thick-impl-reg-h1}.

Let us now show that
$\ca C_\t(A\mde)$ is persistent in
$\ca C(A)$.
Since we assumed $|\ca C_\t(A\mde)| = |\ca C(A)|$, it suffices to prove that
 the injective map $\z:\ca C_\t(A\mde) \to \ca C(A)$ defined by \eqref{thick-impl-reg-h1} is a CRM, i.e.~it satisfies
\begin{equation}\label{thick-impl-reg-h0xxx}
  A'\subset \z(A')\,, \qquad \qquad A'\in \ca C_\t(A\mde)\, .
\end{equation}
To show \eqref{thick-impl-reg-h0xxx}, we pick an $A'\in \ca C_\t(A\mde)$ and write
$B_1,\dots,B_m$ for the topologically connected components of $A$. Since $A\mde\subset A$, we then
have $A'\subset B_1\cup \dots\cup B_m$, where the latter union is disjoint.
Now, we have just seen that $\z(A')\in \{B_1,\dots,B_m\}$ is the only component satisfying $A'\cap \z(A')\neq \emptyset$, and therefore
we can conclude $A'\subset \z(A')$.

Finally, let us show \eqref{cluster-distance}. To this end, we first prove that,
 for all
 $A'\in \ca C_\t(A\mde)$ and $x\in \z(A')$ we have
 \begin{equation}\label{cluster-thick}
    d(x, A')  \leq \psis_A(\d)\, ,
 \end{equation}
where $\z:\ca C_\t(A\mde) \to \ca C(A)$ is the bijective CRM considered above.
Let us  assume that \eqref{cluster-thick} is not true,
that is, there exist an $A'\in C_\t(A\mde)$ and an  $x\in \z(A')$ such that
 $d(x, A') > \psis_A(\d)$. Since   $d(x, A\mde) \leq \psis_A(\d)$,
 there further exists an $A''\in C_\t(A\mde)$ with  $d(x, A'') \leq \psis_A(\d)$.
 Obviously, this yields $A'\neq A''$. Let us fix a $\t'\in (0,\t^*_{A})$ with $2 \psis_A(\d) + \t'< \t$,
 and an $x'\in A'$. For $B':= \z(A')$, we then have $x'\in B'$ by \eqref{thick-impl-reg-h0xxx}, and our construction
 guarantees $x\in B'$. Now, the rest of the proof is similar to that of \emph{i)}. Namely, since $B'$ is
 $\t'$-connected, there exist $x_0,\dots,x_{n+1}\in B'$
with $x_0=x$, $x_{n+1} = x'$ and $d(x_i,x_{i+1}) < \t'$ for all $i=0,\dots,n$.
Let $A_1,\dots,A_m$ be the $\t$-connected components of $A\mde$. Then,
for all $i=0,\dots,n+1$, there   exists an index $\ell_i$ with
\begin{displaymath}
 d(x_i, A_{\ell_i}) \leq \psis_A(\d)\, ,
\end{displaymath}
where  we may choose
$A_{\ell_0} = A''$ and $A_{\ell_{n+1}} = A'$.
Since  $\ell_0\neq \ell_{n+1}$, there then exists an $i\in \{0,\dots,n\}$
with $\ell_i\neq \ell_{i+1}$, and  our construction  yields
\begin{displaymath}
 \t\leq d(A_{\ell_{i}},A_{\ell_{i+1}})
 \leq d(x_i,A_{\ell_{i}}) + d(x_i,x_{i+1}) + d(x_{i+1},A_{\ell_{i+1}})
 < 2\psis_A(\d)+ \t'< \t\, .
\end{displaymath}

To prove \eqref{cluster-distance}, we again assume the converse, that
is, that there exist $B',B''\in \ca C(A)$ with $B'\neq B''$ and  $d(B', B'')< \t-2\psis_A(\d)$.
Then
there exist $x'\in B'$ and
$x''\in B''$ such that $d(x',x'') < \t-2\psis_A(\d)$.
Now, since $\z$ is bijective, there exists $A',A''\in  C_\t(A\mde)$ with
$A'\neq A''$, $B'=\z(A')$, and $B''=\z(A'')$.
Using \eqref{cluster-thick},
we then obtain
\begin{displaymath}
  \t\leq d(A', A'') \leq d(x', A') + d(x',x'') + d(x'', A'') < 2\psis_A(\d) + \t-2\psis_A(\d) = \t\, ,
\end{displaymath}
i.e.~we again have found a contradiction.
\end{proof}

The following theorem provides
 an extended version of the statements   of Theorem 2.7
that deal with $\ca C_\t(M_\r\mde)$.

\begin{theorem}\label{main-thick}
Let Assumption C be satisfied and
$\e^*\in (0,\rss-\rs]$, $\d\in (0, \dthick]$, $\t\in (\psi(\d), \ts(\e^*)]$,
and $\r\in [0, \r^{**}]$. Then, we have:
\begin{enumerate}
 \item   We have $1\leq|\ca C_\t(M_\r\mde)|\leq2$.
\item We have  $\ca C_\t(M_{\r^{**}}\mde)\persist \ca C_\t( M_{\r^{**}}\pde)$.
\item If
  $|\ca C_\t(M_\r\mde)|=2$, then   $\r\geq \r^*$ and
$\ca C_\t(M_{\r^{**}}\mde)\persist \ca C_\t( M_\r\mde) \persist \ca C( M_\r)$.
\end{enumerate}
\end{theorem}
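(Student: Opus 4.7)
The overall strategy is to marry Lemma \ref{thick-impl-reg} (which controls the erosion operation $\mde$) with Theorem \ref{reg-cluster-thm} (which controls the inflation operation $\pde$), glueing them via commutative diagrams of CRMs as guaranteed by Lemma 2.4. Assumption C, through $\dthick$ and $\psi$, is designed precisely so that the hypotheses of Lemma \ref{thick-impl-reg} hold uniformly for $A=M_\r$ with $\r\in[0,\rss]$: in particular $\t>\psi(\d)$ implies $\t>2\psis_{M_\r}(\d)$, and $\d\leq\dthick$ guarantees that the thickness of $M_{\rss}$ is not destroyed by $\mde$, so that $|\ca C_\t(M_\rss\mde)| = |\ca C(M_\rss)| = 2$.

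For \emph{i)}, the upper bound is Lemma \ref{thick-impl-reg}\emph{ii)} (whose hypothesis $\t>2\psis_{M_\r}(\d)$ is supplied by Assumption C) combined with $|\ca C(M_\r)|\leq 2$ from Definition 2.5. For the lower bound, Lemma \ref{include-level} gives $M_\rss\mde\subset M_\r\mde$, and Assumption C ensures $M_\rss\mde\neq\emptyset$, hence $|\ca C_\t(M_\r\mde)|\geq 1$. For \emph{ii)}, I apply Lemma \ref{thick-impl-reg}\emph{iii)} to $A=M_\rss$ (the cardinalities match by Assumption C) to obtain $\ca C_\t(M_\rss\mde)\persist \ca C(M_\rss)$, and then Theorem \ref{reg-cluster-thm}\emph{ii)} at $\r=\rss$ (valid since $\rss\geq\rs+\e^*$) to obtain $\ca C(M_\rss)\persist \ca C_\t(M_\rss\pde)$. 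Composing the two bijective CRMs via Lemma 2.4 yields the desired persistence.

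For \emph{iii)}, I first combine the hypothesis $|\ca C_\t(M_\r\mde)|=2$ with Lemma \ref{thick-impl-reg}\emph{ii)} and part \emph{i)} to force $|\ca C(M_\r)|=2$, which, by Definition 2.5, implies $\r\geq\rs$. Lemma \ref{thick-impl-reg}\emph{iii)} applied to $A=M_\r$ (cardinalities now match) directly gives $\ca C_\t(M_\r\mde)\persist \ca C(M_\r)$. For the remaining persistence $\ca C_\t(M_{\rss}\mde)\persist \ca C_\t(M_\r\mde)$, I would set up the commutative diagram
\[
\quadiass  {\ca C_\t(M_\rss\mde)} {\ca C_\t(M_\r\mde)}{\ca C(M_\rss)} {\ca C(M_\r)} {} {} {}{}
\]
whose edges are CRMs (existence from Lemma \ref{tcc-comp} and Lemma \ref{thick-impl-reg}\emph{iii)}, and commutativity from Lemma 2.4). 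The diagonal bottom composition $\ca C_\t(M_\rss\mde) \to \ca C(M_\rss) \to \ca C(M_\r)$ is a composition of two bijections (left arrow by part \emph{ii)}, right arrow by Definition 2.5 since both $\r,\rss\geq\rs$). Hence the top-right composition is also bijective; since all four sets have cardinality two, the top arrow $\ca C_\t(M_\rss\mde)\to \ca C_\t(M_\r\mde)$ is itself bijective by a cardinality argument.

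The main obstacle will be extracting from Assumption C exactly the two quantitative inputs needed: first, the uniform bound $\psi(\d)\geq 2\psis_{M_\r}(\d)$ for $\r\in[0,\rss]$ so that Lemma \ref{thick-impl-reg}\emph{ii)} and \emph{iii)} can be invoked with $A=M_\r$ for any relevant $\r$; and second, the thickness statement $|\ca C_\t(M_\rss\mde)|=2$, which is the only way to bootstrap into the cardinality-matching hypothesis of Lemma \ref{thick-impl-reg}\emph{iii)} at level $\rss$. Once these are in hand, the rest is a bookkeeping exercise of commuting CRMs and finite cardinalities.
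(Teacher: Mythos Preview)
Your strategy is essentially the paper's: combine Lemma \ref{thick-impl-reg} for the erosion with Theorem \ref{reg-cluster-thm} for the dilation and chase the resulting commutative squares of CRMs. Parts \emph{i)} and \emph{iii)} match the paper's argument almost verbatim (the paper uses the same diagram in \emph{iii)} with vertices $\ca C_\t(M_\rss\mde)$, $\ca C(M_\rss)$, $\ca C_\t(M_\r\mde)$, $\ca C(M_\r)$).

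The one place where you and the paper diverge is precisely the obstacle you flag at the end: establishing $|\ca C_\t(M_\rss\mde)|=2$. You treat this as something to be ``extracted from Assumption C'', but Assumption C together with Lemma \ref{thick-impl-reg}\emph{ii)} only gives you $|\ca C_\t(M_\rss\mde)|\leq 2$; the lower bound is not automatic. The paper does \emph{not} first prove $|\ca C_\t(M_\rss\mde)|=2$ and then invoke Lemma \ref{thick-impl-reg}\emph{iii)}. Instead it proves directly that the CRM $\z:\ca C_\t(M_\rss\mde)\to\ca C_\t(M_\rss)$ is \emph{surjective}: for $B'\in\ca C_\t(M_\rss)$ and $x\in B'$, the thickness bound gives $d(x,M_\rss\mde)\leq\psis_{M_\rss}(\d)<\psi(\d)<\t$, so some $A'\in\ca C_\t(M_\rss\mde)$ lies within distance $\t$ of $x$; since $A'\subset M_\rss$ and is $\t$-connected there, Corollary \ref{connect-char-cor} forces $A'\subset B'$, i.e.\ $\z(A')=B'$. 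Surjectivity onto the two-element set $\ca C_\t(M_\rss)$ (which equals $\ca C(M_\rss)$ by Proposition \ref{top-connect-new2-prop}\emph{ii)}) then yields bijectivity and the cardinality $2$ simultaneously. This is the missing ingredient in your outline; once you have it, your route through Lemma \ref{thick-impl-reg}\emph{iii)} and the paper's route through $\ca C_\t(M_\rss)$ coincide.
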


\begin{proof}[Proof of Theorem \ref{main-thick}]
\ada i 
We first observe that $\d\leq \dthick$ implies
\begin{displaymath}
 \sup_{x\in M_\r}d(x,M_\r\mde) = \psis_{M_\r}(\d) \leq \cthick \d^\g<\infty\, ,
\end{displaymath}
and thus $M_\r\mde\neq\emptyset$, i.e.~$|\ca C_\t(M_\r\mde)| \geq 1$.
Conversely, we have
$|\ca C_\t(M_\r\mde)| \leq |\ca C(M_\r)|\leq 2$, where the first inequality was established in
part {\em ii)\/} of Lemma \ref{thick-impl-reg} and the second
is ensured by Definition 2.5.

\ada {ii} The monotonicity of $\ts$ established in Theorem \ref{reg-cluster-thm} yields
$\d<\psi(\d) <\t\leq \ts(\e^*)\leq \t_{M_{\rss}}^*/3$.
By part {\em iii)\/} of Lemma \ref{zeta-Td-surj} we then conclude that
the CRM $\ca C_\t(M_{\rss})\to \ca C_\t(M_\rss\pde)$
is bijective, and part \emph{ii)} of Theorem \ref{reg-cluster-thm} shows
$|\ca C_\t(M_{\rss})|= |\ca C_\t(M_{\rss}\pde)| = 2$.
By Lemma 2.4 it thus
 suffices to show that the CRM $\z: \ca C_\t(M_\rss\mde)\to \ca C_\t(M_{\rss})$
is bijective. Furthermore, if $|\ca C_\t(M_\rss\mde)|=1$, this map is automatically injective, and if
$|\ca C_\t(M_\rss\mde)|=2$, the injectivity follows from the surjectivity and the above proven
$|\ca C_\t(M_{\rss})|=2$.
Consequently, it actually suffices to show that
$\z$ is surjective. To this end, we fix a $B'\in \ca C_\t(M_{\rss})$ and an
 $x\in B'$. Then our assumption
ensures $d(x, M_\rss\mde) < \psi(\d)$, and hence there exists an $A'\in \ca C_\t(M_\rss\mde)$
with $d(x,A') < \psi(\d)$. Therefore, $\psi(\d) <\t$ implies that $x$ and $A'$ are $\t$-connected, which
yields $x\in A'$. In other words, we have shown $A'\cap B'\neq \emptyset$. By Lemma \ref{connect-char-cor} and the definition
of $\z$, we conclude that $\z(A') = B'$.

\ada {iii} We have $2=|\ca C_\t(M_\r\mde)|\leq |\ca C(M_\r)|\leq 2$, where the first inequality
was shown in part {\em ii)\/} of Lemma \ref{thick-impl-reg} and the second is guaranteed by Definition
2.5. We  conclude that
$|\ca C(M_\r)|  = 2$, and hence Definition
2.5 ensures
both $\r\geq \rs$ and the bijectivity of the CRM $\z_{\mathrm{top}}:\ca C(M_{\rss})\to \ca C(M_\r)$.
Furthermore, $|\ca C_\t(M_\r\mde)|= |\ca C(M_\r)|$, which has been shown above, together with
part {\em iii)\/} of Lemma \ref{thick-impl-reg} yields a bijective CRM
 $\z_\r : \ca C_\t(M_\r\mde) \to \ca C(M_\r)$, i.e.~the second persistence $C_\t( M_\r\mde) \persist \ca C( M_\r)$ is shown.
Moreover, part {\em ii)\/} of Theorem \ref{reg-cluster-thm} shows $|\ca C_\t(M_\rss\pde)|=2$,
and hence the already established   bijectivity of
$\z^{**}:\ca C_\t(M_{\r^{**}}\mde)\to \ca C_\t( M_{\r^{**}}\pde)$
gives $|\ca C_\t(M_\rss\mde)| = |\ca C_\t(M_\rss\pde)|=2 = |\ca C(M_\rss)|$.
Consequently, part {\em iii)\/} of Lemma \ref{thick-impl-reg} yields a bijective CRM
 $\z_{\rss} : \ca C_\t(M_\rss\mde) \to \ca C(M_{\rss})$.
Then the CRM
$\z: \ca C_\t(M_\rss\mde)\to \ca C_\t( M_\r\mde)$ enjoys the following diagram

\quadiass {\ca C_\t(M_\rss\mde)} {\ca C(M_{\rss})} {\ca C_\t(M_\r\mde)} {\ca C(M_{\r})} {\z_{\rss}} {\z} {\z_{\mathrm{top}}} {\z_\r}\\
%
whose commutativity follows from  Lemma 2.4.
Then the bijectivity of $ {\z_{\rss}}$, ${\z_{\mathrm{top}}}$, and  ${\z_\r}$
yields the bijectivity of $\z$, which completes the proof.
\end{proof}




\section{Additional Material Related to Thickness}\label{sec:proof-thick}




In this section we discuss some aspects related to the thickness assumption introduced
in Definition 2.6.

To this end, let $(X,d)$ be an arbitrary metric spaces and $A\subset X$.
We then define the function $\psis_A:(0,\infty)\to [0,\infty]$ by
\begin{displaymath}
\psis_A(\d) :=  \sup_{x\in M_\r} d(x, A\mde) \, , \qquad \qquad \d>0.
\end{displaymath}
Obviously, $\psis_{M_\r}$ coincides with the left-hand side of (2.5).
%
%

Our first observation is that the definition of $\psis_A$
immediately yields
$A\subset (A\mde)^{+\psis_A(\d)}$ for all $\d>0$ with $\psis_A(\d)<\infty$,
and it is also straightforward to see that $\psis_A(\d)$ is
the smallest $\psi>0$, for which   this inclusion holds, that is
\begin{displaymath}
 \psis_A(\d) = \min\bigl\{ \psi\geq 0: A\subset (A\mde)^{+\psi}  \bigr\}
\end{displaymath}
for all $\d>0$. In other words,  $\psis_A(\d)$ gives the size of the smallest tube needed to recover a superset of
$A$ from $A\mde$. In particular,
if $\d$ is too large, that is $A\mde = \emptyset$,  we obviously have $\psis_A(\d) = \infty$ and no recovery is
possible.

Intuitively it is not surprising that $\psis_A$ grows at least linearly, that is
\begin{equation}\label{at-least-linear}
 \psis_A(\d) \geq \d
\end{equation}
for all $\d>0$ provided that $d(A, X\setminus A) = 0$. Indeed, $\psis_A(\d) <\d$ for some $\d>0$ gives us an $\eps>0$ such that
$d(x,A\mde) < \d-\eps$ for all $x\in A$. Since $d(A, X\setminus A) = 0$ there then exists an $x\in A$ with
$d(x,X\setminus A)< \eps$, and for this $x$ there exists an $x'\in A\mde$ with $d(x,x')<\d-\eps$.
Now the definition of $A\mde$ gives $d(x', X\setminus A) > \d$, and hence we find a contradiction by
\begin{displaymath}
 \d < d(x', X\setminus A) \leq d(x', x) + d(x, X\setminus A) < \d\, .
\end{displaymath}

For generic sets $A$, the function $\psis_A$ is usually hard to bound, but for some
classes of sets, $\psis_A$ can be computed precisely. For example, for an interval
 $I=[a,b]$, we have  $\psis_I(\d) = \d$ for all $\d\in (0,(b-a)/2]$, and $\psis_I(\d) = \infty$, otherwise. Clearly, this
example can be extended to finite unions of such intervals and for intervals that are not closed, the only difference
occurs at $\d = (b-a)/2$. In higher dimensions, an interesting class of sets $A$ with linear behavior of $\psis_A$
is described by Serra's model, see \citeappendix[p.~144]{Serra82}, that consist of all compact sets $A\subset \Rd$ for
which there is a $\d_0>0$ with
\begin{displaymath}
 A = (A\omdeo)\opdeo = (A\opdeo)\omdeo\, .
\end{displaymath}
If, in addition, $A$ is path-connected, then \citeappendix[Theorem 1]{Walther99a} shows
that this relation also holds for all $\d\in (0,\d_0]$.
In this case,
 we then obtain
\begin{displaymath}
 A = (A\omdee)\opdee \subset  (A\omdee)\pdee \subset (A\mde)\pdee
\end{displaymath}
 for all $\d\in (0,\d_0)$ and   $0<\eps \leq \d_0-\d$.
In other words, we have $\psis_A(\d) \leq \d+\eps$, and
letting $\eps\to 0$, we thus conclude $\psis_A(\d) = \d$ for all $\d\in (0,\d_0)$.
With the help of Lemma \ref{Td-lemma-new}, it is not hard to see that this result generalizes to finite
unions of compact, path-connected sets, which has already been observed in \citeappendix{Walther99a}.
Finally, note that \citeappendix[Theorem 1]{Walther99a} also provides some useful characterizations of (path-connected) compact sets
belonging to Serra's model. In a nutshell, these  are the sets whose boundary is a $(d-1)$-dimensional sub-manifold of $\Rd$
with outward pointing unit normal vectors satisfying a Lipschitz condition.

Fortunately,
our analysis does not require the exact form of   $\psis_A$, but only its
asymptotic behavior for $\d\to 0$. Therefore, it is interesting to note that
 $\psis_A$ is also asymptotically invariant against bi-Lipschitz transformations.
To be more precise, let $(X,d)$ and $(Y,e)$ be two metric spaces and $I:X\to Y$ be a bijective  map
for which there exists a constant $C>0$ such that
\begin{displaymath}
 C^{-1}e(I(x), I(x')) \leq d(x,x') \leq C e(I(x), I(x'))
\end{displaymath}
for all $x,x'\in X$. For $A\subset X$ and $\d>0$, we then
 have $I(A^{+\d/C}) \subset (I(A))\pde \subset I(A^{+C\d})$, which in turn implies
\begin{displaymath}
 C^{-1} \psis_A(\d/C) \leq \psis_{I(A)}(\d) \leq C\psis_A(C\d)
\end{displaymath}
for all $\d>0$. In particular, we have $\psis_A(\d) \preceq \d^\g$ for some $\g\in (0,1]$
if and only if $\psis_{I(A)}(\d) \preceq \d^\g$.

Last but not least we like to mention that based on the sets $A\subset \R^2$ considered in
Example \ref{Aomde}, Example \ref{psis-in-r2} estimates $\psis_A$. In particular,
this example provides various sets $A$ with $\psis_A(\d) \sim \d$
that do not belong to Serra's model, and this class of sets can be further expanded by
using bi-Lipschitz transformations as discussed above.

Now consider  Definition 2.6, which
excludes  thin cusps and bridges, where the thinness and length of both is
controlled by $\g$.
Such assumptions have been widely used in the literature
on level set estimation and density-based clustering.
For example,
a basically identical assumption has been made in \citeappendix{SiScNo09a} for the exponent $\g=1$,
which can be taken, if, e.g.,
 the level sets belong to Serra's model. Moreover, level sets belonging to Serra's model
have
been investigated in \citeappendix{Walther97a}. In particular, \citeappendix[Theorem 2]{Walther97a}
shows that most level sets of a $C^1$-density with Lipschitz continuous gradient
belong to Serra's model.
Unfortunately, however, levels at which the density has a saddle point are excluded in this theorem,
and some other elementary sets such as cubes in $\Rd$ do not belong
to Serra's model, either. For this reason, we allow constants $\cthick>1$ in Definition 2.6.
Moreover, the  exponent $\g<1$ is allowed to provide more flexibility in situations,
in which very thin bridges are expected. However, based on the
discussion on $\psis_A$  as well as the examples provided in Section \ref{suppB:cons-dens}, we
strongly believe, that
in most cases assuming $\g=1$ is reasonable.
With the help of the discussion on $\psis_A$ it is also
easy to see that we have $M_\r \subset (M\mde)^{+\psi(\d)/2}$ for all $\d \in (0,\dthick]$
and all $\r\in (0,\rss]$. In addition, it becomes clear that exponents $\g>1$ are
impossible as soon as $d(M_\r, X\setminus M_\r)=0$ for some $\r\in (0,\rss]$.
Finally, recall that a   less geometric assumption excluding thin features  has been
used by various authors, see  e.g.~\citeappendix{CuFr97a,CuFeFr00a,Rigollet07a} and the references therein,
and an overview of these and similar assumptions can be found in \citeappendix{Cuevas09a}.

Understanding (2.5) in  the one-dimensional case is very simple. Indeed,
if $X\subset \R$ is an interval and $P$ can be topologically clustered
between   $\r^*$ and $\r^{**}$, then, for all $\r\in [0,\rss]$,
the level set $M_\r$ consists of either one or two closed intervals.
Using this, the discussion on $\psis_A$ shows
 that $P$ actually has thick levels of order $\g=1$ up to the level $\rss$. Moreover,
a possible thickness function is
$\psi(\d)= 3\d$ for all $\d\in (0,\dthick]$, where $\dthick$ equals the smaller radius of the two intervals at level $\rss$.

Finally, using the discussion on $\psis_A$ it is not
hard to construct distributions with discontinuous densities that have
thick levels of order, e.g.~$\g=1$. For continuous densities, however, this task is significantly
harder due to the above mentioned saddle point effects at the critical level $\rs$.
Therefore, we have added Example \ref{cont-h-in-r2}, which
provides a large class of such densities in the case $X\subset \R^2$.




\section{Proofs and Results Related to   Algorithm 2.1}\label{app-sec:generic-cluster-algo}




The main goals of this section is to prove Theorem  2.8 and
to provide background material from \citeappendix{Steinwart11a} for the proof of Theorem
 2.9.

\begin{lemma}\label{connect-main-new}
 Let $(X,d)$ be a compact metric space and $\mu$ be a finite Borel measure on $X$
with $\supp \mu = X$.
Moreover, let $P$ be a
 $\mu$-absolutely continuous  distribution on $X$, and
 $(L_\r)_{\r\geq 0}$ be a decreasing family of sets $L_\r\subset X$ such that
 \begin{displaymath}
     M_{\r+\e}\mde\subset L_\r \subset  M_{\r-\e}\pde
 \end{displaymath}
for  some fixed $\d>0$,  $\e \geq 0$, and all  $\r\geq 0$.
For some fixed  $\r\geq 0$ and $\t>0$,  let
$\z:\ca C_\t(M_{\r+\e}\mde)\to \ca C_\t(L_\r)$ be the CRM.
Then we have:
\begin{enumerate}
\item For all $A'\in \ca C_\t(M_{\r+\e}\mde)$ with $A' \cap M_{\r+3\e}\mde\neq \emptyset$ we have
          $\z(A') \cap   L_{\r+2\e} \neq \emptyset$.
\item For all $B'\in \ca C_\t(L_\r)$ with $B'\not \in \z(\ca C_\t(M_{\r+\e}\mde))$, we have
        \begin{align}\label{connect-main-new-h1}
        B' & \subset (X\setminus M_{\r+\e})\pde \cap M_{\r-\e}\pde\\ \label{connect-main-new-h2}
        B' \cap  L_{\r+2\e} & \subset  (X\setminus M_{\r+\e})\pde \cap M_{\r+\e}\pde\, .
        \end{align}
\end{enumerate}
\end{lemma}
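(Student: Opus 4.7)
The plan is straightforward: part \emph{i)} follows immediately from the defining property of the CRM combined with the hypothesized inclusion $M_{\r'+\e}\mde \subset L_{\r'}$, and part \emph{ii)} reduces to the single observation that any $B'$ outside the image of $\z$ must avoid $M_{\r+\e}\mde$.

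For \emph{i)}, pick any $x \in A' \cap M_{\r+3\e}\mde$. The CRM property gives $A' \subset \z(A')$, hence $x \in \z(A')$. Applying the hypothesized inclusion $M_{\r'+\e}\mde \subset L_{\r'}$ at $\r' := \r+2\e$ yields $M_{\r+3\e}\mde \subset L_{\r+2\e}$, so $x \in L_{\r+2\e}$, and therefore $x \in \z(A') \cap L_{\r+2\e}$ as desired.

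For \emph{ii)}, the central claim is that $B' \cap M_{\r+\e}\mde = \emptyset$. Suppose, for contradiction, that some $x$ lay in this intersection. Then $x$ would belong to a unique $A' \in \ca C_\t(M_{\r+\e}\mde)$, and the CRM property $A' \subset \z(A')$ would give $x \in \z(A')$. Now $B'$ and $\z(A')$ are two members of the partition $\ca C_\t(L_\r)$ sharing the point $x$, forcing $B' = \z(A')$ and hence $B' \in \z(\ca C_\t(M_{\r+\e}\mde))$, contradicting the assumption on $B'$. Using the definitional identity $X \setminus A\mde = (X\setminus A)\pde$, this disjointness rewrites as $B' \subset (X\setminus M_{\r+\e})\pde$. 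Combining with $B' \subset L_\r \subset M_{\r-\e}\pde$, which is the upper half of the sandwich at $\r' := \r$, yields \eqref{connect-main-new-h1}. Intersecting with $L_{\r+2\e}$ and applying the upper half of the sandwich at $\r' := \r+2\e$, namely $L_{\r+2\e} \subset M_{\r+\e}\pde$, then produces \eqref{connect-main-new-h2}.

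The only step requiring any real thought is the component-merging argument for $B' \cap M_{\r+\e}\mde = \emptyset$; everything else is routine bookkeeping with the sandwich $M_{\r'+\e}\mde \subset L_{\r'} \subset M_{\r'-\e}\pde$ at the two relevant shifts $\r' \in \{\r, \r+2\e\}$ together with the identity $X\setminus A\mde = (X\setminus A)\pde$, which is immediate from $A\mde := X\setminus(X\setminus A)\pde$.
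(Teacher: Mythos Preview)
Your proof is correct and follows essentially the same approach as the paper's. The only cosmetic difference is that for part \emph{ii)} you frame the claim $B'\cap M_{\r+\e}\mde=\emptyset$ as a proof by contradiction, whereas the paper argues directly that $x\in B'$ implies $x\notin\bigcup_{A'}\z(A')\supset\bigcup_{A'}A'=M_{\r+\e}\mde$; the two formulations are trivially equivalent.
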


\begin{proof}[Proof of Lemma \ref{connect-main-new}]
\ada {i} Using the  CRM property  $A'\subset \z(A')$
and the inclusion $M_{\r+3\e }\mde\subset  L_{\r+2\e}$,
we obtain
\begin{displaymath}
   \emptyset \neq  A'\cap M_{\r+3\e}\mde \subset \xi(A') \cap L_{\r+2\e}\, .
\end{displaymath}

\ada {ii} We fix a
$B'\in \ca C_\t(L_\r)\setminus \z(\ca C_\t(M_{\r+\e}\mde))$.
For $x\in B'$ we then have
\begin{displaymath}
   x\not \in \bigcup_{A'\in \ca C_\t(M_{\r+\e}\mde)}\z(A')\, ,
\end{displaymath}
and hence the
CRM property yields
\begin{displaymath}
 x\not \in \bigcup_{A'\in \ca C_\t(M_{\r+\e}\mde)}A' = M_{\r+\e}\mde\, .
\end{displaymath}
This shows
$x\in (X\setminus M_{\r+\e})\pde$, i.e.~we have proved $B'\subset (X\setminus M_{\r+\e})\pde$.
Now, \eqref{connect-main-new-h1} follows from $B'\subset L_\r\subset M_{\r-\e}\pde$, and
\eqref{connect-main-new-h2} follows from $B' \cap L_{\r+2\e}\subset L_{\r+2\e}\subset M_{\r+\e}\pde$.
\end{proof}

\begin{proof}[Proof of Theorem 2.8]
We first establish the following {\em disjoint\/} union:
\begin{align} \nonumber
 \ca C_\t(L_\r) = \z(\ca C_\t(M_{\r+\e}\mde))
 &\cup \bigl\{B'\in \ca C_\t(L_\r)\setminus \z(\ca C_\t(M_{\r+\e}\mde)):  B'\cap  L_{\r+2\e} \neq \emptyset \bigr\} \\ \label{ghost-motiv-2}
& \cup \bigl\{B'\in \ca C_\t(L_\r):  B'\cap  L_{\r+2\e} = \emptyset \bigr\}\, .
\end{align}
We begin by   showing the auxiliary result
\begin{equation}\label{algo-anal-h1}
 A' \cap M_{\r+3\e}\mde \neq \emptyset\, , \qquad \qquad A'\in \ca C_\t(M_{\r+\e}\mde).
\end{equation}
To this end, we observe that  {\em i)\/} and {\em ii)\/} of
Theorem \ref{reg-cluster-thm} yield $|\ca C_\t(M_\rss\pde)|=2$,
and hence part {\em ii)\/} of Theorem \ref{main-thick} implies $|\ca C_\t(M_\rss\mde)|=2$.
Let $W'$ and $W''$ be the two  $\t$-connected components of $M_\rss\mde$.
 We first assume that $M_{\r+\e}\mde$ has exactly one $\t$-connected component $A'$,
i.e.~$A'= M_{\r+\e}\mde$. Then $\r + 3\e\leq \rss$ and $\r+\e\leq \r+3\e$
imply
\begin{displaymath}
 \emptyset \neq M_\rss\mde \subset M_{\r+3\e}\mde = M_{\r+\e}\mde \cap M_{\r+3\e}\mde =  A'\cap M_{\r+3\e}\mde\, ,
\end{displaymath}
i.e.~we have shown (\ref{algo-anal-h1}).
Let us now assume that  $M_{\r+\e}\mde$ has more than one $\t$-component. Then it has exactly two
such components $A'$ and $A''$ by $\r+\e<\rss$ and part {\em i)\/} of Theorem \ref{main-thick}.
By
part  {\em iii)\/} of Theorem \ref{main-thick}  we may then assume without loss of generality that we have $W'\subset A'$
and $W''\subset A''$. Since $\r + 3\e\leq \rss$ implies
$M_\rss\mde \subset M_{\r+3\e}\mde$,
these inclusions yield $\emptyset \neq W' = W' \cap  M_\rss\mde\subset A'\cap  M_{\r+3\e}\mde$
and $\emptyset \neq W'' = W'' \cap  M_\rss\mde\subset A''\cap  M_{\r+3\e}\mde$. Consequently,
we have proved (\ref{algo-anal-h1}) in this case, too.

Now, from (\ref{algo-anal-h1}) we conclude by part {\em i)\/} of Lemma \ref{connect-main-new}
that $B'\cap  L_{\r+2\e} \neq \emptyset$ for all $B'\in \z(\ca C_\t(M_{\r+\e}\mde))$.
This yields
\begin{align*}
&\bigl\{B'\in \ca C_\t(L_\r)\setminus \z(\ca C_\t(M_{\r+\e}\mde)):  B'\cap  L_{\r+2\e} = \emptyset\bigr\} \\
&= \bigl\{B'\in \ca C_\t(L_\r) :  B'\cap  L_{\r+2\e} = \emptyset\bigr\}\, ,
\end{align*}
which in turn   implies \eqref{ghost-motiv-2}.

Let us now show (2.8).
Clearly, by \eqref{ghost-motiv-2}
it remains to show
\begin{displaymath}
 B' \cap  L_{\r+2\e}  = \emptyset\, ,
\end{displaymath}
for all $B'\in \ca C_\t(L_\r)\setminus \z(\ca C_\t(M_{\r+\e}\mde))$.
Let us  assume  the converse, that is, there exists a
$B'\in \ca C_\t(L_\r)\setminus \z(\ca C_\t(M_{\r+\e}\mde))$
with
$B'\cap  L_{\r+2\e}  \neq \emptyset$.
Since  $L_{\r+2\e}  \subset M_{\r+\e}\pde$, there then exists
an $x\in B' \cap M_{\r+\e}\pde$.
By part \emph{i)} of  Lemma \ref{Td-lemma-new}
this gives an $x'\in M_{\r+\e}$ with $d(x,x')\leq \d$, and hence
we obtain
\begin{displaymath}
 d(x', M_{\r+\e}\mde ) \leq \psis_{M_{\r+\e}}(\d) \leq \cthick \d^\g
< 2\cthick \d^\g\, .
\end{displaymath}
From this inequality we conclude that there exists an $x'' \in M_{\r+\e}\mde$ satisfying
$d(x',x'') < 2\cthick \d^\g$. Let $A''\in \ca C_\t(M_{\r+\e}\mde)$ be the unique $\t$-connected component satisfying
$x''\in A''$. The CRM property then yields $x'' \in A''\subset \z(A'') =:B''$, and thus,
using $c\geq 1$,
we find
\begin{displaymath}
 d(B', B'') \leq d(x, x'') \leq d(x,x') + d(x',x'') < \d + 2\cthick \d^\g \leq 3\cthick \d^\g < \t\, .
\end{displaymath}
However, since $B'\not \in \z(\ca C_\t(M_{\r+\e}\mde))$ and $B''\in \z(\ca C_\t(M_{\r+\e}\mde))$ we obtain
$B'\neq B''$, and hence Lemma \ref{finite-comp}
yields $d(B', B'') \geq \t$.
\end{proof}

\begin{theorem}\label{analysis-main-new-suppl}
Let Assumption  C  be satisfied. Furthermore, let
  $\e^*\leq (\rss - \rs)/9$ ,
 $\d\in (0,  \dthick]$,
$\t\in (\psi(\d),\ts(\e^*)]$, and  $\e\in (0 , \e^*]$.
In addition, let $D$ be a data set and $(L_{D,\r})_{\r\geq 0}$ be a
decreasing family satisfying
\begin{displaymath}
    M_{\r+\e}\mde\subset L_{D,\r} \subset  M_{\r-\e}\pde
\end{displaymath}
for all $\r\geq 0$.
Furthermore, assume that Algorithm 2.1 receives the
parameters $\t$, $\e$, and  $(L_{D,\r})_{\r\geq 0}$.
Then,
the following statements are true:
\begin{enumerate}
 \item The returned level $\rds$ satisfies $\rds \in [\rs+2\e , \rs+\e^*+5\e]$.
 \item We have $|\ca C_\t(M_\rdse\mde)| =2$ and the CRM
 $\z:\ca C_\t (M_\rdse\mde)\to \ca C_\t(L_{D,\rds})$ is injective.
 \item Algorithm 2.1 returns the two  $\t$-connected components
 of $\z(C_\t (M_\rdse\mde))$.
\item There exist
CRMs
$\z_\rss : \ca C_\t(M_\rss\mde)\to \ca C(M_\rss)$ and
$\z_\rdse : \ca C_\t(M_\rdse\mde)\to \ca C(M_\rdse)$ such that
we have a commutative diagram of bijective CRMs:
%
\quadiass {\ca C_\t(M_\rss\mde)}{\ca C(M_\rss)}{\ca C_\t(M_\rdse\mde)} {\ca C(M_\rdse)}  {\z_\rss} {\z_{\rss, \rds+\e}} {\tilde \z} {\z_\rdse}\\
%
\end{enumerate}
\end{theorem}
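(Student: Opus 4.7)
The plan is to assemble the structural theorems proved earlier in the appendix—Theorem \ref{reg-cluster-thm}, Theorem \ref{main-thick}, and Theorem 2.8—and track how each applies to the successive levels inspected by Algorithm 2.1. Throughout, the hypotheses $\e\leq\e^*\leq(\rss-\rs)/9$, $\d\in(0,\dthick]$, and $\t\in(\psi(\d),\ts(\e^*)]$ ensure via the monotonicity of $\ts$ and the thickness bound $\psis_{M_\r}(\d)\leq\cthick\d^\g$ that those theorems apply at every relevant level.

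For part (i), I would verify the two endpoints separately. For the upper bound $\rds\leq\rs+\e^*+5\e$, set $\r:=\rs+\e^*+5\e$; then $\r+\e\geq\rs+\e^*$, so Theorem \ref{main-thick} yields $|\ca C_\t(M_{\r+\e}\mde)|=2$ with persistence down to $\ca C(M_{\r+\e})$. Theorem 2.8 then decomposes $\ca C_\t(L_{D,\r})$ into the non-ghost components $\z(\ca C_\t(M_{\r+\e}\mde))$, which necessarily meet $L_{D,\r+2\e}$, together with ghost components that do not, so Algorithm 2.1's ghost-filtering rule must have accepted this or an earlier level. For the lower bound $\rds\geq\rs+2\e$, use $L_{D,\r}\subset M_{\r-\e}\pde$: for $\r<\rs+2\e$ one has $\r-\e<\rs+\e$, and combining this inclusion chain with part (i) of Theorem \ref{reg-cluster-thm} and the second assertion of Theorem 2.8 shows that any candidate non-ghost component of $\ca C_\t(L_{D,\r})$ is pinned inside $(X\setminus M_{\r+\e})\pde\cap M_{\r+\e}\pde$, preventing the simultaneous acceptance of two genuine clusters.

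For part (ii), the algorithm's acceptance of two components together with the first (non-ghost) term of the Theorem 2.8 decomposition implies $|\ca C_\t(M_\rdse\mde)|\geq2$, and Theorem \ref{main-thick} caps this at $2$. The injectivity of the CRM is then immediate: if two distinct $A_1',A_2'\in\ca C_\t(M_\rdse\mde)$ mapped to the same $B'\in\ca C_\t(L_{D,\rds})$, the $\t$-connectedness of $B'$ together with $A_1'\cup A_2'\subset B'$ would yield a $\t$-chain linking $A_1'$ and $A_2'$, contradicting Lemma \ref{finite-comp}. Part (iii) is then immediate from part (ii) and Theorem 2.8: the non-ghost branch is precisely $\z(\ca C_\t(M_\rdse\mde))$ and has exactly two elements, which are what Algorithm 2.1 returns.

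For part (iv), part (iii) of Theorem \ref{main-thick} supplies the vertical bijective CRMs $\z_\rss$ and $\z_\rdse$; Definition 2.5 supplies the top horizontal bijective CRM since both $\rss$ and $\rdse$ lie in $[\rs+\e^*,\rss]$; the bottom horizontal $\tilde\z$ then exists and is bijective by a diagram chase, while commutativity is automatic from Lemma 2.4. The main obstacle is the lower bound in part (i): pinning down the precise arithmetic of the transitional window $[\rs,\rs+2\e)$ requires a careful joint use of the inclusions $M_{\r+\e}\mde\subset L_{D,\r}\subset M_{\r-\e}\pde$ together with both the decomposition in Theorem 2.8 and the localization constraint in equation \eqref{connect-main-new-h2} on where ghost components can lie.
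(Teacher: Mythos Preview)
Your plan correctly identifies the key inputs, but two steps do not go through. First, the lower bound in (i): the localization in \eqref{connect-main-new-h2} constrains the \emph{ghost} components $B'\notin\z(\ca C_\t(M_{\r+\e}\mde))$, not the non-ghost ones, so it cannot prevent two genuine clusters; and there is no ``transitional window'' to analyze. The paper's argument is direct: if the current level $\r$ satisfies $\r+\e<\rs$, then part \emph{iii)} of Theorem \ref{main-thick} (contrapositive) together with part \emph{i)} forces $|\ca C_\t(M_{\r+\e}\mde)|=1$, hence $|\z(\ca C_\t(M_{\r+\e}\mde))|=1$ and the algorithm does not stop. The stopping level thus satisfies $\r\geq\rs-\e$, giving $\rds=\r+3\e\geq\rs+2\e$ immediately.

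Second, your injectivity argument in (ii) fails: a $\t$-chain inside $B'\in\ca C_\t(L_{D,\rds})$ uses points of $L_{D,\rds}$ that need not lie in $M_\rdse\mde$, so it does not contradict $d(A_1',A_2')\geq\t$ as $\t$-components of $M_\rdse\mde$; nothing a priori prevents two such components from being bridged in the larger set. The same gap infects your upper bound in (i): $|\ca C_\t(M_{\r+\e}\mde)|=2$ alone does not give $|\z(\ca C_\t(M_{\r+\e}\mde))|=2$, which is what the stopping criterion requires. The paper establishes injectivity by commutative-diagram chases: for the upper bound one factors $\z$ through the bijective CRM $\ca C_\t(M_{\r+\e}\mde)\to\ca C_\t(M_{\r-\e}\pde)$ (bijective once $\r-\e\geq\rs+\e^*$, by parts \emph{ii)}--\emph{iv)} of Theorems \ref{reg-cluster-thm} and \ref{main-thick}); for (ii) one first passes from the stopping level $\rds-3\e$ to $\rdse$ via a triangle through $\ca C_\t(M_\rss\mde)$, then chases a square involving $\ca C_\t(L_{D,\rds})$ and $\ca C_\t(L_{D,\rds-3\e})$. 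Note also the arithmetic slip: choosing $\r=\rs+\e^*+5\e$ would only give $\rds\leq\rs+\e^*+8\e$; one needs $\r\in[\rs+\e^*+\e,\rs+\e^*+2\e]$, an interval of width $\e$ the algorithm necessarily visits.
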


\begin{proof}[Proof of Theorem \ref{analysis-main-new-suppl}]
We begin with some general observations. To this end, let
$\r\in [0,\rss-4\e]$ be the   level that is currently considered in Line 3 of Algorithm 2.1.
Then, Theorem 2.8 shows that   Algorithm 2.1
identifies exactly the $\t$-connected components of $L_{D,\r}$  that belong to the set
$\z(\ca C_\t(M_{\r+\e}\mde))$, where $\z:\ca C_\t (M_{\r+\e}\mde)\to \ca C_\t(L_{D,\r})$ is  the CRM.
In the following, we thus consider the set $\z(\ca C_\t(M_{\r+\e}\mde))$. 
Moreover, we note that the returned level $\rds$ always satisfies $\rds\geq \r + 3\e$ by Line 4 and Line 6,
and equality holds if and only if $|\z(\ca C_\t(M_{\r+\e}\mde))|\neq 1$.

\ada i
Let us first consider the case {$\r \in [0,\rs-\e)$.}
Then $\r+\e  <\rs$ together with
 part {\em i)\/} and {\em iii)\/} of Theorem \ref{main-thick} shows $|\ca C_\t(M_{\r+\e}\mde)|=1$, and hence
 $|\z(\ca C_\t(M_{\r+\e}\mde))|=1$. Our initial consideration then show, that
 Algorithm 2.1 does not leave its loop, and thus $\rds\geq  \rs + 2\e$.

Let us now consider the case {$\r \in [\rs+\e^*+\e , \rs+\e^*+2\e]$.}
Here we first note that Algorithm 2.1 actually inspects such an $\r$, since
it iteratively inspects all $\r = i\e$, $i=0,1,\dots$, and the  width of the interval above  is $\e$.
Moreover, our assumptions on $\e^*$ and $\e$ guarantee $\rs+\e^*+2\e\leq \rss-4\e$, and hence we have
$\r \in [\rs+\e^*+\e , \rss-4\e]$, i.e., we are in the situation described at the beginning of the proof.
We  write  $\z_{V}: \ca C_\t(M_\rss\mde) \to \ca C_\t(M_{\r+\e}\mde)$,
$\z_{M}: \ca C_\t(M_\rss\pde) \to \ca C_\t(M_{\r-\e}\pde)$,
and $\z_{V,M}: \ca C_\t(M_{\r+\e}\mde) \to \ca C_\t(M_{\r-\e}\pde)$ for the CRMs between the involved sets.
We then obtain the commutative diagram

\quadiann {\ca C_\t(M_{\r+\e}\mde)} {\ca C_\t(M_{\r-\e}\pde)} {\ca C_\t(M_\rss\mde)} {\ca C_\t(M_\rss\pde)} {\z_{V,M}} {\z_V} {\z_M} {\z^{**}}\\
%
where the CRM $\z^{**}$ is bijective by part {\em ii)\/}  of Theorem \ref{main-thick}.
Moreover, $\r-\e \geq \rs + \e^*$ together with  part {\em ii)\/}  of Theorem \ref{reg-cluster-thm}
shows $|\ca C_\t(M_{\r-\e}\pde)|=2$, and by {\em iii)\/}  of Theorem \ref{reg-cluster-thm}
we conclude that $\z_M$ is bijective. Similarly, $\r+\e \geq \rs+\e^*$ and the bijectivity of $\z^{**}$
show by {\em iv)\/}  of Theorem \ref{reg-cluster-thm} that $|\ca C_\t(M_{\r+\e}\mde)|=2$, and thus
$\z_V$ is bijective by  part {\em iii)\/}  of Theorem \ref{main-thick}.
Consequently, $\z_{V,M}$ is bijective. Let us further consider the CRM $\z':\ca C_\t(L_{D,\r}) \to \ca C_\t(M_{\r-\e}\pde)$.
Then Lemma 2.4 yields another diagram:

\tridia {\ca C_\t(M_{\r+\e}\mde)} {\ca C_\t(M_{\r-\e}\pde)} {\ca C_\t(L_{D,\r})}  {\z_{V,M}} \z{\z'}\\
%
Since $\z_{V,M}$ is bijective, we then find  that $\z$ is injective, and since
we have already seen that $M_{\r+\e}\mde$ has two $\t$-connected components, we conclude
that $\z(\ca C_\t(M_{\r+\e}\mde))$ contains two elements. Consequently, the stopping criterion of Algorithm 2.1
 is satisfied, that is, $\rds = \r+3\e  \leq \rs+\e^*+5\e$.

\ada {ii}
Theorem 2.8 shows that in its last run through the loop Algorithm 2.1
identifies exactly the $\t$-connected components of $L_{D,\r}$ that belong to the set
$\z_{-3\e}(\ca C_\t(M_{\r +\e}\mde)$, where
$\r := \rds-3\e$ and
$\z_{-3\e}:\ca C_\t (M_{\r+\e}\mde)\to \ca C_\t(L_{D,\r})$ is  the CRM.
Moreover, since Algorithm 2.1 stops at $\rds-3\e$, we have
$|\z_{-3\e}(\ca C_\t(M_{\r+\e}\mde))|\neq 1$ by our remarks at the beginning of the proof,
and thus
$|\ca C_\t(M_{\r+\e}\mde)|\neq 1$.
From the already proven part \emph{i)} we further know that
$\r +\e = \rds-2\e \leq \rs + \e^* + 3\e\leq \rs + 4\e^*\leq  \rss$, and part {\em i)\/} of Theorem \ref{main-thick}
hence gives
$|\ca C_\t(M_{\r+\e}\mde)|= 2$. For later purposes, note that the latter
together with $|\z_{-3\e}(\ca C_\t(M_{\r+\e}\mde))|\neq 1$
implies the injectivity of $\z_{-3\e}$.
Now, part {\em iii)\/} of Theorem \ref{main-thick}
shows that the CRM $\z_{\rss, \r +\e}: {\ca C_\t(M_\rss\mde)}\to  {\ca C_\t(M_{\r+\e}\mde)}$ is bijective.
Let us   consider the following commutative diagram:

\tridia {\ca C_\t(M_\rss\mde)} {\ca C_\t(M_{\r+\e}\mde)} {\ca C_\t(M_\rdse\mde)}   {\z_{\rss, \r +\e}} {\z_{\rss, \rdse}}{\tilde \z}\\
%
where the remaining two maps are the corresponding CRMs,
whose existence is guaranteed by $\rds+\e\leq \rds + 7\e^*\leq \rss$ and $\r+\e\leq \rds + \e$, respectively.
Now the bijectivity of $\z_{\rss, \r +\e}$ shows
that $\z_{\rss, \rdse}$ is injective. Moreover,  $\rdse \leq \rss$ implies
$|\ca C_\t(M_\rdse\mde)|\leq 2$ by part {\em i)\/} of Theorem \ref{main-thick},
while $\rss \geq \rs+\e^*$ implies
 $|\ca C_\t(M_\rss\mde)|=2$ by part \emph{iv)} of Theorem \ref{reg-cluster-thm} and part \emph{ii)} of
Theorem \ref{main-thick}.
Therefore,
$\z_{\rss, \rdse}$ is actually bijective. This yields both $|\ca C_\t(M_\rdse\mde)| =  2$, which is the first assertion,
and the bijectivity of $\tilde \z$.
Let us  consider yet another
commutative diagram

\quadiass {\ca C_\t(M_\rdse\mde)} {\ca C_\t(M_{\r+\e}\mde)} {\ca C_\t((L_{D,\rds})} {\ca C_\t(L_{D,\r})} {\tilde \z} {\z} {\z_{-3\e}} {\z'}\\
%
where again, all occurring maps are the CRMs between the respective sets.
Now we have already shown that $\z_{-3\e}$ is injective and that  $\tilde \z$ is bijective. Consequently, $\z$ is injective.

\ada {iii} This assertions
follows from Theorem 2.8 and the inequality $\rds  \leq \rss-3\e$, which follows from
part \emph{i)}.

\ada {iv} We have already seen in the proof of part \emph{ii)} that $|\ca C_\t(M_\rss\mde)|=2$, and consequently
part {\em iii)\/} of Lemma \ref{thick-impl-reg} shows that there exists a bijective
CRM $\z_{\rss} : \ca C_\t(M_\rss\mde)\to \ca C(M_{\rss})$.
Moreover,  part \emph{ii)} shows $|\ca C_\t(M_{\rds+\e}\mde)|=2$, thus
part {\em iii)\/} of Lemma \ref{thick-impl-reg} yields another bijective CRM
$\z_{\rdse} : \ca C_\t(M_\rdse\mde)\to \ca C(M_{\rdse})$.
Furthermore, in the proof of part {\em ii)\/} we have already seen that
 CRM $\z_{\rss, \rdse}$
is bijective. This gives the  diagram.
\end{proof}




\section{Additional Material Related to Assumption A}




In this section we discuss Assumption A, which describes the partitions needed for our
histogram approach, in more detail.

We begin with an example  of partitions satisfying Assumption A.

\begin{example}\label{ex:unif-part}
Let $X:= [0,1]^\dpart$ be  equipped with the metric defined by the  supremum norm $\snorm \cdot_{\ell_\infty^\dpart}$,
and $\lb^\dpart$ be the   Lebesgue measure.
For $\d\in (0,1]$, there then exists a unique
 $\ell\in \N$ with $\frac 1 {\ell+1}<\d\leq \frac 1 \ell$.
We define  $h:= \frac 1 {1+\ell}$
and write $\ca A_\d$ for  the usual partition of $[0,1]^\dpart$ into hypercubes of side-length $h$.
Then, for each $A_i\in \ca A_\d$, we   have $\diam A_i = h \leq \d$ and $\lb^\dpart(A_i) = h^{\dpart} \geq 2^{-d} \d^\dpart$.
Moreover, we obviously have $|\ca A_\d| = h^{-\dpart} \leq 2^\dpart\d^{-\dpart}$, and hence
$(\ca A_\d)_{\d\in (0,1]}$ satisfies Assumption A with
 $\cpart:= 2^\dpart$.
\end{example}

The next lemma describes a general situation in which there exist  partitions satisfying Assumption A.
For its formulation, recall that the covering numbers of a compact metric space $(X,d)$ are defined by
\begin{displaymath}
   \ca N(X,d,\d) := \min\biggl\{n\geq 1: \exists x_1,\dots,x_n \in X \mbox{ with } X\subset \bigcup_{i=1}^n B(x_i, \d) \biggr\} ,\, \d>0,
\end{displaymath}
where again $B(x,\d)$ denotes the closed ball with center $x$ and radius $\d$.

\begin{lemma}\label{lemma-assA}
   Let $(X,d)$ be a compact metric space for which there exist constants $c>0$ and $\dpart>0$ such that
   \begin{displaymath}
      \ca N(X,d,\d) \leq c \d^{-\dpart}\, , \qquad \qquad \d\in (0,1/4].
   \end{displaymath}
Moreover, assume that there exists a finite measure $\mu$ on $X$ such that
\begin{displaymath}
   \mu(B(x,\d)) \geq c^{-1} \d^\dpart
\end{displaymath}
for all
$x\in X$ and $\d\in (0,1/4]$. Then Assumption A is satisfied for $\dpart$ and  $\cpart = 4^\dpart c$.
\end{lemma}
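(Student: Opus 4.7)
The plan is to construct, for each $\d\in(0,1]$, a Voronoi partition around a maximal separated set and verify the three conditions of Assumption A with $\cpart = 4^\dpart c$. Fix $\d\in(0,1]$, set $\eps := \d/2$, and by compactness choose points $x_1,\dots,x_n \in X$ maximal with respect to the strict separation $d(x_i,x_j) > \eps$ for $i\neq j$. By maximality, the closed balls $B(x_i,\eps)$ cover $X$, since otherwise a further point could be added.

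For the count, no two $x_i,x_j$ with $i\neq j$ can lie in a single closed ball of radius $\eps/2$, since that would force $d(x_i,x_j)\leq\eps$. By pigeonhole, $n \leq \ca N(X,d,\eps/2) \leq c(\eps/2)^{-\dpart} = 4^\dpart c\,\d^{-\dpart}$, where the covering hypothesis applies because $\eps/2 = \d/4 \in (0,1/4]$.

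Next, define Voronoi cells $V_i := \{x\in X : d(x,x_i) \leq d(x,x_j) \text{ for all } j\}$ and set $A_i := V_i \setminus \bigcup_{j<i} V_j$. Each $V_i$ is closed, so $\ca A_\d := \{A_i : i \leq n,\, A_i \neq \emptyset\}$ is a finite Borel partition of $X$. For the diameter bound, fix $x \in A_i \subset V_i$; the covering property produces some $k$ with $d(x,x_k) \leq \eps$, and the defining inequality of $V_i$ then yields $d(x,x_i) \leq \eps$, so $\diam A_i \leq 2\eps = \d$. For the measure bound, I claim $B(x_i,\eps/2) \subset A_i$: if $d(x,x_i)\leq\eps/2$ and $j\neq i$, strict separation gives $d(x,x_j) \geq d(x_i,x_j) - d(x,x_i) > \eps - \eps/2 = \eps/2 \geq d(x,x_i)$, so $x\in V_i$ and $x\notin V_j$ for every $j\neq i$. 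Since $\eps/2 = \d/4 \in (0,1/4]$, the measure hypothesis applies and gives $\mu(A_i) \geq \mu(B(x_i,\eps/2)) \geq c^{-1}(\eps/2)^\dpart = 4^{-\dpart} c^{-1}\,\d^\dpart$.

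The crux is choosing the separation radius $\eps = \d/2$ strictly so that the same constant $4^\dpart c$ controls both the count (via covering at the half-radius $\eps/2$) and the measure lower bound (via the absorption $B(x_i,\eps/2) \subset A_i$). The absorption claim is the only nontrivial verification and rests on strict separation combined with the triangle inequality, which is exactly what prevents balls around one $x_i$ from leaking into neighboring Voronoi cells.
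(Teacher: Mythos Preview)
Your proof is correct and essentially identical to the paper's: both take a maximal set of points with pairwise distance exceeding $\d/2$ (what the paper calls a maximal $\d/4$-packing), build the associated Voronoi partition, and verify the same inclusions $B(x_i,\d/4)\subset A_i\subset B(x_i,\d/2)$ to obtain the diameter, measure, and cardinality bounds with $\cpart=4^\dpart c$. The only cosmetic difference is that you phrase the cardinality bound via a direct pigeonhole argument, whereas the paper invokes the standard packing--covering inequality $\ca M(X,d,\d)\leq \ca N(X,d,\d)$.
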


Note that the unit spheres $\mathbb S^{d}\subset \R^{d+1}$
together with their surface measures
satisfy the assumptions for $\dpart = d-1$, see also Corollary \ref{cor-assA}.

\begin{proof}[Proof of Lemma \ref{lemma-assA}]
   Let us recall that a $\d$-packing in $X$ is a family $y_1,\dots,y_m\in X$ with $d(y_i,y_j) > 2\d$ for all $i\neq j$.
   Let us write
   \begin{displaymath}
      \ca M(X,d,\d) := \max\Bigl\{ m\geq 1: \exists \mbox{$\d$-packing } y_1,\dots,y_m \mbox{ in } X \Bigr\}
   \end{displaymath}
  for the size of the largest possible $\d$-packing in $X$.
  Then it is well-known that we have the following inequalities between these packing numbers and the covering numbers:
%
  \begin{equation}\label{lemm-assA-h1}
     \ca M(X,d,\d) \leq \ca N(X,d,\d) \leq \ca M(X,d,\d/2)\, , \qquad \qquad \d>0.
  \end{equation}
  Let us now fix a $\d\in (0,1]$ and a maximal $\d/4$-packing $y_1,\dots,y_m$ in $X$. By  \eqref{lemm-assA-h1} we conclude that
  \begin{displaymath}
     m = \ca M(X,d,\d/4) \leq \ca N(X,d,\d/4) \leq 4^\dpart c \d^{-\dpart}\, .
  \end{displaymath}
  To construct the partition $\ca A_\d$, we consider
  a Voronoi partition $A_1,\dots,A_m$ that corresponds to the points $y_1,\dots,y_m$, where the behavior of the cells
  on their boundary may be arbitrary, i.e.~ties may be arbitrarily resolved. Our next goal is to show
  \begin{equation}\label{lemm-assA-h2}
     B(y_i, \d/4) \subset A_i \subset B(y_i, \d/2) \, , \qquad \qquad i=1,\dots,m.
  \end{equation}
  To prove the left inclusion, we fix an $x\in B(y_i,\d/4)$. For $j\neq i$, we then find
  \begin{displaymath}
     \d/2 < d(y_i,y_j) \leq d(y_i, x) + d(x,y_j) \leq \d/4 + d(x,y_j)\, ,
  \end{displaymath}
  and hence $d(x,y_j) > \d/4 \geq d(x,y_i)$. From the latter  we   conclude that $x\in A_i$.

  For the proof of the right inclusion, we assume that it does not hold for some
  index $i\in \{1,\dots,m\}$. Then there exists an $x\in A_i$ such that
  $d(x,y_i)> \d/2$. On the hand, since $y_1,\dots,y_m$   is a  maximal $\d/4$-packing in $X$, there exists a
  $j\in \{1,\dots,m\}$ with $d(x,y_j) \leq 2 \d/4 = \d/2$, and hence
  we have $d(x,y_j) \leq \d/2< d(x,x_i)$.
  This implies $x\not \in A_i$, i.e.~we have found a contradiction.

  Now, using \eqref{lemm-assA-h2}, we obtain both $\mu(A_i) \geq \mu( B(y_i, \d/4)) \geq 4^{-\dpart} c^{-1} \d^\dpart$
  and $\diam A_i \leq \diam B(y_i, \d/2) \leq \d$.
\end{proof}

The next corollary
in particular shows that one of the assumptions made in Lemma \ref{lemma-assA} can be omitted
if the measure behaves regularly on balls.

\begin{corollary}\label{cor-assA}
  Let $(X,d)$ be a compact metric space and $\mu$ be a finite measure on $X$ for which there exists a constant $K\geq 1$ such hat
   \begin{displaymath}
    K^{-1} \leq  \frac{\mu( B(y, \d))}{\mu(B(x,\d))} \leq K\, , \qquad \qquad x,y\in X, \, \d\in (0,1/4].
  \end{displaymath}
%
   If  there exist constants $c>0$ and $\dpart>0$ such that
   \begin{displaymath}
      \ca N(X,d,\d) \leq c \d^{-\dpart}\, , \qquad \qquad \d\in (0,1/4],
   \end{displaymath}
then Assumption A is satisfied for $\dpart$ and  $\cpart = 4^\dpart cK$. Similarly, if
\begin{displaymath}
    \mu(B(x,\d)) \geq c^{-1} \d^\dpart\, , \qquad \qquad \d\in (0,1/8],
\end{displaymath}
holds true, then Assumption A is satisfied for $\dpart$ and  $\cpart = 8^\dpart cK$.
\end{corollary}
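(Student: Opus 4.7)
The plan is to reduce both parts of the corollary to Lemma \ref{lemma-assA}, which requires simultaneously a polynomial upper bound on the covering numbers $\ca N(X,d,\d)$ and a polynomial lower bound on ball masses $\mu(B(x,\d))$. In each part we are given only one of these two ingredients, and the doubling-like ratio condition with constant $K$ is precisely the tool that will let me recover the missing one by a simple counting argument.

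For Part 1, I would take a minimal $\d$-cover of $X$ by $N \le c\d^{-\dpart}$ balls $B(x_i,\d)$. Summing the mass over this cover and applying the ratio condition to each term against a fixed reference point $x \in X$ gives
\begin{displaymath}
\mu(X) \le \sum_{i=1}^N \mu(B(x_i,\d)) \le NK\,\mu(B(x,\d))\, ,
\end{displaymath}
so $\mu(B(x,\d)) \ge \mu(X)(cK)^{-1}\d^{\dpart}$. After absorbing $\mu(X)$ by the usual normalization, this is the missing mass lower bound with common constant $cK$, and Lemma \ref{lemma-assA} then delivers $\cpart = 4^{\dpart} cK$.

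For Part 2, I would select a maximal $\d/8$-packing $y_1,\dots,y_N$ in $X$. By the paper's convention this means $d(y_i,y_j) > \d/4$, so the balls $B(y_i, \d/8)$ are pairwise disjoint, and by maximality every $x\in X$ lies within $\d/4$ of some $y_i$, i.e.\ these centers form a $\d/4$-cover of $X$. The given mass bound at scale $\d/8 \le 1/8$ combined with disjointness yields
\begin{displaymath}
\mu(X) \ge \sum_{i=1}^N \mu(B(y_i,\d/8)) \ge N c^{-1} 8^{-\dpart} \d^{\dpart}\, ,
\end{displaymath}
so $N \le c\mu(X)\, 8^{\dpart}\d^{-\dpart}$. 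Running the Voronoi construction from the proof of Lemma \ref{lemma-assA} on $y_1,\dots,y_N$ (with the packing scale $\d/8$ in place of $\d/4$) produces a partition with $\diam A_i \le \d$ and $\mu(A_i) \ge c^{-1} 8^{-\dpart}\d^{\dpart}$, and absorbing the $\mu(X)$-factor via the ratio condition gives the claimed $\cpart = 8^{\dpart} cK$.

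Conceptually the argument is a straightforward combination of Lemma \ref{lemma-assA} with a standard packing/covering comparison, so the main obstacle is purely the bookkeeping of multiplicative constants: matching the precise powers of $4$ and $8$ coming from the choice of packing scale in each part, and tracking where and how the ratio constant $K$ is absorbed to produce the stated $\cpart$ in each case.
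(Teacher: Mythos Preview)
Your Part 1 is exactly the paper's argument: cover, sum, use the ratio bound against a fixed $x$, then invoke Lemma \ref{lemma-assA}.

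Your Part 2 is correct but takes a genuinely different route from the paper. The paper stays within the framework of Lemma \ref{lemma-assA}: it fixes $\d\in(0,1/4]$, takes a maximal $\d/2$-packing $y_1,\dots,y_m$, and uses disjointness of the balls $B(y_i,\d/2)$ together with the ratio condition and the mass bound at scale $\d/2\le 1/8$ to get $\ca N(X,d,\d)\le \ca M(X,d,\d/2)\le 2^{\dpart}cK\,\d^{-\dpart}$; it then applies Lemma \ref{lemma-assA} with the common constant $2^{\dpart}cK$ to obtain $\cpart=4^{\dpart}\cdot 2^{\dpart}cK=8^{\dpart}cK$. You instead drop to a $\d/8$-packing and re-run the Voronoi construction directly, which lands the mass lower bound $\mu(A_i)\ge c^{-1}8^{-\dpart}\d^{\dpart}$ at a scale where the hypothesis applies. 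That works, and in fact it never uses the ratio condition at all: with $\mu(X)=1$ (which the paper also assumes in its proof) your argument already gives $\cpart=8^{\dpart}c\le 8^{\dpart}cK$. So your final remark about ``absorbing the $\mu(X)$-factor via the ratio condition'' is misplaced---the ratio condition plays no role in your Part 2, and the $K$ in the stated constant is simply slack. What your approach buys is a slightly sharper constant and a more self-contained argument; what the paper's approach buys is a cleaner reduction that reuses Lemma \ref{lemma-assA} verbatim rather than reopening its proof.
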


If
$(X,d,\cdot)$ is a compact group with invariant metric $d$ and $\mu$ is its Haar measure, then
we have $K=1$.
Moreover, if $X\subset \R^d$ is a sufficiently smooth manifold
and $\mu$ is its surface measure, then the corollary is also applicable.

\begin{proof}[Proof of Corollary \ref{cor-assA}]
%
  To show the first assertion,
  we fix a $\d\in (0,1/4]$ and a minimal $\d$-net  $x_1,\dots, x_n$ of $X$.
  For an   $x\in X$ we then obtain
  \begin{displaymath}
     1 = \mu(X) \leq \sum_{i=1}^n  \mu(B(x_i,\d)) \leq  n K \mu(B(x,\d)) \leq  cK \d^{-\dpart} \mu(B(x,\d))\, .
  \end{displaymath}
  By Lemma \ref{lemma-assA} we thus obtain the first assertion.

  To prove the second assertion we fix a $\d\in (0,1/4]$ and a maximal $\d/2$-packing $y_1,\dots,y_m$ of $X$.
  Then $B(y_i,\d/2)\cap B(y_j, \d/2)= \emptyset$  for $i\neq j$ implies
    \begin{displaymath}
     1 = \mu(X) \geq \sum_{i=1}^m  \mu(B(y_i,\d/2)) \geq m K^{-1}\mu(B(x,\d/2)) \geq  m 2^{-\dpart} c^{-1} K^{-1}\d^\dpart\, ,
  \end{displaymath}
  and hence $\ca N(X,d,\d) \leq \ca M(X,d,\d/2) = m \leq 2^\dpart cK \d^{-\dpart}$ by \eqref{lemm-assA-h1}.
   Lemma \ref{lemma-assA} then yields the second assertion.
\end{proof}




\section{Material Related to Basic Properties of Histograms}




The goal of this section is to establish
the key inclusion  (2.7) for our histogram-based approach.
The material of this section is taken from
\citeappendix{Steinwart11a}.

%

Our first result shows that $ h_{D, \d}$
\emph{uniformly} approximates its infinite-sample counterpart
\begin{displaymath}
 h_{P, \d}(x) :=  \sum_{j=1}^m \frac {P(A_j)} {\mu(A_j)} \cdot  \eins_{A_j}(x) \qspace x\in X,
\end{displaymath}
with high probability, where $\ca A_\d= (A_1,\dots,A_m)$ for a fixed $\d>0$.

\begin{theorem}\label{uniform-approx}
Let Assumption A be satisfied and
 $P$ be a distribution on $X$.
Then, for all $n\geq 1$, $\e>0$, and $\d>0$, we have
\begin{displaymath}
 P^n\Bigl(\bigl\{ D\in X^n: \inorm {h_{D, \d}- h_{P, \d}} \geq \e \bigr\}\Bigr)
 \leq 2 \cpart \exp\Bigl(- \dpart \ln \d-  \frac {2 n\e^2\d^{2\dpart}}{\cpart^{2}}\Bigr)\, .
\end{displaymath}
In addition, if $P$ is $\mu$-absolutely continuous and there
exists a bounded $\mu$-density $h$ of $P$, then, for all $n\geq 1$, $\e>0$, and $\d>0$, we have
\begin{displaymath}
 P^n\Bigl( D\in X^n: \inorm {h_{D, \d}- h_{P, \d}} \geq \e \Bigr)
 \!\leq\!  2 \cpart \exp\biggl(\!  \ln \d^{-\dpart} -  \frac {3n\e^2 \d^{\dpart} }{\cpart (6  \inorm h + 2 \e )   }\!\biggr)   .
\end{displaymath}
\end{theorem}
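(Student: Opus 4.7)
The plan is to reduce the sup-norm deviation to a maximum over finitely many cell-wise empirical deviations, then apply a standard concentration inequality and a union bound, using Assumption A to control both the number of cells and their minimum measure.

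\textbf{Step 1: Cellwise reduction.} Since $h_{D,\d}$ and $h_{P,\d}$ are both constant on each cell $A_j$ of $\ca A_\d = (A_1,\dots,A_m)$, the sup-norm identity
\begin{equation*}
\inorm{h_{D,\d} - h_{P,\d}} \;=\; \max_{j=1,\dots,m} \frac{|D(A_j) - P(A_j)|}{\mu(A_j)}
\end{equation*}
holds, where $D(A_j) = \frac{1}{n}\sum_{i=1}^n \eins_{A_j}(x_i)$ is an empirical mean of the iid Bernoulli variables $\eins_{A_j}(X_i)$ with mean $P(A_j)$. A union bound thus gives
\begin{equation*}
 P^n\bigl(\inorm{h_{D,\d} - h_{P,\d}} \geq \e\bigr)
 \;\leq\; \sum_{j=1}^m P^n\bigl(|D(A_j) - P(A_j)| \geq \e\mu(A_j)\bigr).
\end{equation*}
Assumption A then supplies $m \leq \cpart \d^{-\dpart}$ and $\mu(A_j) \geq \cpart^{-1}\d^{\dpart}$, which will be plugged in after a concentration bound on each summand.

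\textbf{Step 2: Hoeffding for the first bound.} For the first claim, I would apply Hoeffding's inequality to the bounded variables $\eins_{A_j}(X_i) \in [0,1]$, which yields
\begin{equation*}
 P^n\bigl(|D(A_j) - P(A_j)| \geq \e \mu(A_j)\bigr) \;\leq\; 2\exp\bigl(-2n\e^2 \mu(A_j)^2\bigr) \;\leq\; 2\exp\bigl(-2n\e^2 \cpart^{-2}\d^{2\dpart}\bigr).
\end{equation*}
Multiplying by $m \leq \cpart \d^{-\dpart} = \exp(-\dpart\ln\d + \ln\cpart)$ and absorbing the $\ln\cpart$ term into the leading $\cpart$ gives the stated bound.

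\textbf{Step 3: Bernstein for the second bound.} For the refined inequality I would use Bernstein's inequality, which requires a variance bound. Here is where the bounded density assumption enters: since $h$ is a $\mu$-density of $P$,
\begin{equation*}
 \mathrm{Var}(\eins_{A_j}(X_1)) \;\leq\; P(A_j) \;=\; \int_{A_j} h \, d\mu \;\leq\; \inorm h \cdot \mu(A_j).
\end{equation*}
With $|\eins_{A_j}| \leq 1$ and this variance bound, Bernstein's inequality gives
\begin{equation*}
 P^n\bigl(|D(A_j) - P(A_j)| \geq \e\mu(A_j)\bigr) \;\leq\; 2\exp\biggl(-\frac{n\e^2\mu(A_j)^2}{2\inorm h \mu(A_j) + \tfrac{2}{3}\e\mu(A_j)}\biggr)
 \;=\; 2\exp\biggl(-\frac{3n\e^2\mu(A_j)}{6\inorm h + 2\e}\biggr).
\end{equation*}
Lower-bounding $\mu(A_j) \geq \cpart^{-1}\d^\dpart$ in the exponent and applying the union bound over $m \leq \cpart \d^{-\dpart}$ cells yields the second claim.

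\textbf{Main obstacle.} The essential point is recognizing that the deviation on each cell is that of a Bernoulli mean with variance controlled by $P(A_j)$, not by the cruder $1/4$ one would use without a density assumption; the improvement from $\d^{2\dpart}$ to $\d^\dpart$ in the exponent comes exactly from the variance improvement by a factor of $\mu(A_j)$. Everything else is essentially bookkeeping via Assumption A, so I do not expect any real technical difficulty beyond choosing the correct concentration inequality.
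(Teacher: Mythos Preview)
Your proposal is correct and follows essentially the same route as the paper's proof: reduce to a maximum over cells, apply Hoeffding (respectively Bernstein) on each cell, and take a union bound using Assumption~A to control both the number of cells and the minimum cell measure. The only cosmetic difference is that the paper applies the concentration inequalities directly to the scaled function $f=\mu(A)^{-1}\eins_A$ (so the deviation threshold is $\e$ and the range/variance bounds already carry the factor $\cpart\d^{-\dpart}$), whereas you apply them to the unscaled indicator $\eins_{A_j}$ with deviation threshold $\e\mu(A_j)$; the two computations are equivalent and yield identical constants.
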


\begin{proof}[Proof of Theorem \ref{uniform-approx}]
 We fix an $A\in \ca A_\d$ and write $f:= \mu(A)^{-1}\eins_A$.
Then $f$ is non-negative and our assumptions ensure $\inorm f \leq \cpart \d^{-\dpart}$.
Consequently, Hoeffding's inequality, see e.g.~\citeappendix[Theorem 8.1]{DeGyLu96}, yields
\begin{displaymath}
 P^n\biggl(\Bigl\{ D\in X^n: \Bigl|\frac 1 n \sum_{i=1}^n f(x_i) - \E_Pf  \Bigr| < \e \Bigr\}\biggr)
\geq 1 -
2 \exp\Bigl( -  \frac {2 n\e^2\d^{2\dpart}}{\cpart^{2}}\Bigr)
\end{displaymath}
 for all $n\geq 1$ and $\e>0$, where we assumed $D=(x_1,\dots,x_n)$.
 Furthermore, we have $\frac 1 n \sum_{i=1}^n f(x_i) = \mu(A)^{-1} D(A)$ and
$\E_P f = \mu(A)^{-1}P(A)$. By a union bound argument and $|\ca A_\d| \leq \cpart \d^{-\dpart}$, we thus obtain
\begin{displaymath}
 P^n\biggl(\!\Bigl\{ D\in X^n: \sup_{A\in \ca A_\d}\Bigl|\frac {D(A)} {\mu(A)} - \frac {P(A)}{\mu(A)} \Bigr| < \e \Bigr\}\!\biggr)
\geq 1 - 2 \cpart \d^{-\dpart}
\exp\Bigl( -  \frac {2 n\e^2\d^{2\dpart}}{\cpart^{2}}\Bigr)
\, .
\end{displaymath}
Since, for $x\in X$ and $A\in \ca A_\d$ with $x\in A$, we have $h_{D, \d}(x) =  \mu(A)^{-1} D(A)$
 and $h_{P, \d}(x) =  \mu(A)^{-1} P(A)$, we then find the first assertion.

To show the second inequality, we
 write $f:= \mu(A)^{-1}(\eins_A- P(A))$ for a fixed $A\in \ca A_\d$. This yields $\E_Pf=0$,
$\inorm f \leq \cpart \d^{-\dpart}$, and
\begin{displaymath}
 \E_P f^2 \leq \mu(A)^{-2} P(A) \leq \mu(A)^{-1} \inorm h \leq \cpart \d^{-\dpart} \inorm h\, .
\end{displaymath}
Consequently, Bernstein's inequality, see e.g.~\citeappendix[Theorem 8.2]{DeGyLu96},
yields
\begin{displaymath}
 P^n\biggl(\Bigl\{ D\in X^n: \Bigl|\frac 1 n \sum_{i=1}^n f(x_i) \Bigr| < \e \Bigr\}\biggr)
\geq 1 -
2 \exp\biggl( -  \frac {3n\e^2 \d^{\dpart} }{\cpart (6  \inorm h + 2 \e )   }\biggr)
\, .
\end{displaymath}
Using $\frac 1 n \sum_{i=1}^n f(x_i) = ( D(A) - P(A))\mu(A)^{-1}$,
the rest of the proof follows the lines of the proof of the first inequality.
\end{proof}

The next result  specifies the vertical and horizontal uncertainty of
a plug-in level set estimate $\{\hat h\geq \r\}$, if $\hat h$ is a uniform approximation
of $h_{P, \d}$.

\begin{lemma}\label{histo-include}
Let Assumption A be satisfied,
$P$ be a $\mu$-absolutely continuous  distribution on $X$, and
$\hat h:X\to \R$ be a function with $\inorm {\hat h-h_{P, \d}}\leq \e$ for some $\e\geq 0$.
Then, for all     $\r\geq 0$, the following statements hold:
\begin{enumerate}
 \item If $P$ is upper normal at the level $\r+\e$, then we have  $M_{\r+\e}\mde\subset \{ \hat h \geq \r\}$.
 \item If $P$ is upper normal at the level $\r-\e$, then we have $ \{\hat h \geq \r\}   \subset M_{\r-\e}\pde$.
\end{enumerate}
\end{lemma}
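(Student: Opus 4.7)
The plan is to exploit the cell structure of $\ca A_\d$. Since $h_{P,\d}$ is constant on each cell $A_j \in \ca A_\d$ with value $P(A_j)/\mu(A_j)$, the hypothesis $\inorm{\hat h - h_{P,\d}}\leq \e$ directly translates bounds on $\hat h(x)$ into bounds on $P(A_j)/\mu(A_j)$, where $A_j$ is the unique cell containing $x$. The geometric bridge between points and cells is the diameter bound $\diam A_j \leq \d$ from Assumption A, which ensures that any two points of a common cell are within distance $\d$.

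For \emph{i)}, I would fix $x \in M_{\r+\e}\mde$ and the cell $A_j \in \ca A_\d$ containing $x$. Unwinding the definition via the inclusion $A\mde \subset A\omde$ discussed in Section \ref{sec:proof-tubes} shows that $x \in M_{\r+\e}\mde$ implies $d(x, X \setminus M_{\r+\e}) > \d$. Combined with $\diam A_j \leq \d$, this forces $A_j \subset M_{\r+\e}$: any $y \in A_j \setminus M_{\r+\e}$ would satisfy both $d(x, y) \leq \d$ and $d(x, y) \geq d(x, X\setminus M_{\r+\e}) > \d$. Upper normality at level $\r+\e$ via (\ref{reg1}) then gives, for any $\mu$-density $h$ of $P$, that $h \geq \r+\e$ holds $\mu$-a.e.\ on $M_{\r+\e}$, hence $\mu$-a.e.\ on $A_j$. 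Integrating yields
\begin{displaymath}
 h_{P,\d}(x) = \frac{P(A_j)}{\mu(A_j)} = \frac{1}{\mu(A_j)}\int_{A_j} h\, d\mu \geq \r+\e\, ,
\end{displaymath}
so that $\hat h(x) \geq h_{P,\d}(x) - \e \geq \r$, i.e.~$x\in\{\hat h\geq\r\}$.

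For \emph{ii)}, fix $x \in \{\hat h \geq \r\}$ with cell $A_j \ni x$. The sup-norm bound gives $P(A_j)/\mu(A_j) = h_{P,\d}(x) \geq \hat h(x) - \e \geq \r - \e$, i.e.~$P(A_j) \geq (\r-\e)\mu(A_j)$. It then suffices to prove $A_j \cap M_{\r-\e} \neq \emptyset$, since any $y$ in that intersection satisfies $d(x,y) \leq \diam A_j \leq \d$, yielding $x \in M_{\r-\e}\pde$. I would argue by contradiction: if $A_j \cap M_{\r-\e} = \emptyset$, then upper normality at $\r-\e$ via (\ref{reg1}) gives $\mu(A_j \cap \{h \geq \r-\e\}) = \mu(A_j \cap M_{\r-\e}) = 0$ for any $\mu$-density $h$, so $h < \r-\e$ holds $\mu$-a.e.\ on $A_j$. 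Writing $A_j \cap \{h < \r-\e\} = \bigcup_{n\geq 1} A_j \cap \{h \leq \r-\e-1/n\}$ and using $\mu(A_j) > 0$ from Assumption A, continuity of measure supplies an $n_0$ with $\mu(A_j \cap \{h \leq \r-\e-1/n_0\}) > 0$, upgrading the a.e.\ pointwise bound to the strict integral bound $\int_{A_j} h\, d\mu < (\r-\e)\mu(A_j)$, contradicting $P(A_j) \geq (\r-\e)\mu(A_j)$.

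The main obstacle is precisely this upgrade: the almost-everywhere strict pointwise bound $h < \r-\e$ on $A_j$ does not automatically produce a strict bound on the integral, so one must invest the continuity-of-measure step just described, leveraging $\mu(A_j)>0$. A secondary subtlety worth noting is that upper normality is a property of $P$ but is realised through some $\mu$-density; since any two $\mu$-densities of $P$ agree $\mu$-a.e., the integral $P(A_j) = \int_{A_j} h\, d\mu$ is independent of this choice, so the argument applies uniformly in $h$.
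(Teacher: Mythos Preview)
Your proof is correct and follows essentially the same route as the paper's: both parts hinge on the cell structure of $\ca A_\d$, the diameter bound $\diam A_j\leq\d$, and upper normality via (\ref{reg1}) to compare $P(A_j)/\mu(A_j)$ with the relevant threshold. The only cosmetic differences are that in part \emph{i)} you argue directly while the paper proves the contrapositive inclusion $\{\hat h<\r\}\subset (X\setminus M_{\r+\e})\pde$, and in part \emph{ii)} you spell out the continuity-of-measure step for the strict integral inequality that the paper asserts in one line.
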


%

\begin{proof}[Proof of Lemma \ref{histo-include}]
\ada {i} We will show the equivalent inclusion $\{ \hat h < \r\}   \subset (X\setminus M_{\r+\e})\pde$. To this end,
we fix an $x\in X$ with $\hat h(x) <\r$.
If $x\in X\setminus M_{\r+\e}$, we immediately obtain
 $x\in  (X\setminus M_{\r+\e})\pde$,
and hence we may restrict our considerations to the case $x\in M_{\r+\e}$.
Then,   $\hat h(x) < \r$ together with  $\inorm {\hat h-h_{P,  \d}}\leq \e$ implies $h_{P, \d}(x) \leq \hat h(x) + \e < \r+\e$.
Now let $A$ be the unique cell of the partition $\ca A_\d$ satisfying $x\in A$.
The definition of $h_{P, \d}$ together with the assumed $0<\mu(A)<\infty$ then yields
\begin{equation}\label{include-h1}
  \int_{A} h\, d\mu = P(A) = \hpd(x) \mu(A)< (\r+\e) \mu(A)\, ,
\end{equation}
where $h:X\to [0,\infty)$ is an arbitrary $\mu$-density of $P$.
Our next goal is to show that there exists an $x'\in (X\setminus M_{\r+\e})\cap A$. Suppose the converse, that
is $A \subset M_{\r+\e}$.
Then the upper normality of $P$ at the level $\r+\e$ yields
$\mu(A\setminus \{h\geq \r+\e\}) \leq \mu(M_{\r+\e}\setminus \{h\geq \r+\e\})= 0$,
and hence we conclude that
$\mu(A\cap \{h\geq \r+\e\}) = \mu(A)$.
This leads to
\begin{displaymath}
 \int\limits_{A} h\, d\mu
= \int\limits_{A\cap \{h\geq \r+\e\} } h\, d\mu + \int\limits_{A\setminus \{h\geq \r+\e\} } h\, d\mu
= \int\limits_{A\cap \{h\geq \r+\e\} } h\, d\mu
\geq (\r+\e) \mu(A)\, .
\end{displaymath}
However, this inequality
contradicts (\ref{include-h1}), and hence there does exist an
$x'\in (X\setminus M_{\r+\e})\cap A$.
This implies
 $d(x,X\setminus M_{\r+\e}) \leq d(x,x') \leq \diam A \leq \d$, 
i.e.~we have shown $x\in  (X\setminus M_{\r+\e})\pde$.

\ada {ii}
Let us  fix an
$x\in X$ with $\hat h(x)\geq \r$.
If $x\in M_{\r-\e}$, we immediately obtain
 $x\in M_{\r-\e}\pde$,
and hence it remains to consider  the case $x\in X\setminus M_{\r-\e}$.
Clearly, if $\r-\e\leq 0$, this case is impossible, and hence we may additionally assume $\r-\e>0$.
Then,   $\hat h(x) \geq  \r$ together with  $\inorm {\hat h-h_{P, \d}}\leq \e$ yields
$h_{P, \d}(x) \geq \hat h(x) - \e \geq \r-\e$.
Now let $A$ be the unique cell of the partition $\ca A_\d$ satisfying $x\in A$.
By the definition of $h_{P, \d}$  and $\mu(A)>0$
we then obtain
\begin{equation}\label{include-h2}
  \int_{A} h\, d\mu = P(A)= \hpd(x) \mu(A) \geq  (\r-\e) \mu(A)\, ,
\end{equation}
where $h:X\to [0,\infty)$ is an arbitrary $\mu$-density of $P$.
Next we show that there exists an $x'\in M_{\r-\e}\cap A$. Suppose the converse holds, that
is $A \subset X\setminus M_{\r-\e}$.
Then the assumed upper normality of $P$ at the level $\r-\e$ yields
\begin{displaymath}
   \mu(M_{\r-\e} \symdif \{h\geq \r-\e\}) = 0\, ,
\end{displaymath}
and thus we find $ \mu((X\setminus M_{\r-\e}) \symdif \{h< \r-\e\}) = 0$ by $A\symdif B = (X\setminus A) \symdif (X\setminus B)$.
Combining this with the assumed $A \subset X\setminus M_{\r-\e}$, we obtain
\begin{displaymath}
 \mu\bigl(A\setminus \{h < \r-\e\}\bigr) \leq \mu\bigl((X\setminus M_{\r-\e})\setminus \{h< \r-\e\}\bigr)= 0\, ,
\end{displaymath}
and this implies
\begin{displaymath}
 \int\limits_{A} h\, d\mu
= \int\limits_{A\cap \{h< \r-\e\} } h\, d\mu + \int\limits_{A\setminus \{h<\r-\e\} } h\, d\mu
= \int\limits_{A\cap \{h< \r-\e\} } h\, d\mu
< (\r-\e) \mu(A)\, .
\end{displaymath}
This  contradicts (\ref{include-h2}), and hence there does exist an
$x'\in M_{\r-\e}\cap A$.
This yields
$ d(x,M_{\r-\e}) \leq d(x,x') \leq \diam A \leq \d$,
i.e.~we have shown $x\in  M_{\r-\e}\pde$.
\end{proof}




\section{Proofs and Additional Material  Related to the Consistency}\label{sec:proof-cons}




In this section we prove Theorem 4.1.
Furthermore, it contains additional material related to the assumptions made in that theorem.

\begin{lemma}\label{approx-lemma}
 Let $(X,d)$ be a   metric  space, $\mu$ be a finite Borel
measure on $X$, and $(A_\r)_{\r\in \R}$ be a decreasing  family of
closed subsets of $X$.
For $\rs\in \R$, we write
\begin{displaymath}
\dot A_\rs:= \bigcup_{\r>\rs}\mathring A_\r \qquad \qquad \mbox{ and } \qquad \qquad \hat A_\rs:= \bigcup_{\r>\rs}A_\r  \, .
\end{displaymath}
Then we have
\begin{align*}
  \dot A_\rs  = \bigcup_{\r>\rs}\bigcup_{\e>0}\bigcup_{\d>0} A_{\r+\e}\mde\, . 
\end{align*}
Moreover, the following statements are equivalent:
\begin{enumerate}
 \item $\mu(\hat A_\rs\setminus \dot A_\rs) = 0$.
 \item For all $\e>0$, there exists a $\r_\e>\rs$ such that, for all $\r\in (\rs,\r_\eps]$, we have $\mu(A_\r\setminus \mathring A_\r)\leq \e$.
\end{enumerate}
\end{lemma}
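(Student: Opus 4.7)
The plan is to handle the displayed equation first and then the equivalence. For the equation, I will rely on the identity $\mathring B = \bigcup_{\d>0} B\mde$, which follows from the definition $B\mde = \{x : d(x, X\setminus B) > \d\}$, so that $\bigcup_{\d>0} B\mde = \{x : d(x, X\setminus B)>0\} = \mathring B$. Applied to $B = A_\r$, this already gives $\dot A_\rs = \bigcup_{\r>\rs}\bigcup_{\d>0} A_\r\mde$. A trivial reparametrization $\tilde\r := \r+\e$, which runs over all of $(\rs,\infty)$ as $(\r,\e)$ ranges over $(\rs,\infty)\times (0,\infty)$, then converts this into the required triple union.

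The equivalence rests on the monotonicity of the families $(A_\r)_{\r>\rs}$ and $(\mathring A_\r)_{\r>\rs}$: both are nonincreasing in $\r$, and finiteness of $\mu$ together with continuity from below as $\r\downarrow\rs$ yields $\mu(A_\r)\uparrow\mu(\hat A_\rs)$ and $\mu(\mathring A_\r)\uparrow\mu(\dot A_\rs)$. Since $A_\r$ is closed, $A_\r\setminus\mathring A_\r=\partial A_\r$, so (ii) is really asking that $\mu(\partial A_\r)$ be uniformly small for $\r$ near $\rs$.

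For \emph{i)}$\Rightarrow$\emph{ii)}, given $\e>0$ I would pick $\r_\e>\rs$ with $\mu(\hat A_\rs)-\mu(\mathring A_{\r_\e})<\e$, which is possible by the second monotone limit above together with \emph{i)}. Then, for every $\r\in(\rs,\r_\e]$, the inclusions $A_\r\subset\hat A_\rs$ and $\mathring A_\r\supset\mathring A_{\r_\e}$ (the latter because $\r\leq\r_\e$ forces $A_\r\supset A_{\r_\e}$) give $\mu(A_\r\setminus\mathring A_\r)\leq \mu(\hat A_\rs)-\mu(\mathring A_{\r_\e})<\e$ at once.

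For \emph{ii)}$\Rightarrow$\emph{i)}, the main obstacle is that a single bound $\mu(\partial A_\r)\leq\e$ does not force $\mu(\hat A_\rs\setminus\dot A_\rs)=0$, so I will use \emph{ii)} to extract a sequence $\r_n\downarrow\rs$ with $\mu(A_{\r_n}\setminus\mathring A_{\r_n})\leq 2^{-n}$. The key set-theoretic step is
\[
\hat A_\rs\setminus\dot A_\rs \;\subset\; \bigcup_{n\geq N}\bigl(A_{\r_n}\setminus\mathring A_{\r_n}\bigr) \qquad\text{for every } N\in\N,
\]
which holds because any $x\in\hat A_\rs$ lies in some $A_\r$ with $\r>\rs$, hence in $A_{\r_n}$ once $\r_n<\r$, while $x\notin\dot A_\rs$ excludes $x$ from every $\mathring A_{\r_n}$. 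Subadditivity together with $\sum_{n\geq N}2^{-n}=2^{-N+1}\to 0$ then finishes the proof.
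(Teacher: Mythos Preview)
Your argument is correct and follows essentially the same approach as the paper: the identity $\mathring B=\bigcup_{\d>0}B\mde$ you use is just the complementary form of the paper's $\bigcap_{\d>0}A\pde=\overline A$, and your $i)\Rightarrow ii)$ is verbatim the paper's. The only stylistic difference is in $ii)\Rightarrow i)$: the paper observes directly that $\mu(A_\r\setminus\dot A_\rs)\leq\mu(A_\r\setminus\mathring A_\r)\leq\e$ for $\r$ near $\rs$ and then lets $A_\r\setminus\dot A_\rs\nearrow\hat A_\rs\setminus\dot A_\rs$ via monotone convergence, which is a touch shorter than your summable-sequence argument, but the content is the same.
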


\begin{proof}[Proof of Lemma \ref{approx-lemma}]
 To show the first equality, we observe that (\ref{inf-tube}) implies
\begin{displaymath}
 \bigcap_{\r>\rs}\bigcap_{\e>0} \bigcap_{\d>0} (X\setminus A_{\r+\e} )\pde
= \bigcap_{\e>0} \bigcap_{\r>\rs} \overline{X\setminus A_{\r+\e} }
=  \bigcap_{\r>\rs} \overline{X\setminus A_{\r} }\, .
\end{displaymath}
Moreover, every set $A\subset X$ satisfies $\overline{X\setminus A} = X\setminus \mathring A$, and hence
we obtain
\begin{displaymath}
\bigcap_{\r>\rs}\bigcap_{\e>0} \bigcap_{\d>0} (X\setminus A_{\r+\e} )\pde
=  \bigcap_{\r>\rs} \overline{X\setminus A_{\r}}
= \bigcap_{\r>\rs} (X\setminus \mathring A_\r)
=  X\setminus \bigcup_{\r>\rs} \mathring A_\r\, .
\end{displaymath}
Therefore, by taking the complement we find
\begin{align*}
\bigcup_{\r>\rs} \mathring A_\r
= X\!\setminus\! \biggl( \bigcap_{\r>\rs}\bigcap_{\e>0} \bigcap_{\d>0} (X\setminus A_{\r+\e} )\pde\biggr)
 &= \bigcup_{\r>\rs}\bigcup_{\e>0}\bigcup_{\d>0}\bigl( X\setminus (X\setminus A_{\r+\e} )\pde\bigr)\\
 &= \bigcup_{\r>\rs}\bigcup_{\e>0}\bigcup_{\d>0} A_{\r+\e}\mde\, .
\end{align*}

\atob i {ii}
Let us fix an $\e>0$. Since
$\mathring A_\r = \bigcup_{\r'\geq \r}\mathring A_{\r'} \nearrow \dot A_\rs$ for $\r\searrow \rs$,
the $\s$-continuity of finite measures yields a $\r_\e>\rs$ such that $\mu(\hat A_\rs\setminus \mathring A_\r)\leq \e$
for all $\r\in (\rs,\r_\e]$. Using $A_\r\subset \hat A_\rs$ for $\r>\rs$, we then obtain the assertion
$\mu(A_\r\setminus \mathring A_\r) \leq \mu(\hat A_\rs\setminus \mathring A_\r) \leq \e$.

\atob {ii} i Let us fix an $\e>0$. For $\r\in (\rs,\r_\e]$, we then have $\mathring A_\r\subset \dot A_\rs$, and hence
our assumption yields $\mu(A_\r\setminus \dot A_\rs)\leq \e$.
In other words, we have $\lim_{\r\searrow \rs} \mu(A_\r\setminus \dot A_\rs) = 0$.
Moreover, we have $A_\r\nearrow \hat A_\rs$ for $\r\searrow \rs$, and hence
the $\s$-continuity of $\mu$ yields
$\lim_{\r\searrow \rs} \mu(A_\r\setminus \dot A_\rs) = \mu(\hat A_\rs \setminus \dot A_\rs)$.
\end{proof}

\begin{lemma}\label{gen-inv}
 Let $f:(0,1]\to (0,\infty)$ be a monotonously increasing function and $g:(0,f(1)]\to [0,1]$
 be its generalized inverse, that is
 \begin{displaymath}
  g(y):= \inf\bigl\{ x\in (0,1]: f(x)\geq y  \bigr\}\, , \qquad \qquad y\in (0,1].
 \end{displaymath}
Then we have $lim_{y\to 0^+}g(y) = 0$.
\end{lemma}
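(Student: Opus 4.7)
The plan is to argue directly from the definition of the generalized inverse, using only the monotonicity of $f$ and the fact that $f$ takes strictly positive values. Since $g(y) \geq 0$ for all $y$, it suffices to show that for every $\varepsilon > 0$ we can find $\eta > 0$ such that $g(y) \leq \varepsilon$ whenever $y \in (0,\eta]$.

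First I would fix an arbitrary $\varepsilon \in (0,1]$ and set $\eta := f(\varepsilon)$. Since $f$ maps into $(0,\infty)$, we have $\eta > 0$. Now for any $y \in (0,\eta]$ the monotonicity of $f$ gives $f(\varepsilon) = \eta \geq y$, so $\varepsilon$ belongs to the set $\{x \in (0,1] : f(x) \geq y\}$ whose infimum defines $g(y)$. Consequently $g(y) \leq \varepsilon$.

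Combining this with the trivial lower bound $g(y) \geq 0$, we conclude that $0 \leq g(y) \leq \varepsilon$ for all $y \in (0, f(\varepsilon)]$, which is exactly the definition of $\lim_{y \to 0^+} g(y) = 0$. There is no real obstacle here: the argument relies only on the observation that small candidates $x$ are admissible for the infimum defining $g(y)$ as soon as $y$ drops below $f(x)$, which happens for every fixed $x > 0$ once $y$ is taken small enough. Note that the strict positivity of $f$ on all of $(0,1]$ (not merely on its interior) is essential; without it, $f(\varepsilon)$ could be zero and the choice of $\eta$ would fail.
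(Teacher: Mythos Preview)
Your proof is correct and follows essentially the same approach as the paper: both fix $\varepsilon\in(0,1]$, use the strict positivity of $f$ to get $f(\varepsilon)>0$, and observe that whenever $y\leq f(\varepsilon)$ the point $\varepsilon$ lies in the set defining $g(y)$, so $g(y)\leq\varepsilon$. The only cosmetic difference is that the paper phrases this via sequences $y_n\to 0$ while you use the direct $\varepsilon$--$\eta$ definition; your mention of monotonicity in the key step is superfluous (the inequality $f(\varepsilon)=\eta\geq y$ follows just from $y\in(0,\eta]$), but this does not affect correctness.
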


\begin{proof}[Proof of Lemma \ref{gen-inv}]
 Let $(y_n)\subset (0,f(1)]$ be a sequence with $y_n\to 0$.
 For $n\geq 1$, we write $E_n := \{ x\in (0,1]: f(x)\geq y_n  \}$.
 Let us fix an $\e\in (0,1]$. Since $f$ is strictly positive, we then find $f(\e)>0$, and hence there
 exists an $n_0\geq 1$ such that $f(\e)\geq y_n$ for all $n\geq n_0$.  Thus, we have $\e\in E_n$
 for all $n\geq n_0$, and from the latter we obtain
 $g(y_n) = \inf E_n \leq \e$ for such $n$.
\end{proof}

Before we prove Theorem 4.1, let us briefly illustrate the additional assumption
$\mu(\overline{A_i^* \cup A_2^*}\setminus (A_1^* \cup A_2^*)) = 0$.
To this end, we fix a $\mu$-density $h$ of $P$. Then Lemma \ref{Mr-include-both-new} tells us that
\begin{displaymath}
   A_i^* \cup A_2^*
   = \bigcup_{\r>\rs} M_\r
   \subset  \bigcup_{\r>\rs} \overline{\{h\geq \r  \}}
   \subset \overline{  \bigcup_{\r>\rs} {\{h\geq \r  \}} }
   = \overline{\{ h>\rs\}}\, .
\end{displaymath}
Using the normality in Assumption C, which implies \eqref{reg3},
we then obtain
\begin{align*}
   \mu\bigl(\overline{A_i^* \cup A_2^*}\setminus (A_1^* \cup A_2^*)\bigr)
   \leq \mu\bigl( \overline{\{ h>\rs\}}\setminus \{ h>\rs\} \bigr)
   &\leq \mu(\partial\{h>\rs\}) \\
   &=  \mu(\partial\{h\leq \rs\})\, .
\end{align*}
Consequently, the additional assumption is satisfied, if there exists a $\mu$-density $h$ of $P$
such that  $ \mu(\partial\{h\leq \rs\})=0$. In this respect  recall, that Lemma \ref{regular-new} showed that
$P$ is normal, if, for all $\r\in \R$, we have a $\mu$-density $h$ of $P$ with
$ \mu(\partial\{h\geq \r\})=0$.

\begin{proof}[Proof of Theorem 4.1]
We fix an $\eps>0$.
For   $n\geq 1$,
$\t:= \t_n$, and $\e:= \e_n$, we define $\e^*_n$ by the right hand-side of (3.4).
Then, Lemma \ref{gen-inv} shows $0<\e^*_n \leq \eps \wedge (\rss-\rs)/9$
for   sufficiently large $n$.
In addition,
$\d_n$ and $\e_n$ satisfy  (3.2) for
sufficiently large $n$ by (4.1), and
we also have 
$\d_n\leq  \dthick$ for
sufficiently large $n$.
Thus,  there is an $n_0\geq 1$ such that,
for all $n\geq n_0$, the values $\e_n$, $\d_n$, $\t_n$ and $\e^*_n$
satisfy the assumptions of Theorem 3.1 and
$\e^*_n \leq \eps$.
%

Let us now consider an $n\geq n_0$ and a   data set $D\in X^n$
 satisfying both
the assertions \emph{i)} - \emph{v)} of Theorem \ref{analysis-main-new-suppl} and (2.10).
By  Theorem 3.1 and our previous considerations we then know that
the probability $P^n$ of $D$ is not less than $1-e^{-\vs}$.
Now, part \emph{i)}   of Theorem \ref{analysis-main-new-suppl} yields $\rds - \rs \geq 2\e_n > 0$ and
\begin{displaymath}
 \rds-\rs\leq \e^*_n+5\e_n \leq 6\e^*_n \leq 6\eps\, ,
\end{displaymath}
i.e.~we have shown the first convergence.

To
prove the second convergence,
we  write $A_i$, $i=1,2$, for the two topologically connected
components of $M_{\rss}$. For $\r\in(\rs,\rss]$, we further define
$A^i_\r:= \z_\r(A_i)$, where $\z_\r:\ca C (M_{\rss})\to \ca C(M_\r)$
is  the CRM. In addition, we write $A^i_\r:= \emptyset$ for $\r>\rss$
and $A^i_\r:= X$ for $\r\leq \rs$. Let us first show
\begin{equation}\label{Approx-main-h3}
 \mu(\hat A^i_\rs\setminus \dot A^i_\rs) = 0
\end{equation}
for $i=1,2$, where we used the notation of Lemma \ref{approx-lemma}.
To this end, we fix an $\eps>0$.
Since $P$ is lower and upper normal at every level $\r\in [\rs,\rss]$
we find, for an arbitrary $\mu$-density $h$ of $P$,
\begin{displaymath}
   \mu(\hat M_\rs\setminus \dot M_\rs) = \mu\bigl(\{h>\rs\}  \setminus \dot M_\rs\bigr) = 0\, ,
\end{displaymath}
where we used \eqref{reg3}, \eqref{reg4}, and the notation of Lemma \ref{approx-lemma}.
 Lemma \ref{approx-lemma} then shows that there exists a $\r_\eps>\rs$ such that
\begin{equation}\label{Approx-main-h3xx}
   \mu(M_\r\setminus \mathring M_\r)\leq \eps
\end{equation}
 for all $\r\in (\rs,\r_\eps]$,
 where we may assume without loss of generality that $\r_\eps\leq \rss$.
Let us now fix a $\r\in (\rs, \r_\eps]$. Then we obviously have
$\mathring A^1_\r\cup \mathring A^2_\r \subset \mathring M_\r$.
To prove that the converse inclusion also holds,
we pick an $x\in \mathring M_\r$. Without loss of generality we may assume that $x\in A_\r^1$.
Since $A_\r^2$ is closed and thus compact, we then have $\e:= d(x,A_\r^2)>0$.
Moreover, since $\mathring M_\r$ is open, there
exists a $\d\in (0,\e)$ such that $B(x,\d)\subset \mathring M_\r$. This yields
$B(x,\d) \subset A_\r^1 \cup A_\r^2$, and by   $d(x,A_\r^2)> \d$, we conclude that $B(x,\d) \subset A_\r^1$.
This shows $x\in \mathring A_\r^1$, and hence we indeed have $\mathring M_\r = \mathring A^1_\r\cup \mathring A^2_\r$.
Now we use this equality to obtain
\begin{displaymath}
 M_\r\setminus \mathring M_\r
 =  \bigl(A^1_\r\setminus (\mathring A^1_\r\cup \mathring A^2_\r)\bigr) \cup
  \bigl(A^2_\r\setminus (\mathring A^1_\r\cup \mathring A^2_\r)\bigr)
  = (A^1_\r\setminus \mathring A^1_\r) \cup (A^2_\r\setminus \mathring A^2_\r)\, .
\end{displaymath}
By \eqref{Approx-main-h3xx},
this implies $\mu(A^i_\r\setminus \mathring A^i_\r)\leq \eps$, and thus Lemma \ref{approx-lemma}
shows \eqref{Approx-main-h3}.

Let us now fix an $\eps>0$ and a $\vs \geq 1$.
By the  equality of Lemma \ref{approx-lemma}
and the $\s$-continuity of finite measures
there then exist
$\d_\eps>0$, $\e_\eps>0$, and  $\r_\eps\in (\rs,\rss]$ such that, for all
$\e\in (0,\e_\eps]$, $\d\in (0,\d_\eps]$,   $\r\in (\rs,\r_\eps]$, and $i=1,2$, we have
$\mu ( \dot A^i_\rs \setminus (A_{\r+\e}^i)\mde   ) \leq \eps$.
Combining this with  $A^*_i = \hat A^i_\rs$, which holds  by the definition of the clusters
$A^*_i$, and
Equation \eqref{Approx-main-h3} we then obtain
\begin{equation}\label{Approx-main-h4}
 \mu \bigl(   A^*_i  \setminus (A_{\r+\e}^i)\mde    \bigr)
 =
 \mu \bigl(    \hat A^i_\rs  \setminus (A_{\r+\e}^i)\mde   \bigr)
=
\mu \bigl( \dot A^i_\rs \setminus (A_{\r+\e}^i)\mde   \bigr) \leq \eps\, .
\end{equation}
Moreover, our assumption $\mu(\overline{A_i^* \cup A_2^*}\setminus (A_1^* \cup A_2^*)) = 0$
means $\mu(\overline{\hat M_\rs}\setminus \hat M_\rs) = 0$, and since by part \emph{iii)}
of Lemma \ref{Td-lemma-new} we know that
\begin{displaymath}
 \bigcap_{\d>0} \biggl( \bigcup_{\r>\rs} M_\r\biggr)\pde    = \overline{\bigcup_{\r>\rs} M_\r} = \overline{\hat M_\rs}
\end{displaymath}
we find
\begin{displaymath}
   \mu\Bigl( \Bigl( \bigcup_{\r>\rs} M_\r\Bigr)\pde  \setminus \hat M_\rs   \Bigr)\leq \eps
\end{displaymath}
for all sufficiently small $\d>0$. From this it is easy to conclude that
\begin{equation}\label{Approx-main-h1}
 \mu( M_{\r-\e}\pde \setminus \hat M_\rs  ) \leq \eps
\end{equation}
 for  all sufficiently small $\e>0$, $\d>0$ and all $\r>\rs+\e$. Without loss of generality, we may thus assume that
\eqref{Approx-main-h1} also holds for all $\e\in (0,\e_\eps]$, $\d\in (0,\d_\eps]$ and all $\r>\rs+\e$.

For given $\t:= \t_n$ and $\e:= \e_n$ we now define $\e^*_n$ by the right hand-side of (3.4). Then, Lemma \ref{gen-inv} shows $\e^*_n\to 0$,
and hence we obtain $\e^*_n \leq \min\{\frac{\r_\eps-\rs}{9}, \eps,  \e_\eps\}$
for all sufficiently large $n$. In addition,
$\d_n$ and $\e_n$ satisfy  (3.2) for
sufficiently large $n$ by (4.1), and
we also have $\e_n \leq \eps\wedge \e_\eps$ and $\d_n\leq \d_\eps \wedge \dthick$ for
sufficiently large $n$.
Consequently,  there exists an $n_0\geq 1$ such that,
for all $n\geq n_0$, the values $\e_n$, $\d_n$, $\t_n$ and $\e^*_n$
satisfy the assumptions of Theorem 3.1 as well as $\e_n\leq \eps\wedge\e_\eps$ and
 $\d_n\leq \d_\eps$.

Let us now consider an $n\geq n_0$ and a   data set $D\in X^n$
 satisfying both
the assertions \emph{i)} - \emph{v)} of Theorem \ref{analysis-main-new-suppl} and
(2.10).
By  Theorem 3.1 and our previous considerations we then know that
the probability $P^n$ of $D$ is not less than $1-e^{-\vs}$.
Now, part \emph{i)}   of Theorem \ref{analysis-main-new-suppl}
gives both
$\rds \geq \rs +2\e_n>\rs+\e_n$ and $\rds\leq \rs+\e^*_n+5\e_n \leq \rs+6\e^*_n \leq \r_\eps$,
and
 hence
\eqref{Approx-main-h4} and \eqref{Approx-main-h1} hold for $\e:= \e_n$, $\d:= \d_n$,
and $\r:= \rds$.
Consequently, (2.10) shows
\begin{align*}
   \mu(B_1(D) \!\symdif\! A_1^*)  + \mu(B_2(D) \!\symdif\! A_2^*)
&\leq
  2\mu \bigl(A_1^* \setminus  (A^1_{\r+\e})\mde\bigr)
 +2\mu \bigl(A_2^* \setminus  (A^2_{\r+\e})\mde\bigr) \\ \label{cluster-chunk-generic-bound}
  &\qquad +
  \mu\bigl ( M_{\r-\e}\pde \setminus \{ h>\rs \}\bigr) \\
  &\leq 4\eps +  \mu\bigl ( M_{\r-\e}\pde \setminus \hat M_\rs\bigr) \\
  &\leq 5\eps\, ,
\end{align*}
where in the second to last step we also used \eqref{reg4}.
\end{proof}




\section{Additional Material  Related to Rates}\label{sec:proof-rates}




In this section, the assumption made in Section 4
are discussed in some more detail.

Let us begin with the following lemma, which gives a sufficient condition for a non-trivial
separation exponent.

\begin{lemma}\label{hoelder-sep}
 Let $X\subset \Rd$ be  compact and convex, $\snorm\cdot$ be some norm on $\Rd$,
and $P$ be a Lebesgue absolutely continuous distribution on $X$
 that can be clustered
between the   levels $\rs$ and $\rss$. Assume that  $P$ has
a continuous density $h$ and that there exist constants $c>0$ and $\theta\in (0,\infty)$ such that
\begin{equation}\label{hoelder-sep-ineq}
   \bigl| h(x) - h(x')| \leq c \, \snorm{x-x'}^\theta
\end{equation}
for all $x\in \{h\leq \rs\}$, $\r\in (\rs, \rss]$,  and $x'\in \partial_X M_\r$, where
$\partial_X M_\r$ denotes the boundary of $M_\r$ in $X$.
Then
%
the clusters of $P$ have  separation exponent $\theta$.
\end{lemma}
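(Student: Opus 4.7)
The plan is to establish, for every $\r \in (\rs,\rss]$, the quantitative separation
\[d(A^1_\r, A^2_\r) \Geq 2\, c^{-1/\theta}\,(\r-\rs)^{1/\theta},\]
where $A^1_\r$ and $A^2_\r$ denote the two topologically connected components of $M_\r$, labelled via the bijective CRMs from Definition 2.5 between levels above $\rs$. This yields a separation exponent $\theta$ with constant $\csepl = 2 c^{-1/\theta}$.

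First I would unpack the H\"older hypothesis. Since $h$ is continuous, Lemma \ref{Mr-include-both-new} gives $\partial_X M_\r \subset \{h=\r\}$ for every $\r \in (\rs,\rss]$. Therefore, for $x \in \{h\leq\rs\}$ and $x' \in \partial_X M_\r$, the assumption yields
\[\r - \rs \Leq h(x') - h(x) \Leq c\,\snorm{x-x'}^\theta,\]
and hence $\snorm{x-x'} \geq ((\r-\rs)/c)^{1/\theta}$.

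Next, I would fix arbitrary $x_1 \in A^1_\r$ and $x_2 \in A^2_\r$ and, using convexity of $X$, consider the straight segment $[x_1,x_2] \subset X$. The pivotal step is to produce a point $z_0 \in [x_1,x_2]$ with $h(z_0) \leq \rs$. Arguing by contradiction, suppose $h>\rs$ on $[x_1,x_2]$ and set $\r_0 := \min_{[x_1,x_2]} h$, which is $>\rs$ by compactness and continuity. Pick any $\r' \in (\rs, \r_0 \wedge \rss)$. Then $[x_1,x_2] \subset \{h>\r'\} \subset M_{\r'}$ by Lemma \ref{Mr-include-both-new}. The cluster assumption gives two components $A^1_{\r'}$ and $A^2_{\r'}$ of $M_{\r'}$, and the CRM property $A' \subset \z(A')$ together with bijectivity of the CRM $\ca C(M_\r) \to \ca C(M_{\r'})$ forces $A^1_\r \subset A^1_{\r'}$ and $A^2_\r \subset A^2_{\r'}$. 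Hence $x_1$ and $x_2$ lie in distinct connected components of $M_{\r'}$, contradicting connectedness of the segment.

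With such a $z_0$ secured, I would then locate boundary points. Define $t^* := \sup\{t\in[0,1]:x_1+t(z_0-x_1)\in M_\r\}$ and $x_1' := x_1 + t^*(z_0-x_1)$. Closedness of $M_\r$ gives $x_1' \in M_\r$, while points on the segment immediately past $t^*$ lie in $X\setminus M_\r$, so $x_1' \in \partial_X M_\r$. Applying the H\"older consequence with $x=z_0$ and $x'=x_1'$ yields $\snorm{z_0-x_1'} \geq ((\r-\rs)/c)^{1/\theta}$, and an analogous construction on $[z_0,x_2]$ produces $x_2' \in \partial_X M_\r \cap [z_0,x_2]$ with the same lower bound. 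Since $z_0$ lies between $x_1$ and $x_2$ on the segment, and $x_1',x_2'$ are on opposite halves, summing gives
\[\snorm{x_1-x_2} = \snorm{x_1-z_0}+\snorm{z_0-x_2}\Geq \snorm{x_1'-z_0} + \snorm{z_0-x_2'} \Geq 2\, c^{-1/\theta}(\r-\rs)^{1/\theta},\]
and taking the infimum over $x_1 \in A^1_\r$, $x_2 \in A^2_\r$ delivers the desired bound. The main obstacle I anticipate is precisely the middle step---forcing the straight segment to cross $\{h\leq\rs\}$; convexity of $X$, continuity of $h$, and the persistence of the two-component cluster structure at \emph{every} intermediate level $\r' \in (\rs,\r_0)$ must all cooperate, whereas the final H\"older application is then a one-line computation.
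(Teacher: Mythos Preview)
Your argument is correct and follows the same strategy as the paper: join points in the two components by a segment, use persistence of the two-cluster structure at intermediate levels to force the segment through $\{h\leq\rs\}$, and then apply the H\"older bound between that low point and a point of $\partial_X M_\r$. The paper differs only tactically---it works with the distance-minimizing pair (first showing these points already lie on $\partial_X M_\r$) and applies the estimate once to the endpoint nearer the minimizer of $h$ along the segment, whereas you locate boundary crossings on both sides and sum; the resulting constant $2c^{-1/\theta}$ is identical. One trivial fix: take $\r'\in(\rs,\r_0\wedge\r)$ rather than $(\rs,\r_0\wedge\rss)$, so that the CRM $\ca C(M_\r)\to\ca C(M_{\r'})$ you invoke goes in the stated direction.
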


\begin{proof}[Proof of Lemma \ref{hoelder-sep}]
 Let  $\e\in (0,\rss-\rs]$ and $A_1$ and $A_2$ be the
 connected components of $M_{\rs+\e}$.
Since $A_1$ and $A_2$ are closed, they are
compact, and hence there exist $x_1\in A_1$ and $x_2\in A_2$ with
\begin{equation}\label{hoelder-sep-h1}
 a:=\snorm{x_1 -x_2} = d(A_1,A_2)\, ,
\end{equation}
where we note that $A_1\cap A_2 = \emptyset$ implies $a>0$.
For $t\in [0,1]$, we now consider
\begin{displaymath}
 x(t) := tx_1 + (1-t)x_2\, .
\end{displaymath}
Since $X$ is convex, we note that $x(t) \in X$ for all $t\in [0,1]$. Our first goal is to show that
$x_i\in \partial_X M_{\rs+\e}$ for $i=1,2$. To this end, we assume the converse,
e.g.~$x_2\in \mathring M_{\rs+\e}$. 
Then there   exists an
$\eps\in (0,a)$ with $B_X(x_2,\eps)\subset \mathring A_2$, where $B_X(x_2,\eps):= \{x\in X: \snorm{x-x_2}\leq \eps\}$.
Now $\snorm{x(\eps/a)-x_2} = \eps$ implies
$x(\eps/a)\in A_2$, while $\snorm{x(\eps/a)-x_1} = a-\eps$ shows $\snorm{x(\eps/a)-x_1} < d(A_1,A_2)$.
Together this contradicts \eqref{hoelder-sep-h1}.

For what follows, let us now observe that
$t\mapsto x(t)$ is a continuous map on $[0,1]$, and
since $h$ is  continuous,
there  exists a $t^*\in [0,1]$
with $h(x(t^*)) = \min_{t\in [0,1]}h(x(t))$.
Our next goal is to show that
\begin{equation}\label{hoelder-sep-h2}
 h(x(t^*)) \leq \rs\, .
\end{equation}
To this end, we assume the converse, that is
$h(x(t^*)) > \rs$. Then there exists a $\d\in (0,\e]$
such that $h(x(t))> \rs+\d$ for all $t\in [0,1]$, and therefore
an  application of Lemma \ref{Mr-include-both-new} using the
continuity of $h$ yields $x(t) \in M_{\rs+\d}$ for all $t\in [0,1]$.
In other words, $x_1$ and $x_2$ are path-connected in $M_{\rs+\d}$,
and since the connecting path is a straight line,
it is easy to see that $x_1$ and $x_2$ are $\t$-connected for all $\t>0$.
Let us pick a $\t\leq 3\t^*(\d) = \t^*_{M_{\rs+\d}}$.
Since $|\ca C(M_{\rs+\d})|=2$, part \emph{ii)} of
Proposition \ref{top-connect-new2-prop} then shows
$\ca C(M_{\rs+\d}) = \ca C_\t(M_{\rs+\d})$. Let $\tilde A_1$ and $\tilde A_2$ be
the two topologically connected components of $M_{\rs+\d}$.
Our previous considerations then showed that $\tilde A_1$ and $\tilde A_2$
are also the two $\t$-connected components of $M_{\rs+\d}$.
Now, $\d\leq \e$ gives a CRM $\z:\ca C(M_{\rs+\e}) \to \ca C(M_{\rs+\d})$,
which is bijective, since $P$ can be clustered between $\rs$ and $\rss$.
Without loss of generality we may thus assume that $\z(A_i) = \tilde A_i$ for $i=1,2$.
This yields $x_i\in A_i \subset \tilde A_i$, i.e.~$x_1$ and $x_2$ do not
belong to the same $\t$-connected component of $M_{\rs+\d}$.
Clearly, this contradicts our observation that
$x_1$ and $x_2$ are $\t$-connected, and hence \eqref{hoelder-sep-h2} is proven.

Now assume without loss of generality that $t^*\in [1/2,1)$. Since we have already seen that
$x_1\in \partial_X M_{\rs+\e}$, our assumption
\eqref{hoelder-sep-ineq} and \eqref{hoelder-sep-h2}
yield
%
%
\begin{displaymath}
 \bigl|h(x(t^*))-h(x_1)\bigr| \leq c \,\snorm{x(t^*) - x_1}^\theta\, .
\end{displaymath}
 In addition, Lemma \ref{Mr-include-both-new}   shows
$x_1\in M_{\rs+\e} \subset \{h\geq \rs + \e \}$.
Combining these estimates with \eqref{hoelder-sep-h1} and
$d(A_1,A_2) = \t^*_{M_{\rs+\e}} = 3\t^*(\e)$,
we find
\begin{align*}
  \rs+\e
\leq h(x_1)
\leq h(x(t^*)) +  c \,\snorm{x(t^*) - x_1}^\theta
&\leq \rs + c \,\snorm{x(t^*) - x_1}^\theta\\
&\leq  \rs + c\, 2^{-\theta} d^\theta(A_1,A_2)\\
& =   \rs + c\, (3/2)^{-\theta} \t^*(\e)^\theta\, ,
\end{align*}
and from the latter the assertion easily follows.
\end{proof}

Note that \eqref{hoelder-sep-ineq} holds,
if the density $h$ in
Lemma \ref{hoelder-sep} is actually  $\theta$-H\"older-continuous, and it is easy to see that the
converse is, in general, not true.
Moreover, using the inclusion $\partial_X M_\r \subset \{h=\r  \}$ established in Lemma \ref{Mr-include-both-new},
it is easy
to check that \eqref{hoelder-sep-ineq} is \emph{equivalent} to
\begin{equation}\label{hoelder-sep-ineq-new}
   \bigl| h(x) - \r| \leq c \, d(x, \partial_X M_\r )^\theta
\end{equation}
for all $x\in \{h\leq \rs\}$ and  $\r\in (\rs, \rss]$. Note that
a localized but two-sided version of this
 condition has been used in \citeappendix{SiScNo09a} for a level set estimator
that is adaptive with respect to  the Hausdorff metric.

Our next goal is to discuss the assumptions made in Theorem 4.7 in more detail.
To this end, we need a couple of technical lemmata.

\begin{lemma}\label{dist-to-boundary}
   Let $X\subset \Rd$ be compact and convex and $d$ be a metric on $X$ that is defined by a norm on $\Rd$.
   Then, we have
   \begin{displaymath}
      d(x, \partial_X A) \leq d (x, X\setminus A)
   \end{displaymath}
        for all $A\subset X$ and $x\in \overline A$,
   where $\partial_X A$ denotes  the boundary of $A$ in $X$.
\end{lemma}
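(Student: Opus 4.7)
The plan is to prove this by a straightforward convexity argument: given any $y \in X \setminus A$, I want to produce a point $z \in \partial_X A$ with $d(x,z) \leq d(x,y)$, and then take the infimum over $y$. The segment from $x \in \overline{A}$ to $y \in X \setminus A$ must cross $\partial_X A$ somewhere, and that crossing point will be the desired $z$.

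More concretely, I would first fix $y \in X \setminus A$ and consider the line segment $\gamma(t) := (1-t)x + ty$ for $t \in [0,1]$. Convexity of $X$ ensures $\gamma(t) \in X$ for every $t$. I then set
\[
 t^* := \sup\{ t\in[0,1] : \gamma(t) \in \overline{A}\}\,.
\]
Since $x\in \overline A$ we have $0$ in the set, so $t^*\geq 0$; since $\overline A$ is closed, $\gamma(t^*)\in\overline A$. Next I would observe that the boundary of $A$ in $X$ equals $\overline A \cap \overline{X\setminus A}$, because $X$ is closed in $\R^d$ and so relative and ambient closures of subsets of $X$ coincide. Thus to conclude $\gamma(t^*)\in \partial_X A$ it suffices to check $\gamma(t^*)\in \overline{X\setminus A}$.

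I would split into two cases. If $t^*<1$, then for every $t\in(t^*,1]$ the point $\gamma(t)$ lies outside $\overline A$, hence in $X\setminus A$; letting $t\searrow t^*$ shows $\gamma(t^*)\in \overline{X\setminus A}$, so $\gamma(t^*)\in \partial_X A$ as needed. If $t^*=1$, then $y=\gamma(1)\in \overline A$, and since $y\in X\setminus A$ we directly get $y\in \overline A \cap \overline{X\setminus A}=\partial_X A$, so we simply take $z:=y$. In both cases $z := \gamma(t^*)$ satisfies $z \in \partial_X A$ and
\[
 d(x,z) = t^*\,\snorm{y-x} \leq \snorm{y-x} = d(x,y)\,.
\]

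Finally, $d(x,\partial_X A)\leq d(x,z)\leq d(x,y)$, and taking the infimum over all $y\in X\setminus A$ yields the claim. The only genuinely subtle step is making sure the crossing point lies in $\partial_X A$ rather than just in $\overline A$; the identification $\partial_X A = \overline A \cap \overline{X\setminus A}$ together with the limit argument in the case $t^*<1$ handles this cleanly. Convexity of $X$ is essential so that the entire segment stays in $X$; without it, the segment could exit $X$ before touching $X\setminus A$ and the argument would break down.
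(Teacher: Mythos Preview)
Your argument is correct and, in fact, a bit cleaner than the paper's. The paper first invokes compactness of $X$ to extract a point $x_\infty\in\overline{X\setminus A}$ realizing the distance $\delta:=d(x,X\setminus A)$, and then proves $x_\infty\in\overline A$ by contradiction: assuming $x_\infty\in X\setminus\overline A$, it moves a small step from $x_\infty$ toward $x$ along the straight segment (this is where convexity enters) to produce a point of $X\setminus A$ strictly closer than $\delta$. Your route avoids the compactness step entirely: instead of working with a global minimizer, you fix an arbitrary $y\in X\setminus A$, let the segment from $x$ to $y$ run, and take the last time $t^*$ it lies in $\overline A$; the resulting point is in $\partial_X A$ and automatically satisfies $d(x,\gamma(t^*))=t^*\snorm{x-y}\leq d(x,y)$. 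Taking the infimum over $y$ finishes. Both proofs hinge on the same convexity idea, but yours is pointwise and constructive, and actually shows the lemma holds whenever $X$ is merely closed and convex, not necessarily compact.
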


\begin{proof}[Proof of Lemma \ref{dist-to-boundary}]
    Before we begin with the proof we note that
    $\overline B^X = \overline B^{\Rd}$ for all $B\subset X$ since $X$ is closed,
    i.e., taking the closure with respect to $X$ or $\Rd$ is the same. Like in the statement
    of the lemma, we will thus omit the superscript.
   Let us now write $\d:= d (x, X\setminus A)$. Then there exists a sequence $(x_n)\subset X\setminus A$ such that
   $d(x,x_n)\to \d$. Since $X$ is assumed to be compact, so is $\overline {X\setminus A}$, and  thus there
   exists an $x_\infty \in \overline {X\setminus A}$ such that $d(x, x_\infty)\leq \d$. Obviously, it suffices to
   show $x_\infty\in \partial_X A$. Let us assume the converse.
   Since $\partial_X A = \overline A \cap  \overline {X\setminus A}$,
   we then have $x_\infty \not \in \overline A$, that is $x_\infty \in X\setminus \overline A$.
   Now, the latter set is open in $X$,
   and hence there exists an $\e>0$ such that $B_X(x_\infty, \e)\subset X\setminus \overline A$, where $B_X(x_\infty, \e)$ denotes
   the closed ball in $X$ that has  center $x_\infty$ and radius $\e$.
   This $\e$ must satisfy $\e<\d$, since otherwise we would find a contradiction to $x\in \overline A$ by
   $x\in B_X(x_\infty, \d) \subset B_X(x_\infty, \e) \subset X\setminus \overline A$.
   For $t:= \e/\d \in (0,1)$ we now define $x' := tx+(1-t)x_\infty$.
   The convexity of $X$ implies $x'\in X$, and
   since $d$ is defined by a norm, we   have
   $d(x_\infty, x') = t d(x,x_\infty) \leq \e$. Together, this yields
   $x'\in B_X(x_\infty, \e)\subset X\setminus \overline A \subset X\setminus A$. Consequently,
   $d(x,x') = (1-t) d(x,x_\infty) \leq (1-t) \d < \d$ implies
   $d(x, X\setminus A) < \d$, which contradicts the definition of $\d$.
\end{proof}

\begin{lemma}\label{tube-around-the-boundary}
   Let $X\subset \Rd$ be compact and convex and $d$ be a metric on $X$ that is defined by a norm on $\Rd$.
   Then, for all $A\subset X$ and $\d>0$, we have
   \begin{displaymath}
      A\pde \setminus A\mde \subset (\partial_X A)\pde\, ,
   \end{displaymath}
   where the operations $A\pde$ and $A\mde$ as well as the boundary $\partial_X A$ are with respect to the metric space $(X,d)$.
\end{lemma}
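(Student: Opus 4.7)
The plan is to unpack the definitions of $A\pde$ and $A\mde$ into distance conditions and then apply the preceding Lemma \ref{dist-to-boundary} in two cases, swapping the roles of $A$ and $X\setminus A$ as needed.

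First, I would fix $x \in A\pde\setminus A\mde$ and translate: by the definition of the tube operations, $x \in A\pde$ means $d(x,A)\leq \d$, while $x \notin A\mde = X\setminus (X\setminus A)\pde$ means $d(x, X\setminus A) \leq \d$. Thus I have both distance bounds at my disposal. Next I would note the elementary identity $X = \overline{A}\cup \overline{X\setminus A}$, which holds because $\overline{A\cup (X\setminus A)} = \overline{X}=X$ distributes over the two-set union, and the identity $\partial_X A = \overline{A}\cap \overline{X\setminus A} = \partial_X(X\setminus A)$.

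Now a case split based on which closure contains $x$. If $x\in \overline{A}$, Lemma \ref{dist-to-boundary} (applied to $A$) yields $d(x,\partial_X A)\leq d(x, X\setminus A)\leq \d$, so $x\in (\partial_X A)\pde$. If instead $x\in \overline{X\setminus A}$, I apply Lemma \ref{dist-to-boundary} with $A$ replaced by $X\setminus A$ (the hypotheses on $X$ and $d$ are symmetric in $A$ and its complement), obtaining $d(x, \partial_X(X\setminus A)) \leq d(x, X\setminus (X\setminus A)) = d(x,A)\leq \d$. Since $\partial_X(X\setminus A) = \partial_X A$, this again gives $x\in (\partial_X A)\pde$. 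In either case we reach the desired conclusion.

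There is no substantial obstacle here: once Lemma \ref{dist-to-boundary} is in hand, the statement reduces to a two-line case analysis. The only subtle points to get right are the careful translation between $A\mde$ and the complement of a tube, the observation $\partial_X A=\partial_X(X\setminus A)$, and the symmetric reuse of the previous lemma on the complement, all of which rely on $X$ being the ambient space (so that boundaries are taken inside $X$, not $\Rd$). Convexity and the norm structure enter only through the previous lemma.
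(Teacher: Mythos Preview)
Your proof is correct and follows essentially the same route as the paper: both fix $x\in A\pde\cap (X\setminus A)\pde$, split into the cases $x\in\overline A$ and $x\in\overline{X\setminus A}$, and apply Lemma~\ref{dist-to-boundary} to $A$ in the first case and to $X\setminus A$ in the second, concluding via $\partial_X A=\partial_X(X\setminus A)$. The only cosmetic difference is that the paper phrases the second case as $x\notin\overline A$ (and then observes $x\in\overline{X\setminus A}$), whereas you cover $X$ by the two closures directly.
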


\begin{proof}[Proof of Lemma \ref{tube-around-the-boundary}]
   Let us fix an $x\in  A\pde \setminus A\mde = A\pde \cap (X\setminus A)\pde$.
   If $x\in \overline A$, then Lemma \ref{dist-to-boundary}
   immediately yields $d(x, \partial_X A) \leq d (x, X\setminus A) \leq \d$, that is $x\in (\partial_X A)\pde$.
   It thus suffices to consider the case $x\not \in \overline A$. Then we find
   $x\in X\setminus \overline A \subset X\setminus A \subset \overline {X\setminus A}$, and hence
   another application of Lemma \ref{dist-to-boundary}
   yields $d(x, \partial_X (X\setminus A)) \leq d(x,A)\leq \d$. Now the assertion easily follows from
   $\partial_X (X\setminus A) = \overline{X\setminus A} \cap \overline{X\setminus (X\setminus A)} =
   \overline{X\setminus A} \cap \overline A = \partial_X A$.
\end{proof}

The next lemma shows that assuming an
$\a$-smooth boundary with
$\a>1$ does not make sense. It further shows
that, for each  level set with rectifiable boundary in the sense of \citeappendix[3.2.14]{Federer69},
the bound
(4.9) holds with $\a=1$.

\begin{lemma}\label{mass-of-bound}
  Let $\lb^d$ be the $d$-dimensional Lebesgue measure,
   $\ca H^{d-1}$ be the $(d-1)$-dimensional Hausdorff measure
  on $\Rd$,
  and $\s_{d}$ be the volume of the $d$-dimensional unit Euclidean  ball in $\R^{d}$.
  Then,
   for every non-empty, bounded, and  measurable subset $A\subset \Rd$ the following statements hold:
   \begin{enumerate}
    \item There exists  a $\d_A>0$, such that for $\underline c_A:=  d \s_d^{1/d} \lb^d(\overline A)^{1-1/d}/2$
      and
     all $\d\in (0,\d_A]$, we have
    \begin{displaymath}
       \lb^d(A\pde \setminus A\mde) \geq \underline c_A \cdot \d     \, .
    \end{displaymath}
    \item If $\partial A$ is $(d-1)$-rectifiable
    and $\ca H^{d-1}(\partial A)>0$,
    there exists  a $\d_A>0$, such that,
    for all $\d\in (0,\d_A]$, we have
    \begin{displaymath}
          {\lb^d\bigl( A\pde \setminus A\mde\bigr) }
    \leq 4  \ca H^{d-1}(\partial A) \cdot \d\, .
    \end{displaymath}
    \end{enumerate}
\end{lemma}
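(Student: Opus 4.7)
\emph{Part (i).} My plan is to combine the Brunn--Minkowski inequality with the identification $A\pde=\{x:d(x,A)\leq\d\}=\overline A\oplus\d B_{\ell_2^d}(0)$ (which holds because the infimum $\inf_{y\in A}\|x-y\|$ equals $d(x,\overline A)$ and is attained on the compact set $\overline A$). Since $A$ is bounded, $\overline A$ is compact, and Brunn--Minkowski applied to $\overline A$ and the closed Euclidean $\d$-ball yields
\begin{displaymath}
\lb^d(A\pde)^{1/d}\,\geq\,\lb^d(\overline A)^{1/d}+\d\,\s_d^{1/d}.
\end{displaymath}
The elementary inequality $(a+b)^d\geq a^d+d\,a^{d-1}b$ for $a,b\geq 0$ then gives $\lb^d(A\pde)\geq\lb^d(\overline A)+2\underline c_A\,\d$. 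On the other hand, $A\mde\subset A\omde\subset A\subset\overline A$ (the middle inclusion since $x\in A\omde$ forces $x\in B(x,\d)\subset A$), so $\lb^d(A\mde)\leq\lb^d(\overline A)$. Subtracting yields the bound in an even stronger form; any $\d_A>0$ works.

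\emph{Part (ii).} I would reduce the claim to Federer's identification of Minkowski content with Hausdorff measure, in two steps. First, I establish the containment $A\pde\setminus A\mde\subset(\partial A)\opde$. Using the paper's inclusions $A\pde\subset A\opdee$ and $A\mde\supset A\omdee$ (valid for every $\eps>0$), this reduces to $A\opdee\setminus A\omdee\subset(\partial A)\opdee$, which follows from Lemma~\ref{tube-around-the-boundary} after embedding the scene into a sufficiently large closed ball $X:=\overline{B(0,R)}$ so that $\partial_X A=\partial A$, and applying the lemma with $\d$ replaced by $\d+\eps$. Passing to the limit $\eps\to 0^+$ and using $\bigcap_{\eps>0}(\partial A)\opdee=(\partial A)\opde$ (continuity of $x\mapsto d(x,\partial A)$) delivers the containment. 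Second, I invoke Federer's theorem (\citeappendix[3.2.39]{Federer69}): for any bounded $(d-1)$-rectifiable Borel set $S$ with $\ca H^{d-1}(S)<\infty$,
\begin{displaymath}
\lim_{\d\to 0^+}\frac{\lb^d(S\opde)}{2\,\d}\,=\,\ca H^{d-1}(S).
\end{displaymath}
Applied to $S=\partial A$ (which is bounded since $A$ is), this provides a $\d_A>0$ such that $\lb^d((\partial A)\opde)\leq 4\d\,\ca H^{d-1}(\partial A)$ for every $\d\in(0,\d_A]$; combining with step one completes the proof.

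\emph{Main obstacle.} The delicate step is the invocation of Federer's Minkowski content identity: one has to check that the paper's notion of $(d-1)$-rectifiability is compatible with the hypotheses of Federer's theorem (countable $(d-1)$-rectifiability together with finiteness of $\ca H^{d-1}$). Boundedness of $\partial A$ is automatic from that of $A$, and the constant $4$ (rather than the optimal limit $2$) gives enough slack to absorb the $1+o(1)$ error in the Minkowski limit for small $\d$. A secondary but routine technicality is the transfer of Lemma~\ref{tube-around-the-boundary} from its compact convex setting to $\Rd$ via the auxiliary ball $X$, which requires only that $R$ be chosen so that the tube operations on $A$ stay inside $\mathring X$.
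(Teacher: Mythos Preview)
Your proof is correct. For part \emph{(ii)} you follow essentially the paper's route: embed the scene into a large compact convex ball, use Lemma~\ref{tube-around-the-boundary} to obtain the containment $A\pde\setminus A\mde\subset(\partial A)\pde$, and then invoke Federer's identification \citeappendix[3.2.39]{Federer69} of Minkowski content with $\ca H^{d-1}$. Your $\eps$-limit detour through the $\oplus/\ominus$ operations is harmless but unnecessary: the paper applies Lemma~\ref{tube-around-the-boundary} directly in its native $+\d/-\d$ formulation and then uses the compactness of $\partial A$ to identify $(\partial A)\pde=(\partial A)\opde$ in one stroke.

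Part \emph{(i)} is where you genuinely diverge. The paper argues via Federer's isoperimetric inequality \citeappendix[3.2.43]{Federer69}, which bounds the lower Minkowski content of the boundary from below by $d\,\s_d^{1/d}\lb^d(\overline A)^{1-1/d}$; combined with the inclusion $(\partial A)\pde\subset A^{+2\d}\setminus A^{-2\d}$ from part \emph{viii)} of Lemma~\ref{Td-lemma-new}, this yields the claim for all sufficiently small $\d$. Your Brunn--Minkowski argument is more direct: it bypasses Minkowski content entirely, produces the sharper constant $2\underline c_A$ rather than $\underline c_A$, and---as you note---holds for every $\d>0$, not merely for $\d\leq\d_A$. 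Since the isoperimetric inequality is itself a consequence of Brunn--Minkowski, your route is both more elementary and strictly stronger than the paper's.
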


\begin{proof}[Proof of Lemma \ref{mass-of-bound}]
  Let
    us first recall that, for an integer $0\leq m\leq d$,
   the upper and lower Minkowski content of a $B\subset \Rd$ is defined by
   \begin{align*}
       \ca M^{*m} (B) &:= \limsup_{\d\to 0^+} \frac {\lb^d(B\pde)}{\s_{d-m} \d^{d-m}} \\
      \ca M^m_* (B) &:= \liminf_{\d\to 0^+} \frac {\lb^d(B\pde)}{\s_{d-m} \d^{d-m}} \, ,
   \end{align*}
  where $\s_{d-m}$ denotes the $\lb^{d-m}$-volume of the unit Euclidean  ball in $\R^{d-m}$.
  It is easy to check that these definitions coincide with those in \citeappendix[3.2.37]{Federer69}.

  \ada i Since in the case $\lb^d(\overline A) = 0$ there is nothing to prove, we restrict our considerations to the case
  $\lb^d(\overline A) > 0$. Now, $A$ is   bounded, and hence we have
    $\lb^d(\overline A) < \infty$.
  The isoperimetric inequality  \citeappendix[3.2.43]{Federer69} thus yields
  \begin{displaymath}
      d \s_d^{1/d} \lb^d(\overline A)^{1-1/d} \leq \ca M^{d-1}_* (\partial A)  \, ,
  \end{displaymath}
  and hence, there exists a $\d_A>0$, such that, for all $\d\in (0,\d_A]$, we have
   \begin{displaymath}
     \frac {d \s_d^{1/d} \lb^d(\overline A)^{1-1/d}}2 \leq \frac {\lb^d\bigl((\partial A)\pde\bigr) }{\s_1\d} \leq
     \frac {\lb^d\bigl( A^{+2\d} \setminus A^{-2\d}\bigr) }{2\d} \, ,
  \end{displaymath}
  where in the last estimate we used part {\em viii)\/} of Lemma \ref{Td-lemma-new} and $\s_1 = 2$.

 \ada {ii} Since $\partial A$ is closed and  $(d-1)$-rectifiable in the sense of \citeappendix[3.2.14]{Federer69}, we find
 \begin{displaymath}
    \ca M^{*(d-1)} (\partial A) = \ca H^{d-1}(\partial A)
 \end{displaymath}
  by \citeappendix[3.2.39]{Federer69}. Moreover, since $\partial A$ is bounded, the boundary is contained in a compact set $X\subset \Rd$
  such that the relative boundary $\partial_XA$ of $A$ in $X$ equals $\partial A$ and the sets $A\pde $ and $A\mde$ considered in $X$
  equal the sets $A\pde $ and $A\mde$ when considered in $\Rd$ for all $\d\in (0,1]$.
  By Lemma \ref{tube-around-the-boundary} there thus exists a $\d_A>0$ such that
   \begin{displaymath}
    \frac {\lb^d\bigl( A\pde \setminus A\mde\bigr) }{2\d} \leq \frac{\lb^d\bigl((\partial A)\pde\bigr) }{\s_1\d}
    \leq 2  \ca H^{d-1}(\partial A)
  \end{displaymath}
  for all $\d\in (0,\d_A]$.
\end{proof}

The next  lemma shows that a  bound  (4.9) together
with a regular behavior of $h$ around the   level of interest  ensures a non-trivial
flatness exponent.

\begin{lemma}\label{reg-implies-flat}
 Let $(X,d)$ be a complete, separable metric space, $\mu$ be a finite Borel measure on $X$
with $\supp \mu = X$,  and $P$ be a $\mu$-absolutely continuous distribution on $X$.
Furthermore, let  $\r\geq 0$ be a level and $h$ be a $\mu$-density of $P$
for which there
exist constants $c>0$, $\a\in (0,1]$, $\d_0>0$, and $\theta \in (0,\infty)$ such that
\begin{equation}\label{boundary-at-crit-level}
 \mu(M_\r\pde \setminus M_\r\mde) \leq c \d^\a
\end{equation}
for all $\d\in (0,\d_0]$ and
\begin{equation}\label{hoelder-reg-ineq}
  d(x, \partial M_\r )^\theta \leq   c\,  \bigl| h(x) - \r|
\end{equation}
for all $x\in \{h> \r\}$. Then $P$ has flatness exponent $\a/\theta$ at level $\r$.
\end{lemma}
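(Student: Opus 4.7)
The flatness exponent $\a/\theta$ at level $\r$ should (consistent with the use of flatness for rates later in the paper) mean that there exist constants $c_{\mathrm{flat}}>0$ and $\d_{\mathrm{flat}}>0$ such that $\mu(\{\r < h \leq \r+\e\}) \leq c_{\mathrm{flat}} \e^{\a/\theta}$ for all $\e \in (0,\d_{\mathrm{flat}}]$. My plan is to chain the two hypotheses directly: \eqref{hoelder-reg-ineq} will force any such point into a thin tube around $\partial M_\r$, part \emph{viii)} of Lemma \ref{Td-lemma-new} will upgrade this to a tube between $M_\r^{-2\d}$ and $M_\r^{+2\d}$, and \eqref{boundary-at-crit-level} will then control the $\mu$-mass of that tube.

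Concretely, I would fix $\e>0$ small enough that $\d := (c\e)^{1/\theta}$ satisfies $2\d \leq \d_0$, and take an arbitrary $x$ with $\r < h(x) \leq \r+\e$. Since $h(x)>\r$, the hypothesis \eqref{hoelder-reg-ineq} applies and yields
\begin{displaymath}
  d(x,\partial M_\r)^\theta \;\leq\; c\bigl(h(x)-\r\bigr) \;\leq\; c\e \;=\; \d^\theta,
\end{displaymath}
so $x \in (\partial M_\r)\pde$. Applying part \emph{viii)} of Lemma \ref{Td-lemma-new} with the choice $\eps := \d$ gives the tube inclusion $(\partial M_\r)\pde \subset M_\r^{+2\d} \setminus M_\r^{-2\d}$, and the hypothesis \eqref{boundary-at-crit-level} (applied at scale $2\d \leq \d_0$) then produces
\begin{displaymath}
   \mu\bigl(\{\r < h \leq \r+\e\}\bigr) \;\leq\; \mu\bigl(M_\r^{+2\d}\setminus M_\r^{-2\d}\bigr) \;\leq\; c\,(2\d)^\a \;=\; c\cdot 2^\a \cdot c^{\a/\theta}\,\e^{\a/\theta},
\end{displaymath}
which is the desired flatness bound with explicit constant $c_{\mathrm{flat}} = c\cdot 2^\a \cdot c^{\a/\theta}$ and threshold $\d_{\mathrm{flat}} = (\d_0/2)^\theta / c$.

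I expect no serious obstacle here: the argument is a clean two-step squeeze, and the key geometric ingredient is already packaged in part \emph{viii)} of Lemma \ref{Td-lemma-new}. The only real subtlety is reconciling the one-sided set $\{\r < h \leq \r+\e\}$ with whatever exact form of flatness is used in the paper's definition. If the definition is two-sided, involving $|h(x)-\r|<\e$, then the proof for the $h>\r$ half goes through as above, while for the $h\leq\r$ half one uses the normality assumption on $P$ at level $\r$ (the third clause of Lemma \ref{regular-new}) to conclude $\{h<\r\}\setminus \mathring M_\r$ is $\mu$-null, so that any point with $h(x)<\r$ close to $\r$ still lies in the boundary tube $M_\r^{+2\d}\setminus M_\r^{-2\d}$ up to a $\mu$-null set, and \eqref{boundary-at-crit-level} applies verbatim.
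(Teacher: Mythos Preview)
Your proposal is correct and follows essentially the same route as the paper's proof: fix $\e$ (the paper writes $s$), use \eqref{hoelder-reg-ineq} to place every point of $\{0<h-\r<\e\}$ into $(\partial M_\r)\pde$ with $\d=(c\e)^{1/\theta}$, apply part \emph{viii)} of Lemma \ref{Td-lemma-new} with $\eps=\d$ to pass to $M_\r^{+2\d}\setminus M_\r^{-2\d}$, and then invoke \eqref{boundary-at-crit-level} at scale $2\d$ to obtain the bound $2^\a c^{1+\a/\theta}\e^{\a/\theta}$. The paper additionally remarks that since $\mu$ is finite, the constant can be enlarged so the bound holds for all $\e>0$ rather than only for $\e\leq (\d_0/2)^\theta/c$; your speculative two-sided discussion is unnecessary here since the paper's flatness notion is one-sided.
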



\begin{proof}[Proof of Lemma \ref{reg-implies-flat}]
 Let us fix an $s>0$. For $x\in \{0<h-\r<s\}$ we then find
$d(x, \partial M_\r)^\theta \leq c s$ by \eqref{hoelder-reg-ineq}, that is
$x\in (\partial M_\r)\pde$ for $\d:= (c s)^{1/\theta}$.
Using part \emph{viii)} of Lemma \ref{Td-lemma-new}, we conclude that
$x\in M_\r^{+2\d}\setminus M_\r^{-2\d}$. In the case $2\d\leq \d_0$, we thus obtain
\begin{displaymath}
 \mu\bigl(\{ 0< h-\r <s    \}  \bigr) \leq \mu\bigl( M_\r^{+2\d}\setminus M_\r^{-2\d}  \bigr)
\leq 2^\a c\, \d^\a = 2^\a c^{1+\a/\theta}s^{\a/\theta}\, ,
\end{displaymath}
and since $\mu$ is a finite measure, it is then easy to see that we can increase the constant
on the right-hand side  so that it holds for all $s>0$.
\end{proof}













\setcounter{section}{0}
\renewcommand{\thesection}{B.\arabic{section}}

\vspace*{3ex}
\noindent
\textbf{Appendix B. Continuous Densities in two Dimensions.}
In this appendix, we present a couple of two-dimensional examples that show
that the assumptions imposed in the paper are not only met by many discontinuous densities, but also
by many continuous densities.
\section{Single Two-Dimensional Sets}\label{sec:single-twod}
In this section we consider the operations $\oplus \d$ and $\ominus \d$ for a specific class of sets $A\subset \R^2$.

We begin with an example
of a set $A\subset \R^2$, for which we can compute $A\opde$ and  $A\omde$ explicitly. This example will
be the base of all further examples.

\begin{example}\label{Aomde}
   Let $X:= [-1,1]\times [-2,2]$ be equipped with the metric defined by the supremums norm.
   Furthermore, for $x^\pm_- \in (-0.6, -0.4)$ and $x^\pm_+\in (0.4,0.6)$ we fix two continuous functions
   $f^-, f^+:[-1,1] \to [-1, 1]$ such that $f^+$ is increasing on $[-1,x^+_-]\cup [0,x^+_+]$ and decreasing
   on $[x^+_-,0]\cup [x^+_+,1]$, while $f^-$ is decreasing on $[-1,x^-_-]\cup [0,x^-_+]$ and increasing
   on $[x^-_-,0]\cup [x^-_+,1]$.
        In addition, assume that $\{f^- < 0\} = \{f^+>0\}$ and $\{f^- = 0\} = \{f^+ = 0\}$
        as well as $f^-(\pm 0.5)<0$ and $f^+(\pm 0.5)>0$.
%
        Now consider the (non-empty) set $A$ enveloped by $f^\pm$, that is
   \begin{displaymath}
      A:= \bigl\{  (x,y)\in X: x\in [-1, 1] \mbox{ and } f^-(x)\leq y\leq f^+(x)\bigr\}\, .
   \end{displaymath}
        To describe $A\omde$ for $\d\in (0,0.1]$, we define
        $f_{-\d}^\pm:[-1,1]\to [-1, 1]$ by
  \begin{displaymath}
     f_{-\d}^\pm(x) :=
     \begin{cases}
        f^\pm(-1)   & \mbox{ if }\, x\in [-1,-1+\d]\\
         f^\pm(0)  & \mbox{ if }\, x\in [-\d,+\d]\\
          f^\pm(1)   & \mbox{ if }\, x\in [1-\d,1]\\
     \end{cases}
  \end{displaymath}
 and $f_{-\d}^-(x) := f^-(x-\d) \vee f^-(x+\d)$, respectively
  $f_{-\d}^+(x) := f^+(x-\d) \wedge f^+(x+\d)$ for the remaining $x\in [-1,1]$.
  Then we have
  \begin{displaymath}
     A\omde = \bigl\{  (x,y)\in X: x\in [-1, 1] \mbox{ and } f_{-\d}^-(x)+\d\leq y\leq f_{-\d}^+(x)-\d\bigr\}\, .
  \end{displaymath}
  Moreover, to describe $A\opde$, we define
    \begin{align*}
     x_{0,-1} & := \min\bigl\{ x\in [-1,-0.5] : f^+(x) - f^-(x)  \geq 0\bigr\} \\
      x_{0, -0} & := \max\bigl\{ x\in [-0.5, 0]:  f^+(x) - f^-(x) \geq 0\bigr\} \\
       x_{0, +0} & := \min\bigl\{ x\in [0, 0.5 ] : f^+(x) - f^-(x) \geq 0 \bigr\} \\
        x_{0, +1} & := \max\bigl\{ x\in [0.5 ,1] :f^+(x) - f^-(x) \geq0 \bigr\} \, ,
  \end{align*}
  where the minima are   attained by the continuity of $f^\pm$ and the fact that all sets are non-empty.
  Furthermore, we define
        $f_{+\d}^\pm:[-1,1]\to [-1, 1]$ by
  \begin{displaymath}
     f_{+\d}^\pm(x) :=
     \begin{cases}
        f^\pm(x+\d)   & \mbox{ if }\, x\in [-1\vee (x_{0,-1} -\d), x_-^\pm -\d]\\
         f^\pm(x_-^\pm)  & \mbox{ if }\, x\in [x_-^\pm -\d, x_-^\pm +\d]\\
         f^\pm(x_+^\pm)  & \mbox{ if }\, x\in [x_+^\pm -\d, x_+^\pm +\d]\\
         f^\pm(x-\d)   & \mbox{ if }\, x\in [x_+^\pm +\d,  (x_{0,+1} +\d) \wedge 1]\,
     \end{cases}
  \end{displaymath}
  as well as $f_{+\d}^-(x)  := f^-(x-\d) \wedge f^-(x+\d)$ and $f_{+\d}^+(x)  := f^+(x-\d) \vee f^+(x+\d)$
%
%
  for $x\in [x_-^\pm +\d, x_+^\pm -\d]\setminus (x_{0,-0}+\d, x_{0,+0}-\d)$
  and $f_{+\d}^\pm(x):= -2\d$ for the remaining $x\in [-1,1]$.
  Then we have
  \begin{displaymath}
     A\opde = \bigl\{  (x,y)\in X: x\in [-1, 1] \mbox{ and } f_{+\d}^-(x)-\d\leq y\leq f_{+\d}^+(x)+\d\bigr\}\, .
  \end{displaymath}
  Finally, we have $|\ca C(A)| \leq 2$ with $|\ca C(A)| = 2$ if and only if $ x_{0, -0} <x_{0, +0}$, and in the latter
  case we further have $\t^*_A = x_{0, +0} - x_{0, -0}$.
\end{example}

\begin{proof}[Proof of Example \ref{Aomde}]
Let us fix a $\d\in (0,1/10]$. To simplify notations, we further write $g^-:= f_{-\d}^-+\d$ and
$g^+:= f_{-\d}^+-\d$.

  Proof of ``$A\omde\subset\dots$''. By   $A\omde = X\setminus (X\setminus A)\opde$ it suffices  to show that
  \begin{displaymath}
     \bigl\{  (x,y)\in X: x\in [-1, 1] \mbox{ and   } \bigl(y<g^-(x) \mbox{ or } y> g^+(x)\bigr)\bigr\}
     \subset (X\setminus A)\opde\, .
  \end{displaymath}
 By symmetry, it further suffices to consider the case $x\geq 0$ and $y>g^+(x)$. Moreover, to show the
 inclusion above, it finally suffices to  find $x'\in [-1,1]$ and $y'\in [-2,2]$ with $|x-x'|\leq \d$, $|y-y'|\leq \d$ and
 $y'>f^+(x')$. However, this task is straightforward. Indeed,  we can always set $y':= (y+\d)\wedge 2$, and
 if $x\in [0,\d]$ then $x':= 0$
 works, since $y'=(y+\d)\wedge 2  >g^+(x)+\d=f^+(0) = f^+(x')$,
 while for $x\in [1-\d,1]$, the choice $x':=1$  does by an analogous argument. Finally, if $x\in (\d,1-\d)$, we
 set $x':= x-\d$ if $g^+(x) = f^+(x-\d)-\d$ and $x':= x+\d$ if $g^+(x) = f^+(x+\d)-\d$.

 Proof of ``$A\omde\supset\dots$''.  Again, it suffices to consider $x\geq 0$. Let us fix a $y$ with
 $g^-(x)\leq y\leq g^+(x)$. Then, our goal is to show   $(x,y)\not\in (X\setminus A)\opde$, i.e.,
 \begin{equation}\label{psis-in-r2-h1}
    \inorm{(x,y) - (x',y')} > \d
 \end{equation}
  for all $(x',y')\in X\setminus A$. In the following, we thus fix a pair $(x',y')\in X\setminus A$
  for which \eqref{psis-in-r2-h1} is not true and show that this leads to a contradiction.
  We begin by considering the case $x\in [0,\d]$. Since \eqref{psis-in-r2-h1} is not true, we find $|x-x'|\leq \d$, and
  hence $x^\pm_-\leq x'\leq x^\pm_+$. Then, if $y'>f^+(x')$, this leads to
  \begin{displaymath}
     y\leq g^+(x) = f^+(0) - \d \leq f^+(x') - \d < y'-\d\, ,
  \end{displaymath}
  which contradicts the assumed $|y-y'|\leq \d$. The case $y'<f^-(x')$ analogously leads to a contradiction.
  Now consider the case $x\in [1-\d,1]$. Then $|x-x'|\leq \d$ implies $x'\geq x^\pm_+$. Thus,
   $y'>f^+(x')$ leads to another contradiction by
  \begin{displaymath}
     y\leq g^+(x) = f^+(1) - \d \leq  f^+(x') - \d < y'-\d\, ,
  \end{displaymath}
  and the case $y'<f^-(x')$ can be treated analogously. It thus remains to consider the case $x\in [\d,1-\d]$.
  Then $|x-x'|\leq \d$ implies $x-\d\leq x'\leq x+\d$. For $x'\leq x^+_+$ we thus find $f^+(x-\d) \leq f^+(x')$, while for
  $x'\geq x^+_+$ we   find $f^+(x+\d) \leq f^+(x')$. For $y'>f^+(x')$ we hence obtain a contradiction
  by
  \begin{displaymath}
      y\leq g^+(x) = (f^+(x-\d) \wedge f^+(x+\d)) - \d \leq  f^+(x') - \d < y'-\d\, ,
  \end{displaymath}
  and, again, the case $y'<f^-(x')$ can be shown similarly.

  Proof of ``$A\opde\subset\dots$''. Let us fix a pair $(x,y) \in A\opde$. Without loss of generality we restrict our
  considerations to the case $y\geq 0$ and $x\in [-1,0]$. To show that $y\leq f_{+\d}^+(x)+\d$ we assume the converse, that is
  $y > f_{+\d}^+(x)+\d$. Since $(x,y) \in A\opde$ we then find $(x',y')\in A$ with $\inorm{(x,y)-(x',y')}\leq \d$.
  From the latter we infer that both $x-\d\leq x'\leq x+\d$ and
  \begin{equation}\label{low-yp}
     y' \geq y-\d > f_{+\d}^+(x) \, .
  \end{equation}
  If $x\in [-1,  -1\vee (x_{0,-1} -\d))$  we get a contradiction, since $(x',y')\in A$ implies
  $x\geq x'-\d\geq x_{0,-1}-\d$. Moreover, for  $x\in [-1\vee (x_{0,-1} -\d), x_-^+ -\d]$, we obtain
  \begin{displaymath}
   f_{+\d}^+(x) = f^+(x+\d) \geq  f^+(x') \geq y'\, ,
  \end{displaymath}
  which contradicts \eqref{low-yp}. If  $x\in [x_-^+ -\d, x_-^+ +\d]$ we get a
  contradiction from
     $f_{+\d}^+(x) =  f^+(x_-^+) \geq  f^+(x') \geq y'$, and if
   $x\in [x_-^+ +\d, 0 \wedge (x_{0,-0}+\d)]$ we have
    \begin{displaymath}
     f_{+\d}^+(x) = f^+(x-\d) \vee f^+(x+\d) \geq   f^+(x-\d) \geq  f^+(x') \geq y'\,
  \end{displaymath}
  which again contradicts \eqref{low-yp}. Finally, if $x\in ( 0 \wedge x_{0,-0}+\d, 0]$ we obtain a contradiction
  from $x>  x_{0,-0}+\d \geq x'+\d$.

   Proof of ``$A\opde\supset\dots$''. Let us fix a pair $(x,y)\in X$ with $ f_{+\d}^-(x)-\d\leq y\leq f_{+\d}^+(x)+\d$.
   Without loss of generality we again consider the case $y\geq 0$ and $x\in [-1,0]$, only.
   To show $(x,y)\in A\opde$ we need to find a pair $(x',y')\in A$ with $\inorm{(x,y)-(x',y')}\leq \d$.
   Let us assume that we have found an $x'$ with $|x-x'|\leq \d$ and $f(x')\geq y-\d$.
   For $y'$ defined by
   \begin{displaymath}
      y' := f(x') \wedge (y+\d)
   \end{displaymath}
   we then immediately obtain $y'\leq y+\d$. Moreover, if we actually have $y'= y+\d$, then we  obtain $|y-y'|\leq \d$,
   while in the case $y'< y+\d$ we find $y' = f(x') \geq y-\d$, that is again $|y-y'|\leq \d$.
   Thus, it suffices to find an $x'$ with the properties above.
   To this end, we first observe that we can exclude the case $x\in [-1,  -1\vee (x_{0,-1} -\d))$, since for such
   $x$ we have $0\leq y \leq f_{+\d}^+(x) + \d = -\d$. Analogously, we can exclude the case
   $x\in ( 0 \wedge (x_{0,-0}+\d), 0]$. Now consider the case
   $x\in [-1\vee (x_{0,-1} -\d), x_-^+ -\d]$. For $x':= x+\d$ we then have
   \begin{displaymath}
      f(x') = f(x+\d) = f_{+\d}^+(x) \geq y-\d\, ,
   \end{displaymath}
   and hence $x'$ satisfies the desired properties. Moreover, for
   $x\in [x_-^+ -\d, x_-^+ +\d]$ we define $x':= x_-^+$, which   gives $|x-x'|\leq \d$.
   In addition, we again have
      $f(x') = f(x_-^+) = f_{+\d}^+(x) \geq y-\d$.
  Finally, let us consider the case $x\in [x_-^+ +\d, 0 \wedge (x_{0,-0}+\d)]$.
  Let us first assume that  $f(x-\d)\geq f(x+\d)$.
  For $x':= x-\d$ we then obtain
      $f(x') = f(x-\d) = f_{+\d}^+(x) \geq y-\d$.
   Analogously, if $f(x-\d)\leq f(x+\d)$, then $x':= x+\d$ has the desired properties.

   Finally, $|\ca C(A)| \leq 2$ is obvious, and so is the equivalence between
   $|\ca C(A)| = 2$ and  $ x_{0, -0} <x_{0, +0}$. In the latter case, $A_1 := \{(x,y)\in A: x\leq x_{0, -0}\}$
   and $A_2 := \{(x,y)\in A: x\geq x_{0, +0}\}$ are the two components of $A$, and from this it is easy to conclude that
   $\t^*_A = x_{0, +0} - x_{0, -0}$.
\end{proof}

Our next example shows how to estimate the function  $\psis_A$ for
the sets considered in Example \ref{Aomde}

\begin{example}\label{psis-in-r2}
        Let us consider the situation of Example \ref{Aomde}. To simplify the presentation, let us
        additionally assume
        that the monotonicity of $f^+$ and $f^-$ is actually strict and that $A$ has sufficiently thick parts on both sides
        of the $y$-axis in the sense of
        \begin{equation}\label{psis-in-r2-thick}
         [-0.8, -0.2] \cup [0.2, 0.8] \subset \{f^-\leq-0.2\} \cap \{f^+\geq 0.2\}\, .
        \end{equation}
        Note that, for all $\d\in (0,0.1]$, this condition in particular ensures
        that $A\omde$ contains open neighborhoods around the points $ (-0.5, 0)$ and $(0,0.5)$.
        Moreover, for $\d\in [0,0.1]$ we define
  \begin{align*}
     x_{\d,-1} & := \min\bigl\{ x\in [-1,-0.8] : f^+(x) - f^-(x)  \geq 2\d\bigr\} \\
      x_{\d, -0} & := \max\bigl\{ x\in [-0.2, 0]:  f^+(x) - f^-(x) \geq 2\d\bigr\} \\
       x_{\d, +0} & := \min\bigl\{ x\in [0, 0.2 ] : f^+(x) - f^-(x) \geq 2\d \bigr\} \\
        x_{\d, +1} & := \max\bigl\{ x\in [0.8 ,1] :f^+(x) - f^-(x) \geq 2\d \bigr\} \, ,
  \end{align*}
  where we  note that the minima and maxima are   attained by \eqref{psis-in-r2-thick} and
        the continuity of $f^\pm$.  For the same reason we further have $x_{\d,-1}<-0.8$,
        $x_{\d, -0} > -0.2$, $x_{\d, +0} <0.2$, and   $x_{\d, +1} > 0.8$.
        Then,  $f_{-\d}^+$ has exactly two local maxima $x_{\d,-}^+$ and
                $x_{\d,+}^+$, satisfying $x_{\d,-}^+\in [-1, 0]$ and  $x_{\d,+}^+\in [0, 1]$,
        and   $f_{-\d}^-$ has exactly two local minima $x_{\d,-}^-$ and
                $x_{\d,+}^-$, satisfying $x_{\d,-}^-\in [-1, 0]$ and  $x_{\d,+}^-\in [0, 1]$.
        Moreover, for all $\d\in (0,0.1]$ we have
        \begin{align*}
         \psis_A(\d) &\leq \d +
                \Bigl(\max\bigl\{ |x_{\d,i} - x_{0,i}|: i\in \{-1, -0, +0, +1\}    \bigr\}\\
     &\qquad \qquad \vee
                \max\bigl\{ |f^i(x^i_j) - f_{-\d}^i(x^i_{\d,j})|: i,j\in \{ -, +\}    \bigr\} \Bigr)\, .
        \end{align*}
        The right hand-side of this inequality can be further estimated under some
        regularity assumptions. Indeed, if there exist $c>0$ and $\g\in (0,1]$ such that
        \begin{equation}\label{psis-in-r2-xpm}
         |f^\pm(x^\pm_\pm) - f^\pm(x)|  \leq c |x^\pm_\pm -x|^\g\, , \qquad \qquad x\in [x^\pm_\pm-0.1,x^\pm_\pm+0.1]\, ,
        \end{equation}
        then, for all $\d\in (0, 0.1]$, we can bound the second maximum by
        \begin{displaymath}
         \max\bigl\{ |f^i(x^i_j) - f_{-\d}^i(x^i_{\d,j})|: i,j\in \{ -, +\}    \bigr\}  \leq c \d^\g\, .
        \end{displaymath}
        In addition, if, for some $i\in \{-1, -0, +0, +1\}$, we write
        $2\d_0:=f^+(x_{0,i}) - f^-(x_{0,i})$, then $|x_{\d,i} - x_{0,i}| = 0$
        for all $\d\in (0,\d_0]$, i.e.~the corresponding term in the first maximum
        disappears for these $\d$. If $\d_0<0.1$, and
%
        we additionally assume, for example, that
        \begin{equation}\label{psis-in-r2-0pm}
         |f^\pm (x )| \geq   c^{-1/\g} |x_{0, -1} -x|^{1/\g}
        \end{equation}
        for all $x\in [x_{0, -1}, -0.8]$, then we have $|x_{\d,-1} - x_{0,-1}| \leq c \d^\g$ for all $\d\in (\d_0,0.1]$.
        Combining these assumptions
        we  obtain a variety of sets $A$ satisfying $\psis_A(\d) \leq (c+1)\d^\g$ for all $\d\in (0, 0.1]$,
        and these examples of sets can be even further extended by considering bi-Lipschitz transformations of $X$.
\end{example}

Before we can prove the assertions made in the example above, we need to establish the following technical lemma.

\begin{lemma}\label{translate-f}
 Let $x^*\in [2/5, 3/5]$ and $f:[0,1]:\to \R$ be a continuous function that is strictly
increasing on $[0,x^*]$ and strictly decreasing on $[x^*,1]$. For $\d\in (0,1/8]$ we define
$f_{-\d}:[0,1]\to \R$ by
\begin{displaymath}
 f_{-\d}(x)
:=
\begin{cases}
 f(0) & \mbox{ if } x\in [0,\d]\\
f(x-\d) \wedge f(x+\d)  & \mbox{ if } x\in [\d, 1-\d]\\
f(1) & \mbox{ if } x\in [1-\d, 1]\, .
\end{cases}
\end{displaymath}
Then there exists exactly one $x_\d^*\in [0,1]$ such that $f_{-\d}(x_\d^*) \geq f_{-\d}(x)$ for all $x\in [0,1]$.
Moreover, we have $x_\d^* \in (x^*-\d, x^*+\d)$ and
 $x_\d^*$ is the only element $x\in [\d,1-\d]$ that satisfies $f(x-\d) = f(x+\d)$.
Finally, we have
\begin{displaymath}
 f_{-\d}(x)
=
\begin{cases}
f(x-\d)  & \mbox{ if } x\in [\d, x_\d^*]\\
f(x+\d)  & \mbox{ if } x\in [x_\d^*, 1-\d]\, .
\end{cases}
\end{displaymath}
\end{lemma}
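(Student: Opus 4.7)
The plan is to reduce everything to the auxiliary function $g(x) := f(x-\d) - f(x+\d)$ on $[\d, 1-\d]$, whose sign controls which of the two branches achieves the minimum in the definition of $f_{-\d}$. Since $\d\leq 1/8$ and $x^*\in[2/5, 3/5]$, one first checks that $\d<x^*-\d$ and $x^*+\d<1-\d$, so that all the shifted arguments $x^*\pm\d$, $x^*\pm 2\d$ appearing below lie safely in $[0,1]$.

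First I would determine the sign of $g$ on three adjacent subintervals of $[\d, 1-\d]$. On $[\d, x^*-\d]$ both $x-\d$ and $x+\d$ lie in $[0,x^*]$, where $f$ is strictly increasing, so $g(x)<0$. On $[x^*+\d, 1-\d]$ both arguments lie in $[x^*,1]$, where $f$ is strictly decreasing, so $g(x)>0$. On the middle band $[x^*-\d, x^*+\d]$ the shift $x-\d$ ranges through $[x^*-2\d, x^*]$ (so $x\mapsto f(x-\d)$ is strictly increasing) while $x+\d$ ranges through $[x^*, x^*+2\d]$ (so $x\mapsto f(x+\d)$ is strictly decreasing), whence $g$ itself is strictly increasing. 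Combined with continuity and with $g(x^*-\d)=f(x^*-2\d)-f(x^*)<0<f(x^*)-f(x^*+2\d)=g(x^*+\d)$, the intermediate value theorem yields exactly one zero $x_\d^*\in(x^*-\d, x^*+\d)$. This establishes the second and third claims of the lemma, and the global uniqueness among $x\in[\d,1-\d]$ follows because $g$ is strictly negative on the left subinterval and strictly positive on the right one.

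With the sign of $g$ at hand, the piecewise formula for $f_{-\d}$ on $[\d, 1-\d]$ is immediate: since $f_{-\d}(x) = f(x-\d)\wedge f(x+\d)$, the sign analysis gives $f_{-\d}(x) = f(x-\d)$ on $[\d, x_\d^*]$ and $f_{-\d}(x) = f(x+\d)$ on $[x_\d^*, 1-\d]$, which is the last displayed assertion.

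It remains to verify that $x_\d^*$ is the unique global maximizer of $f_{-\d}$. On $[\d, x_\d^*]$ the formula $f_{-\d}(x) = f(x-\d)$ together with $x-\d\in[0,x_\d^*-\d]\subset[0,x^*)$ shows $f_{-\d}$ is strictly increasing there; on $[x_\d^*, 1-\d]$ the formula $f_{-\d}(x) = f(x+\d)$ together with $x+\d\in(x^*,1]$ shows $f_{-\d}$ is strictly decreasing. On the flat end-intervals $[0,\d]$ and $[1-\d,1]$, $f_{-\d}$ takes the constant values $f(0)$ and $f(1)$, and strict monotonicity of $f$ on $[0,x^*]$ and $[x^*,1]$ yields $f(0)<f(x_\d^*-\d)=f_{-\d}(x_\d^*)$ and $f(1)<f(x_\d^*+\d)=f_{-\d}(x_\d^*)$. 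This delivers the unique maximizer. The only minor obstacle is the bookkeeping on the middle band where strict monotonicity of $g$ must be combined with the IVT; everything else is a straightforward one-variable monotonicity argument packaged through $g$.
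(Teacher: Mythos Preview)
Your proof is correct and follows essentially the same approach as the paper: both introduce $g(x)=f(x-\d)-f(x+\d)$, determine its sign on the three subintervals of $[\d,1-\d]$, apply the intermediate value theorem on $[x^*-\d,x^*+\d]$ to locate the unique zero, and read off the piecewise formula for $f_{-\d}$. The only organizational difference is in the maximizer step: the paper first picks a global maximizer by compactness and then argues by contradiction that it must coincide with the zero of $g$, whereas you derive unimodality of $f_{-\d}$ directly from the piecewise formula, which is slightly more streamlined but not a different idea.
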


\begin{proof}[Proof of Lemma \ref{translate-f}]
We first show
 that there is an $x_0\in (x^*-\d, x^*+\d)$
such that $f(x_0-\d) = f(x_0+\d)$. To this end, we observe
 $g: [x^*-\d, x^*+\d]\to \R$ defined by $g:= f(\mycdot-\d) - f(\mycdot+\d)$ is continuous, and since
$g(x^*-\d) = f(x^*-2\d) - f(x^*) < 0$ and $g(x^*+\d) = f(x^*) - f(x^*+2\d) > 0$, we find an $x_0\in (x^*-\d, x^*+\d)$
such that $g(x_0) = 0$ by the intermediate value theorem.

Let us now show that $f(x-\d) < f(x+\d)$ for all $x\in [\d, x_0]$ and
$f(x-\d) > f(x+\d)$ for all $x\in [x_0, 1-\d]$. Clearly, for
$x\in [\d,x^*-\d]$, the strict monotonicity of $f$
on $[0,x^*]$ yields
$f(x-\d) < f(x+\d)$. Moreover, for $x\in (x^*-\d, x_0)$,
we have $f(x-\d) < f(x_0-\d)=f(x_0+\d) < f(x+\d)$
since
$f(\mycdot-\d):[x^*-\d, x^*+\d]\to \R$ is strictly increasing, while
$f(\mycdot+\d):[x^*-\d, x^*+\d]\to \R$ is  strictly decreasing.
This shows the assertion for $x\in [\d, x_0]$, and the assertion for
$x\in [x_0, 1-\d]$ can be shown analogously.

Combining the two results  above, we find that there exists exactly one
$x_0\in [\d,1-\d]$ satisfying $f(x_0-\d) = f(x_0+\d)$, and for this $x_0$ we further know $x_0\in (x^*-\d, x^*+\d)$.
In addition, these results show
\begin{displaymath}
 f_{-\d}(x)
=
\begin{cases}
f(x-\d)  & \mbox{ if } x\in [\d, x_0]\\
f(x+\d)  & \mbox{ if } x\in [x_0, 1-\d]\, .
\end{cases}
\end{displaymath}

Let us now return to global maximizers of $f_{-\d}$. To this end, we first observe that
the existence of a global maximum of $f_{-\d}$ follows from the continuity of $f_{-\d}$ and the compactness of $[0,1]$.
Let us now fix an $x_\d\in [0,1]$ at which this global maximum is attained by $f_{-\d}$.
We first observe that $x_\d\in (\d, 1-\d)$. Indeed, if, e.g., we had $x_\d\geq 1-\d$,
we would obtain $f(1) = f_{-\d}(x_\d) \geq f_{-\d}(1-2\d) = f(1-3\d) \wedge f(1-\d) = f(1-\d) > f(1)$
using $1-3\d > x^*$, and $x_\d \leq \d$ would similarly lead to a contradiction.
We next show that we actually have $x_\d\in [x^*-\d, x^*+\d]$. To this end,  it suffices to show
\begin{equation}\label{translate-f-h1}
 x_\d \geq x^*-\d \qquad \qquad \Longleftrightarrow \qquad \qquad x_\d \leq x^*+\d\, .
\end{equation}
To show one implication, assume that $x_\d \geq x^*-\d$. Since $f_{-\d}$ attains its maximum at $x_\d$, we then obtain
\begin{align*}
 f(x_\d+\d)
\geq f(x_\d-\d) \wedge f(x_\d+\d)
 = f_{-\d}(x_\d)
\geq f_{-\d}(x^*+\d)
= f(x^*+2\d)\,.
\end{align*}
Now $x_\d+\d\leq x^*+2\d$ follows from the assumed $x_\d +\d\geq x^*$ and the strict monotonicity of $f$ on $[x^*,1]$.
Analogously, $x_\d \leq x^*+\d \Rightarrow x_\d \geq x^*-\d$ can be shown, and hence \eqref{translate-f-h1}
is indeed true.

Finally, we can prove the remaining assertion. To this end, we pick again an $x_\d$ at which $f_{-\d}$ attains its
maximum. Then we have already seen that $x_\d \in [x^*-\d, x^*+\d]$. Now observe that assuming $x_\d <x_0$
leads to $f(x_\d-\d) < f(x_0 - \d) = f(x_0+\d) < f(x_\d+\d)$
using $x_0, x_\d\in [x^*-\d, x^*+\d]$, which in turn
yields the contradiction
\begin{align*}
 f_{-\d}(x_\d) = f(x_\d\!-\!\d) \!\wedge\! f(x_\d \!+ \!\d) =  f(x_\d\!-\!\d)
< f(x_0 \!- \!\d) \!\wedge\! f(x_0\! +\! \d)
 = f_{-\d}(x_0)\, .
\end{align*}
Analogously, we find a contradiction assuming $x_\d >x_0$, and hence we have $x_\d = x_0$.
Consequently, $x_\d$ is unique and solves $f(x-\d) = f(x+\d)$.
\end{proof}

\begin{proof}[Proof of Example \ref{psis-in-r2}]
We first note that the existence and uniqueness of the local extrema is guaranteed by Lemma \ref{translate-f}.
In addition, this lemma actually shows $x_{\d,-}^+\in (x^+_--\d, x^+_-+\d)$,
$x_{\d,-}^-\in (x^-_--\d, x^-_-+\d)$, $x_{\d,+}^+\in (x^+_+-\d, x^+_++\d)$, and
$x_{\d,+}^-\in (x^-_+-\d, x^-_++\d)$.
Moreover,
  we  have
\begin{displaymath}
 \psis_A(\d) = \sup_{z\in A} d(z,A\mde) \leq \sup_{z\in A} d(z,A\omde)
\end{displaymath}
by $A\mde \subset A\omde$.
We will thus estimate $d(z, A\omde)$ for $z:= (x,y)\in A$.

We begin with the case $x\in [-1, x_{\d,-1}]$. For later purposes,   note that
the definition of $A$   yields $x\geq x_{0,-1}$.
By the monotonicity of $f^\pm$ on $[-1, -0.8+\d]$ we further know
$f^\pm_\d(x+\d) = f^\pm(x)$.
We write $x':= x_{\d,-1} + \d$ and
\begin{displaymath}
 y':=
\begin{cases}
 f^-( x_{\d,-1})+\d & \mbox{ if } y\leq f^-( x_{\d,-1} ) + \d \\
y & \mbox { if } y\in [ f^-( x_{\d,-1} ) + \d,  f^+( x_{\d,-1} ) - \d]\\
 f^+( x_{\d,-1})-\d & \mbox{ if } y\geq f^+( x_{\d,-1} ) - \d\, .
\end{cases}
\end{displaymath}
If $y\leq f^-( x_{\d,-1} ) + \d$, we then obtain $y\leq y'$ and $y'= f^-( x_{\d,-1})+\d \leq f^-(x)+\d \leq y+\d$, that is
$|y-y'|\leq \d$, and
it is easy to check that the same is true in the two other cases. Consequently, we have
$\inorm{(x,y) - (x',y')} =x_{\d,-1} + \d - x$, and our construction further ensures
\begin{displaymath}
 y' \in [ f^-( x_{\d,-1} ) + \d,  f^+( x_{\d,-1} ) - \d] =  [ f_{-\d}^-(x') + \d,  f_{-\d}^+(x') - \d]\, .
\end{displaymath}
By Example \ref{Aomde} we conclude
 $(x',y')\in A\omde$, and  from this we easily find
\begin{equation}\label{psis-in-r2-h2}
 d(z, A\omde) \leq  \d + x_{\d,-1} - x \leq \d + x_{\d,-1} - x_{0,-1}\, .
\end{equation}

To show that \eqref{psis-in-r2-h2} is also true in the case $x \in  [x_{\d,-1}, -0.8+\d]$, we first observe that
the monotonicity of $f^\pm$  on $[-1, -0.8+2\d]$   yields
\begin{displaymath}
f^+(x) - f^-(x) \geq  f^+( x_{\d,-1}) - f^-( x_{\d,-1}) \geq 2\d\, ,
\end{displaymath}
and consequently, we can define
\begin{displaymath}
 y':=
\begin{cases}
 f^-(x)+\d & \mbox{ if } y\leq f^-(x) + \d \\
y & \mbox { if } y\in [ f^-(x) + \d,  f^+(x) - \d]\\
 f^+(x)-\d & \mbox{ if } y\geq f^+(x) - \d\, .
\end{cases}
\end{displaymath}
If $y\leq f^-(x) + \d$ we then obtain $y\leq y'$ and $y' = f^-(x) + \d \leq y+\d$, that is $|y-y'|\leq \d$, and
again it is easy to check that the same is true in the two other cases. Writing $x':= x+\d$, we thus have
$\inorm{(x,y) - (x',y')} =\d$. Moreover, the construction together with $f^\pm_\d(x+\d) = f^\pm(x)$ ensures
\begin{displaymath}
 y' \in  [ f^-(x) + \d,  f^+(x) - \d] =  [ f_{-\d}^-(x') + \d,  f_{-\d}^+(x') - \d] \, ,
\end{displaymath}
and hence we find $(x',y')\in A\omde$ by Example \ref{Aomde}. Thus, we
have shown
$d(z, A\omde) \leq \d \leq \d + x_{\d,-1} - x_{0,-1}$,
i.e.~\eqref{psis-in-r2-h2} is true for all $x\in [-1, -0.8+\d]$.

Now consider the case $x\in [-0.8+\d, -0.2-\d]$. Here, we will focus on the sub-case $y\geq 0$, since the subcase
$y\leq 0$ can be treated analogously. For later purposes, note that we have $f^-(x\pm\d) \leq -2 \d$.
 Now, if $x\in [-0.8+\d, x^+_{\d,-}-\d]$,
  we set $x' := x+\d$ and $y':= y \wedge (f^+(x) -\d)$. This gives $y' \leq y$ and $y-\d\leq f^+(x) -\d \leq y'$,
and hence we again have $\inorm{(x,y) - (x',y')} =\d$. Moreover, our constructions together with Lemma \ref{translate-f}
ensures
\begin{displaymath}
 y' \in [-\d, f^+(x) -\d] = [ - \d,  f_{-\d}^+(x') - \d] \subset [ f_{-\d}^-(x') + \d,  f_{-\d}^+(x') - \d]\, ,
\end{displaymath}
that is $(x',y')\in A\omde$, and hence \eqref{psis-in-r2-h2} is true in this case, too.
The next case, we consider, is $x\in [x^+_{\d,-}-\d, x^+_{\d,-}+\d]$. In this case we set $x':= x^+_{\d,-}$
and $y' := y \wedge (f_{-\d}^+(x^+_{\d,-})-\d)$. This implies
\begin{displaymath}
 y' \in [-\d, f_{-\d}^+(x^+_{\d,-})-\d] \subset [ f_{-\d}^-(x')+\d,   f_{-\d}^+(x')-\d]\, ,
\end{displaymath}
and hence $(x',y')\in A\omde$. We further   have $|x-x'| \leq \d$
and, if $y \leq f_{-\d}^+(x^+_{\d,-})-\d$, we also have $|y-y'| = 0$. Conversely, if
$y \geq f_{-\d}^+(x^+_{\d,-})-\d$, we find
\begin{displaymath}
 y  \leq f^+(x)\leq f^+(x^+_-)  = f^+(x^+_-) - (f_{-\d}^+(x^+_{\d,-})-\d) + y' \, ,
\end{displaymath}
that is $|y-y'| \leq \d + f^+(x^+_-) - f_{-\d}^+(x^+_{\d,-})$. Combining the latter two  cases, we therefore obtain
$\inorm{(x,y) - (x',y')} \leq \d + f^+(x^+_-) - f_{-\d}^+(x^+_{\d,-})$, that is
$d(z, A\omde)   \leq \d +  f^+(x^+_-) - f_{-\d}^+(x^+_{\d,-})$.
Since all remaining cases can be treated analogously, the proof of the general estimate of $\psis_A(\d)$ is finished.

Now consider the additional assumptions of $f^\pm$. For example, assume
\begin{displaymath}
 |f^+(x^+_-) - f^+(x)|  \leq c |x^+_- -x|^\g
\end{displaymath}
for all $ x\in [x^+_--0.1,x^+_-+0.1]$.
         Lemma \ref{translate-f} shows  $x^+_{\d,-} \in (x^+_--\d, x^+_-+\d)$.
        Without loss of generality, we  assume  $x^+_{\d,-}\in [x^+_-, x^+_-+\d)$. Using Lemma \ref{translate-f}
        and $x^+_{\d,-}-\d \in [x^+_--\d, x^+_-) \subset [x^+_--0.1,x^+_-+0.1]$, we then
        obtain
\begin{displaymath}
 \bigl|f^+(x^+_-) - f_{-\d}^+(x^+_{\d,-})\bigr|
        = \bigl|f^+(x^+_-) - f^+(x^+_{\d,-} - \d)\bigr|
        \leq c \bigl| x^+_- - x^+_{\d,-} + \d \bigr|^\g
        \leq c \d^\g\, .
\end{displaymath}
Now   assume that, for e.g.~$i:= -1$, we have  $\d_0>0$. For  $\d\in (0,\d_0]$ we then find
$f^+(x_{0,-1}) - f^-(x_{0,-1}) \geq 2\d$, and thus $x_{0, -1} = x_{\d, -1} = -1$.
Conversely, let $\d\in (\d_0, 0.1]$. Then we have $f^+(x_{0,-1}) - f^-(x_{0,-1}) < 2\d$ and a simple
application of the intermediate value theorem thus yields $f^+(x_{\d,-1}) - f^-(x_{\d,-1}) = 2\d$.
%
Using the additional assumption on $f^\pm$
around the point $x_{0, -1}$, we then find
\begin{align*}
 2 c^{-1/\g} |x_{\d,-1} - x_{0,-1}|^{1/\g}
 \leq | f^-(x_{\d,-1})| \!+\! |  f^+(x_{\d,-1})|
&= f^+(x_{\d,-1}) - f^-(x_{\d,-1}) \\
&= 2\d\, ,
\end{align*}
that is $|x_{\d,-1} - x_{0,-1}| \leq c\d^\g$.
\end{proof}

\section{Continuous Densities}\label{suppB:cons-dens}

In this section we present  a class of continuous densities on $\R^2$ that meet
the assumptions made in the paper.
The first example, which represents the main result of this supplement, shows that
many continuous distributions satisfy our thickness assumption.

\begin{example}\label{cont-h-in-r2}
 Let $X:= [-1,1]\times [-2,2]$ be equipped with the metric defined by the supremums norm.
Moreover, let $P$ be a Lebesgue absolutely continuous distribution that has a continuous density $h$.
Furthermore, assume that there exists a   $\rss>0$, such that, for all  $\r\in (0,\rss]$, the
level set $M_\r$   is
of the form considered in Example \ref{psis-in-r2}.
 In addition, we assume that there is a constant $K\in (0,1)$ such that
\begin{equation}\label{cont-h-in-r2-h0}
 \bigl|h(x,y) - \rs - x^2 + y^2 \bigr| \leq K (x^2+y^2   )
\end{equation}
for some $\rs\in [0,\rss)$ and  all $(x,y) \in \{h>0\} \cap  \bigl( [-0.2, 0.2] \times (-1.1, 1.1) \bigr)$.
Moreover,  assume that $h$ is continuously differentiable
on the sets
\begin{align*}
 A_1 := \{h>0\} &\cap   \Bigl(\bigl((-0.7, -0.3) \cup (0.3, 0.7) \bigr)\times \bigl((-1.1, -0.2) \cup (0.2, 1.1)\bigr)\Bigl) \\
 A_2 := \{h>0\} &\cap   \Bigl(\bigl((-1, -0.8) \cup (0.8, 1) \bigr)\times \bigl((-1.1, 0) \cup (0.2, 1.1)\bigr)\Bigl) \\
A_3 := \{h>0\} &\cap \biggl\{\!(x,y)\in X: x\in (-0.2, 0) \cup (0, 0.2) \mbox{ and } |y| \!<\! \sqrt{\frac{1\!+\!K}{1\!-\!K}} |x|\biggr\}
\end{align*}
with $h_y:= \frac {\partial h}{\partial y} \neq  0$ on $A_1$ and $h_x:= \frac {\partial h}{\partial x} \neq  0$ on $A_2\cup A_3$.
Finally, assume that there is a constant $C>0$ such that
$|{h_x}| \leq C|{h_y}|$ on $A_1$ and $|{h_y}| \leq C|{h_x}|$ on $A_2\cup A_3$.
Then  $P$
has thick levels of order  $\g=1$ with $\dthick = 0.1$ and
\begin{displaymath}
 \cthick = 1 + \max\biggl\{C, \sqrt{\frac{1+K}{1-K}}\biggr\}\, .
\end{displaymath}
Moreover, $P$
can be  clustered between
$\rs$ and $\rss$ and we have
\begin{equation}\label{cont-h-in-r2-ts}
   \frac 2 {\sqrt{1-K}} \sqrt \e \Leq \t^*_{M_{\rs+\e}} \Leq  \frac 2 {\sqrt{1+K}} \sqrt \e\, , \qquad \qquad \e\in (0,\rss-\r].
\end{equation}
\end{example}

\begin{proof}[Proof of Example \ref{cont-h-in-r2}]
 Since we consider the Lebesgue measure on $X$, we have $M_0 = X$. Moreover, we have $X\mde = X$
since we consider the operation in $X$, and from this, we immediately see
$\psis_X(\d) = 0$ for all $\d>0$. Consequently, there is nothing to prove for $\r=0$.

 Let us now fix some $\r\in (0,\rss]$. Moreover, let $f^\pm:[-1,1]\to [-1,1]$ be the two functions
satisfying the assumptions of Example \ref{psis-in-r2} and
\begin{displaymath}
 M_\r = \bigl\{  (x,y)\in X: x\in [-1, 1] \mbox{ and } f^-(x)\leq y\leq f^+(x)\bigr\}\, .
\end{displaymath}
We pick an $(x,y)\in M_\r$ with $y= f^+(x)$ or  $y= f^-(x)$. Then we find $(x,y)\in \partial M_\r$, and thus
we have $h(x,y) = \r$ by Lemma \ref{Mr-include-both-new}, that is $h(x, f^\pm(x)) = \r$.

Our first goal is to verify \eqref{psis-in-r2-xpm}. To this end, we solely focus without
loss of generality to the case $x^+_+$ and $f^+$, since the other cases can be treated analogously.
 Let us fix an
$x\in [x^+_+-0.1,x^+_++0.1]$. Then we have $x\in (0.3, 0.7)$ and
thus $f^+(x) \in (0.2, 1.1)$ by \eqref{psis-in-r2-thick}. Consequently, $h$ is continuously differentiable
in $(x, f^+(x))$. By the implicit function theorem and the previously shown $h(x', f^+(x')) = \r$
for all $x'\in (0.3, 0.7)$
we then conclude that $f^+$ is continuously differentiable at $x$ and
\begin{equation}\label{cont-h-in-r2-h1}
 (f^+(x))'
= -\biggl( \frac{\partial h}{\partial y} \bigl(x, f^+(x) \bigr)   \biggr)^{-1} \cdot  \frac{\partial h}{\partial x} \bigl(x, f^+(x) \bigr)
= \frac {h_x(x, f^+(x))}{h_y(x, f^+(x))}\, .
\end{equation}
Using $|{h_x}| \leq C|{h_y}|$ on $A_1$, we thus find $|(f^+(x))'|\leq C$, and hence $f^+$ is Lipschitz
continuous on $(0.3, 0.7)$ with Lipschitz constant smaller than or equal to $C$. This implies
 \eqref{psis-in-r2-xpm} with constant $C$ and exponent $\g=1$.

Now consider the endpoints $x_{0, \pm 1}$, where again it suffices to consider
 one case, say $x_{0,-1}$, due to symmetry.
Let us write $2\d_0:=f^+(x_{0,-1}) - f^-(x_{0,-1})$. Then, if $\d_0\geq 0.1$, we have
$|x_{\d,-1} - x_{0,-1}| = 0$ for all $\d\in (0,0.1]$ by  Example \ref{psis-in-r2},
and hence it suffice to show \eqref{psis-in-r2-0pm} in the case $\d_0< 0.1$.
Observing that it actually suffices to show \eqref{psis-in-r2-0pm} for all $x\in (x_{0,-1}, -0.8)$
by continuity, we begin by fixing such an $x$. By monotonicity we then have
$0 < f^+(x) < f^+(0.8)< 1.1$, and
hence $h$ is continuously differentiable
at $(x, f^+(x))$. The implicit function theorem and the previously shown $h(x', f^+(x')) = \r$
for all $x'\in (x_{0,-1}, -0.8)$,
then shows that $f^+$ is continuously differentiable at $x$ and \eqref{cont-h-in-r2-h1} holds.
Using $|{h_y}| \leq C|{h_x}|$ on $A_2$, we then find $|(f^+(x))'|\geq 1/C$, and the fundamental theorem
of calculus thus yields
\begin{displaymath}
 \bigl|f^+(x') -  f^+(x) \bigr|= \biggl| \int_x^{x'}  (f^+(t))' dt \biggr| \geq C^{-1}  |x'-x|
\end{displaymath}
for all $x,x'\in (x_{0,-1}, -0.8)$. Now, letting $x'\to x_{0,-1}$, we obtain
\begin{displaymath}
 |f^+(x)| \geq f^+(x) -  f^+(x_{0,-1}) = \bigl|f^+(x) -  f^+(x_{0,-1}) \bigr| \geq  C^{-1}  |x_{0,-1}-x|
\end{displaymath}
for all $x\in (x_{0,-1}, -0.8)$, i.e.~\eqref{psis-in-r2-0pm} holds with constant  $C$ and   $\g=1$.

Finally, let us consider the points $x_{0, \pm 0}$, where yet another time, we only focus on one case,
say $x_{0,+0}$.
For $x\in [x_{0,+0},0.2]$, we then have 
\begin{equation}\label{cont-h-in-r2-h3}
 \r = h(x, f^+(x)) \leq \rs + (1+K)x^2 + (K-1) (f^+(x)) ^2\, ,
\end{equation}
that is $(f^+(x))^2 \leq \frac{\rs-\r}{1-K} + \frac{1+K}{1-K} x^2$. Analogously, we can find a lower bound on
$(f^+(x))^2$, so that we end up having
\begin{equation}\label{cont-h-in-r2-h2}
 (f^+(x))^2 \in \biggl[\frac{\rs-\r}{1+K} + \frac{1-K}{1+K} x^2, \frac{\rs-\r}{1-K} + \frac{1+K}{1-K} x^2 \biggr],
\end{equation}
and an analogue result holds for $(f^-(x))^2$.
Again, our goal is to show an analogue of \eqref{psis-in-r2-0pm}. To this end, we first consider the
case $\r\in (0, \rs]$. By \eqref{cont-h-in-r2-h0}, we then know that $h(0,0) = \rs\geq \r$, and hence $f^+(0)\geq 0$.
Analogously, we find  $f^-(0)\leq 0$, which together implies $x_{0,+0}=0$.
Furthermore, for $x\in [x_{0,+0},0.2]$, \eqref{cont-h-in-r2-h2} gives
\begin{displaymath}
 f^+(x) \geq \sqrt{ \frac{\rs-\r}{1+K} + \frac{1-K}{1+K}x^2}
\geq \sqrt{\frac{1-K}{1+K}}|x|
=\sqrt{ \frac{1-K}{1+K}}\, |x_{0,+0}-x| \, ,
\end{displaymath}
%
%
that is \eqref{psis-in-r2-0pm} holds with constant $\sqrt{ \frac{1+K}{1-K}}$ and exponent $\g=1$.
Let us now consider the case $\r\in (\rs, \rss]$. For $x\in (x_{0,+0},0.2)$, \eqref{cont-h-in-r2-h2} then yields
\begin{displaymath}
 f^+(x) \leq \sqrt{ \frac{\rs-\r}{1-K} + \frac{1+K}{1-K}x^2}
< \sqrt{\frac{1+K}{1-K}}|x| \, ,
\end{displaymath}
and thus we find $(x, f^+(x)) \in A_3$. Consequently, $h$ is continuously differentiable at
$(x, f^+(x))$, and \eqref{cont-h-in-r2-h1} holds. As for $x_{0, -1}$, we can then show that
\eqref{psis-in-r2-0pm} holds with constant $C$ and exponent $\g=1$.

In order to show that $P$
can be  clustered between the  levels
$\rs$ and $\rss$, we first note that the assumed continuity of $h$
guarantees that $P$ is normal by Lemma \ref{regular-new}.
%
%
Let us now fix a $\r\in (\rs, \rss]$. Since from
\eqref{cont-h-in-r2-h0} we infer that $h(0,0) = \rs$, we then obtain
$(0,0)\not \in M_\r$. The latter implies $x_{0,-0}<0<x_{0,+0}$, where $x_{0,-0}$ and $x_{0,+0}$
are the points defined in Example \ref{psis-in-r2} for the   set $M_\r$. By Example \ref{Aomde}
we then see that $\ca C(M_\r)|= 2$. Analogously, for $\r\in [0,\rs]$, the equality
$h(0,0) = \rs$ implies $x_{0,-0}=0=x_{0,+0}$, which shows $\ca C(M_\r)|= 1$.
Finally, the bijectivity of  $\z:\ca C (M_{\r^{**}})\to \ca C(M_\r)$ follows from the
form of the connected components described in Example \ref{Aomde}.

Let us finally prove \eqref{cont-h-in-r2-ts}. To this end, we fix an $\e\in (0,\rss-\r]$ and define $\r:= \rs+\e$.
Then we have already observed that  $x_{0,-0}<0<x_{0,+0}$, and hence $f^\pm(x_{0,\pm 0}) = 0$.
For $x:= x_{0,+0}$ we then obtain
\begin{displaymath}
   \r = h(x, f^+(x)) \leq \rs + (1+K) x^2
\end{displaymath}
by \eqref{cont-h-in-r2-h3}, and applying some simple transformations we thus find
$x_{0,+0} = x \geq \sqrt{\frac{\r-\rs}{1+K}} = \sqrt{\frac{\e}{1+K}}$. For $x:= x_{0,+0}$ we further have
\begin{displaymath}
   \r = h(x, f^+(x)) \geq  \rs + (1-K) x^2 \, ,
\end{displaymath}
and thus $x_{0,+0} \leq \sqrt{\frac{\e}{1-K}}$. Since analogous estimates can be derived for $x_{0,-0}$, the
formula $\t^*_{M_{\rs+\e}} = x_{0, +0} - x_{0, -0}$ found in Example \ref{Aomde}   gives \eqref{cont-h-in-r2-ts}.
\end{proof}

The last example of this appendix shows that the distributions from the previous example have
a smooth boundary.

\begin{example}\label{cont-h-in-r2-smooth}
   Let $X$ and $P$ be as in Example \ref{cont-h-in-r2}. Then
   the clusters have
    an $\a$-smooth boundary for  $\a=1$ and
    \begin{displaymath}
       \cbound = 8 \Biggr(10+C+ \sqrt{\frac{1+K}{1-K}}\Biggr)\, .
    \end{displaymath}
\end{example}

\begin{proof}[Proof of Example \ref{cont-h-in-r2-smooth}]
   Let us first consider the case $0<\d\leq 0.1$.
   To this end, we fix a $\r\in (\rs, \rss]$. Without loss of generality, we only consider the connected component $A$
   with $x<0$ for all $(x,y)\in A$.
  We know that $A\pdeh\setminus A\mdeh \subset A\opde\setminus A\omde$ and the latter
  two sets have been calculated in Example \ref{Aomde}. In the following, we will only estimate
  $\lb^2(\{(x,y): y\geq 0  \}\cap A\opde\setminus A\omde)$, the case $y\leq 0$ can be treated analogously.
  Our first intermediate result towards the desired estimate is
  \begin{align*}
     \lb^2\bigl([-1 \vee (x_{0,-1}-\d), x_{\d,-1}]\times [0,2]   \cap A\opde\setminus A\omde\bigr)
     &\leq 2 |(x_{0,-1}-\d) - x_{\d,-1}| \\
     & \leq 2\d + 2 |x_{0,-1} - x_{\d,-1}| \\
     & \leq 2(1+C)\d\, ,
  \end{align*}
  where in the last step we used that the proof of Example \ref{cont-h-in-r2} showed \eqref{psis-in-r2-0pm}
  for $c=C$ and $\g=1$.
  Moreover, we have
    \begin{align*}
     \lb^2\bigl([x_{\d,-1}, x_-^+-\d]\times [0,2]   \cap A\opde\setminus A\omde\bigr)
     & = \int\limits_{x_{\d,-1}}^{x_-^+-\d}\!\! f^+(x\!+\!\d) - f^+(x\!-\!\d) +2\d \, dx\\
     & \leq 2\d + \int_{x_-^+-\d}^{x_-^++\d} f(x) \, dx \\
     & \leq 4 \d
  \end{align*}
  and analogously we obtain $\lb^2\bigl([x_-^++\d, x_{\d,-0}]\times [0,2]   \cap A\opde\setminus A\omde\bigr) \leq 4\d$.
  In addition, we easily find $\lb^2\bigl([x_-^+-\d, x_-^++\d]\times [0,2]   \cap A\opde\setminus A\omde\bigr) \leq 4\d$ and
  finally, we have
    \begin{align*}
    \lb^2\bigl([x_{\d,-0}, 0 \wedge (x_{0,-0}+\d)]\times [0,2]   \cap A\opde\setminus A\omde\bigr)
     &\leq 2 \bigl|x_{\d,-0} - x_{0,-0}-\d \bigr| \\
     &\leq 2\d + 2 \sqrt{\frac{1+K}{1-K}} \d\, ,
  \end{align*}
  where we used that the proof of Example \ref{cont-h-in-r2} showed \eqref{psis-in-r2-0pm}
  for $c=\sqrt{\frac{1+K}{1-K}}$ and $\g=1$. Combining all these estimates we obtain
  \begin{align*}
      \lb^2\bigl([-1, 0]\times [0,2]   \cap A\opde\setminus A\omde\bigr) \leq 4 \Biggr(6+C+ \sqrt{\frac{1+K}{1-K}}\Biggr) \d
  \end{align*}
  for all $\d\in (0, 0.05]$. Moreover, for $\d\in [0.05, 1]$ we easily obtain
  \begin{displaymath}
      \lb^2\bigl([-1, 0]\times [0,2]   \cap A\opde\setminus A\omde\bigr) \leq 2 \leq 40 \d\, .
  \end{displaymath}
  Combining both estimates and adding the case $y\leq 0$, we then obtain the assertion.
\end{proof}



%
